\documentclass[a4paper,11pt]{article}

\input{doct_preamble.tex}

\begin{document}

\title{Parameterized Complexity and Approximability of Directed Odd Cycle Transversal
\thanks{ Supported by {\em Pareto-Optimal Parameterized Algorithms}, ERC Starting Grant 715744 and {\em Parameterized Approximation}, ERC Starting Grant 306992.
%, and {\em Rigorous Theory of Preprocessing}, ERC Advanced Investigator Grant 267959. 
%
%
%
%the BeHard grant under the recruitment
%programme of the of Bergen Research Foundation and by PaPaAlg, ERC starting grant no. ....
%          Saket  Saurabh and Meirav Zehavi are supported by PARAPPROX, ERC starting grant no. 306992. 
          M. S. Ramanujan also acknowledges support from {\em BeHard},  Bergen Research Foundation and {\em X-Tract}, Austrian Science Fund (FWF, project P26696).
        }
        }

\author{
 Daniel Lokshtanov\thanks{University of Bergen, Bergen, Norway. \texttt{daniello@ii.uib.no}}
 \and M. S. Ramanujan\thanks{Algorithms and Complexity Group, TU Wien, Vienna, Austria.     \texttt{ramanujan@ac.tuwien.ac.at}} 
 \and  Saket Saurabh\addtocounter{footnote}{-2}\footnotemark\thanks{The Institute of Mathematical Sciences, HBNI, Chennai, India. \texttt{saket@imsc.res.in}}
 \and Meirav Zehavi\thanks{University of Bergen, Bergen, Norway. \texttt{meirav.zehavi@ii.uib.no}} 
}

\maketitle

\thispagestyle{empty}

\begin{abstract} 
%!TEX root = DOCT.tex

A {\em directed odd cycle transversal} of a directed graph (digraph) $D$ is a vertex set $S$ that intersects every \emph{odd directed cycle} of $D$. In the {\sc Directed Odd Cycle Transversal (DOCT)} problem, the input consists of a digraph $D$ and an integer $k$. The objective is to determine whether there exists a directed odd cycle transversal of $D$ of size at most $k$.
%
%Determining whether \doct admits an algorithm with running time $f(k)n^{\Oh(1)}$ was c
%
%the parameterized complexity of \doct when parameterized by the solution size $k$ is one of the major open problems in the area. 
%
%In this paper, we settle the parameterized complexity of \doct when parameterized by solution size $k$. 
%
%Determining the parameterized complexity of \doct when parameterized by the solution size $k$ is one of the major open problems in the area. 
%
%and we settle this question in this paper.
%
%
In this paper, we settle the parameterized complexity of \doct when parameterized by the solution size $k$
%this question 
by showing that 
%In particular, we show that 
\doct does not admit an algorithm with running time $f(k)n^{\Oh(1)}$ unless $\FPT = \W[1]$. On the positive side, we give a factor $2$ fixed parameter tractable (\FPT) approximation algorithm for the problem. More precisely, our algorithm takes as input $D$ and $k$, runs in time $2^{\Oh(k^2)}n^{\Oh(1)}$, and either concludes that $D$ does not have a directed odd cycle transversal of size at most $k$, or produces a solution of size at most $2k$.
Finally, we provide evidence that there exists $\epsilon > 0$ such that \doct does not admit a factor $(1+\epsilon)$ \FPT-approximation algorithm.

%our $2$-approximation algorithm can not be improved to an FPT factor $(1+\epsilon)$-approximation algorithm

%\doct
%. Indeed, our 

%show that an algorithm 

%present a fixed parameter tractable 2-approximation algorithm for the probel

%an algorithm that

%2-approximation parameterized algorithm. 

%The same reduction also proves that \doct does not admit a (1+\epsilon)

%Resolving the parameterized complexity of {\sc \doct} is amongst the most important open problems in the area of Parameterized Complexity.
%In this paper, we answer this question by proving that {\sc \doct} is W[1]-hard. On the positive side, we present a 2-approximation parameterized algorithm. 

\end{abstract}

\newpage
\pagestyle{plain}
\setcounter{page}{1}

\section{Introduction}
%!TEX root = DOCT.tex

%\hly{TODO: intro.}

%{\sc Feedback Set} problems are fundamental combinatorial optimization problems. Typically, in these  problems, we are given a graph $G$ (directed or undirected) and a positive integer $k$, and the objective is to select at most $k$ vertices, edges or arcs to hit all cycles of the input graph. {\sc Feedback Set} problems are among  Karp's  $21$ \NP-complete problems, which have been topic of active research from algorithmic perscpetives~\cite{BafnaBF99,Bar-YehudaGNR98,BeckerG96,CaoCL10,ChekuriM15,ChenFLLV08,ChenLLOR08,Chitnis:2012DSFVS,CyganNPPRW11,CyganPPW13,EvenNSS98,GuruswamiL15,KawarabayashiK12,KakimuraKK12,KociumakaP14,RamanSS06,Wahlstrom14} as well as structural perspectives~\cite{erdHos1965independent,KakimuraKM11,KawarabayashiKKK13,PontecorviW12,ReedRST96,Seymour95,Seymour96}. In particular, such problems constitute one  of the most important topics of research in Parameterized Complexity~\cite{CaoCL10,ChenFLLV08,ChenLLOR08,Chitnis:2012DSFVS,CyganNPPRW11,CyganPPW13,KawarabayashiK12,KakimuraKK12,KociumakaP14,RamanSS06,Wahlstrom14}, spearheading the development of several new techniques. 

A {\em directed odd cycle transversal} of a digraph $D$ is a set $S$ of vertices of $D$ such that deleting $S$ from $D$ results in a graph without any directed odd cycles. 
%The function $\doctfunc{D}$ returns the minimum size of a directed odd cycle transversal of $D$. 
In the NP-complete~(\cite{Karp72}, see Footnote~\ref{ftn:second})  \doctfull (\doct) problem, the input consists of a digraph $D$ on $n$ vertices and an integer $k$, and the task is to determine whether $D$ has a directed odd cycle transversal of size at most $k$.
%
%
%
%
%there exists a set $S$ of at most $k$ vertices of $D$ such that deleting $S$ from $D$ results in a graph without any directed odd cycles. Such a vertex set $S$ is called a directed odd cycle transversal of $D$. 
%
DOCT generalizes several well studied problems such as {\sc Odd Cycle Transversal} (OCT) on undirected graphs~\cite{AgarwalCMM05,ChoiNR89,KhotV15,ReedSV04}, {\sc Directed Feedback Vertex Set} (DFVS)~\cite{ChenLLOR08,EvenNSS98,GuruswamiL16,Karp72}, and {\sc Directed Subset Feedback Vertex Set}~\cite{Chitnis:2012DSFVS,EvenNSS98}. In OCT, the input consists of an undirected graph $G$ and integer $k$, and the task is to determine whether there exists a subset $S$ of vertices such that $G - S$ is bipartite.\footnote{OCT reduces to \doct by replacing every edge by two arcs, one in each direction.} In DFVS, the input consists of a digraph $D$ and integer $k$, and the task is to determine whether there exists a subset $S$ of vertices such that $D - S$ is a directed acyclic graph.\footnote{\label{ftn:second}DFVS reduces to \doct by adding for every arc $uv$ of $D$ a new vertex $x$ as well as the arcs $ux$ and~$xv$.}

%, while in {\sc Directed Subset Feedback Vertex Set} the input digraph comes with a terminal set $T$ and removing the solution set S

The existence of fixed parameter tractable (\FPT) algorithms for OCT and DFVS were considered to be major open problems in parameterized complexity, until \FPT algorithms were found for OCT in 2003 by Reed et al.~\cite{ReedSV04}, and for DFVS in 2007 by Chen et al.~\cite{ChenLLOR08}. The algorithms for these two problems have had significant influence on the development of the field, resulting in proliferation of techniques such as {\em iterative compression} and {\em important separators}~\cite{CyganFKLMPPS15,DBLP:series/txcs/DowneyF13}.
Once both OCT and DFVS were shown to be \FPT, \doct immediately became the next natural target. The parameterized complexity of \doct was explicitly stated as an open problem~\cite{DemGMS07} for the first time in 2007, immediately after the announcement of an \FPT algorithm for DFVS. Since then the problem has been re-stated several times~\cite{chitnis2014directed,ChitnisH16,Marx12wHAT,Marx17Talk}. In this paper, we settle the parameterized complexity of \doct, by showing that the problem is \W[1]-hard. Our hardness proof also gives a near-tight running time lower bound for \doct assuming the Exponential Time Hypothesis (\ETH). In particular, we prove the following.

\begin{restatable}{theorem}{wHardTheorem}\label{thm:maindoct}\label{thm:doctHard1}  
\doct is \W[1]-hard. Furthermore, assuming the \ETH there is no algorithm for \doct with running time $f(k)n^{o(k/\log k)}$.
\end{restatable}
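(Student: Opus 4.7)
The plan is a parameterized reduction from \textsc{Multicolored Clique} to \doct. \textsc{Multicolored Clique} on a graph with $k$ color classes is \W[1]-hard and, under \ETH, admits no $f(k)n^{o(k)}$ algorithm. My target is a reduction that produces a \doct instance whose parameter $k'$ satisfies $k' = \Theta(k \log k)$; since then $k'/\log k' = \Theta(k)$, an $f(k')n^{o(k'/\log k')}$ algorithm for \doct would yield an $f'(k)n^{o(k)}$ algorithm for \textsc{Multicolored Clique}, contradicting \ETH and simultaneously proving \W[1]-hardness.

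Two gadget families drive the construction. For each color class $i \in [k]$ I would build a \emph{selection gadget}: a small digraph whose internal odd directed cycles force any cheap \doct to remove a specific pattern of vertices encoding one vertex of class $i$. To obtain the logarithmic blow-up needed for the $n^{o(k/\log k)}$ bound, I plan to encode the chosen index of class $i$ in binary across $\Theta(\log n)$ independent ``bit'' sub-gadgets, each a small odd directed cycle that forces a $0/1$ choice by removing one of two designated vertices. Thus each class contributes $\Theta(\log n)$ to $k'$, and since $n$ can be taken polynomial in $k$ in the source instance, $k' = \Theta(k \log k)$. For each pair $(i,j)$ of color classes I then add a \emph{consistency gadget} that threads directed paths through the bit sub-gadgets of both classes and closes into an odd directed cycle if and only if the joint bit choices encode a non-edge of the source graph; the only cheap way to break such a consistency cycle is via bit-gadget vertices already paid for when the encoded pair is an actual edge.

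Correctness then follows in the standard two directions. A multicolored $k$-clique yields a \doct of size $k'$ by picking, inside each selection gadget, the bit-vertex pattern encoding the selected vertex: all selection-gadget cycles are broken by construction, and every consistency-gadget cycle is broken because every encoded pair is an edge. Conversely, a counting argument shows that any \doct of size at most $k'$ must spend exactly $\Theta(\log n)$ vertices per selection gadget and, gadget by gadget, this forces a concrete vertex choice from each class; the absence of unbroken consistency cycles then forces the $k$ choices to be pairwise adjacent, i.e.\ to form a clique.

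The principal obstacle I anticipate is controlling odd directed cycles that arise from the interaction of many gadgets. Unlike undirected \oct, where parity on connected components gives a clean global picture, the presence of an odd \emph{directed} cycle depends delicately on arc orientations across strongly connected components, so arcs added for one gadget can inadvertently close unintended odd cycles through several others. Ensuring that the only odd directed cycles in the constructed digraph are the intended selection cycles and consistency cycles --- and, crucially, that the transversal budget is tight --- will be the technically delicate heart of the reduction, likely requiring modular ``source/sink'' interfaces between gadgets to prevent cross-gadget odd closures. A secondary challenge is calibrating the binary-encoding blow-up so that the parameter grows by exactly $\Theta(\log k)$ rather than $\Theta(\log^2 k)$, which is precisely what is needed to match the claimed $n^{o(k/\log k)}$ bound.
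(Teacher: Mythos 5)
Your high-level plan (a parameterized reduction from a clique-like problem, with a selection gadget per color class and a consistency gadget per pair) is the right shape, but there is a fatal flaw in the parameter accounting that the binary-encoding scheme introduces. You claim each class contributes $\Theta(\log n)$ to the new parameter $k'$, then assert ``$n$ can be taken polynomial in $k$'' so that $k' = \Theta(k\log k)$. That assumption is not available: the ETH lower bound for {\sc Multicolored Clique} ($f(k)n^{o(k)}$) requires $n$ to grow unboundedly in $k$ (the standard proof sets $n \approx 3^{m/k}$ for a 3-SAT/3-Coloring instance with $m$ clauses). Restricting to $n = \mathrm{poly}(k)$ destroys the lower bound. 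With $n$ unrestricted, your $k' = \Theta(k\log n)$, and then $k'/\log k' = \Theta(k\log n/\log\log n)$ — which can be vastly \emph{larger} than $k$ — so an $f(k')n^{o(k'/\log k')}$ algorithm for \doct yields nothing like $f'(k)n^{o(k)}$ for {\sc Multicolored Clique}. The logarithmic blow-up must be confined to the \emph{parameter}, never proportional to $\log n$; otherwise the transfer fails.

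The paper's solution is precisely designed to avoid this. It reduces from {\sc Partitioned Subgraph Isomorphism} with $G$ of maximum degree $3$, for which Marx already established an $n^{o(k/\log k)}$ bound with $k=|E(G)|$ (the $\log k$ loss is absorbed at the source, not paid in the reduction). Crucially, each selection gadget — a ``double clock'' of size $\Theta(n)$ — contributes only a \emph{constant} (weight $60$) to the parameter, and each constraint gadget (a ``synchronizer'', a folded-grid gadget in the spirit of Pilipczuk--Wahlstr\"om) contributes weight $1$, giving $k' = 60|V(G)| + |E(G)| = \Theta(|E(G)|)$. This is what makes the $n^{o(k'/\log k')}$ bound for \doct follow cleanly. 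Your binary-encoding idea is fighting against the grain: cut/transversal problems naturally implement unary choices (``where along this chain to cut''), which is exactly what clocks do; implementing AND/OR-style bit aggregation purely via odd directed cycles is not just ``technically delicate,'' there is no evident way to do it while keeping unintended odd cycles from arising between bit sub-gadgets. The paper even shows an {\sc A-DOCT} layer (arc labels, vertex weights) to keep the gadget analysis clean before reducing back to plain \doct, and proves the $D-X$ side via an explicit strongly-connected-component decomposition plus a $2$-coloring argument — you would need an analogous certificate for the ``nothing unintended is odd'' direction, and none is sketched here.

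Beyond the parameter issue, the missing technical core is concrete: you would need to specify the odd-cycle structure of each gadget, characterize exactly which minimum-weight hitting sets exist (as the paper does in Lemmata on ``cutting precisely'' versus ``roughly''), and then exhibit a global coloring/topological-ordering certificate for the forward direction. Without at least a candidate gadget whose unique minimum transversals encode a bit, this remains a hope rather than a proof.
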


On the one hand, 
Theorem~\ref{thm:maindoct} shows that \doct is intractable from the perspective of parameterized complexity. On the other hand, the problem is known not to admit a constant factor approximation algorithm running in polynomial time, assuming the Unique Games Conjecture~\cite{KhotV15}. Hence, 
%it is very tempting 
the next natural question is whether one could get a constant factor approximation algorithm in \FPT time. Our second result is an affirmative answer to this question.

\begin{restatable}{theorem}{doctappx}\label{thm:doctappx} 
\doct admits a  $2^{\Oh(k^2)}n^{\Oh(1)}$ time \FPT-approximation algorithm with approximation ratio $2$.
%There exists an algorithm that given as input a digraph $D$ and integer $k$, runs in time $2^{O(k^2)}n^{O(1)}$ and either correctly concludes that $D$ does not have a directed odd cycle transversal of size at most $k$, or outputs a directed odd cycle transversal of size at most $2k$.
\end{restatable}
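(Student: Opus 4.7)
My plan is to reduce \doct\ to a multicut-type problem on an auxiliary digraph whose optimum is within a factor of $2$ of the \doct\ optimum, and then solve that auxiliary problem \emph{exactly} in \FPT\ time; the factor $2$ of the approximation ratio is precisely the overhead of the reduction. The auxiliary object is the \emph{parity double cover} $D^*$ of $D$, with vertex set $\{v^0, v^1 : v \in V(D)\}$ and, for each arc $uv$ of $D$, two lifted arcs $u^0 v^1$ and $u^1 v^0$. A standard parity-of-length argument shows that $D$ has an odd directed cycle through $v$ if and only if $D^*$ has a directed $v^0 \to v^1$ path, and hence $S \subseteq V(D)$ is a directed odd cycle transversal of $D$ if and only if, writing $S^* := \{v^0, v^1 : v \in S\}$, the digraph $D^* - S^*$ has no $v^0 \to v^1$ path for any $v \notin S$.

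With $D^*$ in hand, I would consider the following \emph{symmetric hit-or-cut} problem: find a minimum $T \subseteq V(D^*)$ such that for every $v \in V(D)$, either $\{v^0, v^1\} \cap T \neq \emptyset$ or $D^* - T$ has no $v^0 \to v^1$ path. Any directed odd cycle transversal $S$ of $D$ of size $k^*$ doubles to a feasible $T = S^*$ of size $2k^*$; conversely, any feasible $T$ projects to $S := \{v : \{v^0, v^1\} \cap T \neq \emptyset\}$, which is easily verified (via $T \subseteq S^*$) to be a directed odd cycle transversal of $D$ with $|S| \le |T|$. Consequently, an exact \FPT\ algorithm for symmetric hit-or-cut parameterized by the cut budget $2k$, together with the contrapositive $|T|>2k \Rightarrow k^*>k$, yields a factor-$2$ \FPT-approximation for \doct.

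The bulk of the proof is therefore an \FPT\ algorithm that, given $D$ and $k$, either outputs a symmetric hit-or-cut of $D^*$ of size at most $2k$ or concludes that $D$ has no directed odd cycle transversal of size at most $k$. I would attack this by iterative compression: process the vertices of $D$ one at a time, maintaining such a solution for the currently induced subdigraph, and after adding each vertex perform a compression step. In the compression step, given a known feasible solution $W$ of slightly larger size, one first guesses $W \cap S_{\mathrm{new}}$ (at most $2^{\Oh(k)}$ choices), then exploits the structural fact that for every $v \notin W$, every $v^0 \to v^1$ path in $D^*$ must traverse the ``interface'' $W^* := \{w^0, w^1 : w \in W\}$ of size $\Oh(k)$ -- otherwise one would obtain an odd cycle in $D - W$. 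This interface reduces the task to a bounded-pair directed vertex multicut instance, which can be attacked via the framework of \emph{important separators}.

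The principal obstacle is that symmetric hit-or-cut has, a priori, $|V(D)|$ terminal pairs, and general directed vertex multicut with unboundedly many pairs is \W[1]-hard. The resolution will hinge on the $W^*$-interface and the skew-symmetry of $D^*$ (the involution $v^0 \leftrightarrow v^1$): after guessing, for each vertex of $W^*$, its role in the cut and its ``side'' with respect to the cover, the remaining problem decomposes into a bounded number of independent $s$--$t$ vertex cut sub-instances, each solvable by enumerating important separators in time $2^{\Oh(k)} n^{\Oh(1)}$. Combining the $\Oh(k)$ levels of branching, each contributing a factor $2^{\Oh(k)}$, is what produces the claimed $2^{\Oh(k^2)} n^{\Oh(1)}$ running time.
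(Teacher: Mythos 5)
Your reduction from $2$-approximating \doct\ to solving ``symmetric hit-or-cut'' exactly on the parity double cover $D^*$ is correct: the doubling $S\mapsto S^*$ and the projection $T\mapsto\{v:\{v^0,v^1\}\cap T\neq\emptyset\}$ give $\mathrm{opt}(\mathrm{DOCT})\leq\mathrm{opt}(\mathrm{SHC})\leq 2\cdot\mathrm{opt}(\mathrm{DOCT})$, and an exact \FPT\ algorithm for SHC at budget $2k$ would indeed yield the theorem. This is genuinely different from the paper's route, which never passes through $D^*$. However, the heart of the argument is the \FPT\ algorithm for SHC, and that part of your sketch does not hold up. Three specific problems: (i) ``bounded-pair directed vertex multicut'' is not an \FPT\ toolbox you can invoke --- it is \W[1]-hard already for four terminal pairs (Pilipczuk--Wahlstr\"{o}m), and after guessing the intersection of the solution with $W$ and how $W^*$ sits in the cut, you still have $n$ pairs $(v^0,v^1)$ to deal with, because the constraints from those pairs are entangled with the separator you are trying to build near $W^*$; (ii) the ``skew-symmetry of $D^*$'' is not actually present: the parity flip $v^0\leftrightarrow v^1$ is a fixed-point-free \emph{automorphism} (it sends the lift $u^0v^1$ of $uv$ to $u^1v^0$, also an arc), not an arc-\emph{reversing} involution, so $D^*$ of a general digraph is not skew-symmetric in the Goldberg--Karzanov sense and the skew-symmetric multicut machinery does not apply; and (iii) ``solvable by enumerating important separators'' is precisely the place where the difficulty of \doct\ lives --- the paper's own motivation for the \W[1]-hardness proof is that ``\doct-important separators'' (separators that are minimal both for cost and for the cost of cleaning up their reach) can number $\Omega(n)$ even at fixed small size, so there is no reason to expect the subproblems you arrive at to have small important-separator families.

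For contrast, the paper's proof of this theorem has two additional ideas that your proposal has no analogue of, and which are exactly what dissolve the objections above. First, after iterative compression and guessing, the unknown solution $S$ is shown to be a \emph{skew separator} for the undeleted part of the old solution $\hat S$ in the sense of the Chen et al.\ \dfvsfull\ subroutine, and that subroutine outputs some such separator $X$ of size at most $k$; removing $X$ breaks $D$ into strongly connected components each containing at most one vertex of $\hat S$, turning the $n$-pair situation into independent single-terminal subinstances. Second, within one such component with a single undeletable vertex $w$, the \emph{shadow removal} technique of Marx--Razgon / Chitnis et al.\ (which costs the $2^{\Oh(k^2)}$ factor) reduces to the case where the remaining solution is shadowless, so that after deleting it the component is strongly connected and hence bipartite by Proposition~\ref{prop:underlyingBipartite}, and one can finish with an undirected \octfull\ (in fact {\sc Node Unique Label Cover}) algorithm. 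Your sketch lacks both the skew-separator decomposition and the shadow-removal step, and I do not see how to supply the missing SHC algorithm without essentially re-deriving them.
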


In fact, Theorem~\ref{thm:doctappx} follows as a corollary from a stronger result for a ``labeled digraph problem'' which we introduce.
We show that this problem subsumes {\doct} as well as the {\sc Node Unique Label Cover} problem~\cite{ChitnisCHPP16,IwataWY16,LokshtanovRSULC16}, and design an {\FPT} approximation algorithm that works even for this more general problem.  

In light of Theorem~\ref{thm:doctappx} the next natural question is whether the approximation factor can be made arbitrarily close to $1$. Our final contribution is to provide evidence that there exists an $\epsilon > 0$ such that \doct does not admit a $(1+\epsilon)$ \FPT-approximation algorithm. In particular, the proof of Theorem~\ref{thm:maindoct} can be thought of as a parameterized reduction from the {\sc Binary Constraint Satisfaction} (BCSP) problem, informally defined as follows. The input consists of two integers $n$ and $k$ specifying that there are $k$ variables, $x_1, \ldots, x_k$, each variable $x_i$ taking a value from $\{1,  \ldots, n\}$, together with a list of {\em constraints}.  Each constraint specifies two variables, $x_i$ and $x_j$, together with a list $L$ of all legal pairs of values that $x_i$ and $x_j$ may take simultaneously. An assignment of values to the variables satisfies the constraint if $(x_i, x_j) \in L$. The task is to find an assignment that satisfies all constraints. It is well known (see e.g.~\cite{Marx07}) that BCSP parameterized by the number of variables $k$ is \W[1]-complete. 
We conjecture that not only is it \W[1]-hard to find a satisfying assignment to a BCSP instance if there is one, but it is also  
\W[1]-hard to distinguish between instances that have a satisfying assignment from instances where every assignment violates at least an $\epsilon$ fraction of the constraints. Formally, for every $\epsilon > 0$, we define the promise problem $\epsilon$-{\sc gap-BCSP}, as BCSP where the input instance is promised to either be satisfiable, or have the property that every assignment violates at least an $\epsilon$ fraction of the constraints. The task is to determine whether the input instance is satisfiable or not. 

\begin{hypothesis}[{\bf Parameterized Inapproximability Hypothesis (\PIH)}]\label{PIH}\label{conjecture}
There exists an $\epsilon>0$ such that $\epsilon$-{\sc gap-BCSP} is \W[1]-hard.
\end{hypothesis}

We remark that for purposes of showing hardness of approximation, we could just as well have conjectured that there exists an $\epsilon>0$ such that there is no $f(k)n^{\Oh(1)}$ time algorithm for $\epsilon$-{\sc gap-BCSP}. However, we strongly believe that the \PIH is true as stated---indeed, we should hardly claim this conjecture as our own, as quite a few researchers in parameterized complexity have stated this conjecture as a natural formulation of a PCP-theorem in the context of parameterized inapproximability. Our final result is that assuming the \PIH, there exists $\epsilon > 0$ such that \doct does not admit an \FPT-approximation algorithm with ratio $1+\epsilon$.

\begin{restatable}{theorem}{inapproxTheorem}\label{thm:doctinappx}\label{thm:doctHard2} 
Assuming the \PIH and \FPT $\neq$ \W[1], there exists $\epsilon>0$ such that \doct does not admit an \FPT-approximation algorithm with approximation ratio $1+\epsilon$.
\end{restatable}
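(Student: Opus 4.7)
The plan is to argue that the parameterized reduction from BCSP underlying Theorem~\ref{thm:maindoct} is, or can be refined to be, gap-preserving. Specifically, starting from a BCSP instance $I$ with $k$ variables and $m$ constraints, I would produce a \doct instance $(D, K)$ with parameter $k' = g(k)$ for some computable $g$, such that (i) if $I$ is satisfiable then $D$ admits a directed odd cycle transversal of size at most $K$, and (ii) if every assignment to $I$ violates at least $\epsilon m$ constraints, then every directed odd cycle transversal of $D$ has size at least $K + \beta \epsilon m$ for an absolute constant $\beta > 0$. If in addition the construction is arranged (or padded, via standard amplification) so that $K = \Theta(m)$, then (i) and (ii) together yield a multiplicative gap of $1 + \Omega(\epsilon)$ between the two cases.

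The heart of the argument is a strengthened soundness analysis of the gadgets used in the \W[1]-hardness construction. The reduction is built out of variable-selection gadgets, each contributing a fixed ``base cost'' to $K$ corresponding to the choice of a single value from the domain, and constraint gadgets that close odd directed cycles whenever a constraint is violated by the selected values. The crucial technical step is a ``decoding lemma'': from any directed odd cycle transversal $S$ of $D$ one can extract an assignment $\alpha_S$ whose number of violated constraints is at most $(|S| - K)/\beta$. This decoding must handle not only ``canonical'' solutions that pick exactly one value per variable-gadget but also ``mixed'' solutions that distribute extra deletions across many constraint gadgets; the standard technique is to show, by local replacement or an exchange argument, that any non-canonical transversal can be transformed into a canonical one of no larger size whose induced assignment violates at most $(|S|-K)/\beta$ constraints. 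I expect this to be the main obstacle, since the freedom to intersect an odd cycle at different vertices means one has to rule out solutions that partially ``cheat'' in many gadgets at once.

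Once the gap-preserving reduction is in hand, the theorem follows from a standard argument. Assume for contradiction a $(1+\delta)$ \FPT-approximation algorithm $\mathcal{A}$ for \doct with $\delta < \beta \epsilon m / K$, which is $\Omega(\epsilon)$ by our balancing. Given an $\epsilon$-{\sc gap-BCSP} instance $I$, compute $(D, K)$ and run $\mathcal{A}$ on it. If $I$ is satisfiable, $\mathcal{A}$ returns a transversal of size at most $(1+\delta) K$; if every assignment violates at least $\epsilon m$ constraints, then by the decoding lemma the \doct optimum on $(D, K)$ exceeds $(1 + \delta)K$, so $\mathcal{A}$'s output does as well. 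Thresholding the returned size against $(1+\delta)K$ therefore decides $\epsilon$-{\sc gap-BCSP} in \FPT time, contradicting \PIH together with \FPT $\neq$ \W[1].
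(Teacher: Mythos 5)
Your high-level plan—show the \W[1]-hardness reduction is gap-preserving, prove a ``decoding lemma'' bounding the number of violated constraints by the excess cost of a transversal, and then threshold an approximation algorithm's output—is indeed the skeleton of the paper's argument: the paper's Lemma~\ref{lem:direction2} is exactly such a decoding lemma, proved not by an exchange argument converting arbitrary transversals to canonical ones (as you anticipate being the obstacle), but by a direct accounting: each double clock not cut precisely costs at least $70$ instead of $60$ and each synchronizer not cut precisely costs at least $42$ instead of $41$, so a near-optimal solution forces most synchronizers to be cut precisely, and those synchronizers directly decode into satisfied constraints. No canonicalization of the solution is needed.

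However, there is a genuine gap in your proposal at the ``balancing so that $K = \Theta(m)$'' step, which you dispatch with ``standard amplification'' or ``padding.'' That is not what makes the balance work. In the reduction, the budget is $k = 60|V(G)| + |E(G)|$: it is dominated by the variable-selection gadgets, which scale with the number of variables, not the number of constraints. The multiplicative gap only materializes because the accounting in Lemma~\ref{lem:direction2} uses the bound $|E(G)| - s \leq 3(|V(G)| - c^\star)$, which requires the constraint graph to have maximum degree $3$; without that degree bound, a small number of ``bad'' variable gadgets could excuse arbitrarily many constraints from being checked, and the gap collapses. So what is actually needed—and what the paper does via Lemmas~\ref{lem:toBCSP4} and~\ref{lem:toBCSP3}—is a degree-reduction: replace each high-degree variable by a cloud of low-degree copies glued together by equality constraints laid out on a constant-degree expander (Proposition~\ref{prop:expander}), so that cheating on a nonnegligible fraction of copies forces a proportional number of violated equality constraints. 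This is not standard amplification or padding; it is an expander-based degree reduction, and it is the additional idea your proposal is missing.
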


%\smallskip
\noindent
{\bf Arc-Directed Odd Cycle Transversal.}
We remark that easy reductions transfer all of our results to {\sc Arc}-\doct, the ``arc'' version of \doct where the goal is to remove at most $k$ arcs such that the resulting graph does not have any directed odd cycles.  To transfer the hardness results we need to reduce \doct to {\sc Arc}-\doct. For this purpose, it is sufficient to subdivide every arc, and then split every original vertex $u$ of the input digraph into two vertices, $u_{in}$ and $u_{out}$, such that all arcs leading into $u$ lead into $u_{in}$ instead, all arcs leading out of $u$ lead out of $u_{out}$ instead, and adding the arc $u_{in}u_{out}$.  To transfer the algorithmic results from \doct to {\sc Arc}-\doct, we need to reduce {\sc Arc}-\doct to \doct. This is achieved by subdividing every arc twice, and then making each original vertex undeletable by adding $k+1$ copies of it.

\subsection*{Our Methods}

{\bf \W[1]-hardness.}
The starting point for both our hardness results as well as our approximation algorithm is a failed attempt at obtaining an \FPT algorithm. The root of this attempt was the \FPT algorithm for DFVS by Chen et al.~\cite{ChenLLOR08}. The key concept in this algorithm is the notion of {\em important separators}, defined by Marx~\cite{Marx06}. Given a digraph $D$ and two vertices $u$ and $v$, a $u$-$v$-{\em separator} is a vertex set $S \subseteq V(D) \setminus \{u, v\}$ such that there is no directed path from $u$ to $v$ in $D - S$. A $u$-$v$-separator $S$ is called a {\em minimal} $u$-$v$-{\em separator} if no proper subset of $S$ is also a $u$-$v$-separator.

Given a vertex set $S$ such that $u$ is not in $S$, we define the {\em reach of $u$ in $D - S$} as the set $R_D(u,S)$ of vertices reachable from $u$ by a directed path in $D - S$. We can now define a partial order on the set of minimal $u$-$v$ separators as follows. Given two minimal $u$-$v$ separators $S_1$ and $S_2$, we say that $S_1$ is ``at least as good as''  $S_2$ if $|S_1| \leq |S_2|$ and $R_D(u, S_2) \subseteq R_D(u, S_1)$. In plain words, $S_1$ ``costs less'' than  $S_2$ in terms of the number of vertices deleted, and $S_1$ ``is pushed further towards $v$'' than $S_2$ is. A minimal $u$-$v$ separator $S$ is an {\em important} $u$-$v$-{\em separator} if no minimal $u$-$v$-separators other than $S$ is at least as good as $S$. The key insight behind the algorithm for DFVS by Chen et al.~\cite{ChenLLOR08}, as well as algorithms for several other parameterized problems~\cite{ChitnisHM13,Chitnis:2012DSFVS,CyganPPW13,Kratsch:2012MCDAG,LokshtanovM13,LokshtanovR12,LokshtanovRS15,LokshtanovRSULC16,MarxR14}, is that for every $k$, the number of important $u$-$v$-separators of size at most $k$ is at most $4^k$~\cite{ChenLL09}. We refer the reader to the textbook by Cygan et al.~\cite{CyganFKLMPPS15} for a more thorough exposition of important separators.

Applying the initial steps of the DFVS algorithm to \doct (i.e. the methods of iterative compression, and guessing an order on an undeletable solution), one naturally arrives at an extension of the notion of important separators. Let us define the {\em cleaning cost} of a minimal $u$-$v$ separator $S$ as $\doctfunc{D[R_D(u, S)]}$, where $\doctfunc{D}$ is the minimum size of a directed odd cycle transversal of $D$. Then, we define a new partial order on minimal $u$-$v$ separators. Here, given two minimal $u$-$v$ separators $S_1$ and $S_2$, we say that $S_1$ is ``at least as good as''  $S_2$ if $|S_1| \leq |S_2|$, $R_D(u, S_2) \subseteq R_D(u, S_1)$, and the cleaning cost of $S_1$ is at most the cleaning cost of $S_2$.
%the minimum size of a directed odd cycle transversal in $D[R_D(u, S_1)]$ is at most the minimum size of a directed odd cycle transversal in $D[R_D(u, S_2)]$. 
In other words, $S_1$ costs less than $S_2$, $S_1$ is pushed further towards $v$ than $S_2$, and ``cleaning up'' the reach of $u$ in $G - S_1$ does not cost more than cleaning up the reach of $u$ in $G - S_2$. We say that a minimal $u$-$v$ separator $S$ is a {\em \doct-important} $u$-$v$-{\em separator} if no minimal $u$-$v$-separators other than $S$ are at least as good as $S$ with respect to this new partial order. 

For every digraph $D$, vertices $u$ and $v$ and integer $k$, we know that there are at most $4^k$ important $u$-$v$ separators of size at most $k$. For the purposes of an \FPT algorithm for \doct, the pivotal question becomes whether the number of \doct-important $u$-$v$-separators of size at most $k_1$ and cleaning cost at most $k_2$ can be upper bounded by a function of $k_1$ and $k_2$ only, or if there exist families of graphs where the number of \doct-important $u$-$v$-separators of size at most $k_1$ and cleaning cost at most $k_2$ grows with the size of the graphs. Indeed, a constructive upper bound on $f(k_1, k_2)$, the number of \doct-important $u$-$v$-separators of size at most $k_1$ and cleaning cost at most $k_2$, would have implied an \FPT algorithm for \doct. 

We managed to prove that there exists a function $f$ such that the number of \doct-important $u$-$v$-separators of size at most $k$ and cleaning cost $0$ is at most $f(k)$. Emboldened by this proof, we attempted to similarly upper bound the number of \doct-important $u$-$v$-separators of size at most $k$ and cleaning cost $1$. At this point, we discovered the {\em clock gadgets} (see Section~\ref{sec:clockBasic}), which are graphs where the number of \doct-important $u$-$v$-separators of size at most $2$ and cleaning cost $1$ is $\Omega(n)$.

A clock gadget essentially lets us encode (in the language of \doct) the choice of one element out of a domain of size $n$, without it being clear a priori which element(s) should be the best one(s) to select. For many problems, once one has such a selection gadget it is easy to prove \W[1]-hardness by reducing from BCSP (or, equivalently, from {\sc Multicolored Clique}). However, we were able to show that on graphs consisting only of clock gadgets glued together in the most natural way, \doct is in fact \FPT{}\footnote{Because this is such a specialized graph class, we did not include a proof of this fact in the paper.}. In particular, clocks do not provide a general way of synchronizing the choices of different elements, making it difficult to encode the constraints of BCSP using \doct. We were able to engineer such a synchronization gadget by a non-trivial modification of the ``grid gadget'' used by Pilipczuk and Wahlstr\"{o}m~\cite{DBLP:conf/soda/PilipczukW16} to show \W[1]-hardness of {\sc Directed Multicut} with four terminal pairs. At this point one can complete a reduction from BCSP using clocks to encode the selection of a value for each variable and using synchronization gadgets to encode the constraints of the BCSP instance.

\smallskip
\noindent
{\bf \FPT-Approximation.}
The hardness of \doct comes from the fact that \doct-important $u$-$v$-separators have to do two jobs at the same time. First, they need to disconnect $v$ from $u$, and second they need to clean the reach of $u$ from directed odd cycles. Our approximation algorithm works by delegating the two jobs to different solutions, and solving each of the jobs separately and optimally. 
%By delegating the two jobs to different solutions
% we can use normal important $u$-$v$-separators for the job of separating $v$ from $u$. 

Just like our \W[1]-hardness proof, our \FPT-approximation for \doct{} builds on the algorithm of Chen et al.~\cite{ChenLLOR08} for DFVS. The method of iterative compression (see~\cite{CyganFKLMPPS15,DBLP:series/txcs/DowneyF13}) allow us to reduce the original problem to the setting where we are given a digraph $D$, an integer $k$, and a directed odd cycle transversal $\hat{S}$ of size $2k+1$. The task is to either determine that $D$ does not have a directed odd cycle transversal of size at most $k$, or output a directed odd cycle transversal of size at most $2k$. We now proceed with a sketch of how to solve this task in \FPT time.

In order to witness that a  digraph $D$ has no directed odd cycles it is sufficient to partition the vertex set of $D$ into sets $Z_1, Z_2, \ldots, Z_\ell$ such that (a) no arc goes from $Z_i$ to $Z_j$ with $j < i$ and (b) for every $i \leq \ell$ the underlying undirected graph of $D[Z_i]$ is bipartite. To certify (b) it is sufficient to provide a coloring of all vertices in $D$ with black or white, such that every arc with both endpoints in $Z_i$ for some $i$ has different colored endpoints. The sets $Z_1, Z_2, \ldots, Z_\ell$ can always be chosen to be the strongly connected components of $D$, and in this case the ordering $Z_1, Z_2, \ldots, Z_\ell$ can be any topological ordering of the directed acyclic graph obtained from $D$ by collapsing every strongly connected component to a vertex.

Suppose now that $D$ has a directed odd cycle transversal $S$ of size at most $k$. Let $Z_1, Z_2, \ldots, Z_\ell$ be a partitioning of $V(D - S)$ and $\phi : V(D - S) \rightarrow \{\mbox{black, white}\}$ be a coloring that certifies that $D - S$ does not have directed odd cycles.
At the cost of a $3^k$ overhead in the running time we can guess for each vertex $v \in \hat{S}$ whether it is deleted (i.e put in the directed odd cycle transversal), colored black or colored white. At the cost of an additional $k!$ overhead in the running time we can guess for every pair of vertices $u$, $v$ in $\hat{S}$ whether they occur in the same strong component $Z_i$, and if not, which of the two strong components containing $u$ and $v$ respectively occurs first in the ordering  $Z_1, Z_2, \ldots, Z_\ell$. Applying these guesses together with some simple reduction rules, we end up in the following setting. The input is a digraph $D$, an integer $k$ and a set $\hat{S}$ such that $D - S$ contains no directed odd cycles, and $D[S]$ is an acyclic tournament (that is, there is an arc between every pair of vertices in S). The task is to either find a set $S \subseteq V(D) \setminus \hat{S}$ of size at most $2k$ such that (a) $S$ is a directed odd cycle transversal, and (b) no strong component of $D - S$ contains more than one vertex of $\hat{S}$, or to conclude that no such set of size at most $k$ exists.

A set $S$ that only satisfies (b) is called a {\em skew separator} for $\hat{S}$, and the main subroutine in the algorithm of Chen et al.~\cite{ChenLLOR08} for DFVS is an algorithm that given $D$, $\hat{S}$ and $k$, runs in time $\Oh(4^kk^{\Oh(1)}(n+m))$, and finds a skew separator $S$ for $\hat{S}$ of size at most $k$ if such a skew separator exists. Our approximation algorithm runs this subroutine and either finds a skew separator $S$ of size at most $k$, or concludes that no set of size at most $k$ can satisfy both (a) and (b) (in particular, just (b)). It then determines in time $2^{\Oh(k^2)}n^{\Oh(1)}$ whether $D - S$ has a directed odd cycle transversal of size at most $k$ disjoint from $\hat{S}$. If such a set $S^*$ exists, the algorithm outputs $S \cup S^*$ as a solution of size at most $2k$ that satisfies (a) and (b). If no such directed odd cycle transversal $S^*$ exists, the approximation algorithm concludes that no set of size at most $k$  can satisfy both (a) and (b) (in particular, just (a)). All that remains is to describe the algorithm for finding  in time $2^{\Oh(k^2)}n^{\Oh(1)}$ a directed odd cycle transversal $S^*$ of size at most $k$ disjoint from $\hat{S}$ in $D - S$, or determining that such a set does not exist.

At this point we observe that the problem breaks up into independent sub-problems for each strongly connected component of $D-S$. For each such component $C$ we have that $|C \cap \hat{S}| \leq 1$, because $S$ is a skew separator for $\hat{S}$. Since $\hat{S}$ is a directed odd cycle transversal for $D$, if  $C \cap \hat{S} = \emptyset$ then there can be no directed odd cycles in $D[C]$. Hence we concentrate on the case when $C \cap \hat{S} = \{w\}$ for a vertex $w$. In other words, we are down to the case where the input is a digraph $D$, integer $k$ and a vertex $w$ such that $\{w\}$ is a directed odd cycle transversal for $D$. The task is to find a directed odd cycle transversal $S^*$ of $D$ of size at most $k$ with $w \notin S^*$.

Define the {\em shadow} of $S^*$ to be the set of all vertices of $D - S^*$ that are not in the strongly connected component of $G - S^*$ containing $w$. Using the technique of {\em shadow removal}, introduced by Marx and Razgon~\cite{MarxR14} (see also~\cite{ChitnisHM13,Chitnis:2012DSFVS}) in their \FPT algorithm for {\sc Multicut}, we can reduce the problem to the special case where the shadow of $S^*$ is empty, at the cost of a $2^{\Oh(k^2)}n^{\Oh(1)}$ overhead in the running time. In this special case $D-S^*$ is strongly connected, and therefore the underlying undirected graph of $D-S^*$ is bipartite. Thus, $S^*$ is an {\em undirected} odd cycle transversal for the underlying undirected graph of $D$. Here we can apply any one of the numerous \FPT algorithms~\cite{IwataOY14,LokshtanovNRRS14,RamanujanS14,ReedSV04} for OCT. 
%Hence we can find $S^*$ in time $\Oh(2.32^kn^{\Oh(1)})$ using the algorithm for {\sc Odd Cycle Transversal} of Lokshtanov et al.~\cite{}. 
Thus we can find optimal directed odd cycle transversals in \FPT time for the case when a single undeletable vertex is a directed odd cycle transversal, and as discussed above, this is sufficient to complete the factor $2$ \FPT-approximation.

As a subroutine of our \FPT-approximation we gave an \FPT algorithm for DOCT for the special case where an undeletable directed odd cycle transversal of size $1$ is given as input. Our hardness result also holds for the case where an undeletable directed odd cycle transversal of size $3$ is given as input (the vertices $\{x,y,z\}$ in the construction). Therefore, the parameterized complexity of the case when one also has an undeletable directed odd cycle transversal of size $2$ in the input, is an interesting open problem. It is conceivable that an \FPT algorithm for this case could help in obtaining an {\FPT}-approximation for {\doct} with a factor better than 2.
%assist in bringing down the approximation factor below $2$.

We remark that this high-level  approach extends to a more general problem that subsumes \doct as well as {\sc Node Unique Label Cover}. Therefore, we design our {\FPT} approximation for the general problem and derive the algorithm for {\doct} as a corollary.

%\paragraph{old} The hardness of \doct comes from the fact that \doct-important $u$-$v$-separators have to do two jobs at the same time. First, they need to disconnect $v$ from $u$, and second they need to clean the reach of $u$ from directed odd cycles. By delegating the two jobs to different solutions, we can use normal important $u$-$v$-separators for the job of separating $v$ from $u$. 

%This reduces the problem to instances where the input graph has a {\em single} undeletable vertex $v$ such that $D - v$ has no directed odd cycles, at the cost of adding $\doctfunc{D}$ to the solution size. The special case where $D$ has a single undeletable vertex $v$ such that $D - v$ has no directed odd cycles turns out to be \FPT, because a {\em shadow removal} step (see~\cite{ChitnisHM13,Chitnis:2012DSFVS}) reduces the problem to OCT on undirected graphs. Here we can apply any one of the numerous \FPT algorithms~\cite{IwataOY14,LokshtanovNRRS14,RamanujanS14,ReedSV04} for OCT. We remark that this high-level  approach extends to a more general problem that subsumes \doct as well as {\sc Node Unique Label Cover}. Therefore, we design our {\FPT} approximation for the general problem and derive the algorithm for {\doct} as a corollary.

\smallskip
\noindent
{\bf \FPT-Inapproximability.}
For every $\epsilon > 0$ there exists a $\delta > 0$ such that our first reduction, which proves the \W[1]-hardness of \doct, also translates the hardness of $\epsilon$-{\sc gap-BCSP} into hardness of distinguishing between digraphs $D$ such that $\doctfunc{D} \leq k$ from digraphs $D$ such that $\doctfunc{D} > k(1 + \delta)$. However, the reduction only works if in the instance of $\epsilon$-{\sc gap-BCSP} every variable occurs in at most three constraints. To complete the proof of the parameterized inapproximability of \doct, we need to reduce $\epsilon$-{\sc gap-BCSP} to this special case. We achieve this by replacing every ``high degree'' variable by a group of independent low degree variables, while ensuring that the low degree variables all get the same value by introducing a constant degree expander of equality constraints between them.

\section{Preliminaries}\label{sec:prelims}
We use the notations $[t]$ and $[t]_0$ as shorthands of $\{1,2,\ldots,t\}$ and $\{0,1,\ldots,t\}$, respectively. Given a function $f:A\rightarrow\mathbb{R}$ and a subset $A'\subseteq A$, denote $f(A')=\sum_{a\in A'}f(a)$.

\smallskip 
\myparagraph{Parameterized Complexity.} 
Formally, a {\em parameterization} of a problem is the assignment of an integer $k$ to each input instance.  Here, the goal is to confine the combinatorial explosion in the running time of an algorithm for $\Pi$ to depend only on $k$. We say that a parameterized problem $\Pi$ is {\em fixed-parameter tractable} ({\FPT}) if there exists an algorithm that solves $\Pi$ in time $f(k)\cdot |I|^{\bigoh(1)}$, where $|I|$ is the size of the input instance and $f$ is an
arbitrary computable function depending only on the parameter $k$. 

On the negative side, parameterized complexity also provides methods to show that a problem is unlikely to be \FPT. The main technique is the one of parameterized reductions analogous to those employed in classical complexity. Here, the concept of \WO-hardness replaces the one of \NP-hardness, and we need not only construct an equivalent instance in \FPT\ time, but also ensure that the size of the parameter in the new instance depends only on the size of the parameter in the original instance. For our purposes, it is sufficient to note that if there exists such a reduction transforming a problem known to be \WOH\ to another problem $\Pi$, then the problem $\Pi$ is \WO-hard as well. Central \WOH-problems include, for example, the problem of deciding whether a nondeterministic single-tape Turing machine accepts within $k$ steps, the {\sc Clique} problem parameterized be solution size, and the {\sc Independent Set} problem parameterized by solution size

In the context of a parameterized minimization problem $\Pi$, we say that an algorithm for $\Pi$ is an {\em $\alpha$-approximation algorithm} if it always outputs a solution of size at most $\alpha k$ when there exists a solution of size at most $k$ (in other words, the input instance is a \yesinstance), and it always outputs \No\ when there does not not exist a solution of size at most $\alpha k$.
Additional details can be found in the monographs \cite{FG06,Nie06,DBLP:series/txcs/DowneyF13,CyganFKLMPPS15}.

\smallskip 
\myparagraph{Digraphs.}
We refer to standard terminology from the book of Diestel~\cite{Diestel10} for those graph-related terms that are not explicitly defined here.
Given a digraph $D$ and a vertex set $X\subseteq V(D)$, we say that $X$ is a {\em directed odd cycle transversal} of $D$ if $X$ intersects every directed odd cycle of $D$. We further say that $X$ is a {\em minimal} directed odd cycle transversal of $D$ if no proper subset of $D$ is also a directed odd cycle transversal of $D$. Finally, we call $X$ a {\em minimum} directed odd cycle transversal of $D$ if there is no directed odd cycle transversal of $D$ whose size is strictly smaller than the size of $X$. In the context of {\sc DOCT}, we use the terms {\em solution} and {\em $\alpha$-approximate solution} to refer to directed odd cycle transversals of sizes at most $k$ and at most $\alpha k$, respectively.

Given a vertex set $X\subseteq V(D)$, we let $D[X]$ denote the subgraph of $D$ induced by $X$, and we define $D\setminus X=D[V(D)\setminus X]$. Given an arc $(u,v)\in A(D)$, we refer to $u$ as the {\em tail} of the arc and to $v$ as the {\em head} of the arc.
Given a vertex set $X\subseteq V(G)$, we use $N^+(X)$ to denote the set of out-neighbors of $X$ and $N^-(X)$ to denote the set of in-neighbors of $X$. We use $N^i[X]$ to denote the set $X\cup N^i(X)$ where $i\in \{+,-\}$. We denote by $A[X]$ the subset of edges in $A(D)$ with both endpoints in $X$. A {\em strongly connected component} of $D$ is a maximal subgraph in which every vertex has a directed path to every other vertex. We say that a strongly connected component is {\em non-trivial} if it consists of at least two vertices and {\em trivial} otherwise. For disjoint vertex sets $X$ and $Y$, the set $Y$ is said to be {\em reachable from $X$} if for \emph{every} vertex  $y\in Y$, there exists a vertex $x\in X$ such that the $D$ contains a directed path from $x$ to $y$. 
 For a vertex $v\in V(D)$ and walk $W=v_1,\dots, v_r$, we say that $W$ is a $v$-{\em walk} if there is an $i\in [r]$ such that $v=v_i$. 
We say that $W$ is a {\em closed} $v$-walk if $v_1=v_r=v$. We say that $W$ is an $x$-$y$ walk if $v_1=x$ and $v_r=y$. Sometimes we say that a $v$-walk is a $v$-$v$ walk. This is simply so that we can refer to $x$-$y$ walks in general without having to resort to a separate proof (when it is not necessary) for the case when $x$=$y$.
For $1\leq i<j\leq r$, we denote by $W[v_i,v_j]$ the subwalk of $W$ from $v_i$ to $v_j$. We call the vertices $v_2,\dots, v_{r-1}$, the {\em internal} vertices of the walk $W$. For two walks $W_1=v_1,\dots v_t$ and $W_2=w_1,\dots, w_q$ such that $v_t=w_1$, we denote by $W_1+W_2$ the concatenated walk $v_1,\dots, v_{t-1},v_t,w_2,\dots, w_q$.
For disjoint subsets $X,Y,Z\subseteq V(D)$, we call $Z$ an $X$-$Y$ {\em separator} if there is no path from a vertex of $X$ to a vertex of $Y$ in $D-Z$.

Our proofs rely on the following well-known proposition (see, e.g., \cite{DBLP:books/daglib/0006487}).

\begin{proposition}[Folklore]\label{prop:underlyingBipartite}
Let $D$ be a strongly connected directed graph that does not contain a  directed odd cycle. Then, the underlying undirected graph of $D$ is a bipartite graph.
\end{proposition}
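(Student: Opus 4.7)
The plan is to produce an explicit 2-coloring of the underlying undirected graph witnessing bipartiteness. Fix an arbitrary vertex $r\in V(D)$. Since $D$ is strongly connected, for every vertex $u$ there exists a directed walk from $r$ to $u$; choose any such walk $W_u$ and define $\phi(u)\in\{0,1\}$ to be the parity of its length $|W_u|$. The first task is to verify that $\phi$ is well-defined, and the second is to verify that $\phi$ properly 2-colors every edge of the underlying undirected graph.

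For well-definedness, the key intermediate claim is that every closed directed walk in $D$ has even length. I would prove this by induction on the length of the walk: any closed directed walk $W$ that is not already a simple cycle must contain a vertex visited at least twice, and splitting $W$ at such a vertex decomposes it into two shorter closed directed walks whose lengths sum to $|W|$; eventually the decomposition produces a multiset of directed cycles summing to $|W|$. By hypothesis every directed cycle has even length, so $|W|$ is even. Granted this, if $W_u$ and $W_u'$ are two directed walks from $r$ to $u$, and $W'$ is any directed walk from $u$ back to $r$ (which exists by strong connectivity), then $W_u+W'$ and $W_u'+W'$ are both closed directed walks at $r$, hence of even length, so $|W_u|\equiv|W_u'|\pmod 2$. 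Thus $\phi$ is well-defined.

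For the coloring property, consider any edge $\{u,v\}$ of the underlying undirected graph; without loss of generality $(u,v)\in A(D)$. Then $W_u$ concatenated with the arc $(u,v)$ is a directed walk from $r$ to $v$ of length $|W_u|+1$, so by well-definedness $\phi(v)\equiv|W_u|+1\equiv\phi(u)+1\pmod 2$. Hence $\phi$ is a proper 2-coloring, and the underlying undirected graph is bipartite.

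The main obstacle is the closed-walk parity claim; all other steps are routine. The subtlety there is that one is not dealing with simple cycles but with arbitrary closed walks, and one must argue carefully that a closed directed walk decomposes into a collection of directed cycles whose lengths add up to the walk length. Once this cycle-decomposition lemma is in hand, the rest of the argument assembles immediately.
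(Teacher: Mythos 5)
Your proof is correct. The paper does not give its own proof---it labels the proposition as folklore and cites a textbook---but your argument (color by parity of distance from a fixed root, justify well-definedness via the fact that every closed directed walk decomposes into directed cycles and hence has even length under the hypothesis) is exactly the standard textbook proof of this fact, and every step, including the cycle-decomposition induction, is sound.
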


\section{W[1]-Hardness}
In this section, we resolve the question of the parameterized complexity of {\sc DOCT}. More precisely, we prove Theorem \ref{thm:maindoct}. For convenience, let us restate the theorem below.

\wHardTheorem*

The source of our reduction is the {\sc Partitioned Subgraph Isomorphism (PSI)} problem. The definition of this problem relies on the notion of a {\em colorful mapping}. Given undirected graphs $H$ and $G$, and a coloring function $col:V(H)\rightarrow V(G)$, we say that an injective function $\varphi: V(G')\rightarrow V(H)$ is a {\em colorful mapping of $G'$ into $H$}, where $G'$ is a subgraph of $G$, if for every $v\in V(G')$, $col(\varphi(v))=v$, and for every $\{u,v\}\in E(G')$, $\{\varphi(u),\varphi(v)\}\in E(H)$. Formally, the {\sc PSI} problem is defined as follows.

\begin{center}
\begin{boxedminipage}{.8\textwidth}
\decnamedefn{{\sc Partitioned Subgraph Isomorphism (PSI)}}{Undirected graphs $H$ and $G$, and a coloring function $col:V(H)\rightarrow V(G)$. The maximum degree of a vertex of $G$ is 3.}
{Does there exist a colorful mapping of $G$ into $H$?}
\end{boxedminipage}
\end{center}

While the {\sc PSI} problem requires us to map the entire graph $G$, to prove our inapproximability result we would also be interested in colorful mappings of {\em subgraphs} of $G$. In the context of the {\sc PSI} problem, we rely on a well-known proposition due to Marx~\cite{Marx07}.

\begin{proposition}[Corollary 6.3, \cite{Marx07}]\label{prop:psiHard}
The {\sc PSI} problem is \WOH. Moreover, unless {\sf ETH} fails, {\sc PSI} cannot be solved in time $f(k)n^{o(\frac{k} {\log k})}$ for any function $f$ where $k=\vert E(G)\vert$. Here, $n=|V(H)|$.
\end{proposition}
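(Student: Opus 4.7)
The plan is to establish both halves of the proposition via a single reduction chain starting from $3$-SAT, with the pure \WO-hardness also recoverable through a simpler reduction from \textsc{Multicolored Clique}. In that warm-up reduction, given color classes $V_1,\dots,V_k$, I would take $G:=K_k$, let $H$ be the input graph, and color each $v\in V_i$ with the $i$-th vertex of $G$; colorful mappings of $G$ into $H$ are then in bijection with multicolored $k$-cliques. The only defect is that this $G$ has degree $k-1$, so it must later be massaged to meet the max-degree-$3$ requirement.

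For the ETH-tight lower bound I would first invoke the Sparsification Lemma to reduce to a $3$-SAT formula $\varphi$ with $N$ variables and $O(N)$ clauses. The core idea is a \emph{grouping}: partition the $N$ variables into $t := \Theta(N/\log N)$ blocks of size $\Theta(\log N)$, and create in $G$ one vertex $v_i$ per block. The corresponding color class in $V(H)$ consists of all $2^{\Theta(\log N)}=\text{poly}(N)$ partial assignments to block $i$, so $|V(H)|$ stays polynomial in $N$. Each clause of $\varphi$, touching at most three blocks, is encoded by a constant-size gadget in $G$ whose edges are permitted in $H$ only for those tuples of partial assignments that satisfy the clause. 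A colorful mapping of $G$ into $H$ then coincides with a satisfying assignment of $\varphi$, and crucially $k:=|E(G)|=\Theta(N/\log N)$; hence an $f(k)\cdot n^{o(k/\log k)}$ algorithm would yield a $2^{o(N)}$-time algorithm for $3$-SAT, contradicting \ETH.

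The delicate step is enforcing maximum degree $3$ on $G$. Each $v_i$ participates in up to $\Theta(N/\log N)$ clause gadgets, so I would replace it by a chain (or balanced degree-$3$ tree) of copies linked by edges whose permitted labels in $H$ form the equality relation on block-$i$ assignments; this propagates a single partial assignment along the entire chain. Because $\sum_i \deg(v_i)=O(N)$, the total blow-up in $|E(G)|$ is a constant factor, so $k$ remains $\Theta(N/\log N)$ and $|V(H)|$ remains polynomial in $N$. The same subdivision trick applied to $K_k$ in the warm-up yields max-degree-$3$ \WO-hardness from \textsc{Multicolored Clique}.

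The main obstacle is calibrating the block size so that the exponents align precisely: choosing blocks of size $\Theta(\log N)$ makes $\log k = \Theta(\log N)$, which is exactly what is needed so that the $o(k/\log k)$-in-$n$ factor, combined with $\log n = \Theta(\log N)$, collapses to $o(N)$ in the $3$-SAT exponent. A smaller block size would weaken the ETH conclusion to something like $f(k)n^{o(k/\log^2 k)}$, and a larger one would break the polynomiality of $|V(H)|$. A secondary subtle point is that the equality chains used for degree reduction must have length \emph{linear}, not logarithmic, in the degrees they replace, so that the $\log k$ denominator in the final bound is governed by the block size alone and is not muddled by the degree-reduction gadgets.
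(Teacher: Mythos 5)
The paper does not prove this proposition at all: it is quoted verbatim from Marx's ``Can you beat treewidth?'' (Corollary 6.3 of~\cite{Marx07}), so there is no internal proof to compare against. Your warm-up reduction from \textsc{Multicolored Clique} (take $G=K_k$, then reduce degree) is fine for bare \WO-hardness. The problem is the ETH part, which is the real content of Marx's corollary, and there your sketch has a genuine gap.

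The gap is that the naive ``group variables into blocks, one gadget per clause, degree-reduce'' route cannot deliver the $n^{o(k/\log k)}$ bound. Your bookkeeping is already off: with $\Theta(N/\log N)$ blocks and $\Theta(N)$ clause gadgets, the pattern graph $G$ has $\Theta(N)$ vertices and $\Theta(N)$ edges \emph{before} degree reduction, and degree reduction only multiplies this by a constant, so $k=|E(G)|=\Theta(N)$, not $\Theta(N/\log N)$. With $k=\Theta(N)$ the $f(k)$ term in the hypothetical $f(k)\,n^{o(k/\log k)}$-time algorithm swallows everything, since $f$ is an arbitrary computable function, and no contradiction with $2^{o(N)}$-time $3$-SAT follows. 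The standard repair is to let $k$ be a slowly growing function of $N$ so that $f(k)$ is tamed. But then $G$ is a bounded-degree graph on $\Theta(k)=o(N)$ vertices, and a degree-$3$ graph can directly encode only $O(k)$ binary constraints; there is no room to place $\Theta(N)$ clause gadgets or to check all of the $\binom{k}{2}$ pairwise compatibilities. This is precisely where the difficulty lies, and it is not addressed by ``calibrating the block size''. The $\log k$ in the exponent does not come from the block size; it comes from a structural graph-theoretic embedding theorem. Marx's actual argument takes $G$ to be a $3$-regular expander on $\Theta(k)$ vertices (hence treewidth $\Theta(k)$) and proves an embedding result showing that a dense grid-like communication structure on $\Theta(k/\log k)$ ``channels'' can be routed through (a blow-up of) any graph of that treewidth, with the $\log$-factor loss coming from the embedding. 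That routing is what lets $o(N)$ low-degree pattern vertices verify all $\Theta(N)$ constraints, and it is the ingredient your sketch is missing. Without it, the sketch proves nothing beyond the bare \WO-hardness you get from \textsc{Multicolored Clique}.

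A side remark: your assertion that the equality chains must be \emph{linear} rather than logarithmic in the degrees they replace is not doing any work. Either choice changes $|E(G)|$ by at most a constant factor, and neither influences where the $\log k$ denominator in the running-time bound comes from.
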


The components introduced by our proof may play key roles in other reductions that aim to establish the \W[1]-hardness of problems involving parities and/or cuts. Hence, we have structured our proof as follows. First, for the sake of clarity of the proof, we integrate arc and vertex annotations into the definition of {\sc DOCT}. Then, we introduce the concept of a {\em clock}, which is a gadget that lies at the heart of our reduction. This gadget captures the power of parities in a compact, easy-to-use manner. In particular, it elegantly encodes the selection of two (not necessarily distinct) indices from a set $[n]$ whose sum is upper bounded by $n+1$ (in the case of a {\em forward clock}) or lower bounded by $n+1$ (in the case of a {\em reverse clock}). We remark that the selection is orchestrated by a variable that we call {\em time}. Next, we ``glue'' the tips of the {\em hands} of a forward clock and a reverse clock together as well as attach arcs that connect carefully chosen vertices on these hands to obtain a {\em double clock}. The double clock is a gadget that both ensures that two clocks show the exact same time and that this time corresponds to the selection of two indices whose sum is {\em exactly} $n+1$. Roughly speaking, it is mentally convenient to associate each double clock with a different {\em time zone} that encodes the selection of {\em one} element. Here, since our source problem is a graph problem, the natural choice of an element is a vertex. Having established a time zone for each selection of one element, we turn to {\em synchronize} hands of {\em different} double clocks. For this purpose, we introduce the {\em synchronizer}, which is a gadget that resembles a {\em folded grid}. We remark that this specific gadget is different yet inspired by a folded grid gadget that is the core of the paper \cite{DBLP:conf/soda/PilipczukW16}. Having double clocks and synchronizers at hand, we are finally able to present the entire reduction in an intuitive (yet precise) manner. Lastly, we prove that our reduction is correct. At this point, having already established key properties of our gadgets, the reverse direction (``solution to {\sc DOCT} $\rightarrow$ solution to {\sc PSI}'') is simple. For the forward direction (``solution to {\sc PSI} $\rightarrow$ solution to {\sc DOCT}'') we exhibit a partition of the vertex set of the output digraph into pairwise-disjoint sets on which we can define a topological order, such that the graph induced by each set can be shown to exclude directed odd cycles.\footnote{For the sake of clarity, we integrate the lemmata necessary to exhibit this partition into the sections presenting individual gadgets.}

\subsection{Annotations}

Let us begin our proof by integrating arc and vertex annotations into the definition of {\sc DOCT}. More precisely, we generalize {\sc DOCT} as follows.

\begin{center}
\begin{boxedminipage}{.8\textwidth}
\decnamedefn{{\sc Annotated DOCT (A-DOCT)}}{A digraph $D$, a non-negative integer $k$, a labeling function $\ell: A(D)\rightarrow \{0,1\}$, and a weight function $w: V(D)\rightarrow [2k+1]$.}
{Does there exist a subset $X\subseteq V(D)$ such that $w(X)\leq k$ and $X$ intersects every directed cycle $C$ of $D$ where $\ell(E(C))$ is odd?}
\end{boxedminipage}
\end{center}

Henceforth, in the context of {\sc A-DOCT}, the term {\em directed odd cycle} would refer to a directed cycle such that $\ell(E(C))$ is odd. As we show in this section, it is easy to see that in order to prove Theorems~\ref{thm:doctHard1} and~\ref{thm:doctHard2}, we can focus on the {\sc A-DOCT} problem.

Let us now present our reduction from {\sc A-DOCT} to {\sc DOCT}. For this purpose, let $(D,k,\ell,w)$ be an instance of {\sc A-DOCT}. Then, we construct an instance ${\bf red}(D,k,\ell,w)=(D',k')$ of {\sc DOCT} as follows. First, set $k'=k$. Let $A_0=\{a\in A(D): \ell(a)=0\}$ and $A_1=\{a\in A(D): \ell(a)=1\}$.
Next, define $V(D')=P\cup Q$, where $P=\{p^i_a: i\in [\alpha k+1], a\in A_0\}$ and $Q=\{q^i_v: i\in[w(v)],v\in V(D)\}$.
Finally, we define $A(D')=S\cup T\cup R$, where $S=\{(q^i_v,p^j_a): q^i_v\in Q, p^j_a\in P, v$ is the tail of $a\}$, $T=\{(p^i_a,q^j_v): p^i_a\in P, q^j_v\in Q, v$ is the head of $a\}$ and $R=\{(q^i_u,q^j_v): (u,v)\in A_1\}$. Clearly, $(D',k')$ can be outputted in polynomial time.

\begin{lemma}\label{lem:annotate}
Let $(D,k,\ell,w)$ be an instance of {\sc A-DOCT}. If there exists a solution for $(D,k,\ell,w)$, then there exists a solution for ${\bf red}(D,k,\ell,w)=(D',k')$. Moreover, if there exists an $\alpha$-approximate solution for $(D',k')$, then there exists an $\alpha$-approximate solution for $(D,k,\ell,w)$.
\end{lemma}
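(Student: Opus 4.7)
The plan is to establish the lemma by analyzing how directed cycles in $D'$ correspond to closed walks in $D$, exploiting that $D'$ blows up each vertex $v\in V(D)$ into $w(v)$ copies (the $Q$-vertices) and subdivides every label-$0$ arc of $D$ with $\alpha k+1$ parallel intermediate vertices (the $P$-vertices), while label-$1$ arcs become the complete bipartite relation $R$ between copies. The key parity observation is that each label-$0$ arc of $D$ contributes \emph{two} arcs to any directed cycle of $D'$ that uses it (entering and leaving a single $P$-vertex), while each label-$1$ arc contributes \emph{one} arc. Consequently, contracting the $P$-vertices in a directed cycle $C'$ of $D'$ yields a closed walk $W$ in $D$ whose label-parity equals the length-parity of $C'$.

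For the forward implication, given a solution $X$ for $(D,k,\ell,w)$ with $w(X)\le k$, I would set $X' = \{q^i_v : v\in X,\ i\in [w(v)]\}$, so $|X'|=w(X)\le k = k'$. For any directed odd cycle $C'$ of $D'$, form the closed walk $W$ in $D$ as above; by the parity observation $W$ has odd label sum, and hence the arc multiset of $W$ (whose in-degree equals out-degree at every vertex) decomposes into edge-disjoint directed cycles of $D$, at least one of which, say $C$, must have odd label sum. Because $X$ is a solution, some $v\in X$ lies on $C$, so $v$ appears on $W$, meaning that some copy $q^i_v$ appears on $C'$; all such copies lie in $X'$, so $X'\cap V(C')\ne\emptyset$.

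For the reverse implication, given an $\alpha$-approximate solution $Y'$ for $(D',k')$ with $|Y'|\le\alpha k$, define $Y=\{v\in V(D) : \{q^i_v : i\in [w(v)]\}\subseteq Y'\}$. Since the copies $\{q^i_v : i\in [w(v)]\}$ are pairwise disjoint across $v\in Y$, we get $w(Y) = \sum_{v\in Y} w(v) \le |Y'\cap Q|\le\alpha k$. To see that $Y$ intersects every directed odd cycle $C=v_1 v_2\cdots v_r v_1$ of $D$, suppose for contradiction that $Y\cap V(C)=\emptyset$. Then for each $j\in [r]$ I can choose a copy $q^{i_j}_{v_j}\notin Y'$, and because $|Y'|\le\alpha k<\alpha k+1$, for each label-$0$ arc $a_j=(v_j,v_{j+1})$ on $C$ I can choose a copy $p^{t_j}_{a_j}\notin Y'$. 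Stitching these choices together using $R$-arcs for label-$1$ arcs of $C$ and $S$-$T$ pairs through the chosen $p^{t_j}_{a_j}$ for label-$0$ arcs yields a simple directed cycle $C'$ in $D'-Y'$ (vertices of $C$ are distinct, and the chosen copies are chosen freshly per arc). Its length equals the number of label-$1$ arcs of $C$ plus twice the number of label-$0$ arcs of $C$, so it has the same parity as $\ell(E(C))$, namely odd. Hence $C'$ is a directed odd cycle of $D'$ disjoint from $Y'$, contradicting the choice of $Y'$.

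The routine work is the parity correspondence and the cycle-decomposition step; no genuine obstacle is anticipated. The main conceptual point is simply that the $\alpha k+1$ parallel $P$-copies per label-$0$ arc strictly exceed the budget of any $\alpha$-approximate solution, which both forces at least one live $P$-copy per label-$0$ arc in the reverse direction and justifies identifying $Y$ with the vertices whose copies are \emph{all} deleted.
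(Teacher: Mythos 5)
Your proof is correct, and it matches the paper's plan in the essentials: the forward map sends $X$ to the set of all $Q$-copies of $X$, the reverse map picks out the $D$-vertices all of whose copies are deleted, and in both directions the key is that a directed cycle $C'$ in $D'$ has the same length-parity as the label-sum of the closed walk it projects to in $D$. Where you genuinely deviate is in the verification technique. In the forward direction, the paper picks a \emph{minimum-length} odd cycle $C'$ of $D'\setminus X'$ and argues that minimality forces it to use at most one $Q$-copy of each $D$-vertex, so the projection is already a simple odd cycle of $D\setminus X$; you allow $C'$ to project to a closed walk $W$ and then invoke the decomposition of a balanced arc multiset into arc-disjoint cycles, observing that one of them must inherit the odd label sum. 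In the reverse direction, the paper first replaces $X'$ by an inclusion-minimal subset, then uses the fact that for each label-$0$ arc $a$ (resp.\ each vertex $v$) all copies $p^i_a$ (resp.\ $q^i_v$) have identical in- and out-neighbourhoods to conclude that a minimal solution must take all copies or none, so that $X'\cap P=\emptyset$ by pigeonhole on the $\alpha k+1$ copies; you skip the minimality normalization and pigeonhole directly to exhibit a surviving copy for each vertex and each label-$0$ arc of a putative uncovered odd cycle $C$, then stitch. Your route avoids the WLOG-minimality step at the cost of defining $Y$ by the ``all copies deleted'' criterion rather than ``the first copy deleted''; the paper's normalization makes $w(X)=|X'|$ an equality rather than just a bound. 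Both arguments are sound, and there is no gap in yours.
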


\begin{proof}
Fix $\alpha\geq 1$. In the first direction, let $X$ be a solution for $(D,k,\ell,w)$. 
We claim that $X'=\{q^i_v: v\in X, i\in [w(v)]\}$ is a solution to $(D',k')$. Suppose, by way of contradiction, that this claim is false. Then, since $|X'|=w(X)\leq k=k'$, there exists a directed odd cycle $C'$ of minimum size of $D'\setminus X$. If there exist $q^i_v,q^j_v\in V(C')\cap Q$ such that $i\neq j$, then we obtain a contradiction to the choice of $C'$. Indeed, if we replace $q^i_v$ by $q^j_v$ in $C'$, then the result is a directed odd closed walk. Since a directed odd closed walk contains a directed odd cycle, we obtain a directed odd cycle that is shorter than $C'$.
Hence, by the definitions of $P$, $S$ and $T$, we have that the graph $C$ on $\{v: q^i_v\in V(C')\}$, where $(u,v)\in A(C)$ if and only if there exist indices $i,j,t$ such that either $(q^i_u,q^j_v)\in A(C')$ or $(q^i_u,p^t_{(q^i_u,q^j_v)})\in A(C')$, is a directed odd cycle of $D\setminus X$. Thus, we have reached a contradiction to the supposition that $X$ is a solution to $(D,k,\ell,w)$.

Second, let $X'$ be an $\alpha$-approximate solution for $(D',k')$. Without loss of generality, assume that $X'$ is a minimal solution. We first claim that $X'\cap P=\emptyset$. For all $a\in D(A)$, the vertices $p^i_a$ have the same set of outgoing neighbors and the same set of incoming neighbors. Hence, if there exist $i\neq j$ such that $p^i_a\in X'$ but $p^j_a\notin X'$, then $X'\setminus \{p^i_a\}$ is also a solution to $(D',k')$. Indeed, if $D'\setminus (X'\setminus \{p^i_a\})$ contains a directed odd cycle, then this cycle must contain $p^i_a$. By replacing $p^i_a$ by $p^j_a$, we obtain a directed odd walk of $D'$ (which contains a directed odd cycle of $D'$). Hence, we reach a contradiction to the minimality of $X'$.
Since there are $\alpha k+1$ vertices $p^i_a$ while $|X'|\leq \alpha k$, we conclude that $X'\cap P=\emptyset$. Moreover, for all $v\in V(D)$, the vertices $q^i_v$ have the same set of outgoing neighbors and the same set of incoming neighbors. Hence, we again deduce that there cannot exist $i\neq j$ such that $q^i_v\in X'$ but $q^i_v\notin X'$. Let us denote $X=\{v: q^1_v\in X'\}$. Then, $w(X)=|X'|\leq \alpha k'=\alpha k$.
 We claim that $X$ is a solution to $(D,k,\ell,w)$. Suppose, by way of contradiction, that this claim is false. Then, let $C$ be a directed odd cycle of $D$. Let $C'$ be obtained from $C$ by replacing each arc $a=(u,v)\in A(C)\cap A_0$ by the two arcs $(q^1_u,p^1_a)$ and $(p^1_a,q^1_v)$.
By the definitions of $P$, $S$ and $T$, and since we have argued that $X'\cap P=\emptyset$, we have that $C$ is a directed odd cycle of $D'\setminus X'$. Hence, we have reached a contradiction to the supposition that $X'$ is an $\alpha$-approximate solution to $(D',k')$.
\end{proof}

As a corollary to Lemma \ref{lem:annotate}, we derive the following result.

\begin{corollary}\label{cor:annotate}
For all $\alpha\geq 1$, if there exists an $\alpha$-approximation algorithm for {\sc DOCT} that runs in time $\tau$, then there exists an $\alpha$-approximation algorithm for {\sc A-DOCT} that runs in time $\bigoh(\tau+n^{\bigoh(1)})$.
\end{corollary}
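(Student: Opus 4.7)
The plan is to directly combine the construction $\mathbf{red}$ from Lemma~\ref{lem:annotate} with the hypothesized $\alpha$-approximation algorithm for {\sc DOCT}. Given an instance $(D,k,\ell,w)$ of {\sc A-DOCT}, I would first compute ${\bf red}(D,k,\ell,w)=(D',k')$; since $|V(D')| = (\alpha k+1)|A_0| + \sum_v w(v)$ and $w(v)\leq 2k+1$ for every $v$, the digraph $D'$ has polynomial size and can be produced in time $n^{\bigoh(1)}$. Then I would invoke the {\sc DOCT} algorithm on $(D',k')$ in time $\tau$, and translate its verdict back to an answer for $(D,k,\ell,w)$.

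For the translation, if the {\sc DOCT} algorithm returns \No, then I would return \No. This is valid: if $(D,k,\ell,w)$ had an $\alpha$-approximate solution $X$, then the proof of the second direction of Lemma~\ref{lem:annotate}, applied in reverse, would produce an $\alpha$-approximate solution of size $w(X)\le \alpha k$ for $(D',k')$, contradicting the {\sc DOCT} algorithm's output. If instead the {\sc DOCT} algorithm returns a set $X'\subseteq V(D')$ with $|X'|\leq \alpha k$, then I would greedily prune $X'$: while some $v\in X'$ has the property that $X'\setminus\{v\}$ is still a directed odd cycle transversal of $D'$, remove $v$. This runs in polynomial time, only decreases $|X'|$, and produces a minimal solution. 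I would then apply the construction from the second direction of Lemma~\ref{lem:annotate} to obtain $X=\{v\in V(D)\colon q^1_v\in X'\}$, which that lemma guarantees is an $\alpha$-approximate solution to $(D,k,\ell,w)$.

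Correctness on yes-instances follows from the first direction of Lemma~\ref{lem:annotate}: if $(D,k,\ell,w)$ admits a solution of size at most $k$, then so does $(D',k')$, and hence the {\sc DOCT} algorithm must produce a valid $\alpha$-approximate solution on which the reverse translation succeeds. The only mild subtlety is that the second direction of Lemma~\ref{lem:annotate} is stated for minimal solutions; the greedy pruning step handles this while preserving the approximation guarantee, and uses only polynomial time since each removal can be tested by a single reachability/parity check in $D'$. The total running time is $\tau + n^{\bigoh(1)}$, as claimed.
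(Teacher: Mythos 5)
Your proof is correct and follows essentially the same route as the paper: construct $(D',k')={\bf red}(D,k,\ell,w)$, run the hypothesized {\sc DOCT} $\alpha$-approximation on it, and translate the output back via Lemma~\ref{lem:annotate}; you also helpfully make explicit the pruning-to-minimality step that the reverse direction of that lemma needs. One minor overclaim in the \No\ branch: you assert that the {\sc DOCT} algorithm's outputting \No\ contradicts $(D',k')$ having an $\alpha$-approximate solution, but by the paper's definition an $\alpha$-approximation algorithm is only required to avoid \No\ when a solution of size at most $k'$ exists, and for instances whose optimum lies strictly between $k'$ and $\alpha k'$ the algorithm is free to return \No. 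The argument you actually need, and which goes through, is the weaker contrapositive of the forward direction of Lemma~\ref{lem:annotate}: a \No\ from the {\sc DOCT} algorithm implies $(D',k')$ is not a \yesinstance, hence $(D,k,\ell,w)$ is not a \yesinstance, so returning \No\ is safe.
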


\begin{proof}
If there exists an $\alpha$-approximation algorithm $\cal A$ for {\sc DOCT} that runs in time $\tau$, then we define $\cal B$ as the algorithm that given an instance $(D,k,\ell,w)$ of {\sc A-DOCT}, constructs the instance ${\bf red}(D,k,\ell,w)=(D',k')$ of {\sc DOCT}, and calls algorithm $\cal A$ with $(D',k')$ as input. If $\cal A$ returns an $\alpha$-approximate solution for $(D',k')$, we have shown (in Lemma \ref{lem:annotate}) how to translate it to an $\alpha$-approximate solution for $(D,k,\ell,w)$. Moreover, if $(D,k,\ell,w)$ is a \yesinstance, then we have shown that $(D',k')$ is a \yesinstance. Hence, $\cal A$ would return an $\alpha$-approximate solution for $(D',k')$. Thus, we obtain an $\alpha$-approximation algorithm for {\sc A-DOCT} that runs in time $\bigoh(\tau+n^{\bigoh(1)})$.
\end{proof}

\begin{figure}[t!]\centering
\fbox{\includegraphics[scale=0.7]{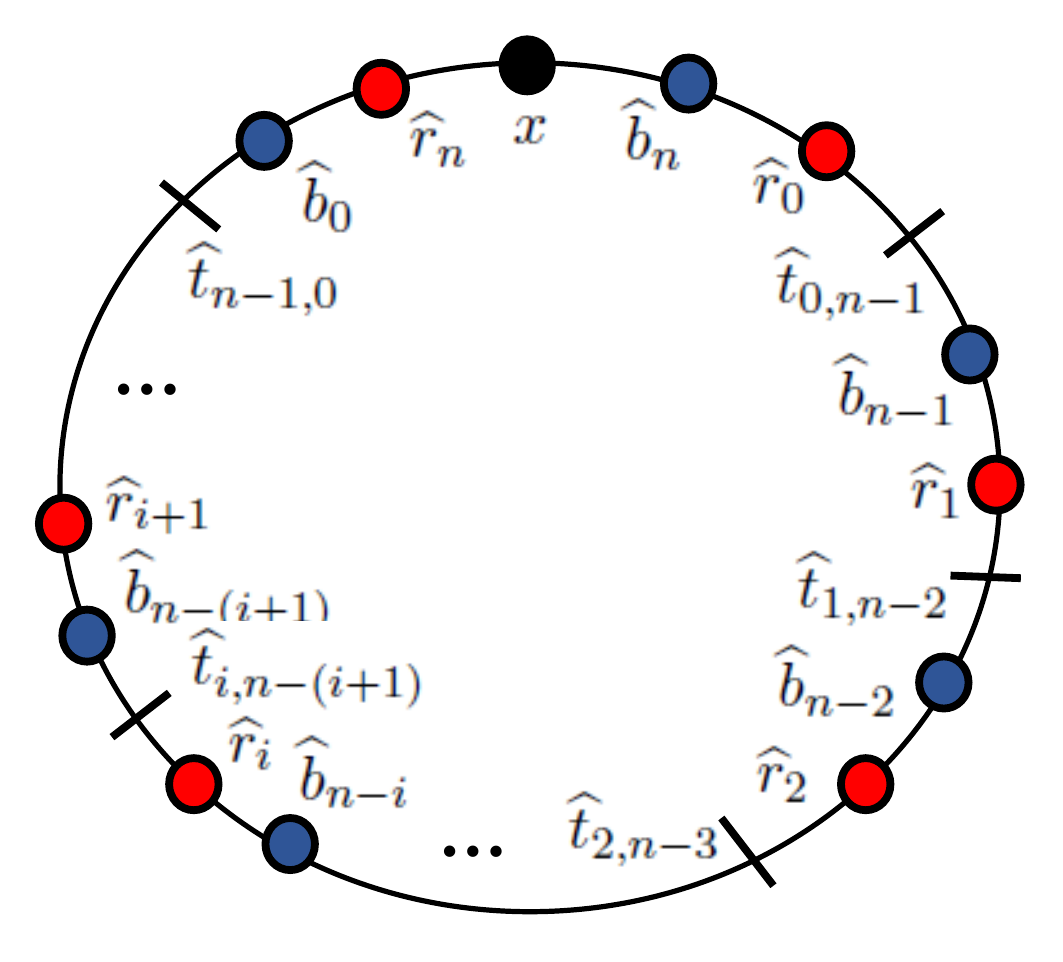}}
\caption{The face of a forward clock.}\label{fig:ForwardClock1}
\end{figure}

\subsection{The Basic Clock Gadget}\label{sec:clockBasic}

Let $n,k\in\mathbb{N}$ such that $k\geq 100$. Here, we define an {\em $(n,k)$-forward clock} and an {\em $(n,k)$-reverse clock}. Since $n$ and $k$ would be clear from context, we simply write {\em forward clock} and {\em reverse clock} rather than $(n,k)$-forward clock and $(n,k)$-reverse clock, respectively.

\subsubsection{Forward Clock}

\myparagraph{Structure.} We first define a forward clock $C$. The {\em face} of $C$ is an ``undirected'' cycle whose vertex set is the union of four pairwise-disjoint sets $\widehat{R}$ (red), $\widehat{B}$ (blue), $\widehat{T}$ (time) and $\{x\}$. We refer the reader to Fig.~\ref{fig:ForwardClock1}. We set $\widehat{R}=\{\widehat{r}_i: i\in[n]_0\}$, $\widehat{B}=\{\widehat{b}_i: i\in[n]_0\}$ and $\widehat{T}=\{\widehat{t}_{i,n-i-1}: i\in[n-1]_0\}$. The arc set of the face is the union of the following three pairwise-disjoint sets.
\begin{itemize}
\item $\{(x,\widehat{r}_n),(x,\widehat{b}_n),(\widehat{r}_n,x),(\widehat{b}_n,x)\}$.
\item $\{(\widehat{b}_i,\widehat{r}_{n-i}): i\in [n]_0\}\cup \{(\widehat{r}_{n-i},\widehat{b}_{i}): i\in [n]_0\}$.
\item $\{(\widehat{r}_i,\widehat{t}_{i,n-i-1}): i\in [n-1]_0\}\cup \{(\widehat{b}_{n-i-1},\widehat{t}_{i,n-i-1}): i\in [n-1]_0\}\cup \{(\widehat{t}_{i,n-i-1},\widehat{r}_i): i\in [n-1]_0\}\cup \{(\widehat{t}_{i,n-i-1},\widehat{b}_{n-i-1}): i\in [n-1]_0\}$.
\end{itemize}

\begin{figure}[t!]\centering
\fbox{\includegraphics[scale=0.7]{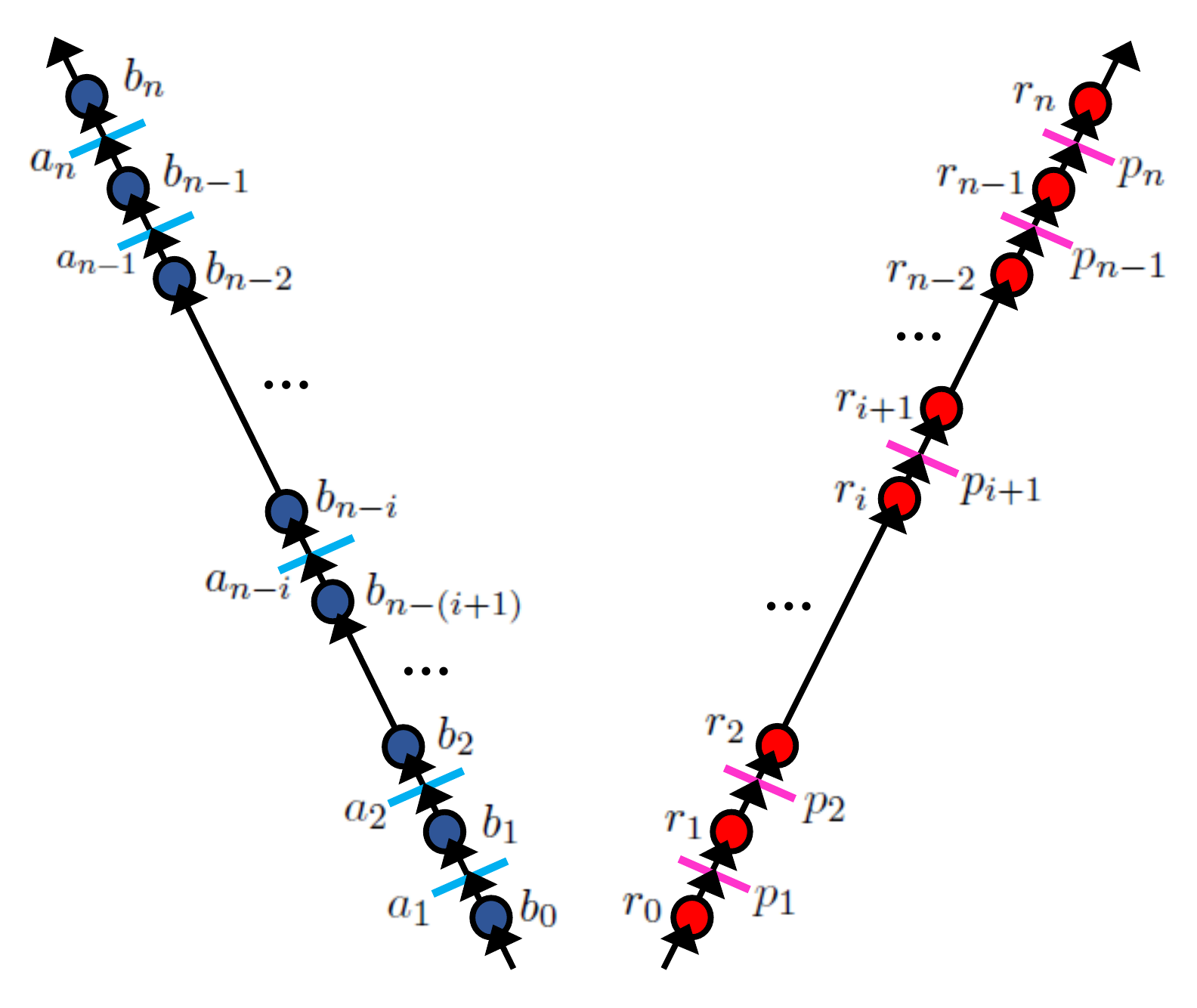}}
\caption{The hands of a forward clock. For all $i\in[n]$, $\pre(p_i)=r_{i-1}$ and $\post(p_i)=r_i$, and $\pre(a_i)=b_{i-1}$ and $\post(a_i)=b_i$.}\label{fig:ForwardClock2}
\end{figure}

The {\em hands} of $C$ are two directed paths, red and blue (see Fig.~\ref{fig:ForwardClock2}). The vertex set of the red path is the union of two pairwise disjoint sets, $R=\{r_i: i\in[n]_0\}$ (red) and $P=\{p_i: i\in[n]\}$ (pink). For all $i\in[n]$, we denote $\pre(p_i)=r_{i-1}$ and $\post(p_i)=r_i$. The arc set of the red path is $\{(\pre(p_i),p_i): i\in[n]\}\cup\{(p_i,\post(p_i)): i\in [n]\}$. Symmetrically, the blue path is the union of two pairwise disjoint sets, $B=\{b_i: i\in[n]_0\}$ (red) and $A=\{a_i: i\in[n]\}$ (azure). For all $i\in[n]$, we denote $\pre(a_i)=b_{i-1}$ and $\post(a_i)=b_i$. The arc set of the blue path is $\{(\pre(a_i),a_i): i\in[n]\}\cup\{(a_i,\post(a_i)): i\in [n]\}$.

The hands are attached to the face as follows (see Fig.~\ref{fig:ForwardClock3}). First, we add the arcs $(x,r_0)$ and $(x,b_0)$. Second, for all $i\in[n]_0$, we add the arcs $(r_i,\widehat{r}_i)$ and $(b_i,\widehat{b}_i)$. Then, we ``glue'' the hands by adding a new vertex, $y$, and the arcs $(r_n,y)$, $(b_n,y)$ and $(y,x)$.

Finally, let us annotate $C$ (see Fig.~\ref{fig:ForwardClock3}). The labels of the arcs in the set $\{(x,\widehat{r}_n),(\widehat{r}_n,x),(y,x)\}\cup\{(b_i,\widehat{b}_i): i\in[n]_0\}$ are equal to $1$, and the labels of all other arcs are equal to $0$. Moreover, the weight of the vertices in the set $\widehat{T}\cup P\cup A$ are equal to $10$, and the weights of all other vertices are equal to $2k+1$. This completes the description of $C$. When the clock $C$ is not clear from context, we add the notation $(C)$ to an element (vertex set or vertex) of the clock. For example, we may write $R(C)$ and $x(C)$.

\begin{figure}[t!]\centering
\fbox{\includegraphics[scale=0.7]{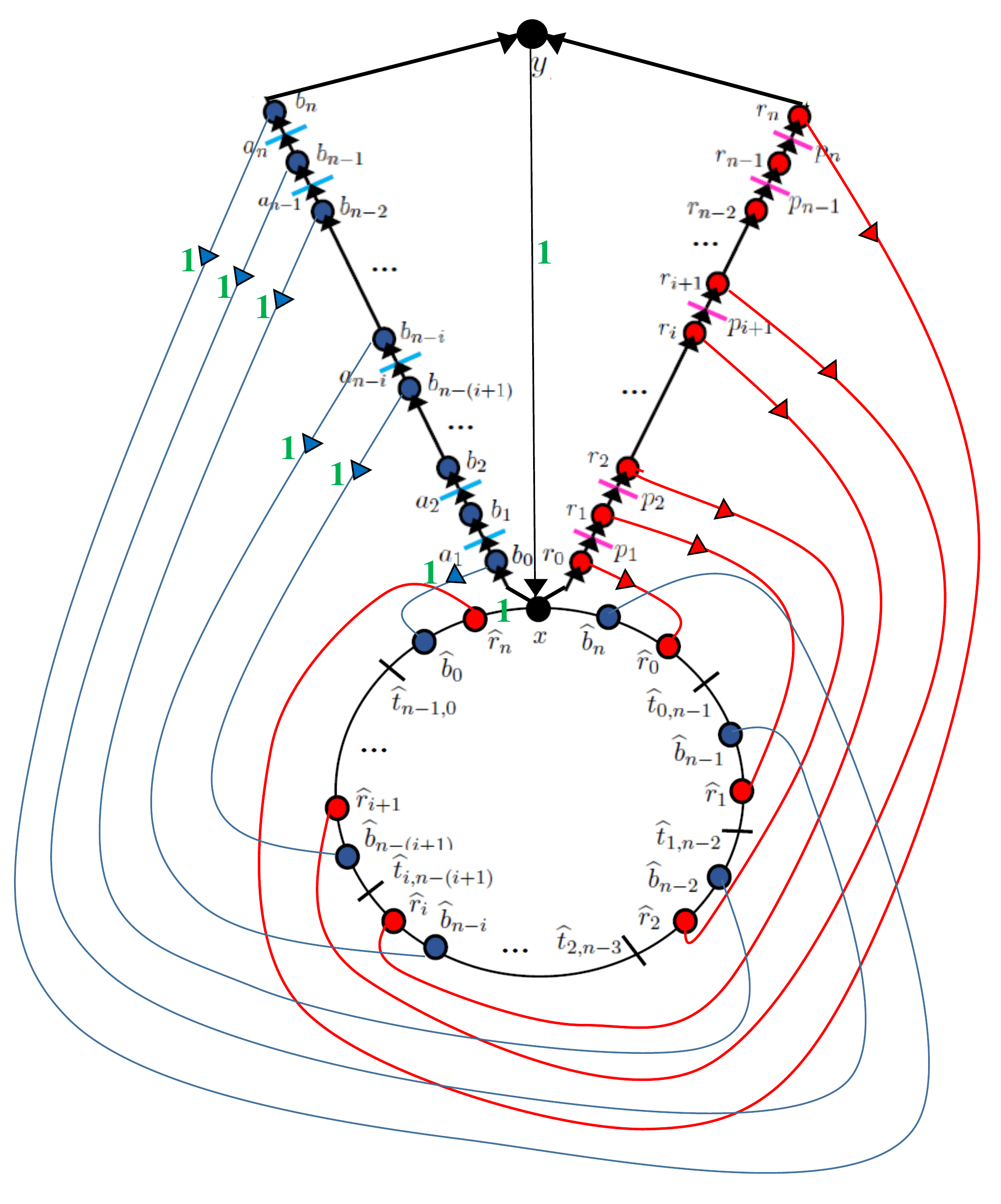}}
\caption{A forward clock. The arcs labeled 1 are marked by a green `1'. The weight of vertices marked by circles is $2k+1$, and the weight of vertices marked by lines is 10.}\label{fig:ForwardClock3}
\end{figure}

\smallskip
\myparagraph{Properties.} By the definition of a forward clock, we directly identify which directed odd cycles are present in such a clock.

\begin{observation}\label{obs:forwardCTypes}
Let $C$ be a forward clock. The set of directed odd cycles of $C$ is the union of the following sets.
\begin{itemize}
\item {\bf Type 1:} The set whose only directed odd cycle is the one consisting of the entire red hand and the arc $(y,x)$.
\item {\bf Type 2:} The set whose only directed odd cycle is the one consisting of the entire blue hand and the arc $(y,x)$.
\item {\bf Type 3:} For all $i\in [n]_0$, this set contains the directed odd cycle consisting of the directed path from $x$ to $r_i$ on the red hand, the arc $(r_i,\widehat{r}_i)$, and the directed path from $\widehat{r}_i$ to $x$ on the face of the clock that contains the arc $(\widehat{r}_n,x)$.
\item {\bf Type 4:} For all $i\in [n]_0$, this set contains the directed odd cycle consisting of the directed path from $x$ to $b_i$ on the blue hand, the arc $(b_i,\widehat{b}_i)$, and the directed path from $\widehat{b}_i$ to $x$ on the face of the clock that contains the arc $(\widehat{b}_n,x)$.
\item {\bf Type 5:} The set whose only directed odd cycle is the face of the clock.
\end{itemize}
\end{observation}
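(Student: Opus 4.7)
The plan is to enumerate all simple directed cycles of the forward clock $C$ and compute the parity $\ell(E(\cdot))$ of each, retaining those that are odd. The enumeration is driven by two structural facts. First, the hub vertex $x$ is the only genuine ``branching'' vertex in $C$: the two hands are strict directed paths from $x$ towards $y$, the vertex $y$ has unique out-neighbour $x$, and inside the face every vertex other than $x$ has exactly two out-arcs lying on the bidirected underlying $(3n{+}3)$-cycle through $x$. Second, every arc of the face is one side of a bidirected edge. The complete list of label-$1$ arcs is $(x,\widehat{r}_n)$, $(\widehat{r}_n,x)$, $(y,x)$, together with $(b_i,\widehat{b}_i)$ for $i\in[n]_0$.

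I would first dispose of cycles that avoid $x$. Since the hands are oriented into $y$ and $y$'s sole out-arc is $(y,x)$, such a cycle cannot use $y$ or any hand vertex, so it lives inside the face minus $x$, whose underlying undirected graph is a path of bidirected edges. The only simple directed cycles in a path of bidirected edges are the length-$2$ cycles on individual edges, and none of those edges carries a label-$1$ arc, so all such cycles have label-sum $0$ and are therefore even.

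Next I would analyse cycles through $x$ by classifying on the (unique) in-arc and out-arc used at $x$. There are $3\times 4=12$ combinations. The combinations with in-arc $(y,x)$ and out-arc $(x,\widehat{r}_n)$ or $(x,\widehat{b}_n)$ are infeasible, since $y$ has no in-neighbour outside the hands and the face has no arcs into the hands. The two $2$-cycles $x\leftrightarrow\widehat{r}_n$ and $x\leftrightarrow\widehat{b}_n$ have label-sums $2$ and $0$ respectively, hence even. Each remaining combination forces a unique simple directed completion (any deviation from the forward walk on a hand would either exit prematurely and force reuse of $x$, or require traversing a bidirected face edge twice), yielding: the two full-hand cycles through $y$ (Types~1 and~2); the face cycle (Type~5), which a priori appears in both orientations of the underlying undirected face cycle; and, for each $i\in[n]_0$, a ``hand-then-face'' cycle that exits the red hand at $(r_i,\widehat{r}_i)$ or the blue hand at $(b_i,\widehat{b}_i)$ and returns to $x$ through one of $(\widehat{r}_n,x)$ or $(\widehat{b}_n,x)$.

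Finally I would verify the parities. The two full-hand cycles contain only $(y,x)$ as a label-$1$ arc, so they are odd. Each orientation of the face cycle contains exactly one of $(x,\widehat{r}_n),(\widehat{r}_n,x)$ and no other label-$1$ arc, so it is odd (Type~5). The red-hand-then-face cycle through $\widehat{r}_i$ returning via $(\widehat{r}_n,x)$ contains precisely the one label-$1$ arc $(\widehat{r}_n,x)$ (Type~3), while its sibling returning via $(\widehat{b}_n,x)$ contains no label-$1$ arc and is even. Symmetrically, the blue-hand-then-face cycle through $\widehat{b}_i$ returning via $(\widehat{b}_n,x)$ contains only $(b_i,\widehat{b}_i)$ and is odd (Type~4), while its sibling returning via $(\widehat{r}_n,x)$ contains both $(b_i,\widehat{b}_i)$ and $(\widehat{r}_n,x)$ and is even. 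The main obstacle is bookkeeping the case split cleanly; the analysis simplifies considerably once one isolates the structural lemma that a simple directed cycle entering a hand at $x$ proceeds monotonically forward along the hand up to its unique exit, either $y$ or a single arc $(r_i,\widehat{r}_i)$ or $(b_i,\widehat{b}_i)$.
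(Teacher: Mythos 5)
The paper states this observation without a written proof, treating it as immediate from the definition of the forward clock; your exhaustive enumeration by whether a simple directed cycle passes through $x$ and, if so, by the in-arc/out-arc pair it uses at $x$, is a correct and complete verification. One phrasing slip: ``each remaining combination forces a unique simple directed completion'' is not literally true for the hand-then-face combinations (e.g.\ out-arc $(x,r_0)$, in-arc $(\widehat{r}_n,x)$ admits $n+1$ completions, one per exit vertex $r_i$), but you correct this immediately by indexing those families by $i\in[n]_0$, so the enumeration is still right. Your remark that Type~5 consists of \emph{two} directed cycles---the two orientations of the bidirected face, one containing $(x,\widehat{r}_n)$ and the other $(\widehat{r}_n,x)$, both odd---cleanly resolves an imprecision in the paper's wording ``the set whose only directed odd cycle is the face of the clock''. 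All the parity computations in the last paragraph check out against the label-$1$ set $\{(x,\widehat{r}_n),(\widehat{r}_n,x),(y,x)\}\cup\{(b_i,\widehat{b}_i):i\in[n]_0\}$.
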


We proceed to derive properties of ``cuts'' of a forward clock. To this end, we first need to define the kind of sets using which we would like to ``cut'' forward clocks.

\begin{definition}\label{def:clockNiceCut}
Let $C$ be a forward clock. We say that a set $X\subseteq V(C)$ {\em cuts $C$ precisely} if there exist $i,j,s\in[n]$ such that $X=\{p_i,a_j,\widehat{t}_{s,n-s-1}\}$, $i-1\leq s$ and $j\leq n-s$.
\end{definition}

Definition \ref{def:clockNiceCut} directly implies the following observation.

\begin{observation}\label{obs:forClockMaxN1}
Let $C$ be a forward clock. If $X=\{p_i,a_j,\widehat{t}_{s,n-s-1}\}$ is a set that cuts $C$ precisely, then $i+j\leq n+1$.
\end{observation}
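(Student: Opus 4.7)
The plan is to observe that this bound is a purely arithmetic consequence of the two inequalities built into Definition~\ref{def:clockNiceCut}, so no structural reasoning about the clock $C$ is required at all. Specifically, the hypothesis that $X$ cuts $C$ precisely unpacks into the existence of indices $i,j,s \in [n]$ satisfying $i - 1 \leq s$ and $j \leq n - s$, where $s$ is determined by which ``time'' vertex $\widehat{t}_{s,n-s-1}$ lies in $X$.

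The two-line proof I would write is: add the inequalities. From $i-1 \leq s$ and $j \leq n - s$, summing gives $(i - 1) + j \leq s + (n - s) = n$, which rearranges to $i + j \leq n + 1$, as claimed. The only potential subtlety is that one must read off which index from Definition~\ref{def:clockNiceCut} plays the role of $s$ in the statement of the observation; but since the definition specifies that the third element of $X$ is $\widehat{t}_{s,n-s-1}$, the identification is immediate. There is no meaningful obstacle here; the observation is flagged in the text as a direct consequence of the definition precisely because the bound is designed into the definition of ``cuts $C$ precisely,'' and my proof would simply make that transparent in a single displayed inequality.
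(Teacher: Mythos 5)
Your proof is correct and matches the paper's intent exactly: Observation~\ref{obs:forClockMaxN1} is stated as a direct consequence of Definition~\ref{def:clockNiceCut}, and summing the two inequalities $i-1\leq s$ and $j\leq n-s$ is precisely the one-line arithmetic the paper leaves implicit.
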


We are now ready to present the desired properties of ``cuts'' of a forward clock.

\begin{lemma}\label{lem:forClockPrecise}
Let $C$ be a forward clock. A set $X\subseteq V(C)$ is a directed odd cycle transversal of $C$ of weight exactly 30 if and only if $X$ cuts $C$ precisely.
\end{lemma}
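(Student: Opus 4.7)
The plan is to combine a weight-and-counting argument (reducing both directions to deletions confined to $\widehat{T}\cup P\cup A$) with a case analysis over the five families of directed odd cycles enumerated in Observation~\ref{obs:forwardCTypes}. Every vertex outside $\widehat{T}\cup P\cup A$ carries weight $2k+1\ge 201>30$ (since $k\ge 100$), so any $X$ with $w(X)=30$ must be a $3$-element subset of $\widehat{T}\cup P\cup A$. Conversely, any $X$ that cuts $C$ precisely has exactly three elements of weight $10$ each, so $w(X)=30$ automatically; the only nontrivial question in that direction is whether $X$ is a transversal.

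For the ``if'' direction, I would take $X=\{p_i,a_j,\widehat{t}_{s,n-s-1}\}$ with $i-1\le s$ and $j\le n-s$, and verify that $X$ hits each of the five odd-cycle types. Type~1 is the entire red hand plus $(y,x)$, which contains $p_i$; Type~2 is its blue analogue, hit by $a_j$; Type~5 is the face, which contains $\widehat{t}_{s,n-s-1}$. The interesting cases are Types~3 and~4. For Type~3 with parameter $i'\in[n]_0$, the cycle contains $p_1,\ldots,p_{i'}$ on the red hand plus the $\widehat{t}$-vertices on the face-path from $\widehat{r}_{i'}$ to $x$ via $(\widehat{r}_n,x)$. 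Tracing the zig-zag of the face $\widehat{r}_{i'}\to\widehat{t}_{i',n-i'-1}\to\widehat{b}_{n-i'-1}\to\widehat{r}_{i'+1}\to\cdots\to\widehat{r}_n\to x$ shows this path meets $\widehat{t}_{s',n-s'-1}$ exactly when $i'\le s'\le n-1$. Hence $X$ hits the Type~3 cycle iff $i\le i'$ or $i'\le s$, and the only way to fail simultaneously for some $i'$ is to have $s<i'<i$, which is excluded precisely by $i-1\le s$. A symmetric analysis on the $(\widehat{b}_n,x)$-side of the face handles Type~4 and yields $j\le n-s$.

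For the ``only if'' direction, after reducing to $X\subseteq\widehat{T}\cup P\cup A$ with $|X|=3$, I note that among vertices of weight $10$, the Type~1 cycle intersects only $P$, the Type~2 cycle only $A$, and the Type~5 cycle only $\widehat{T}$. Therefore $X$ must contain at least one vertex from each of $P$, $A$, and $\widehat{T}$, and since $|X|=3$ we obtain a unique triple $X=\{p_i,a_j,\widehat{t}_{s,n-s-1}\}$. Re-running the Type~3 analysis, the cycle with parameter $i'=s+1$ (valid whenever $s\le n-1$, which holds by the range of $\widehat{T}$) is disjoint from $X$ unless $i\le s+1$, forcing $i-1\le s$; analogously, taking $i'=n-s$ in Type~4 forces $j\le n-s$. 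Thus $X$ cuts $C$ precisely.

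The main delicate step throughout is keeping the face-path bookkeeping straight: the face is an undirected-style cycle woven by three bidirectional arc sets, and one must be sure which $\widehat{t}$-vertices lie on the $\widehat{r}_{i'}$-to-$x$ sub-path through $(\widehat{r}_n,x)$ versus through $(\widehat{b}_n,x)$. Once the zig-zag pattern is identified, however, the defining inequalities $i-1\le s$ and $j\le n-s$ drop out mechanically from the two nontrivial cycle families, and the rest of the proof is weight counting.
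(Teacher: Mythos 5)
Your proposal is correct and follows essentially the same route as the paper: reduce to a three-element subset of $\widehat{T}\cup P\cup A$ by the weight bound, then chase each of the five odd-cycle families from Observation~\ref{obs:forwardCTypes}. The only cosmetic difference is the choice of witness cycles in the ``only if'' direction --- you instantiate Type~3 at $i'=s+1$ and Type~4 at $j'=n-s$, whereas the paper uses $i'=i-1$ and $j'=j-1$ --- but these are dual ways of falsifying the same inequalities, and the face-path bookkeeping you carry out matches the paper's.
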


\begin{proof}
In the forward direction, let $X\subseteq V(C)$ be a directed odd cycle transversal of $C$ of weight exactly 30. Recall that $2k+1>40$. Hence, to intersect the directed odd cycle of Type 1 (see Observation \ref{obs:forwardCTypes}), the set $X$ must contain a vertex of the form $p_i$ for some $i\in[n]$. Symmetrically, to intersect the directed odd cycle of Type 2, the set $X$ must contain a vertex of the form $a_j$ for some $j\in[n]$. Moreover, to intersect the directed odd cycle of Type 5, the set $X$ must contain a vertex of the form $\widehat{t}_{s,n-s-1}$ for some $s\in[n-1]_0$. Since $w(X)=30$, we deduce that $X=\{p_i,a_j,\widehat{t}_{s,n-s-1}\}$. To prove that $X$ cuts $C$ precisely, it remains to show that $i-1\leq s$ and $j\leq n-s$. For this purpose, consider the directed odd cycle of Type 3 that consists of the directed path from $x$ to $r_{i-1}$ on the red hand, the arc $(r_{i-1},\widehat{r}_{i-1})$, and the directed path from $\widehat{r}_{i-1}$ to $x$ on the face of the clock that contains the arc $(\widehat{r}_n,x)$. Since $X$ must intersect this directed odd cycle, it must hold that $i-1\leq s$. Now, consider the directed odd cycle of Type 4 that consists of the directed path from $x$ to $b_{j-1}$ on the blue hand, the arc $(b_{j-1},\widehat{b}_{j-1})$, and the directed path from $\widehat{b}_{j-1}$ to $x$ on the face of the clock that contains the arc $(\widehat{b}_n,x)$. Since $X$ must intersect this directed odd cycle, it must also hold that $j-1\leq n-s-1$, and therefore $j\leq n-s$.

In the reverse direction, let $X\subseteq V(C)$ be a set that cuts $C$ precisely. Then, there exist $i,j,s\in[n]$ such that $X=\{p_i,a_j,\widehat{t}_{s,n-s-1}\}$, $i-1\leq s$ and $j\leq n-s$. Clearly, $w(X)=30$. Since $p_i\in X$, it holds that $X$ intersects the directed odd cycle of Type 1 as well as every directed odd cycle of Type 3 that consists of a directed path from $x$ to $r_{i'}$ on the red hand for some $i'\geq i$, the arc $(r_{i'},\widehat{r}_{i'})$, and the directed path from $\widehat{r}_{i'}$ to $x$ on the face of the clock that contains the arc $(\widehat{r}_n,x)$. Symmetrically, since $a_j\in X$, it holds that $X$ intersects the directed odd cycle of Type 2 as well as every directed odd cycle of Type 4 that consists of a directed path from $x$ to $b_{j'}$ on the blue hand for some $j'\geq j$, the arc $(b_{j'},\widehat{b}_{j'})$, and the directed path from $\widehat{b}_{j'}$ to $x$ on the face of the clock that contains the arc $(\widehat{b}_n,x)$. Since $\widehat{t}_{s,n-s-1}\in X$, it holds that $X$ intersects that directed odd cycle of Type 5. Moreover, since $i-1\leq s$, it holds that $X$ intersects every directed odd cycle of Type 3 that consists of a directed path from $x$ to $r_{i'}$ on the red hand for some $i'<i$, the arc $(r_{i'},\widehat{r}_{i'})$, and the directed path from $\widehat{r}_{i'}$ to $x$ on the face of the clock that contains the arc $(\widehat{r}_n,x)$. Symmetrically, since $j-1\leq n-s-1$, it holds that $X$ intersects every directed odd cycle of Type 4 that consists of a directed path from $x$ to $b_{j'}$ on the blue hand for some $j'< j$, the arc $(b_{j'},\widehat{b}_{j'})$, and the directed path from $\widehat{b}_{j'}$ to $x$ on the face of the clock that contains the arc $(\widehat{b}_n,x)$. We have thus verified that $X$ is a directed odd cycle transversal of $C$.
\end{proof}

\begin{lemma}\label{lem:forClockNotPrecise}
Let $C$ be a forward clock. The weight of a set $X\subseteq V(C)$ that is a directed odd cycle transversal of $C$ but does not cut $C$ precisely is at least 40.
\end{lemma}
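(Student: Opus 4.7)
The proof will hinge on three facts already available: (i) the only weight-$10$ vertices of $C$ are those in $\widehat{T}\cup P\cup A$, while every other vertex has weight $2k+1\ge 201$; (ii) the classification of directed odd cycles given in Observation~\ref{obs:forwardCTypes}; and (iii) Lemma~\ref{lem:forClockPrecise}, which characterises the DOCTs of weight exactly $30$ as precisely those sets that cut $C$ precisely. The strategy is to show that any DOCT $X$ must have weight at least $30$, rule out weight $30$ under the non-precise assumption, and then use a parity argument on the possible weights to jump from ``$>30$'' to ``$\ge 40$''.

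First I would assume, towards contradiction, that $w(X)<40$. Since $2k+1\ge 201>40$, this forces $X\subseteq \widehat{T}\cup P\cup A$, so every vertex in $X$ has weight exactly $10$. I would then invoke Observation~\ref{obs:forwardCTypes} three times. The Type-$1$ directed odd cycle (the entire red hand together with $(y,x)$) meets $\widehat{T}\cup P\cup A$ only in $P$, so $X\cap P\neq\emptyset$. Symmetrically, the Type-$2$ cycle forces $X\cap A\neq\emptyset$. Finally, the Type-$5$ cycle (the face of the clock) meets $\widehat{T}\cup P\cup A$ only in $\widehat{T}$, so $X\cap \widehat{T}\neq\emptyset$. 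The three sets $P,A,\widehat{T}$ are pairwise disjoint, so $|X|\ge 3$ and hence $w(X)\ge 30$.

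Next I would push this bound from $30$ up to $40$. If $w(X)=30$, then $X$ consists of exactly one vertex from each of $P$, $A$ and $\widehat{T}$, so $X=\{p_i,a_j,\widehat{t}_{s,n-s-1}\}$ for some indices. By Lemma~\ref{lem:forClockPrecise}, such a set is a DOCT of weight $30$ if and only if it cuts $C$ precisely, contradicting the hypothesis on $X$. Thus $w(X)\neq 30$, and combined with $w(X)\ge 30$ this gives $w(X)>30$. Because every vertex weight is either $10$ or at least $2k+1>40$, the total weight $w(X)$ is either a multiple of $10$ or at least $2k+1$; in either case $w(X)>30$ implies $w(X)\ge 40$, contradicting $w(X)<40$. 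This completes the argument.

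No step looks technically deep; the main obstacle is just being careful about what ``does not cut $C$ precisely'' can mean, namely either $X$ has the right form $\{p_i,a_j,\widehat{t}_{s,n-s-1}\}$ but violates $i-1\le s$ or $j\le n-s$, or $X$ has some different composition (fewer hand vertices, multiple vertices of the same colour, face vertices of weight $2k+1$, etc.). The weight-budget arithmetic above handles both cases uniformly, because Lemma~\ref{lem:forClockPrecise} is an ``if and only if'', so any weight-$30$ DOCT is automatically precise, and any other DOCT either has a heavy vertex (weight $\ge 2k+1\ge 201$) or has at least four light vertices (weight $\ge 40$).
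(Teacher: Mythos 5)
Your proof is correct and essentially mirrors the paper's argument: both invoke the Type-1, Type-2 and Type-5 cycles from Observation~\ref{obs:forwardCTypes} to force three pairwise-disjoint light vertices (one each from $P$, $A$, $\widehat{T}$), rule out weight exactly $30$ via the ``only if'' direction of Lemma~\ref{lem:forClockPrecise}, and then conclude $w(X)\geq 40$ from the $\{10,\ 2k+1\}$ weight structure. The only cosmetic difference is that the paper phrases the key disjointness fact as ``the only vertices present in at least two of these cycles are heavy'' rather than listing the light intersections explicitly, but the content is identical.
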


\begin{proof}
Consider the directed odd cycles of $C$ of Types 1, 2 and 5 (see Observation \ref{obs:forwardCTypes}). The only vertices that are present in at least two of these cycles are of weight $2k+1>40$. Hence, if $w(X)\leq 40$, then $X$ must contain at least three vertices of $C$, each of weight 10. The set $X$ cannot contain exactly three vertices of $C$ of weight 10, since then $w(X)=30$, in which Lemma~\ref{lem:forClockPrecise} implies $X$ should have cut $C$ precisely. Hence,  if $w(X)\leq 40$, then $X$ contains at least four vertices of $C$, and therefore $w(X)\geq 40$.
\end{proof}

\begin{figure}[t!]\centering
\fbox{\includegraphics[scale=0.7]{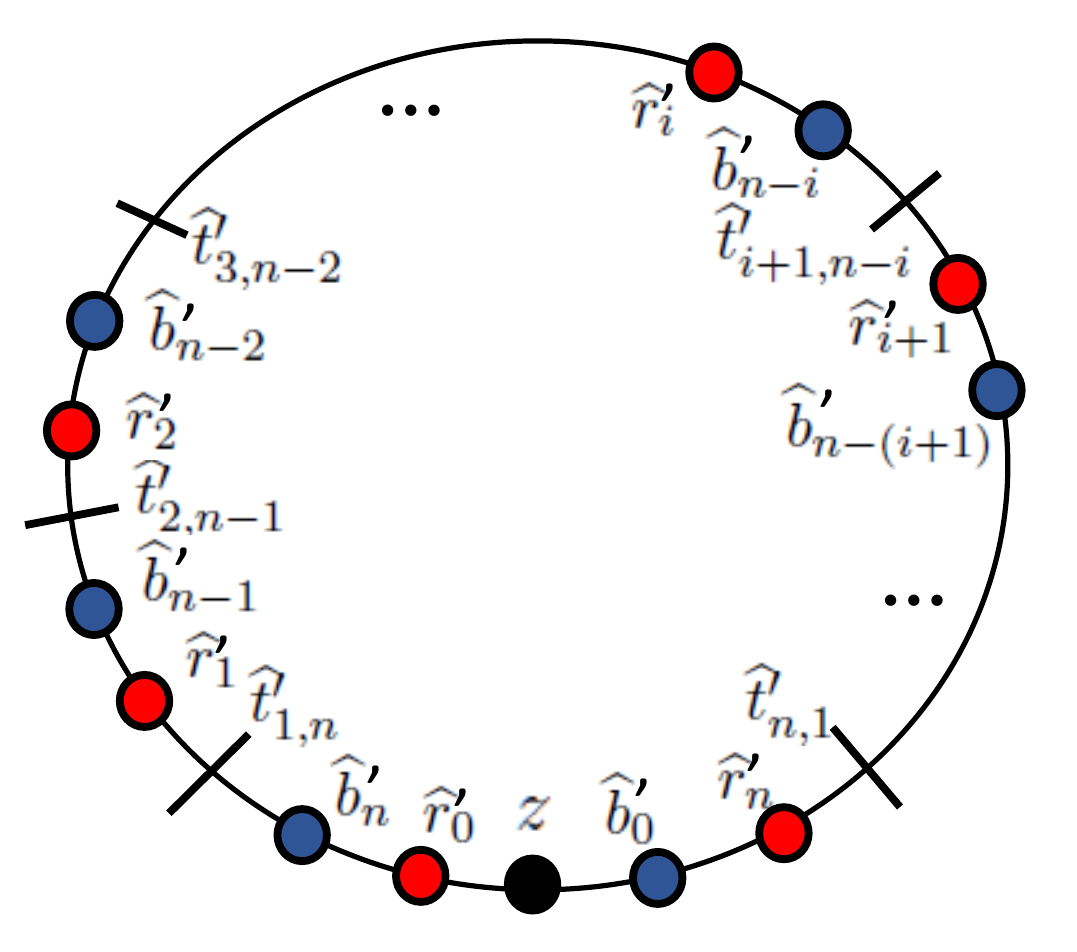}}
\caption{The face of a reverse clock.}\label{fig:ReverseClock1}
\end{figure}

\subsubsection{Reverse Clock}

A reverse clock is simply a forward clock where the directions of {\em all} arcs have been reversed. However, to ensure the readability of the rest of the paper, it would be convenient to rename the vertices of a reverse clock as well as draw the illustrations differently. In particular, the vertices would be indexed differently. Hence, for the sake of clarity, we provide a detailed description of a reverse clock.

\myparagraph{Structure.} The {\em face} of a reverse clock $C$ is an ``undirected'' cycle whose vertex set is the union of four pairwise-disjoint sets $\widehat{R}'$ (red), $\widehat{B}'$ (blue), $\widehat{T}'$ (time) and $\{z\}$. We refer the reader to Fig.~\ref{fig:ReverseClock1}. We set $\hat{R}'=\{\widehat{r}'_i: i\in[n]_0\}$, $\hat{B}'=\{\widehat{b}'_i: i\in[n]_0\}$ and $\widehat{T}=\{\widehat{t}'_{i,n-i+1}: i\in[n]\}$. The arc set of the face is the union of the following three pairwise-disjoint sets.
\begin{itemize}
\item $\{(z,\widehat{r}'_0),(z,\widehat{b}'_0),(\widehat{r}'_0,z),(\widehat{b}'_0,z)\}$.
\item $\{(\widehat{b}'_i,\widehat{r}'_{n-i}): i\in [n]_0\}\cup \{(\widehat{r}'_{n-i},\widehat{b}'_{i}): i\in [n]_0\}$.
\item $\{(\widehat{r}'_i,\widehat{t}'_{i,n-i+1}): i\in [n]\}\cup \{(\widehat{b}'_{n-i+1},\widehat{t}'_{i,n-i+1}): i\in [n]\}\cup \{(\widehat{t}'_{i,n-i+1},\widehat{r}'_i): i\in [n]\}\cup \{(\widehat{t}'_{i,n-i+1},\widehat{b}'_{n-i+1}): i\in [n]\}$.
\end{itemize}

The {\em hands} of $C$ are two directed paths, red and blue. These hands are defined exactly as the hands of a forward clock, except that tags are added to the names of all of their vertices (see Fig.~\ref{fig:ReverseClock2}). The hands are attached to the face as follows (see Fig.~\ref{fig:ReverseClock3}). First, we add the arcs $(r'_n,z)$ and $(b'_n,z)$. Second, for all $i\in[n]_0$, we add the arcs $(\widehat{r}'_i,r'_i)$ and $(\widehat{b}'_i,b'_i)$. Then, we ``glue'' the hands by adding a new vertex, $y$, and the arcs $(y,r'_0)$, $(y,b'_0)$ and $(z,y)$.

\begin{figure}[t!]\centering
\fbox{\includegraphics[scale=0.7]{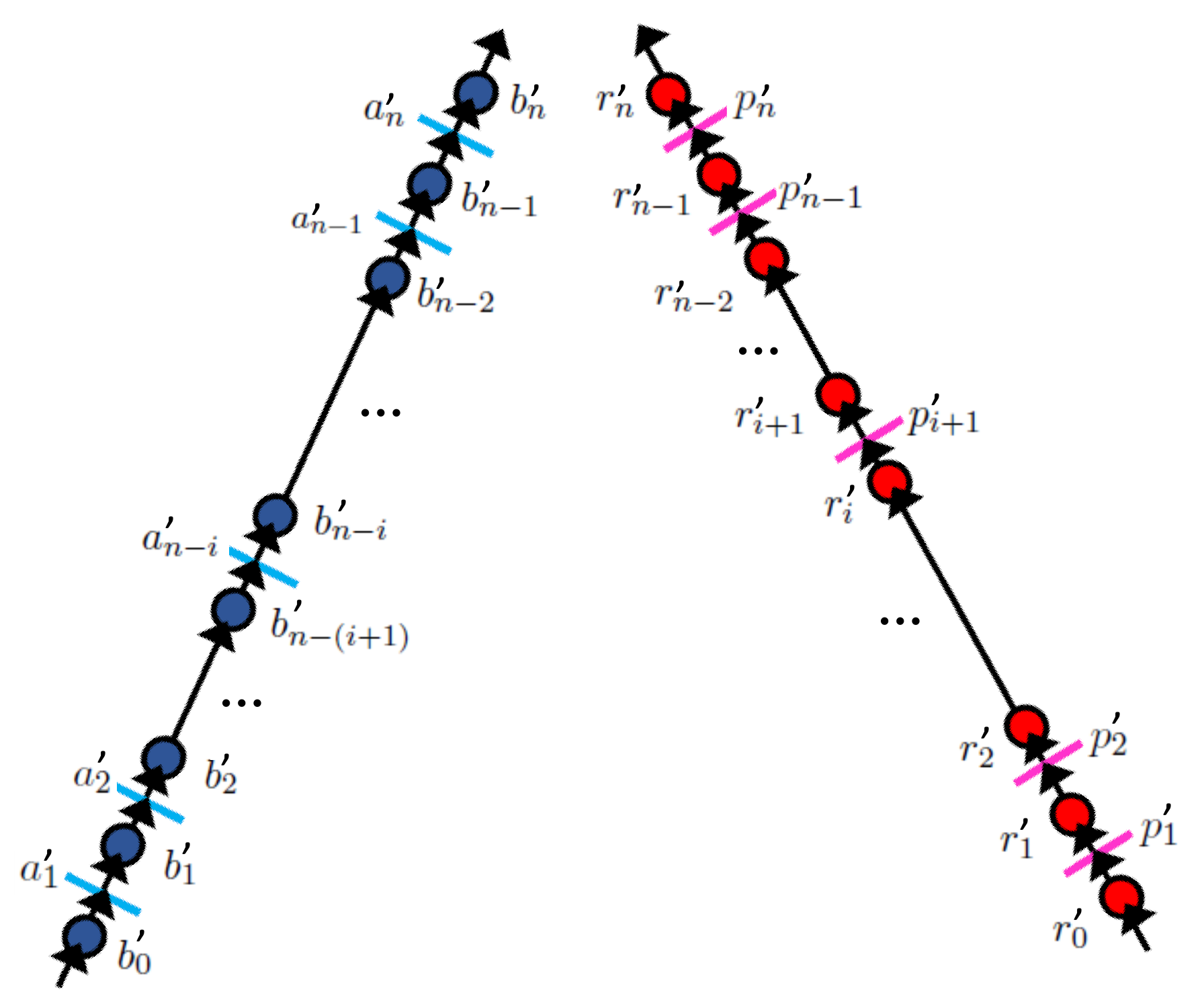}}
\caption{The hands of a reverse clock. For all $i\in[n]$, $\pre(p_i')=r_{i-1}'$ and $\post(p_i')=r_i'$, and $\pre(a_i')=b_{i-1}'$ and $\post(a_i')=b_i'$.}\label{fig:ReverseClock2}
\end{figure}

\begin{figure}[t!]\centering
\fbox{\includegraphics[scale=0.6]{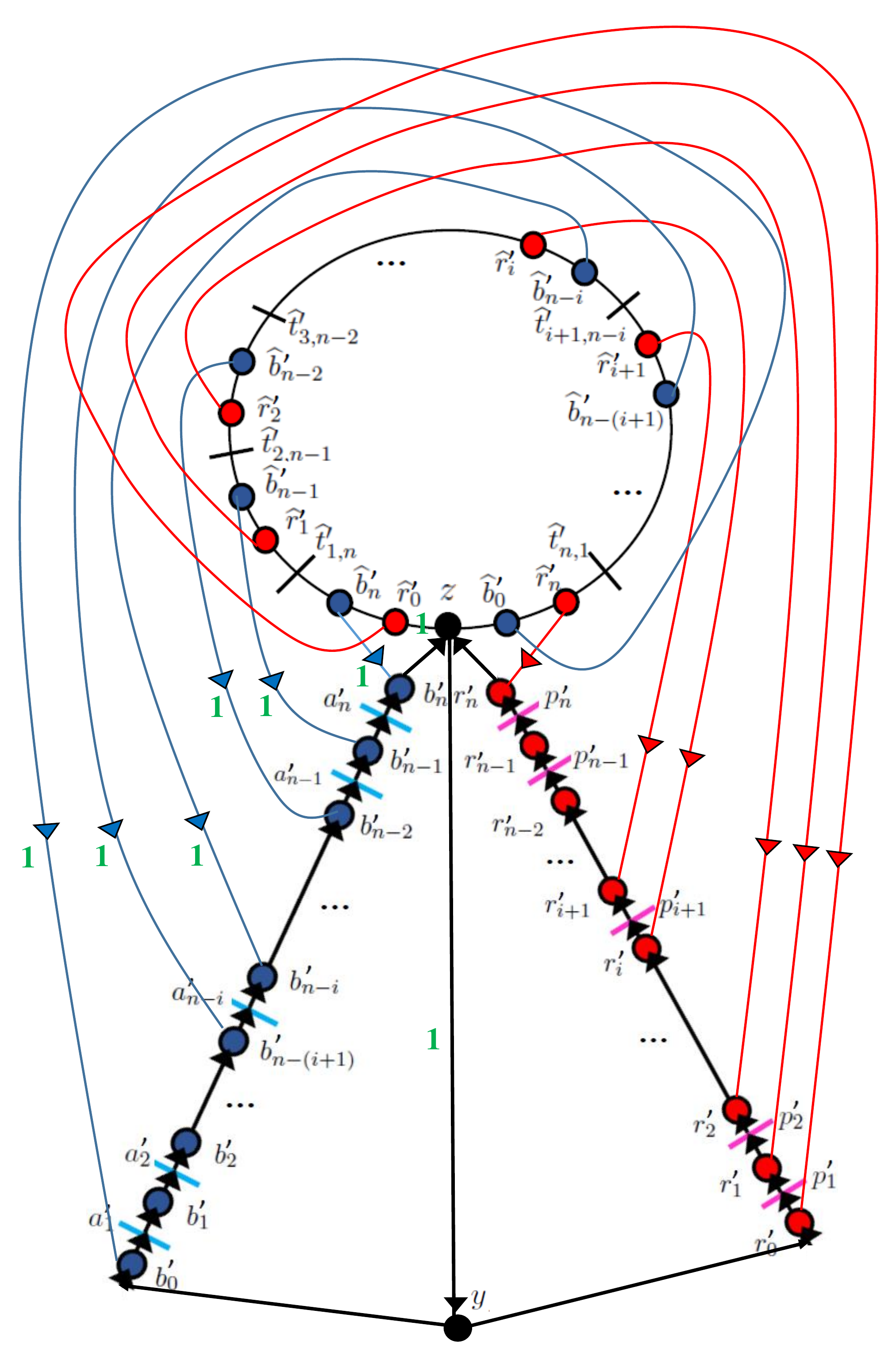}}
\caption{A reverse clock. The arcs labeled 1 are marked by a green `1'. The weight of vertices marked by circles is $2k+1$, and the weight of vertices marked by lines is 10.}\label{fig:ReverseClock3}
\end{figure}

Finally, let us annotate $C$ (see Fig.~\ref{fig:ReverseClock3}). The labels of the arcs in the set $\{(z,\widehat{r}'_0),(\widehat{r}'_0,z),(z,y)\}\cup\{(\widehat{b}'_i,b'_i): i\in[n]_0\}$ are equal to $1$, and the labels of all other arcs are equal to $0$. Moreover, the weight of the vertices in the set $\widehat{T}'\cup P'\cup A'$ are equal to $10$, and the weights of all other vertices are equal to $2k+1$. This completes the description of $C$.

\smallskip
\myparagraph{Properties.} By the definition of a reverse clock, we directly identify which directed odd cycles are present in such a clock.

\begin{observation}\label{obs:reverseCTypes}
Let $C$ be a reverse clock. The set of directed odd cycles of $C$ is the union of the following sets.
\begin{itemize}
\item {\bf Type 1:} The set whose only directed odd cycle is the one consisting of the entire red hand and the arc $(z,y)$.
\item {\bf Type 2:} The set whose only directed odd cycle is the one consisting of the entire blue hand and the arc $(z,y)$.
\item {\bf Type 3:} For all $i\in [n]_0$, this set contains the directed odd cycle consisting of the arc $(\widehat{r}'_i,r'_i)$, the directed path from $r'_i$ to $z$ on the red hand, and the directed path from $z$ to $\widehat{r}'_i$ on the face of the clock that contains the arc $(z,\widehat{r}'_0)$.
\item {\bf Type 4:} For all $i\in [n]_0$, this set contains the directed odd cycle consisting of the arc $(\widehat{b}'_i,b'_i)$, the directed path from $b'_i$ to $z$ on the blue hand, and the directed path from $z$ to $\widehat{b}'_i$ on the face of the clock that contains the arc $(z,\widehat{b}'_0)$.
\item {\bf Type 5:} The set whose only directed odd cycle is the face of the clock.
\end{itemize}
\end{observation}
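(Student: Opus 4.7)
The plan is to derive this observation as a direct translation of the corresponding Observation~\ref{obs:forwardCTypes} for forward clocks, exploiting the fact that a reverse clock is, up to renaming, the digraph obtained from a forward clock by reversing the orientation of every arc. First, I would exhibit an explicit isomorphism sending the arc-reversal of a forward clock to a reverse clock: $x \mapsto z$, $y \mapsto y$, and on the hand/face vertices the indexing is set up so that the ``special'' face vertex $\widehat{r}_n$ of the forward clock corresponds to $\widehat{r}'_0$ of the reverse clock (and analogously for $\widehat{b}_n \leftrightarrow \widehat{b}'_0$), with the time vertices $\widehat{t}_{i,n-i-1}$ shifted to $\widehat{t}'_{i+1, n-i}$. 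The three families of face-arcs of the forward clock, once all their directions are flipped, match the three families of face-arcs of the reverse clock under this relabelling; the same is true for the hands and for the gluing arcs incident to $y$.

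The crucial verification is that this isomorphism preserves the labelling. Under the renaming, the label-$1$ arcs $\{(x,\widehat{r}_n), (\widehat{r}_n,x), (y,x)\} \cup \{(b_i,\widehat{b}_i) : i \in [n]_0\}$ of the forward clock map, after reversal, to $\{(z,\widehat{r}'_0), (\widehat{r}'_0,z), (z,y)\} \cup \{(\widehat{b}'_i,b'_i) : i \in [n]_0\}$ in the reverse clock, which is precisely the reverse clock's label-$1$ set. Since reversing every arc induces a bijection between directed cycles that preserves the multiset of labels traversed, parity of the label sum is invariant; hence directed odd cycles of the reverse clock are in label-preserving bijection with directed odd cycles of the forward clock. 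Translating each of the five types in Observation~\ref{obs:forwardCTypes} one-for-one then yields exactly the five types listed here: entire red/blue hand with $(y,x)$ becomes entire red/blue hand with $(z,y)$ (Types~1 and~2); a cycle using part of a hand together with the face-arc $(\widehat{r}_n,x)$ or $(\widehat{b}_n,x)$ becomes a cycle using part of the corresponding hand together with the face-arc $(z,\widehat{r}'_0)$ or $(z,\widehat{b}'_0)$ (Types~3 and~4); and the face-cycle (Type~5) is invariant.

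The main obstacle I expect is pinning down the renaming of the time vertices correctly, since their index scheme differs between the two constructions ($\widehat{t}_{i,n-i-1}$ for $i \in [n-1]_0$ versus $\widehat{t}'_{i,n-i+1}$ for $i \in [n]$), and verifying that the incident face-arcs are mapped consistently. As a fallback, should the isomorphism-based argument appear too fiddly, one can instead enumerate directly: each hand is a directed path whose only arcs coming in originate either at $y$ or at a face vertex via $(\widehat{r}'_i,r'_i)$ or $(\widehat{b}'_i,b'_i)$, and whose only outgoing arcs go to $z$. Hence any simple directed cycle uses at most one hand (because using both would force two visits to $z$), and a routine case analysis on which hand is used and through which entry arc, combined with label counting using the list above, yields exactly the five types.
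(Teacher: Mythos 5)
Your overall approach is sound and the fallback (direct case analysis on which hand a simple cycle uses, noting that no simple cycle can traverse both hands because it would have to pass through $z$ twice) matches the paper's implicit reasoning: the paper states Observation~\ref{obs:reverseCTypes} without proof, as a direct reading of the reverse clock's definition, exactly as for the forward clock in Observation~\ref{obs:forwardCTypes}. Your primary route via the arc-reversal isomorphism is a legitimate alternative, and the paper itself licenses it by remarking that a reverse clock \emph{is} a forward clock with all arcs reversed, up to renaming; the label-preservation check you carry out for the $1$-labelled arcs is correct and is the crux of why parities are preserved under this bijection.

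One detail is off in the isomorphism you spell out, and it is precisely the one you flag as delicate. With $\widehat{r}_n \mapsto \widehat{r}'_0$ and $\widehat{b}_n \mapsto \widehat{b}'_0$, consistency of the face forces $\widehat{r}_i \mapsto \widehat{r}'_{n-i}$ and $\widehat{b}_i \mapsto \widehat{b}'_{n-i}$. Since $\widehat{t}_{i,n-i-1}$ is adjacent (on the face) to $\widehat{r}_i$ and $\widehat{b}_{n-i-1}$, its image must be adjacent to $\widehat{r}'_{n-i}$ and $\widehat{b}'_{i+1}$, which in the reverse clock's indexing is $\widehat{t}'_{n-i,\,i+1}$, not $\widehat{t}'_{i+1,\,n-i}$ as you wrote (the two coordinates are transposed, and in the $\widehat{t}'_{j,n-j+1}$ convention the first coordinate is the one that pins down the vertex). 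This does not affect the enumeration of cycle types --- Types 1--5 come out the same --- but it matters for any downstream argument that tracks the location of the time vertex, so it is worth correcting before you lean on this isomorphism in, say, Lemma~\ref{lem:revClockPrecise}.
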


As in the case of a forward clock, we proceed to derive properties of ``cuts'' of a reverse clock. To this end, we first need to define the kind of sets using which we would like to ``cut'' reverse clocks.

\begin{definition}\label{def:clockNiceCutReverse}
Let $C$ be a reverse clock. We say that a set $X\subseteq V(C)$ {\em cuts $C$ precisely} if there exist $i,j,s\in[n]$ such that $X=\{p'_i,a'_j,\widehat{t}'_{s,n-s+1}\}$, $s\leq i$ and $n-s+1\leq j$.
\end{definition}

Definition \ref{def:clockNiceCutReverse} directly implies the following observation. Note that due to our placement of indices, the inequality is complementary to the one in Observation \ref{obs:forClockMaxN1}.

\begin{observation}\label{obs:revClockMaxN1}
Let $C$ be a reverse clock. If $X=\{p'_i,a'_j,\widehat{t}'_{s,n-s-1}\}$ is a set that cuts $C$ precisely, then $i+j\geq n+1$.
\end{observation}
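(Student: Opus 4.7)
The plan is to read off the conclusion directly from the two defining inequalities of a precise cut of a reverse clock, as supplied by Definition~\ref{def:clockNiceCutReverse}. Since $X=\{p'_i,a'_j,\widehat{t}'_{s,n-s+1}\}$ cuts $C$ precisely, we have $s\leq i$ and $n-s+1\leq j$. Rewriting the first constraint as $n+1-i \leq n+1-s = n-s+1$ and chaining with the second, I would obtain $n+1-i \leq n-s+1 \leq j$, which rearranges to the desired inequality $i+j \geq n+1$.

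There is essentially no obstacle here: the statement is an immediate arithmetic consequence of the defining constraints, the exact mirror of Observation~\ref{obs:forClockMaxN1}. The only point worth highlighting is that the ``flip'' in the direction of the inequality is baked into the indexing of $\widehat{T}'$: in the forward clock one has hand constraints $i-1\leq s$ and $j\leq n-s$, yielding the upper bound $i+j\leq n+1$, whereas after reversing all arcs the corresponding constraints become $s\leq i$ and $n-s+1\leq j$, producing the complementary lower bound. (I would also note in passing that the index ``$n-s-1$'' appearing in the statement is to be read as $n-s+1$ per Definition~\ref{def:clockNiceCutReverse}, i.e.\ as the second coordinate of the time vertex used for a precise cut of the reverse clock.)
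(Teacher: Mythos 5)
Your argument is correct and is exactly the one-line arithmetic the paper intends when it says the observation "directly" follows from Definition~\ref{def:clockNiceCutReverse}: adding $i\geq s$ and $j\geq n-s+1$ gives $i+j\geq n+1$. You are also right that the subscript "$n-s-1$" in the observation's statement is a typo for "$n-s+1$", matching Definition~\ref{def:clockNiceCutReverse}.
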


We are now ready to present the desired properties of ``cuts'' of a reverse clock.

\begin{lemma}\label{lem:revClockPrecise}
Let $C$ be a reverse clock. A set $X\subseteq V(C)$ is a directed odd cycle transversal of $C$ of weight exactly 30 if and only if $X$ cuts $C$ precisely.
\end{lemma}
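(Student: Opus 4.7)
The plan is to mirror the argument given for Lemma~\ref{lem:forClockPrecise}, now exploiting the five-type classification of directed odd cycles supplied by Observation~\ref{obs:reverseCTypes}. The weight budget of $30$ is much smaller than the weight $2k+1$ assigned to every vertex outside $\widehat{T}'\cup P'\cup A'$, so any solution of weight exactly $30$ is confined to these three weight-$10$ classes, and each class is tailored to intersect very specific cycle types.

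For the forward implication, let $X$ be a directed odd cycle transversal with $w(X) = 30$. The Type~1 cycle (the entire red hand together with the arc $(z,y)$) meets the weight-$10$ vertices only inside $P'$, forcing some $p'_i \in X$; symmetrically, Type~2 forces some $a'_j \in X$ and Type~5 (the face itself) forces some $\widehat{t}'_{s,n-s+1}\in X$. These three vertices already sum to weight $30$, so $X=\{p'_i,a'_j,\widehat{t}'_{s,n-s+1}\}$. To derive $s\le i$, I consider the Type~3 cycle with index $i$: its red-hand segment $r'_i\to p'_{i+1}\to\cdots\to r'_n\to z$ avoids $p'_i$, so $X$ must meet it through $\widehat{t}'_{s,n-s+1}$ on the face segment $z\to \widehat{r}'_0\to\cdots\to\widehat{r}'_i$ through $(z,\widehat{r}'_0)$, which visits exactly the $\widehat{t}'$ vertices whose first index lies in $\{1,\ldots,i\}$, forcing $s\le i$. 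An entirely analogous argument on the Type~4 cycle with index $j$ yields $n-s+1\le j$.

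For the reverse implication, let $X=\{p'_i,a'_j,\widehat{t}'_{s,n-s+1}\}$ cut $C$ precisely. Its weight is $30$, so it remains only to verify that $X$ hits every cycle listed in Observation~\ref{obs:reverseCTypes}. The vertex $p'_i$ handles Type~1 and every Type~3 cycle whose index $i'$ satisfies $i'<i$, since then the red-hand subpath from $r'_{i'}$ to $z$ passes through $p'_i$. For $i'\ge i$ the hypothesis $s\le i\le i'$ ensures that $\widehat{t}'_{s,n-s+1}$ appears on the traversed face segment $z\to\widehat{r}'_0\to\widehat{b}'_n\to\widehat{t}'_{1,n}\to\cdots\to\widehat{t}'_{i',n-i'+1}\to\widehat{r}'_{i'}$. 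The analogous reasoning, using $a'_j$ and the inequality $n-s+1\le j$, handles Types~2 and~4, while Type~5 is met directly by $\widehat{t}'_{s,n-s+1}$.

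The main obstacle is keeping the reindexing and arc reorientation straight: in the reverse clock the hand directions are unchanged, but the face-attachment arcs and the gluing arcs through $y$ and $z$ are reversed, so the ``face path from $z$ to $\widehat{r}'_{i'}$ containing $(z,\widehat{r}'_0)$'' visits a \emph{different} set of $\widehat{t}'$ vertices than the corresponding face path in the forward clock, and the threshold condition flips from $i-1\le s$ to $s\le i$. Once the face cycle is traced explicitly (as $z,\widehat{r}'_0,\widehat{b}'_n,\widehat{t}'_{1,n},\widehat{r}'_1,\widehat{b}'_{n-1},\widehat{t}'_{2,n-1},\ldots,\widehat{r}'_n,\widehat{b}'_0,z$), the combinatorics reduce to the same case analysis already used for Lemma~\ref{lem:forClockPrecise}, so no essentially new idea is required.
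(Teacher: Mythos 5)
Your proposal is correct and mirrors the paper's proof exactly: the forward direction pins down $X=\{p'_i,a'_j,\widehat{t}'_{s,n-s+1}\}$ from the Type~1, 2, 5 cycles and extracts $s\le i$ and $n-s+1\le j$ from the Type~3 and~4 cycles with indices $i$ and $j$, while the reverse direction distributes Type~3/4 cycles between the hand vertex (for smaller indices) and the time vertex (for larger indices, using the threshold inequalities). Your explicit tracing of the face cycle as $z,\widehat{r}'_0,\widehat{b}'_n,\widehat{t}'_{1,n},\widehat{r}'_1,\dots$ and the resulting flip of the threshold from $i-1\le s$ to $s\le i$ are the only places where care is needed, and you handle them correctly.
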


\begin{proof}
In the forward direction, let $X\subseteq V(C)$ be a directed odd cycle transversal of $C$ of weight exactly 30. Recall that $2k+1>40$. Hence, to intersect the directed odd cycle of Type 1 (see Observation \ref{obs:reverseCTypes}), the set $X$ must contain a vertex of the form $p'_i$ for some $i\in[n]$. Symmetrically, to intersect the directed odd cycle of Type 2, the set $X$ must contain a vertex of the form $a'_j$ for some $j\in[n]$. Moreover, to intersect the directed odd cycle of Type 5, the set $X$ must contain a vertex of the form $\widehat{t}'_{s,n-s+1}$ for some $s\in[n]$. Since $w(X)=30$, we deduce that $X=\{p'_i,a'_j,\widehat{t}'_{s,n-s+1}\}$. To prove that $X$ cuts $C$ precisely, it remains to show that $s\leq i$ and $n-s+1\leq j$. For this purpose, consider the directed odd cycle of Type 3 that consists of the arc $(\widehat{r}'_i,r'_i)$, the directed path from $r'_i$ to $z$ on the red hand, and the directed path from $z$ to $\widehat{r}'_i$ on the face of the clock that contains the arc $(z,\widehat{r}'_0)$. Since $X$ must intersect this directed odd cycle, it must hold that $s\leq i$. Now, consider the directed odd cycle of Type 4 that consists of the arc $(\widehat{b}'_j,b'_j)$, the directed path from $b'_j$ to $z$ on the blue hand, and the directed path from $z$ to $\widehat{b}'_j$ on the face of the clock that contains the arc $(z,\widehat{b}'_0)$. Since $X$ must intersect this directed odd cycle, it must also hold that $n-s+1\leq j$.

In the reverse direction, let $X\subseteq V(C)$ be a set that cuts $C$ precisely. Then, there exist $i,j,s\in[n]$ such that $X=\{p'_i,a'_j,\widehat{t}'_{s,n-s+1}\}$, $s\leq i$ and $n-s+1\leq j$. Clearly, $w(X)=30$. Since $p'_i\in X$, it holds that $X$ intersects the directed odd cycle of Type 1 as well as every directed odd cycle of Type 3 that consists of the arc $(\widehat{r}'_{'},r'_{i'})$ for some $i'<i$, the directed path from $r'_{i'}$ to $z$ on the red hand, and the directed path from $z$ to $\widehat{r}'_{i'}$ on the face of the clock that contains the arc $(z,\widehat{r}'_0)$. Symmetrically, since $a'_j\in X$, it holds that $X$ intersects the directed odd cycle of Type 2 as well as every directed odd cycle of Type 4 that consists of the arc $(\widehat{b}'_{j'},b'_{j'})$ for some $j'<j$, the directed path from $b'_{j'}$ to $z$ on the blue hand, and the directed path from $z$ to $\widehat{b}'_{j'}$ on the face of the clock that contains the arc $(z,\widehat{b}'_0)$. Since $\widehat{t}'_{s,n-s+1}\in X$, it holds that $X$ intersects that directed odd cycle of Type 5. Moreover, since $s\leq i$, it holds that $X$ intersects every directed odd cycle of Type 3 that consists of the arc $(\widehat{r}'_{'},r'_{i'})$ for some $i'\geq i$, the directed path from $r'_{i'}$ to $z$ on the red hand, and the directed path from $z$ to $\widehat{r}'_{i'}$ on the face of the clock that contains the arc $(z,\widehat{r}'_0)$. Symmetrically, since $n-s+1\leq j$, it holds that $X$ intersects every directed odd cycle of Type 4 that consists of the arc $(\widehat{b}'_{j'},b'_{j'})$ for some $j'\geq j$, the directed path from $b'_{j'}$ to $z$ on the blue hand, and the directed path from $z$ to $\widehat{b}'_{j'}$ on the face of the clock that contains the arc $(z,\widehat{b}'_0)$. We have thus verified that $X$ is a directed odd cycle transversal of $C$.
\end{proof}

\begin{lemma}\label{lem:revClockNotPrecise}
Let $C$ be a reverse clock. The weight of a set $X\subseteq V(C)$ that is a directed odd cycle transversal of $C$ but does not cut $C$ precisely is at least 40.
\end{lemma}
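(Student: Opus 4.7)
The plan is to mimic directly the proof of Lemma~\ref{lem:forClockNotPrecise}, exploiting the fact that a reverse clock is obtained from a forward clock by reversing every arc, so the structural features relevant for hitting odd cycles carry over verbatim. I would begin by recalling Observation~\ref{obs:reverseCTypes}, which classifies the directed odd cycles of $C$ into five Types, and then focus attention on the three ``coordinate'' cycles of Types~1, 2 and 5: the red hand together with $(z,y)$, the blue hand together with $(z,y)$, and the face of the clock, respectively.

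The key combinatorial observation is that any two of these three cycles share only vertices of weight $2k+1$. Indeed, the red hand and blue hand are internally disjoint and meet only at the high-weight vertices $z, y$ (and via the arc $(z,y)$ which has no intermediate vertices); the face shares with each hand only the ``pivot'' vertex $z$; and every vertex on the face itself except those in $\widehat{T}'$ has weight $2k+1$. Since $2k+1>40$, in order to have $w(X)\leq 40$ while still hitting all three of these cycles, $X$ must contain at least three distinct low-weight vertices, each of weight $10$, one taken from the red hand (necessarily in $P'$), one from the blue hand (necessarily in $A'$), and one from the face (necessarily in $\widehat{T}'$).

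Now I would conclude as follows. If $X$ consists of exactly three such vertices, then $w(X)=30$ and $X$ has the form $\{p'_i,a'_j,\widehat{t}'_{s,n-s+1}\}$ for some indices $i,j,s$. By Lemma~\ref{lem:revClockPrecise}, any directed odd cycle transversal of weight exactly $30$ must cut $C$ precisely, contradicting our assumption on $X$. Therefore $X$ must contain at least one additional vertex beyond these three, and since every vertex of $C$ has weight at least $10$, we obtain $w(X)\geq 40$.

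I do not expect any serious obstacle: the argument is completely parallel to the forward case, and all the needed ingredients (the five-type classification, the precise-cut characterization at weight exactly $30$, and the weight bound $2k+1>40$ ensured by $k\geq 100$) are already in place. The only mild care required is in verifying that the three ``coordinate'' cycles of Types~1, 2, 5 are indeed pairwise low-weight-disjoint, which is immediate from the construction of the reverse clock.
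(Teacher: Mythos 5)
Your proposal is correct and follows essentially the same route as the paper's own proof: consider the Types 1, 2, 5 cycles, observe that any two of them share only weight-$(2k+1)$ vertices, deduce that a transversal of weight at most 40 must contain at least three weight-10 vertices (one from each of $P'$, $A'$, $\widehat{T}'$), rule out exactly three via Lemma~\ref{lem:revClockPrecise}, and conclude with at least four vertices of weight 10. Your additional sentence justifying why the three coordinate cycles pairwise meet only in high-weight vertices is more explicit than the paper's terse statement of the same fact, but the argument is otherwise identical.
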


\begin{proof}
Consider the directed odd cycles of $C$ of Types 1, 2 and 5 (see Observation \ref{obs:reverseCTypes}). The only vertices that are present in at least two of these cycles are of weight $2k+1>40$. Hence, if $w(X)\leq 40$, then $X$ must contain at least three vertices of $C$, each of weight 10. The set $X$ cannot contain exactly three vertices of weight 10 of $C$, since then $w(X)=30$, in which Lemma~\ref{lem:revClockPrecise} implies $X$ should have cut $C$ precisely. Hence, if $w(X)\leq 40$, then $X$ contains at least four vertices of $C$, and therefore $w(X)\geq 40$.
\end{proof}

\begin{figure}[t!]\centering
\fbox{\includegraphics[scale=0.425]{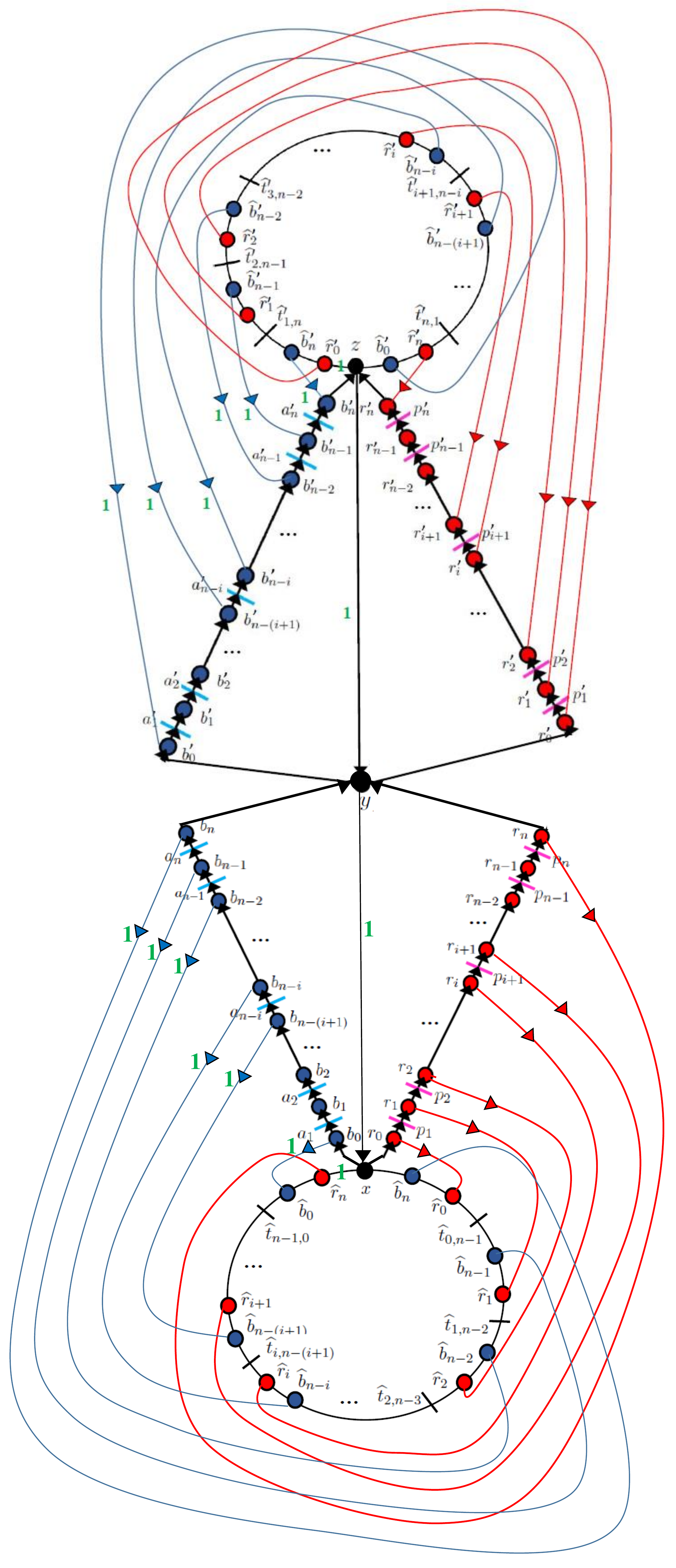}}
\caption{Gluing a forward clock and a reverse clock.}\label{fig:DoubleClock1}
\end{figure}

\subsection{The Double Clock Gadget}

\myparagraph{Structure.} Roughly speaking, an {\em $(n,k)$-double clock} is the result of gluing the tips of the hands of an $(n,k)$-forward clock and an $(n,k)$-reverse clock together as well as adding a 1-labeled arc from every vertex on the hands of the reverse clock to its ``twin'' on the forward clock. In what follows, since $n$ and $k$ would be clear from context, we omit explicit references to $(n,k)$. Formally, a double clock $\widetilde{C}$ is defined as the digraph obtained as follows. Let $C$ be a forward clock, and let $C'$ be a reverse clock. Identify the vertex $y$ of both of these clocks (see Fig.~\ref{fig:DoubleClock1}). All other vertices are distinct. Now, for all $i\in[n]_0$, add the arcs $(r'_i,r_i)$ and $(b'_i,b_i)$, and let the labels of both of these arcs be 1 (see Fig.~\ref{fig:DoubleClock2}).

\begin{figure}[t!]\centering
\fbox{\includegraphics[scale=0.425]{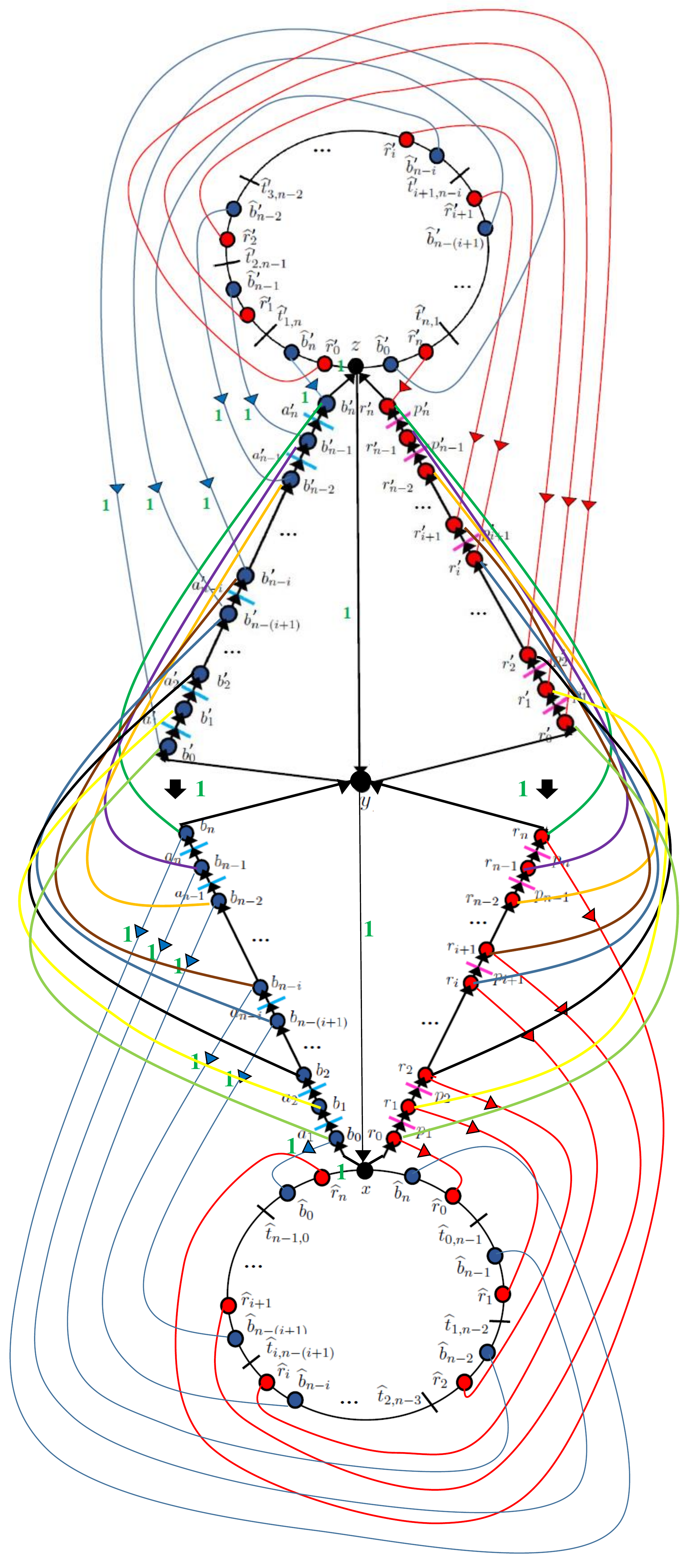}}
\caption{A double clock.}\label{fig:DoubleClock2}
\end{figure}

\smallskip
\myparagraph{Properties.} By the definition of a double clock, we first directly identify which directed odd cycles are present in such a clock.

\begin{observation}\label{obs:doubleCTypes}
Let $\widetilde{C}$ be a double clock. The set of directed odd cycles of $\widetilde{C}$ is the union of the following sets.
\begin{itemize}
\item {\bf Forward:} The set of directed odd cycles completely contained in the forward clock (see Observation \ref{obs:forwardCTypes}).
\item {\bf Reverse:} The set of directed odd cycles completely contained in the reverse clock (see Observation \ref{obs:reverseCTypes}).
\item {\bf Double Red:} For all $i\in[n]_0$, this set contains the direct odd cycle consisting of the arc $(r'_i,r_i)$, the directed path from $r_i$ to $y$ on the red hand of the forward clock, and the directed path from $y$ to $r'_i$ on the red hand of the reverse clock.
\item {\bf Double Blue:} For all $i\in[n]_0$, this set contains the direct odd cycle consisting of the arc $(b'_i,b_i)$, the directed path from $b_i$ to $y$ on the blue hand of the forward clock, and the directed path from $y$ to $b'_i$ on the blue hand of the reverse clock.
\end{itemize}
\end{observation}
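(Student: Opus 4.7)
The plan is to establish the claimed equality by two inclusions. For the easy direction, each cycle in the listed union is verified to be a directed odd cycle: Forward and Reverse cycles inherit oddness directly from Observations~\ref{obs:forwardCTypes} and~\ref{obs:reverseCTypes}, while for a Double Red cycle associated with $i\in[n]_0$ one checks that all arcs on both red hands and the arc $(r_n,y)$ carry label~$0$, so the twin arc $(r'_i,r_i)$ is the unique label-$1$ arc on the cycle; the Double Blue case is symmetric.

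For the harder direction, I would exploit two structural features of $\widetilde{C}$: the vertex $y$ is the only point of overlap between $V(C)$ and $V(C')$, and the only arcs connecting $V(C)\setminus\{y\}$ to $V(C')\setminus\{y\}$ are the twin arcs, all oriented from the reverse side to the forward side. Given any directed odd cycle $\Gamma$ of $\widetilde{C}$, I would split into cases: (i) $\Gamma\subseteq V(C)$, (ii) $\Gamma\subseteq V(C')$, or (iii) $\Gamma$ meets both $V(C)\setminus\{y\}$ and $V(C')\setminus\{y\}$. Cases (i) and (ii) are resolved immediately by the earlier observations, producing the Forward and Reverse types.

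In case (iii), any transition of $\Gamma$ from the forward side to the reverse side must pass through $y$; since $y$ can appear at most once in a simple cycle, $\Gamma$ has exactly one such transition, and by cycle balance exactly one transition in the opposite direction, which cannot reuse $y$ and must therefore be a twin arc. Hence $\Gamma$ decomposes uniquely as $P_1+e+P_2$, where $e$ is a twin arc (say $(r'_i,r_i)$ or $(b'_j,b_j)$), $P_1$ is a $y$-to-$\mathrm{tail}(e)$ walk in the reverse clock, and $P_2$ is a $\mathrm{head}(e)$-to-$y$ walk in the forward clock.

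The main obstacle is showing that odd cycles of this crossing form correspond exactly to the Double Red and Double Blue types. Since $e$ contributes one label-$1$ arc, $\Gamma$ is odd precisely when $P_1$ and $P_2$ together carry an even number of label-$1$ arcs. I would systematically enumerate the candidate walks by tracking the decision points in each clock---$y$'s successors $r'_0$ and $b'_0$ in the reverse clock, the transitions $r'_n\to z$ and $b'_n\to z$ from the reverse hands into the reverse face, the analogous options on the forward side, and the navigation of each face cycle---and verify via a careful label-$1$ count that the only simple pair $(P_1,P_2)$ yielding an odd total is the direct traversal of the appropriate hand in each clock: the reverse red hand from $y$ to $r'_i$ together with the forward red hand from $r_i$ to $y$ when $e=(r'_i,r_i)$, and the symmetric blue pair when $e=(b'_j,b_j)$. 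These match exactly the Double Red and Double Blue cycles, closing the case analysis.
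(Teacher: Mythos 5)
Your decomposition of the argument is sound: you correctly identify that $y$ is the only shared vertex between the two clocks, that the twin arcs are the only edges from $V(C')\setminus\{y\}$ to $V(C)\setminus\{y\}$, and that a crossing cycle is forced to decompose as $P_1+e+P_2$ with $e$ a twin arc, $P_1$ a $y$--$\mathrm{tail}(e)$ path in the reverse clock, and $P_2$ a $\mathrm{head}(e)$--$y$ path in the forward clock. The error is in the last step: the claim that ``the only simple pair $(P_1,P_2)$ yielding an odd total is the direct traversal of the appropriate hand in each clock'' is false, and if you actually carried out the enumeration you proposed, you would discover additional odd crossing cycles that are not of type Double Red or Double Blue.

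Here is a concrete counterexample for any $m\in[n]_0$. Take $e=(r'_m,r_m)$. Let $P_1$ run $y\to b'_0\to a'_1\to\dots\to b'_n$ (blue hand of the reverse clock), then $b'_n\to z\to\widehat{r}'_0$ (picking up the label-$1$ arc $(z,\widehat{r}'_0)$), then along the reverse-clock face to $\widehat{r}'_m$, then $\widehat{r}'_m\to r'_m$. Let $P_2$ run $r_m\to\widehat{r}_m$, then along the forward-clock face to $\widehat{r}_n$, then $\widehat{r}_n\to x$ (picking up the label-$1$ arc $(\widehat{r}_n,x)$), then $x\to b_0\to a_1\to\dots\to b_n\to y$ (blue hand of the forward clock). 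One checks that this closed walk is a simple directed cycle: the blue hands, the two face segments, $z$, $x$, $r'_m$, $r_m$ and $y$ are pairwise disjoint vertex sets. Its label-$1$ arcs are exactly $(z,\widehat{r}'_0)$, $e$, and $(\widehat{r}_n,x)$, so the cycle is odd, yet it is not Forward, not Reverse, and not Double Red or Double Blue (it uses a red twin arc but travels on the blue hands). A second, even simpler such cycle: use $e=(b'_j,b_j)$ with $P_1 = y\to r'_0\to\dots\to r'_n\to z\to\widehat{r}'_0\to\dots\to\widehat{b}'_j\to b'_j$ and $P_2 = b_j\to a_{j+1}\to\dots\to b_n\to y$; here the label-$1$ arcs are $(z,\widehat{r}'_0)$, $(\widehat{b}'_j,b'_j)$, and $(b'_j,b_j)$.

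The takeaway is that the clock faces contribute label-$1$ arcs ($(z,\widehat{r}'_0)$ on the reverse side, $(\widehat{r}_n,x)$ on the forward side) that can offset the twin arc and still leave the total odd while the return leg travels along the wrong-colored hand. So the observation, read literally as a complete enumeration of the directed odd cycles of $\widetilde{C}$, does not hold, and the paper in fact offers no proof of it (it is stated as an unjustified observation). Your proof outline is essentially the right shape, but the crucial enumeration step as you describe it would not verify the statement --- it would instead surface these extra cycles. (Downstream, the proof of Lemma~\ref{lem:doubleClockPrecise} that cites this observation also has a gap, although its conclusion survives because every set that cuts $\widetilde{C}$ precisely contains a vertex on each blue hand and therefore intersects the extra cycles anyway; that, however, is an argument that would need to be supplied explicitly.)
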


We proceed to derive properties of ``cuts'' of a double clock. To this end, we again first need to define the kind of sets using which we would like to ``cut'' double clocks.

\begin{definition}\label{def:clockNiceCutDouble}
Let $\widetilde{C}$ be a double clock. We say that a set $X\subseteq V(\widetilde{C})$ {\em cuts $\widetilde{C}$ precisely} if there exists $i\in[n]$ such that $X=\{p_i,a_{n-i+1},p'_i,a'_{n-i+1},\widehat{t}_{i-1,n-i},\widehat{t}'_{i,n-i+1}\}$.
\end{definition}

We are now ready to present desired properties of ``cuts'' of a double clock.

\begin{lemma}\label{lem:doubleClockPrecise}
Let $\widetilde{C}$ be a double clock. A set $X\subseteq V(\widetilde{C})$ is a directed odd cycle transversal of $\widetilde{C}$ of weight exactly 60 if and only if $X$ cuts $\widetilde{C}$ precisely.
\end{lemma}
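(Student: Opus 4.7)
The plan is to leverage Lemmas~\ref{lem:forClockPrecise}--\ref{lem:revClockNotPrecise} for the two clock components separately, and then use the ``Double Red'' and ``Double Blue'' cycle families from Observation~\ref{obs:doubleCTypes} to couple the two precise cuts.

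First, the easy direction: suppose $X$ cuts $\widetilde{C}$ precisely with parameter $i$. Its six vertices all have weight $10$, so $w(X)=60$. The forward part $X_1=\{p_i,a_{n-i+1},\widehat{t}_{i-1,n-i}\}$ is a precise cut of the forward clock (take $s=i-1$; the inequalities $i-1\leq s$ and $n-i+1\leq n-s$ both hold with equality), so by Lemma~\ref{lem:forClockPrecise} it intersects every directed odd cycle of $C$. Symmetrically, $X_2=\{p'_i,a'_{n-i+1},\widehat{t}'_{i,n-i+1}\}$ is a precise cut of the reverse clock (take $s^{*}=i$), so by Lemma~\ref{lem:revClockPrecise} it intersects every directed odd cycle of $C'$. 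It remains to hit the Double Red and Double Blue cycles. For the Double Red cycle indexed by $i'\in[n]_0$: the path from $r_{i'}$ to $y$ on the forward red hand contains $p_j$ for every $j\in\{i'+1,\dots,n\}$, and the path from $y$ to $r'_{i'}$ on the reverse red hand contains $p'_j$ for every $j\in[i']$. Hence if $i'\leq i-1$ the vertex $p_i$ hits the cycle, and if $i'\geq i$ the vertex $p'_i$ hits it. The Double Blue case is identical after substituting $n-i+1$ for $i$.

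For the converse, assume $w(X)=60$ and $X$ is a directed odd cycle transversal of $\widetilde{C}$. The only vertex of $C$ and $C'$ shared in the gluing is $y$, and $w(y)=2k+1>60$, so $y\notin X$ and we may write $X=X_1\sqcup X_2$ with $X_1\subseteq V(C)\setminus\{y\}$ and $X_2\subseteq V(C')\setminus\{y\}$. Every directed odd cycle of the forward (resp.\ reverse) clock is entirely contained in $C$ (resp.\ $C'$), so $X_1$ is a directed odd cycle transversal of $C$ and $X_2$ of $C'$. Lemmas~\ref{lem:forClockPrecise}--\ref{lem:revClockNotPrecise} then give $w(X_1)\geq 30$ and $w(X_2)\geq 30$, with equality forcing precise cuts; since the weights sum to $60$, both hold with equality. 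Write $X_1=\{p_i,a_j,\widehat{t}_{s,n-s-1}\}$ with $i-1\leq s$, $j\leq n-s$, and $X_2=\{p'_{i^{*}},a'_{j^{*}},\widehat{t}'_{s^{*},n-s^{*}+1}\}$ with $s^{*}\leq i^{*}$, $n-s^{*}+1\leq j^{*}$. By Observations~\ref{obs:forClockMaxN1} and~\ref{obs:revClockMaxN1} we get $i+j\leq n+1\leq i^{*}+j^{*}$.

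The crux is to link the two halves via the new cycles. The Double Red cycle indexed by $i'$ is hit by $X$ only through $p_i$ or $p'_{i^{*}}$: the former requires $i\geq i'+1$ and the latter requires $i^{*}\leq i'$. Requiring this for every $i'\in[n]_0$ forbids any $i'$ with $i\leq i'\leq i^{*}-1$, which forces $i^{*}\leq i$. The analogous argument on Double Blue cycles yields $j^{*}\leq j$. Combined with $i+j\leq n+1\leq i^{*}+j^{*}$, we conclude $i=i^{*}$, $j=j^{*}$, and $i+j=n+1$, so $j=n-i+1$. Finally, $i-1\leq s\leq n-j=i-1$ pins $s=i-1$, while $s^{*}\leq i$ and $n-s^{*}+1\leq n-i+1$ pin $s^{*}=i$, and therefore $X=\{p_i,a_{n-i+1},p'_i,a'_{n-i+1},\widehat{t}_{i-1,n-i},\widehat{t}'_{i,n-i+1}\}$ cuts $\widetilde{C}$ precisely.

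The main obstacle is the index bookkeeping in the third paragraph: one must correctly identify which segments of the red and blue hands lie on each Double cycle so as to turn ``every $i'$ is covered'' into the sharp inequalities $i^{*}\leq i$ and $j^{*}\leq j$; once these are in place, the two clock observations pinch the remaining freedom down to a single choice of $i$.
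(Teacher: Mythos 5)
Your proof is correct and follows essentially the same route as the paper's: decompose $X$ across the two clocks using Lemmata~\ref{lem:forClockPrecise}--\ref{lem:revClockNotPrecise} to force two precise cuts of weight $30$ each, then use the Double Red and Double Blue cycles to derive the inequality $i'\leq i$ (your $i^*\leq i$) and its blue counterpart, and finally pinch with $i+j\leq n+1\leq i'+j'$. The index bookkeeping on the Double Red/Blue cycles is exactly what the paper does; no gaps.
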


\begin{proof}
In the forward direction, let $X\subseteq V(\widetilde{C})$ be a directed odd cycle transversal of $\widetilde{C}$ of weight exactly 60. By Lemmata \ref{lem:forClockPrecise}, \ref{lem:forClockNotPrecise}, \ref{lem:revClockPrecise} and \ref{lem:revClockNotPrecise}, and since $w(X)=60$ while the only vertex that the forward and reverse clocks of $\widetilde{C}$ have in common is $y$, we deduce that $X$ is of the following form. The set $X$ is the union of two pairwise disjoint sets, $Y$ and $Y'$, such that $Y$ cuts the forward clock of $\widetilde{C}$ precisely, and $Y'$ cuts the reverse clock of $\widetilde{C}$ precisely. Hence, there exist $i,j,s\in[n]$ such that $Y=\{p_i,a_j,\widehat{t}_{s,n-s-1}\}$, $i-1\leq s$ and $j\leq n-s$. In particular, by Observation \ref{obs:forClockMaxN1}, it holds that $i+j\leq n+1$. Moreover, there exist $i',j',s'\in[n]$ such that $Y'=\{p'_{i'},a'_{j'},\widehat{t}'_{s',n-s+1}\}$, $s'\leq i'$ and $n-s'+1\leq j'$. In particular, by Observation \ref{obs:revClockMaxN1}, it holds that $i'+j'\geq n+1$. Thus, to prove the forward direction, it remains to show that $i=i'$, $j=j'$ and $j=n-i+1$. Indeed, if these equalities hold, then necessarily $s=i-1$ and $s'=i$.

We claim that $i\geq i'$. Indeed, if $i<i'$, then $X$ does not intersect the directed odd cycle of Type ``Double Red'' (see Observation \ref{obs:doubleCTypes}) that consists of the arc $(\post(p'_i),\post(p_i))=(r'_i,r_i)$, the directed path from $r_i$ to $y$ on the red hand of the forward clock, and the directed path from $y$ to $r'_i$ on the red hand of the reverse clock. Symmetrically, we claim that $j\geq j'$. Indeed, if $j<j'$, then $X$ does not intersect the directed odd cycle of Type ``Double Blue'' that consists of the arc $(\post(a'_j),\post(a_j))=(b'_j,b_j)$, the directed path from $b_j$ to $y$ on the blue hand of the forward clock, and the directed path from $y$ to $b'_j$ on the blue hand of the reverse clock.

Next, we observe that since $i\geq i'$, $j\geq j'$ and $i'+j'\geq n+1$, we have that $i+j\geq n+1$. However, since we have also argued that $i+j\leq n+1$, we have that $i+j=n+1$. Thus, $j=n-i+1$. Similarly, since $i\geq i'$, $j\geq j'$ and $i+j\leq n+1$, we have that $i'+j'\leq n+1$. However, since we have also argued that $i'+j'\geq n+1$, we have that $i'+j'=n+1$. Hence, we further deduce that $i+j=i'+j'$. However, since $i\geq i'$ and $j\geq j'$, this implies that $i=i'$ and $j=j'$. We thus conclude the correctness of the forward direction.

In the reverse direction, let $X\subseteq V(\widetilde{C})$ be a set that cuts $\widetilde{C}$ precisely. Then, there exist $i\in[n]$ such that $X=\{p_i,a_{n-i+1},p'_i,a'_{n-i+1},\widehat{t}_{i-1,n-i},\widehat{t}'_{i,n-i+1}\}$. Denote $Y=\{p_i,a_{n-i+1},\widehat{t}_{i-1,n-i}\}$ and $Y'=\{p'_i,a'_{n-i+1},\widehat{t}'_{i,n-i+1}\}$. Observe that $Y$ cuts the forward clock of $\widetilde{C}$ precisely, while $Y'$ cuts the reverse clock of $\widetilde{C}$ precisely. Hence, by Lemmata \ref{lem:forClockPrecise} and \ref{lem:revClockPrecise}, the set $X$ intersects all directed odd cycles of Types ``Forward'' and ``Reverse''. Thus, by Observation \ref{obs:doubleCTypes}, and since it is clear that $w(X)=60$, it remains to verify that $X$ intersects all directed odd cycles of Types ``Double Red'' and ``Double Blue''. Since $p_i\in X$, it holds that $X$ intersects every directed odd cycle of Type ``Double Red'' that consists of the arc $(r'_{i'},r_{i'})$ for some $i'<i$, the directed path from $r_{i'}$ to $y$ on the red hand of the forward clock, and the directed path from $y$ to $r'_{i'}$ on the red hand of the reverse clock. Moreover, since $p'_i\in X$, it holds that $X$ intersects every directed odd cycle of Type ``Double Red'' that consists of the arc $(r'_{i'},r_{i'})$ for some $i'\geq i$, the directed path from $r_{i'}$ to $y$ on the red hand of the forward clock, and the directed path from $y$ to $r'_{i'}$ on the red hand of the reverse clock. Thus, the set $X$ intersects every directed odd cycle of Type ``Double Red''. Symmetrically, since $a_i,a'_i\in X$, we deduce that $X$ also intersects every directed odd cycle of Type ``Double Blue''. This concludes the proof of the reverse direction.
\end{proof}

\begin{lemma}\label{lem:doubleClockNotPrecise}
Let $\widetilde{C}$ be a double clock. The weight of a set $X\subseteq V(\widetilde{C})$ that is a directed odd cycle transversal of $\widetilde{C}$ but does not cut $\widetilde{C}$ precisely is at least 70.
\end{lemma}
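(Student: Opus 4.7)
The plan is to decompose $X$ along the two halves of the double clock and reduce to the analogous statements already proved for forward and reverse clocks. Let $C$ denote the forward clock and $C'$ the reverse clock comprising $\widetilde{C}$, so that after gluing $V(C)\cap V(C')=\{y\}$.

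First I would dispose of the trivial case $y\in X$: since $w(y)=2k+1\geq 201 > 70$ (as $k\geq 100$), the bound is immediate. From here on, assume $y\notin X$, and set $Y=X\cap V(C)$ and $Y'=X\cap V(C')$. These are disjoint, and since $V(\widetilde{C})=V(C)\cup V(C')$ we have
\[
w(X)=w(Y)+w(Y').
\]
By Observation~\ref{obs:doubleCTypes}, directed odd cycles of Type ``Forward'' (resp.\ ``Reverse'') lie entirely inside $V(C)$ (resp.\ $V(C')$). Since $X$ intersects every such cycle, $Y$ is a directed odd cycle transversal of $C$ and $Y'$ is a directed odd cycle transversal of $C'$.

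Next I would apply the dichotomies established in Lemmas~\ref{lem:forClockPrecise} and~\ref{lem:forClockNotPrecise} to $Y$, and in Lemmas~\ref{lem:revClockPrecise} and~\ref{lem:revClockNotPrecise} to $Y'$. For each of $Y$ and $Y'$, either it cuts its clock precisely and has weight exactly $30$, or its weight is at least $40$. This leaves two cases. In the first, both $Y$ and $Y'$ cut their respective clocks precisely, so $w(X)=60$; but then $X$ is a directed odd cycle transversal of $\widetilde{C}$ of weight exactly $60$, and Lemma~\ref{lem:doubleClockPrecise} forces $X$ to cut $\widetilde{C}$ precisely, contradicting the hypothesis. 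In the second case, at least one of $Y$, $Y'$ fails to cut precisely, hence contributes weight at least $40$, while the other contributes at least $30$; so $w(X)\geq 70$.

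I do not expect any real obstacle here: the entire proof is a bookkeeping consequence of the previously established forward/reverse clock lemmas together with the ``exactly $60$'' characterization of precise cuts of $\widetilde{C}$. The only small point to verify carefully is that $Y$ (resp.\ $Y'$) really is a transversal of $C$ (resp.\ $C'$) when viewed in isolation, which follows from Observation~\ref{obs:doubleCTypes} since Type ``Forward'' and ``Reverse'' cycles are self-contained in the corresponding half.
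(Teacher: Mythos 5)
Your proof is correct and follows essentially the same route as the paper: decompose $X$ into its intersections with the forward and reverse clocks, apply the $30$-vs.-$\geq 40$ dichotomies from the forward/reverse clock lemmata, use Lemma~\ref{lem:doubleClockPrecise} to rule out the $w(X)=60$ case, and handle $y$ separately via its weight $2k+1>70$. The only difference is that you make explicit (via Observation~\ref{obs:doubleCTypes}) that the two restrictions are themselves transversals of their respective clocks, a point the paper leaves implicit.
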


\begin{proof}
Since $X$ does not cut $\widetilde{C}$ precisely, Lemma~\ref{lem:doubleClockPrecise} implies that either $w(X)<60$ or $w(X)>60$. However, Lemmata \ref{lem:forClockPrecise} and \ref{lem:forClockNotPrecise} imply that the weight of the intersection of $X$ with the forward clock is either 30 or at least 40, and Lemmata \ref{lem:revClockPrecise} and \ref{lem:revClockNotPrecise} imply that the weight of the intersection of $X$ with the reverse clock is either 30 or at least 40. Furthermore, since $y$ is the only vertex that the forward and reverse clocks have in common and its weight is $2k+1>70$, we conclude that $w(X)\geq 70$.
\end{proof}

To analyze structures that combine several double clocks, we need to strengthen the reverse direction of Lemma \ref{lem:doubleClockPrecise}. More precisely, we need to derive additional properties of a double clock from which we remove a set that cuts it precisely (in addition to the claim that this graph excludes directed odd cycles).
For this purpose, we first introduce the following definition, which breaks a double clock into three ``pieces'' (see Fig.~\ref{fig:DoubleClock3}).

\begin{definition}\label{def:piecesDouble}
Let $\widetilde{C}$ be a double clock, and let $X$ be a set that cuts $\widetilde{C}$ precisely. Then, $\widetilde{C}[X,x]$ denotes the subgraph of $\widetilde{C}\setminus X$ induced by the set of vertices that both can reach $x$ and are reachable from $x$, and $\widetilde{C}[X,z]$ denotes the subgraph of $\widetilde{C}\setminus X$ induced by the set of vertices that both can reach $z$ and are reachable from $z$. Moreover, $\widetilde{C}[X,y]$ denotes the subgraph of $\widetilde{C}\setminus X$ induced by the set of vertices that belong to neither $\widetilde{C}[X,x]$ nor $\widetilde{C}[X,z]$.
\end{definition}

\begin{figure}[t!]\centering
\fbox{\includegraphics[scale=0.7]{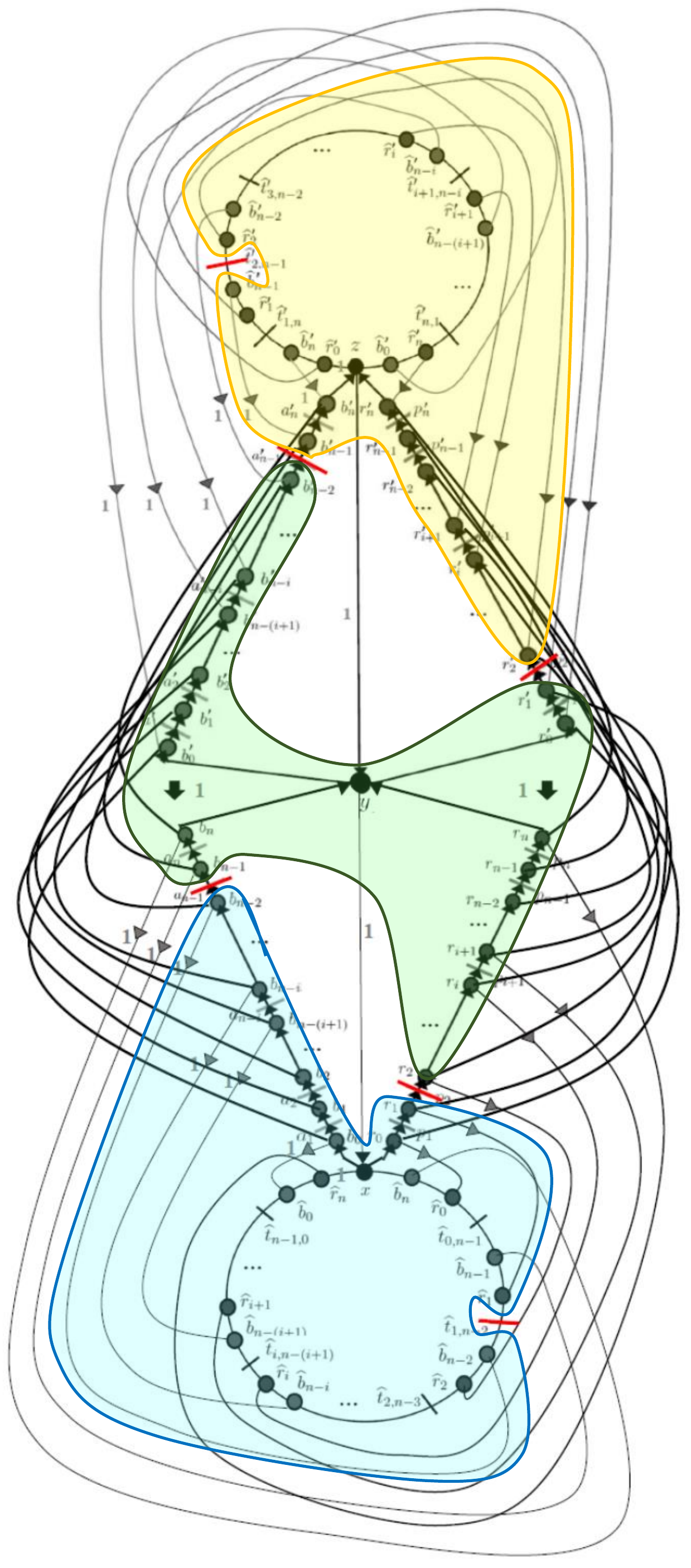}}
\caption{The components $\widetilde{C}[X,x]$ (blue), $\widetilde{C}[X,y]$ (green) and $\widetilde{C}[X,z]$ (yellow), where $X$ is the set of red vertices (see Definition \ref{def:piecesDouble}).}\label{fig:DoubleClock3}
\end{figure}

Notice that for a double clock $\widetilde{C}$, the only two directed paths from $x$ to $y$ are the red and blue hands of the forward clock of $\widetilde{C}$, the only two directed paths from $y$ to $z$ are the red and blue hands of the reverse clock of $\widetilde{C}$, and all of the directed paths from $x$ to $z$ contain the vertex $y$. Moreover, to every vertex $v$ on a hand of the forward clock, there exists exactly one directed path from $x$ that avoids $y$. On the other hand, from the vertex $v$, note that $x$ is reachable by using an arc to the face of the forward clock (in case of a vertex of weight $2k+1$) or an outgoing arc followed by an arc to the face of the forward clock (in case of a vertex of weight 10), after which we append either of the two paths from the vertex we have reached on the face of the forward clock to $x$. A symmetric claim holds in the context of $z$. In light of these observations, we identify each piece of Definition \ref{def:piecesDouble} as follows (see Fig.~\ref{fig:DoubleClock3}).

\begin{observation}\label{obs:vertexSetPieceDouble}
Let $\widetilde{C}$ be a double clock, and let $X=\{p_i,a_{n-i+1},p'_i,a'_{n-i+1},\widehat{t}_{i-1,n-i},\widehat{t}'_{i,n-i+1}\}$ be a set that cuts $\widetilde{C}$ precisely. Then, the following three conditions are satisfied.
\begin{enumerate}
\item $V(\widetilde{C}[X,x])=\widehat{R}\cup\widehat{B}\cup(\widehat{T}\setminus\{\widehat{t}_{i-1,n-i}\})\cup\{x\}\cup\{p_{i'}\in P: i'<i\}\cup\{r_{i'}\in R: i'<i\}\cup\{a_{i'}\in A: i'<n-i+1\}\cup\{b_{i'}\in B: i'<n-i+1\}$.
\item $V(\widetilde{C}[X,y])=\{y\}\cup\{p_{i'}\in P: i'>i\}\cup\{r_{i'}\in R: i'\geq i\}\cup\{a_{i'}\in A: i'>n-i+1\}\cup\{b_{i'}\in B: i'\geq n-i+1\}\cup\{p'_{i'}\in P': i'<i\}\cup\{r'_{i'}\in R': i'<i\}\cup\{a'_{i'}\in A': i'<n-i+1\}\cup\{b'_{i'}\in B': i'< n-i+1\}$.
\item$V(\widetilde{C}[X,z])=\widehat{R}'\cup\widehat{B}'\cup(\widehat{T}'\setminus\{\widehat{t}'_{i,n-i+1}\})\cup\{z\}\cup\{p'_{i'}\in P': i'>i\}\cup\{r'_{i'}\in R': i'\geq i\}\cup\{a'_{i'}\in A': i'>n-i+1\}\cup\{b'_{i'}\in B': i'\geq n-i+1\}$.
\end{enumerate}
\end{observation}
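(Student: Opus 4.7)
The plan is to prove the observation by a direct reachability analysis in $\widetilde{C}\setminus X$, using the fact that the three pieces are defined purely by strong-connectivity relations with $x$, $z$, and (by elimination) $y$. The argument rests on three structural facts. First, the face of the forward clock remains strongly connected after removing only $\widehat{t}_{i-1,n-i}$: the bidirectional pairs $(\widehat{r}_j,\widehat{b}_{n-j})$ together with the remaining bidirectional connectors $\widehat{t}_{j,n-j-1}$ form a ``ladder'' whose connectivity survives the deletion of a single rung, and $x$ lies in this strong component; an analogous statement holds for the face of the reverse clock after removing $\widehat{t}'_{i,n-i+1}$. Second, each of the four hands is a simple directed path whose removal of the unique cut vertex ($p_i$, $a_{n-i+1}$, $p'_i$, or $a'_{n-i+1}$) leaves a prefix and a suffix with no further arcs between them along the hand. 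Third, the only arcs from $x$ into a hand are $(x,r_0)$ and $(x,b_0)$, the only arcs between a hand and the face of its clock (other than these) are hand-to-face in the forward clock and face-to-hand in the reverse clock, and the only arcs between the two clocks other than through the shared vertex $y$ are the cross arcs $(r'_j,r_j)$ and $(b'_j,b_j)$.

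An immediate consequence is that in $\widetilde{C}\setminus X$ the vertex $x$ cannot reach $y$, since every directed $x$-to-$y$ walk must traverse an entire forward hand and both hands are cut; symmetrically, $y$ cannot reach $z$. Hence $x$ and $z$ are not strongly connected to one another or to $y$. For claim~1, I would verify that the set of vertices reachable from $x$ in $\widetilde{C}\setminus X$ is exactly $\widehat{R}\cup\widehat{B}\cup(\widehat{T}\setminus\{\widehat{t}_{i-1,n-i}\})\cup\{x\}$ together with the two forward-hand prefixes $\{r_{i'}:i'<i\}\cup\{p_{i'}:i'<i\}$ and $\{b_{i'}:i'<n-i+1\}\cup\{a_{i'}:i'<n-i+1\}$: the face is traversable by the first structural fact, the hand prefixes by the second, and no vertex outside the forward clock is reachable since that would require passing $y$. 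Each of these vertices also reaches $x$: face vertices within the face, and hand-prefix vertices by following the hand forward to the nearest $r$ or $b$ vertex and then using a hand-to-face arc. This proves claim~1, and claim~3 follows by the fully symmetric analysis on the reverse clock.

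For claim~2, I would argue by elimination: Definition~\ref{def:piecesDouble} partitions $V(\widetilde{C})\setminus X$ into the three pieces, so it suffices to verify that every vertex listed in the proposed $V(\widetilde{C}[X,y])$ is excluded from both $V(\widetilde{C}[X,x])$ and $V(\widetilde{C}[X,z])$. For $y$ itself this is immediate. For a forward-hand suffix vertex such as $r_j$ with $j\geq i$, it is not reachable from $x$ (by the broken hand) and cannot reach $z$: every outgoing walk proceeds either further along the forward hand toward $y$ or jumps to the forward face, which reaches only $x$. For a reverse-hand prefix vertex such as $r'_j$ with $j<i$, it is not reachable from $x$ (since $x$ does not reach $y$) and cannot reach $z$, because its only escape routes are forward along the cut reverse hand or via the cross arc into the forward clock, from which $z$ is unreachable by the third structural fact. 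The remaining symmetric cases on the blue side and for $p$, $a$, $p'$, $a'$ are handled identically.

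I expect the main obstacle to be ensuring that no subtle return path from a reverse-hand prefix vertex back to $z$ is missed: once one jumps via $(r'_j,r_j)$ into the forward clock, one must argue that nothing in the forward face or the forward hands can lead back to any reverse-clock vertex. The key ingredient here is exactly the third structural fact, which forbids any forward-to-reverse transition except through $y$, combined with the observation that $y$ does not reach $z$. Once this is firmly in place the case enumeration is routine.
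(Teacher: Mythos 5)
Your proposal is correct and follows essentially the same route as the paper, which presents Observation~\ref{obs:vertexSetPieceDouble} as a direct consequence of the structural reachability facts stated in the paragraph preceding it (the two forward hands are the only $x$-to-$y$ paths, the two reverse hands are the only $y$-to-$z$ paths, every $x$-to-$z$ path passes through $y$, and hand vertices reach $x$ resp.\ $z$ via the corresponding face). One small imprecision: the face of each clock is a single bidirectional cycle of degree-two vertices rather than a ``ladder,'' but since removing one $\widehat{t}$ vertex from a bidirectional cycle still leaves a strongly connected path, your conclusion and the rest of the argument are unaffected.
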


Furthermore, we notice that the three pieces are locally ``isolated'' in a double clock. Here, isolation means that there does not exist a vertex in one piece and a vertex in another piece such that the first vertex can reach the second one and vice versa. More precisely, since $\widetilde{C}[X,x]$ and $\widetilde{C}[X,z]$ are strongly connected directed graphs, while $(z,y),(y,x)\in A(\widetilde{C}\setminus X)$ and all of the directed paths of $\widetilde{C}$ from $x$ to $z$ contain $y$, we directly derive the following observation.

\begin{observation}\label{obs:doubleConComps}
Let $\widetilde{C}$ be a double clock, and let $X$ be a set that cuts $\widetilde{C}$ precisely. Then, the following two conditions are satisfied.
\begin{enumerate}
\item There do not exist vertices $u\in V(\widetilde{C}[X,x])$ and $v\in V(\widetilde{C}[X,y])\cup V(\widetilde{C}[X,z])$ such that there exists a directed path from $u$ to $v$ in $\widetilde{C}\setminus X$.
\item There do not exist vertices $u\in V(\widetilde{C}[X,y])$ and $v\in V(\widetilde{C}[X,z])$ such that there exists a directed path from $u$ to $v$ in $\widetilde{C}\setminus X$.
\end{enumerate}
\end{observation}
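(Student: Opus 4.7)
The plan is to derive both conditions from the following key fact, which I would prove first: $x$ does not reach $y$ in $\widetilde{C}\setminus X$. For this, suppose for contradiction that $x$ reaches $y$; combined with the arc $(y,x)\in A(\widetilde{C}\setminus X)$, the vertices $x$ and $y$ would lie in the same strongly connected component of $\widetilde{C}\setminus X$, which is exactly $\widetilde{C}[X,x]$. Hence $y\in V(\widetilde{C}[X,x])$, but Observation~\ref{obs:vertexSetPieceDouble} puts $y\in V(\widetilde{C}[X,y])$, contradicting the pairwise disjointness of the three pieces built into Definition~\ref{def:piecesDouble}. A symmetric argument via $(z,y)$ shows that $y$ does not reach $z$, and since every directed $x$-to-$z$ path in $\widetilde{C}$ passes through $y$, $x$ does not reach $z$ either.

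For Condition~1, I would suppose that some $u\in V(\widetilde{C}[X,x])$ reaches some $v\in V(\widetilde{C}[X,y])\cup V(\widetilde{C}[X,z])$, and prepend a path $x\to u$ using the strong connectivity of $\widetilde{C}[X,x]$, yielding a path $x\to v$ in $\widetilde{C}\setminus X$. If $v\in V(\widetilde{C}[X,z])$, the strong connectivity of $\widetilde{C}[X,z]$ extends this to $x\to z$, contradicting the key fact. If $v\in V(\widetilde{C}[X,y])$, I split into subcases according to the explicit description of $V(\widetilde{C}[X,y])$ in Observation~\ref{obs:vertexSetPieceDouble}: when $v=y$ or $v$ lies on a late portion of a forward hand, $v$ reaches $y$ along the uncut remainder of that hand, hence $x\to y$ contradicting the key fact; when $v$ lies on an early portion of a reverse hand, the cross arc $(r'_j,r_j)$ or $(b'_j,b_j)$ (for the appropriate $j<i$ or $j\le n-i$) takes $v$ into $V(\widetilde{C}[X,x])$, so $v$ reaches $x$, and combined with $x\to v$ this places $v$ in the SCC of $x$, contradicting $v\in V(\widetilde{C}[X,y])$.

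The main obstacle is Condition~2, since $\widetilde{C}[X,y]$ is not in general strongly connected and the prepending trick breaks. The plan is to instead enumerate, using Observation~\ref{obs:vertexSetPieceDouble}, the outgoing arcs in $\widetilde{C}\setminus X$ of each vertex type of $V(\widetilde{C}[X,y])$ and check by inspection that no such arc has its head in $V(\widetilde{C}[X,z])$: $y$'s outgoing arcs $(y,x),(y,r'_0),(y,b'_0)$ land in $V(\widetilde{C}[X,x])\cup V(\widetilde{C}[X,y])$; a late forward-hand vertex either continues along the uncut hand, crosses to the face via $(r_j,\widehat{r}_j)$ or $(b_j,\widehat{b}_j)$ (entering $V(\widetilde{C}[X,x])$), or reaches $y$ via $(r_n,y)$ or $(b_n,y)$; an early reverse-hand vertex either continues along the hand or crosses via $(r'_j,r_j),(b'_j,b_j)$ into $V(\widetilde{C}[X,x])$. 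Combined with Condition~1, which shows that $V(\widetilde{C}[X,x])$ is closed under reachability in $\widetilde{C}\setminus X$, every path beginning in $V(\widetilde{C}[X,y])$ is confined to $V(\widetilde{C}[X,y])\cup V(\widetilde{C}[X,x])$, and hence cannot reach $V(\widetilde{C}[X,z])$.
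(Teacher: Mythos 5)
Your proof is correct and follows the same structural reasoning the paper invokes when it "directly derives" the observation: the strong connectivity of $\widetilde{C}[X,x]$ and $\widetilde{C}[X,z]$, the arcs $(z,y),(y,x)$, the fact that every directed $x$-to-$z$ path passes through $y$, and the explicit piece description in Observation~\ref{obs:vertexSetPieceDouble}. The paper leaves the verification to the reader as a terse remark; you have simply supplied the details (the key fact that $x\not\to y$ and $y\not\to z$, followed by an arc-by-arc check), so this is the same route, spelled out.
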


We also need to internally analyze each piece separately. To this end, we introduce one additional definition.

\begin{definition}\label{def:ellConsistent}
Let $D$ be a directed graph, and let $\ell: A(D)\rightarrow\{0,1\}$. We say that a function $f: V(D)\rightarrow \{{\bf b},{\bf w}\}$ is {\em $\ell$-consistent for $D$} if for all $(u,v)\in A(D)$, it holds that $\ell(u,v)=0$ if and only if $f(u)=f(v)$.
\end{definition}

When the graph $D$ is clear from context, we simply write $\ell$-consistent rather than $\ell$-consistent for $D$. First, we note the following simple observation, which hints at the relevance of Definition \ref{def:ellConsistent} to {\sc A-DOCT}.

\begin{observation}\label{obs:consistentImplyNoOdd}
Let $D$ be a directed graph, and let $\ell: A(D)\rightarrow\{0,1\}$. If there exists an $\ell$-consistent function for $D$, then $D$ does not contain a directed odd cycle. 
\end{observation}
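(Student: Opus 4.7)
Assume that some $\ell$-consistent function $f \colon V(D) \to \{\mathbf{b}, \mathbf{w}\}$ exists, and let $C$ be an arbitrary directed cycle of $D$, say $C = v_1 \to v_2 \to \cdots \to v_m \to v_1$. I want to show $\ell(E(C))$ is even, which by definition means $C$ is not a directed odd cycle. Summing over all the arcs of $C$ thus proves the observation.

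The plan is a standard parity argument. For each $i \in [m]$, let $a_i$ denote the arc $(v_i, v_{i+1})$ of $C$ (indices mod $m$). By $\ell$-consistency, $\ell(a_i) = 1$ iff $f(v_i) \neq f(v_{i+1})$, and $\ell(a_i) = 0$ iff $f(v_i) = f(v_{i+1})$. Therefore $\ell(E(C)) = \sum_{i=1}^{m} \ell(a_i)$ counts exactly the number of indices $i$ at which the color switches between $f(v_i)$ and $f(v_{i+1})$ as we traverse the closed walk $v_1, v_2, \ldots, v_m, v_1$.

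Next, I would observe that the total number of color switches along any closed walk in $\{\mathbf{b}, \mathbf{w}\}$ is necessarily even: starting from $f(v_1)$ and returning to $f(v_1)$ after $m$ steps, the values must flip back to their original color, which can only happen after an even number of flips. Formally, each flip toggles a $\mathbb{Z}/2\mathbb{Z}$-valued indicator of the current color, and the net toggle over a closed walk is $0$. Hence $\ell(E(C))$ is even.

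The argument is routine and there is no real obstacle; the only thing to be careful about is the definition of ``directed odd cycle'' in the \textsc{A-DOCT} setting, which refers to the parity of $\ell(E(C))$ rather than the parity of the length $m$. Once this is noted, the conclusion follows immediately from the parity computation above, and Observation~\ref{obs:consistentImplyNoOdd} is established.
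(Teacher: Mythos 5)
Your proof is correct, and since the paper states this as an observation without supplying a proof, your parity argument is exactly the routine justification the authors left implicit. You also correctly flag the one potential pitfall — that ``directed odd cycle'' in the \textsc{A-DOCT} setting means $\ell(E(C))$ is odd rather than $|E(C)|$ is odd — so there is nothing to add.
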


Let us now derive another simple implication of Definition \ref{def:ellConsistent}.

\begin{lemma}\label{lem:swapEllConsistent}
Let $D$ be a directed graph, $\ell: A(D)\rightarrow\{0,1\}$, and $D'$ be some subgraph of $D$ whose underlying undirected graph is connected and which contains only $0$-labeled arcs. Then, if $D$ admits an $\ell$-consistent function, then $D$ also admits an $\ell$-consistent function $f$ such that for all $v\in V(D')$, it holds that $f(v)={\bf b}$.
\end{lemma}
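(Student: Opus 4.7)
The plan is to take any $\ell$-consistent function $g$ for $D$, argue that $g$ is forced to be constant on $V(D')$, and then either use $g$ directly or globally swap the two colors to obtain the desired $f$.

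First I would show that $g$ is constant on $V(D')$. For any arc $(u,v) \in A(D')$ we have $\ell(u,v) = 0$ by hypothesis, so by $\ell$-consistency $g(u) = g(v)$. Note that this equality is symmetric, so it holds equally for pairs of vertices joined by an arc in either direction in $D'$; hence $g$ is constant on every connected component of the underlying undirected graph of $D'$. Since that underlying graph is connected by assumption, there exists $c \in \{\mathbf{b},\mathbf{w}\}$ with $g(v) = c$ for every $v \in V(D')$.

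Next I would split into two cases. If $c = \mathbf{b}$, set $f := g$ and we are done. If $c = \mathbf{w}$, define $f : V(D) \to \{\mathbf{b},\mathbf{w}\}$ by $f(v) = \mathbf{b}$ whenever $g(v) = \mathbf{w}$ and $f(v) = \mathbf{w}$ whenever $g(v) = \mathbf{b}$; that is, $f$ is obtained by swapping the two colors globally. Then for every arc $(u,v) \in A(D)$ we have $f(u) = f(v)$ iff $g(u) = g(v)$, so $\ell$-consistency of $f$ follows immediately from that of $g$. Moreover, for every $v \in V(D')$ we have $g(v) = \mathbf{w}$, hence $f(v) = \mathbf{b}$, as required.

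There is essentially no obstacle here beyond observing that the $\ell$-consistency condition depends only on an equality test on $f$-values, which is invariant under a global swap of colors; once this is noted, the lemma follows by a one-step case analysis on the common value of $g$ on $V(D')$.
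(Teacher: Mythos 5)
Your proof is correct and follows essentially the same route as the paper: take any $\ell$-consistent function, note it is constant on $V(D')$ because the underlying undirected graph of $D'$ is connected and all its arcs are $0$-labeled, and globally swap colors if necessary. You spell out the connectivity argument a bit more explicitly than the paper does, but the idea is identical.
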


\begin{proof}
Suppose that $D$ admits an $\ell$-consistent function $\widehat{f}$. Then, define $\widehat{f}': V(D)\rightarrow \{{\bf b},{\bf w}\}$ as follows. For all $v\in V(D)$, it holds that $\widehat{f}'(v)={\bf b}$ if and only if $\widehat{f}(v)={\bf w}$. Note that $\widehat{f}'$ is also $\ell$-consistent for $D$.
By Definition \ref{def:ellConsistent}, for all $u,v\in V(D')$, we have that $\widehat{f}(u)=\widehat{f}(v)$ and $\widehat{f}'(u)=\widehat{f}'(v)$. Thus, if for all $v\in V(D')$, it holds that $\widehat{f}(v)={\bf b}$, then $\widehat{f}$ is a function $f$ as stated in the lemma, and otherwise $\widehat{f}'$ is such a function $f$.
\end{proof}

We proceed by showing that for (arc-labeled) strongly connected directed graphs, we can easily find a consistent function.

\begin{lemma}\label{lem:colorStronglyConnected}
Let $D$ be a strongly connected directed graph, and let $\ell: A(D)\rightarrow\{0,1\}$. If $D$ does not contain a directed odd cycle, then $D$ admits a function $f$ that is $\ell$-consistent.
\end{lemma}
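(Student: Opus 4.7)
The plan is to define $f$ via a suitable ``parity potential'' with respect to a fixed basepoint, and argue that the no-directed-odd-cycle hypothesis makes this potential well-defined. Fix any vertex $v_0 \in V(D)$. For every vertex $v$, strong connectivity guarantees a directed walk from $v_0$ to $v$; I will set $f(v) = \mathbf{b}$ when some (equivalently every) such walk $W$ has $\ell(W)$ even, and $f(v) = \mathbf{w}$ otherwise.

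The one real step is well-definedness. I first claim that every closed directed walk $W$ at any vertex has $\ell(W)$ even. This uses the standard fact that a closed directed walk decomposes into directed cycles: iteratively pick the first repeated vertex along $W$, peel off the directed cycle between its two occurrences, and recurse on the remaining (shorter) closed walk. Since $\ell$ is additive under edge-disjoint concatenation, $\ell(W)$ equals the sum of $\ell(C_i)$ over the cycles in the decomposition, and by hypothesis each $\ell(C_i)$ is even, hence so is $\ell(W)$. Given any two directed walks $W_1, W_2$ from $v_0$ to $v$, pick a directed walk $Q$ from $v$ back to $v_0$ (again by strong connectivity). Then $W_1 + Q$ and $W_2 + Q$ are closed directed walks at $v_0$, so
\[
\ell(W_1) + \ell(Q) \equiv 0 \equiv \ell(W_2) + \ell(Q) \pmod{2},
\]
yielding $\ell(W_1) \equiv \ell(W_2) \pmod{2}$. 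Thus $f$ is well-defined.

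It remains to check $\ell$-consistency. Fix any arc $(u,v) \in A(D)$, and let $W$ be any directed walk from $v_0$ to $u$. Then $W' = W + (u,v)$ is a directed walk from $v_0$ to $v$ with $\ell(W') = \ell(W) + \ell(u,v)$. By the definition of $f$, the parity of $\ell(W)$ is $0$ iff $f(u) = \mathbf{b}$ and the parity of $\ell(W')$ is $0$ iff $f(v) = \mathbf{b}$, so $\ell(u,v) = 0$ if and only if $f(u) = f(v)$, which is exactly Definition~\ref{def:ellConsistent}.

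The only place where the argument could go wrong is the cycle-decomposition step, since we are working with directed walks rather than undirected ones; however, this is a standard fact (the walk can always be shortened at the first repeated vertex, and repetition of a vertex in a directed walk yields a genuine directed closed subwalk, hence a directed cycle). Everything else is purely mechanical bookkeeping modulo~$2$.
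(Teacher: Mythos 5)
Your proof is correct, and it takes a different route from the paper's. The paper reduces to the unlabeled case: it subdivides every $0$-labeled arc once (so that a cycle $C$ in $D$ with $c_0$ zero-arcs and $c_1$ one-arcs becomes an ordinary directed cycle of length $2c_0+c_1$ in the subdivided digraph $\widehat D$, which is odd iff $\ell(E(C))$ is odd), then invokes Proposition~\ref{prop:underlyingBipartite} to conclude that the underlying undirected graph of $\widehat D$ is bipartite, and reads the coloring $f$ off the bipartition. You instead prove the statement directly via a parity potential: fix a basepoint $v_0$, define $f(v)$ by the parity of $\ell(W)$ along any $v_0$-$v$ walk $W$, justify well-definedness by showing every closed directed walk has even label (via the standard decomposition into directed cycles, all even by hypothesis), and then verify $\ell$-consistency arc by arc. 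The two arguments contain essentially the same combinatorial content---Proposition~\ref{prop:underlyingBipartite} is itself a potential argument in disguise---but yours is more self-contained (no subdivision gadget, no appeal to the folklore proposition), while the paper's is shorter given that Proposition~\ref{prop:underlyingBipartite} has already been stated and is reused elsewhere. One small point worth making explicit in your write-up: when you peel off ``the directed cycle between the two occurrences'' of the first repeated vertex, it is the fact that you picked the \emph{first} repetition that guarantees the peeled-off closed subwalk has no internal repetitions and hence is a genuine directed cycle; you gesture at this but it deserves a sentence.
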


\begin{proof}
Suppose that $D$ does not contain a directed odd cycle. Let $\widehat{D}$ be the directed graph obtained from $D$ by subdividing every $0$-labeled arc once. That is, the graph $\widetilde{D}$ is obtained from $D$ by replacing every arc $a=(u,v)\in A(D)$ that is labeled $0$ by a new vertex $w_a$ and the arcs $(u,w_a)$ and $(w_a,v)$. Let $\widetilde{G}$ be the underlying undirected graph of $\widehat{D}$. Note that $V(D)\subseteq V(\widetilde{D}) = V(\widetilde{G})$. By Proposition \ref{prop:underlyingBipartite}, $\widetilde{G}$ is a bipartite graph. Then, there exists a bipartition $(X,Y)$ of the vertex set of $\widetilde{G}$. Define a function $f: V(D)\rightarrow \{{\bf b},{\bf w}\}$ as follows. For all $v\in V(D)$, it holds that $f(v)={\bf b}$ if and only if $v\in X$.

We claim that $f$ is $\ell$-consistent for $D$. First, note that for every arc $(u,v)\in A(D)$ that is labeled $1$, the edge $\{u,v\}$ belongs to $E(\widetilde{G})$. Thus, since $(X,Y)$ is a bipartition of $\widetilde{G}$, it holds that either both $u\in X$ and $v\in Y$ or both $v\in X$ and $u\in Y$. In either case, we have that $f(u)\neq f(v)$. Now, let $a=(u,v)$ be some arc of $D$ that is labeled 0. Then, $\{u,w_a\},\{w_a,v\}\in E(\widetilde{G})$. Thus, since $(X,Y)$ is a bipartition of $\widetilde{G}$, it holds that either $u,v\in X$ or $u,v\in Y$. In either case, we have that $f(u)=f(v)$. This concludes the proof of the lemma.
\end{proof}

Finally, we are ready to present the last property of a double clock relevant to our work.

\begin{lemma}\label{lem:colorDoubleClock}
Let $\widetilde{C}$ be a double clock, and let $X$ be a set that cuts $\widetilde{C}$ precisely. Then, the following three conditions are satisfied.
\begin{enumerate}
\item\label{lem:colorDoubleClockItem1} There exists an $\ell$-consistent function $f_x$ for $\widetilde{C}[X,x]$ such that $f_x(x)={\bf b}$ and for every vertex $v$ of $\widetilde{C}[X,x]$ that does not belong to the face of the forward clock, it holds that $f_x(v)={\bf b}$.
\item\label{lem:colorDoubleClockItem2} The function that assigns ${\bf b}$ to every vertex of  $\widetilde{C}[X,y]$ is $\ell$-consistent for $\widetilde{C}[X,y]$.
\item\label{lem:colorDoubleClockItem3} There exists an $\ell$-consistent function $f_z$ for $\widetilde{C}[X,z]$ such that $f_z(z)={\bf b}$ and for every vertex $v$ of $\widetilde{C}[X,z]$ that does not belong to the face of the reverse clock, it holds that $f_z(v)={\bf b}$.
\end{enumerate}
\end{lemma}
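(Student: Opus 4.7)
The plan is to leverage Lemmas \ref{lem:colorStronglyConnected} and \ref{lem:swapEllConsistent} together with the explicit description of the three pieces supplied by Observation \ref{obs:vertexSetPieceDouble}. As a preliminary step, I would first note that since $X$ cuts $\widetilde{C}$ precisely, Lemma \ref{lem:doubleClockPrecise} guarantees $\widetilde{C}\setminus X$ contains no directed odd cycle, so in particular none of $\widetilde{C}[X,x]$, $\widetilde{C}[X,y]$, $\widetilde{C}[X,z]$ does. Moreover, by Definition \ref{def:piecesDouble} both $\widetilde{C}[X,x]$ and $\widetilde{C}[X,z]$ are strongly connected, since each is defined as the set of vertices that both reach and are reachable from a fixed vertex.

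For part (\ref{lem:colorDoubleClockItem1}), I would first invoke Lemma \ref{lem:colorStronglyConnected} to obtain some $\ell$-consistent function $\widehat{f}$ on $\widetilde{C}[X,x]$, and then exhibit a subgraph $D'$ of $\widetilde{C}[X,x]$ whose underlying undirected graph is connected, whose arcs are all $0$-labeled, and whose vertex set is $\{x\}$ together with all non-face vertices of $\widetilde{C}[X,x]$. Using the first condition of Observation \ref{obs:vertexSetPieceDouble}, the non-face vertices form the prefixes $\{p_{i'}, r_{i'}:i'<i\}\cup\{a_{i'}, b_{i'}:i'<n-i+1\}$ of the two forward-clock hands, so I would take $D'$ to consist of the arcs $(x,r_0)$, $(x,b_0)$, together with the hand arcs $(\pre(p_{i'}),p_{i'}),(p_{i'},\post(p_{i'}))$ for $i'<i$ and their blue counterparts for $i'<n-i+1$. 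All such arcs are $0$-labeled, since the only $1$-labeled arcs incident to $x$ or the hands of the forward clock are $(x,\widehat{r}_n)$, $(\widehat{r}_n,x)$, $(y,x)$, and the arcs $(b_i,\widehat{b}_i)$, none of which appear in $D'$. Applying Lemma \ref{lem:swapEllConsistent} to $D'$ and $\widehat{f}$ produces the desired $f_x$. Part (\ref{lem:colorDoubleClockItem3}) is entirely symmetric and is handled by the same argument applied to the reverse clock, with $D'$ built from the partial reverse-clock hands ending at $z$.

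Part (\ref{lem:colorDoubleClockItem2}) I would prove directly by showing that every arc of $\widetilde{C}[X,y]$ is $0$-labeled; then by Definition \ref{def:ellConsistent} the constant function $\mathbf{b}$ is vacuously $\ell$-consistent. I would go through every family of $1$-labeled arcs of $\widetilde{C}$ in turn and appeal to the second condition of Observation \ref{obs:vertexSetPieceDouble}. The $1$-labeled face arcs of either clock involve face vertices or $x,z$, none of which lie in $V(\widetilde{C}[X,y])$; the arcs $(y,x)$ and $(z,y)$ leave $V(\widetilde{C}[X,y])$; each arc of the form $(b_i,\widehat{b}_i)$ or $(\widehat{b}'_i,b'_i)$ involves a face vertex; and for the inter-clock gluing arcs $(r'_{i'},r_{i'})$ and $(b'_{i'},b_{i'})$, the index ranges imposed by Observation \ref{obs:vertexSetPieceDouble} force $r_{i'}\in V(\widetilde{C}[X,y])$ only when $i'\geq i$ while $r'_{i'}\in V(\widetilde{C}[X,y])$ only when $i'<i$ (and analogously for the blue pair), so no single value of $i'$ places both endpoints inside $V(\widetilde{C}[X,y])$.

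The main obstacle is precisely this last bookkeeping in part (\ref{lem:colorDoubleClockItem2}): one must carefully verify that the index constraints of "$X$ cuts $\widetilde{C}$ precisely"—namely $X=\{p_i,a_{n-i+1},p'_i,a'_{n-i+1},\widehat{t}_{i-1,n-i},\widehat{t}'_{i,n-i+1}\}$—really preclude every $1$-labeled arc of the hand attachments and of the inter-clock gluing from surviving in the induced subgraph on $V(\widetilde{C}[X,y])$. Once this case analysis is in place, parts (\ref{lem:colorDoubleClockItem1}) and (\ref{lem:colorDoubleClockItem3}) follow smoothly from the general tools already at hand.
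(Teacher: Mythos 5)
Your proposal is correct and follows essentially the same route as the paper: invoke Lemma \ref{lem:colorStronglyConnected} on the strongly connected pieces $\widetilde{C}[X,x]$ and $\widetilde{C}[X,z]$, then upgrade the resulting $\ell$-consistent function via Lemma \ref{lem:swapEllConsistent} applied to a connected, all-$0$-labeled subgraph spanning $x$ (resp.\ $z$) and all non-face vertices of the piece, and for $\widetilde{C}[X,y]$ observe directly that every arc is $0$-labeled so the constant function is $\ell$-consistent. The only difference is that you spell out the bookkeeping (the explicit choice of $D'$ and the case analysis of the $1$-labeled arcs versus the vertex sets in Observation \ref{obs:vertexSetPieceDouble}) that the paper compresses into ``by Observation \ref{obs:vertexSetPieceDouble} and the definition of a double clock''; your expanded version is accurate.
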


\begin{proof}
By the definitions of $\widetilde{C}[X,x]$ and $\widetilde{C}[X,z]$, we have that $\widetilde{C}[X,x]$ and $\widetilde{C}[X,z]$ are strongly connected directed graphs. Moreover, by Lemma \ref{lem:doubleClockPrecise}, $\widetilde{C}\setminus X$ does not contain a directed odd cycle, and therefore both $\widetilde{C}[X,x]$ and $\widetilde{C}[X,z]$ do not contain a directed odd cycle. Thus, by Lemma \ref{lem:colorStronglyConnected}, both $\widetilde{C}[X,x]$ and $\widetilde{C}[X,z]$ admit $\ell$-consistent functions. By Observation \ref{obs:vertexSetPieceDouble} and the definition of a double clock, we have that $\widetilde{C}[X,x]$ contains a subgraph whose vertex set consists of $x$ as well as every vertex of $\widetilde{C}[X,x]$ that does not belong to the face of the forward clock, whose underlying undirected graph is connected, and which consists only of $0$-labeled arcs. Symmetrically, we have that $\widetilde{C}[X,z]$ contains a subgraph whose vertex set consists of $z$ as well as every vertex of $\widetilde{C}[X,z]$ that does not belong to the face of the reverse clock, whose underlying undirected graph is connected, and which consists only of $0$-labeled arcs. Hence, by Lemma \ref{lem:swapEllConsistent}, we have that Conditions \ref{lem:colorDoubleClockItem1} and \ref{lem:colorDoubleClockItem3} are satisfied.

Finally, by Observation \ref{obs:vertexSetPieceDouble} and the definition of a double clock, we note that all of the arcs of $\widetilde{C}[X,y]$ are labeled $0$. Thus, it is clear that Condition \ref{lem:colorDoubleClockItem2} is satisfied as well.
\end{proof}

\subsection{The Synchronization Gadget}

Let $n,k\in\mathbb{N}$ such that $k\geq 100$, and let $I\subseteq [n]\times [n]$ be a set of pairs of indices. Here, we define an {\em $(n,k,I)$-synchronizer}. Since $n$ and $k$ would be clear from context, we simply write {\em $I$-synchronizer} rather than $(n,k,I)$-synchronizer. When $I$ is also clear from context (or immaterial), we omit it as well.

\begin{figure}[t!]\centering
\fbox{\includegraphics[scale=0.7]{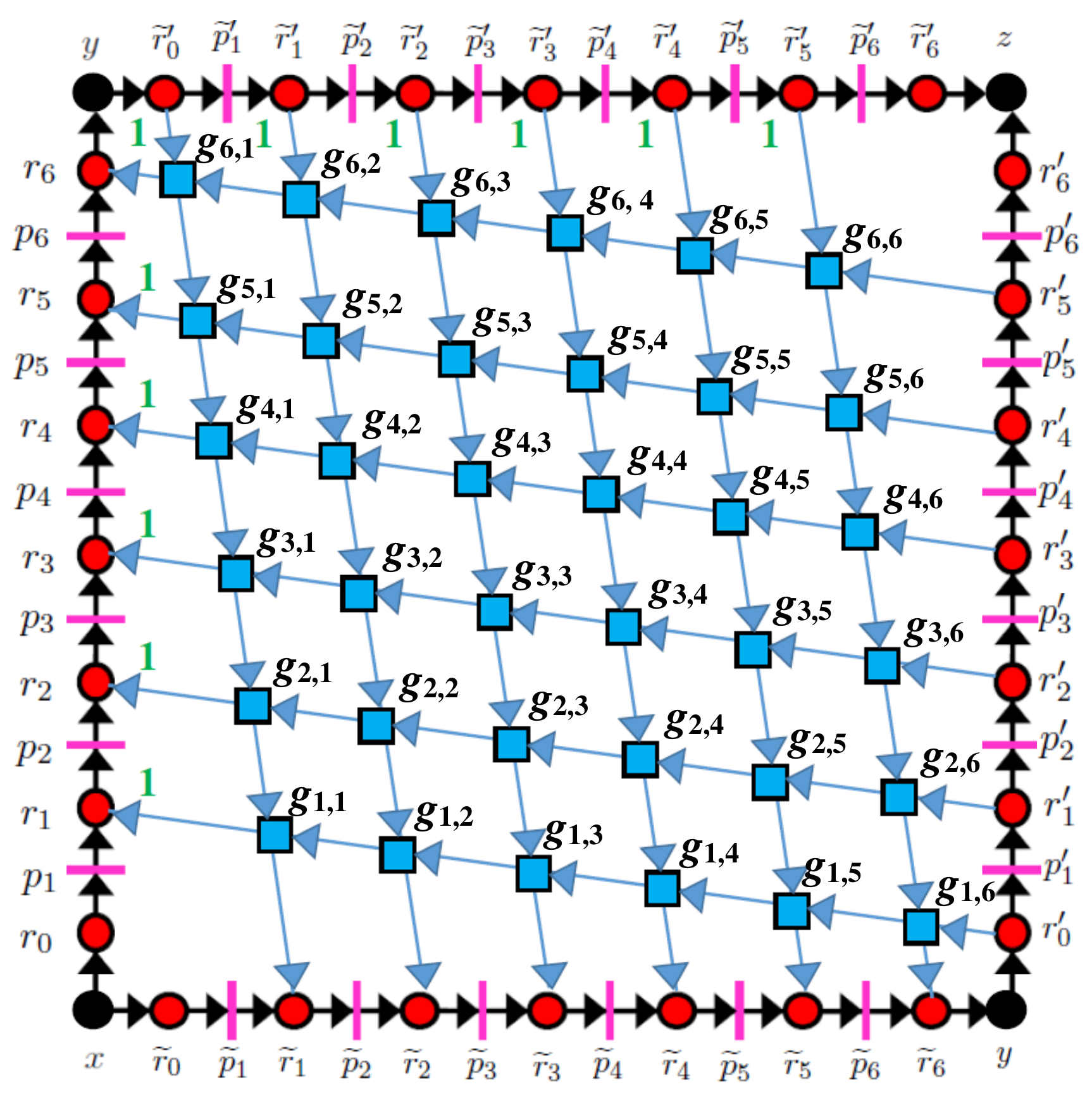}}
\caption{A synchronizer where $n=6$. The arcs labeled 1 are marked by a green `1'. The weight of vertices marked by circles is $2k+1$, the weight of vertices marked by lines is 10, and the weight of each vertex marked by a square is either $2k+1$ or $1$.}\label{fig:cropped_Sync1}
\end{figure}

\myparagraph{Structure.} The {\em hands} of a synchronizer $S$ are four red directed paths, $H$, $H'$, $\widetilde{H}$ and $\widetilde{H}'$. The vertex set of $H$ is the union of two pairwise disjoint sets, $R=\{r_i: i\in[n]_0\}$ (red) and $P=\{p_i: i\in[n]\}$ (pink). For all $i\in[n]$, we denote $\pre(p_i)=r_{i-1}$ and $\post(p_i)=r_i$. The arc set of $H$ is $\{(\pre(p_i),p_i): i\in[n]\}\cup\{(p_i,\post(p_i)): i\in [n]\}$ (see Fig.~\ref{fig:cropped_Sync1}). The path $\widetilde{H}$ is defined as the path $H$ where use tilde notation to specify vertices. Similarly, $H'$ and $\widetilde{H}'$ are defined as the path $H$ and $\widetilde{H}$, respectively, where we further use prime notation to specify vertices. The weight of each vertex on these paths is 10, and the label of each arc on these paths is 0.
Now, to obtain the {\em frame} of $S$, we add three vertices, $x$, $y$ and $z$, each of weight $2k+1$. Moreover, we add the arcs $(x,r_0)$, $(x,\widetilde{r}_0)$, $(r_n,y)$, $(\widetilde{r}_n,y)$, $(y,r'_0)$, $(y,\widetilde{r}'_0)$, $(r'_n,z)$ and $(\widetilde{r}'_n,z)$. The label of each of these arcs is 0. For the sake of clarity of illustrations, the vertex $y$ is drawn twice (see Fig.~\ref{fig:cropped_Sync1}).

Next, we define the {\em interior} of $S$ (see Fig.~\ref{fig:cropped_Sync1}). Roughly speaking, this part is a grid where each vertex has either a very high weight or a very low weight, depending on whether or not the pair of indices that the vertex represents belongs to $I$. Formally, the interior of $S$ is the graph $G$ on the vertex set $\{g_{i,j}: i,j\in [n]\}$ and the arc set $\{(g_{i+1,j},g_{i,j}): i\in[n-1], j\in[n]\}\cup\{(g_{i,j+1},g_{i,j}): i\in[n], j\in[n-1]\} $. The label of each of the arcs is 0. Moreover, for all $i,j\in[n]$, the weight of $g_{i,j}$ is $1$ if $(i,j)\in I$ and $2k+1$ otherwise.

Finally, we attach the frame of $S$ to the interior of $S$ (see Fig.~\ref{fig:cropped_Sync1}). To this end, for all $i\in[n]$, we add two arcs labeled 1: $(g_{i,1},\post(p_i))$ and $(\pre(\widetilde{p}_i'),g_{n,i})$. Moreover, for all $i\in[n]$, we add two arcs labeled 0: $(g_{1,i},\post(\widetilde{p}_i))$ and $(\pre(p_i'),g_{i,n})$. When the synchronizer $S$ is not clear from context, we add the notation $(S)$ to an element (vertex set or vertex) of the synchronizer.

\smallskip
\myparagraph{Properties.} By the definition of a synchronizer, we first directly identify which directed odd cycles are present in such a gadget.

\begin{observation}\label{obs:syncCTypes}
Let $S$ be a synchronizer. The set of directed odd cycles of $S$ is the union of the following sets.
\begin{itemize}
\item {\bf Horizontal Match:} For all $i\in[n]$, this set contains the direct odd cycle consisting of the directed path from $y$ to $\pre(p'_i)$ on $H'$, the (unique) directed path from $\pre(p'_i)$ to $\post(p_i)$ on the interior, and the directed path from $\post(p_i)$ to $y$ on $H$.

\item {\bf Horizontal Mismatch:} For all $i,j\in[n]$ such that $j<i$, this set contains every direct odd cycle consisting of the directed path from $y$ to $\pre(p'_i)$ on $H'$, some directed path from $\pre(p'_i)$ to $\post(p_j)$ on the interior, and the directed path from $\post(p_j)$ to $y$ on $H$.

\item {\bf Vertical Match:} For all $i\in[n]$, this set contains the direct odd cycle consisting of the directed path from $y$ to $\pre(\widetilde{p}'_i)$ on $\widetilde{H}'$, the (unique) directed path from $\pre(\widetilde{p}'_i)$ to $\post(\widetilde{p}_i)$ on the interior, and the directed path from $\post(\widetilde{p}_i)$ to $y$ on $\widetilde{H}$.

\item {\bf Vertical Mismatch:} For all $i,j\in[n]$ such that $j<i$, this set contains every direct odd cycle consisting of the directed path from $y$ to $\pre(\widetilde{p}'_i)$ on $\widetilde{H}'$, some directed path from $\pre(\widetilde{p}'_i)$ to $\post(\widetilde{p}_j)$ on the interior, and the directed path from $\post(\widetilde{p}_j)$ to $y$ on $\widetilde{H}$.
\end{itemize}
\end{observation}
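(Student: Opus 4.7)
My plan is to chase any directed cycle of $S$ around from $y$, classify it by which hand it uses to leave $y$ and which hand it uses to return, and then read off the parity from the few $1$-labeled arcs. I would first observe that $x$ has no incoming arcs and $z$ has no outgoing arcs, so no directed cycle of $S$ contains $x$ or $z$. The interior $G$ is a DAG, since every arc strictly decreases one of its two coordinates, and the hands $H,\widetilde{H},H',\widetilde{H}'$ are each themselves directed paths. Consequently every directed cycle must contain $y$, because after deleting $y$ the remaining graph decomposes into the four hand-paths, the interior DAG, and the two pendants $x$ and $z$, none of which by itself contains a cycle and between which the only connections that could close a loop all pass through $y$.

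Next I would trace the cycle. From $y$ the only outgoing arcs go to $r'_0$ or $\widetilde{r}'_0$, so the cycle enters either $H'$ or $\widetilde{H}'$. Along $H'$ the only way out (other than dead-ending at $z$ via $r'_n$) is through an arc $(\pre(p'_i),g_{i,n})$; symmetrically, along $\widetilde{H}'$ the only exit is $(\pre(\widetilde{p}'_i),g_{n,i})$. Inside $G$ coordinates only decrease, so from $g_{i,n}$ the reachable exit vertices are $g_{j,1}$ with $j\le i$ and $g_{1,j}$ for any $j\in[n]$; from $g_{n,i}$ the reachable exits are $g_{j,1}$ for any $j\in[n]$ and $g_{1,j}$ with $j\le i$. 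From $g_{j,1}$ the only way to continue is the arc to $\post(p_j)$ into $H$, and from $g_{1,j}$ the only way is to $\post(\widetilde{p}_j)$ into $\widetilde{H}$; the remainder of the cycle is then the unique tail of that hand back to $y$. In particular the cycle makes exactly one ``lap'' and is fully determined by the entering hand, the exit index $i$, the re-entry index $j$, and (in the mismatch case) a choice of monotone path in $G$.

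Now I would compute parities. All arcs on the hands, all arcs inside $G$, all arcs of the frame around $x,y,z$, and the bridge arcs $(g_{1,i},\post(\widetilde{p}_i))$ and $(\pre(p'_i),g_{i,n})$ are $0$-labeled; the only $1$-labeled arcs are $(g_{i,1},\post(p_i))$ and $(\pre(\widetilde{p}'_i),g_{n,i})$. Hence the four cases $H'\!\to\! H$, $H'\!\to\!\widetilde{H}$, $\widetilde{H}'\!\to\! H$, $\widetilde{H}'\!\to\!\widetilde{H}$ contribute $1,0,2,1$ one-labels respectively, so only $H'\!\to\! H$ and $\widetilde{H}'\!\to\!\widetilde{H}$ yield odd cycles, matching the Horizontal and Vertical families. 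The Match/Mismatch split then drops out: when $j=i$ the monotone path in $G$ from $g_{i,n}$ to $g_{i,1}$ (respectively from $g_{n,i}$ to $g_{1,i}$) is forced to keep the other coordinate fixed and is thus unique, while for $j<i$ each monotone grid path gives a distinct cycle. Conversely, each walk listed in the four types of the statement is a simple directed closed walk (the three segments lie in pairwise disjoint pieces of $S$) and was just shown to have odd $1$-label count, so it is a directed odd cycle.

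The only real obstacle is making the case analysis watertight, in particular verifying that a directed cycle cannot sneak back into the interior after having left it, or use a hand twice. This is pinned down by the two monotonicity facts above: every descent into $G$ from a hand lands in a strictly smaller rectangle, and every ascent out of $G$ into $H$ or $\widetilde{H}$ is forced to end at $y$, so no cycle can make more than one lap and my enumeration is exhaustive.
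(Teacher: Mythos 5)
Your proof is correct and is precisely the reasoning the paper intends when it presents this as an unproved observation ("by the definition of a synchronizer, we first directly identify..."). You correctly observe that $x$ and $z$ are respectively a source and a sink and hence lie on no cycle, that the interior and all four hands are acyclic, and that every arc from a hand into the interior and from the interior into a hand is "one-way," so any directed cycle must pass through $y$ exactly once and is classified by which of $\{H',\widetilde H'\}$ it uses to leave $y$ and which of $\{H,\widetilde H\}$ it uses to return; the parity count of the only $1$-labeled arcs $(g_{i,1},\post(p_i))$ and $(\pre(\widetilde p'_i),g_{n,i})$ then rules out the two mixed cases, and uniqueness vs.\ multiplicity of the monotone interior path gives the Match/Mismatch split.
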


We proceed to derive properties of ``cuts'' of a synchronizer. To this end, we again first need to define the kind of sets using which we would like to ``cut'' synchronizers.

\begin{definition}\label{def:syncCutsPrecise}
Let $S$ be an $I$-synchronizer. We say that a set $X\subseteq V(S)$ {\em cuts $S$ precisely} if there exist $i,j\in[n]$ such that $X=\{p_i,p'_i,\widetilde{p}_j,\widetilde{p}_j',g_{i,j}\}$ and $(i,j)\in I$.
\end{definition}

\begin{definition}\label{def:syncCutsRough}
Let $S$ be an $I$-synchronizer. We say that a set $X\subseteq V(S)$ {\em cuts $S$ roughly} if $X$ does not cut $S$ precisely and there exist $i,j\in[n]$ such that $\{p_i,p'_i,\widetilde{p}_j,\widetilde{p}_j'\}\subseteq X$.
\end{definition}

We are now ready to present desired properties of ``cuts'' of a synchronizer. Unlike the cases of clocks, here we only analyze cuts of the forms presented in Definitions \ref{def:syncCutsPrecise} and \ref{def:syncCutsRough}. The first property follows directly from Definition \ref{def:syncCutsPrecise}.

\begin{observation}\label{obs:syncWeight}
The weight of a set that cuts that a synchronizer precisely is exactly $41$. In particular, the weight of the intersection of this set with the interior is exactly 1.
\end{observation}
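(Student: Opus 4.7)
The plan is to verify this claim by directly unpacking the definitions, since the statement is purely about the weights assigned to five specific vertices. First I would recall from Definition \ref{def:syncCutsPrecise} that a set $X$ that cuts $S$ precisely has the form $X = \{p_i, p'_i, \widetilde{p}_j, \widetilde{p}'_j, g_{i,j}\}$ for some $i,j \in [n]$ with $(i,j) \in I$.

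The four vertices $p_i, p'_i, \widetilde{p}_j, \widetilde{p}'_j$ are pink vertices lying on the hands $H, H', \widetilde{H}, \widetilde{H}'$ respectively. By the construction of the synchronizer, every vertex on any of the four hands is assigned weight $10$, so these four vertices contribute exactly $4 \cdot 10 = 40$ to $w(X)$. For the interior vertex $g_{i,j}$, the construction specifies that $w(g_{i,j}) = 1$ precisely when $(i,j) \in I$ (and $2k+1$ otherwise); since Definition \ref{def:syncCutsPrecise} guarantees $(i,j) \in I$, we get $w(g_{i,j}) = 1$. This immediately yields the second statement of the observation, namely that the weight of the intersection $X \cap V(G)$ with the interior is exactly $1$, and summing gives $w(X) = 40 + 1 = 41$. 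There is no obstacle here — the entire argument is a one-line computation from the weight assignments fixed in the construction.
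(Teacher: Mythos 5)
Your proposal is correct and matches exactly what the paper intends: the observation is stated without proof because, as you show, it is an immediate consequence of Definition~\ref{def:syncCutsPrecise} together with the weight assignments in the synchronizer's construction (the four hand vertices have weight $10$ and $g_{i,j}$ has weight $1$ since $(i,j)\in I$). Nothing is missing.
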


Let us now argue that all directed odd cycles are intersected.

\begin{lemma}\label{lem:syncNoOdd}
Let $S$ be a synchronizer, and let $X$ be a set that cuts $S$ precisely. Then, $S\setminus X$ does not contain a  directed odd cycle.
\end{lemma}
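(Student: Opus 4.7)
The plan is to apply Observation~\ref{obs:syncCTypes}, which exhaustively catalogues the directed odd cycles of $S$ into four families (Horizontal Match, Horizontal Mismatch, Vertical Match, Vertical Mismatch), and to verify that $X$ meets every cycle of each family. By Definition~\ref{def:syncCutsPrecise} we may write $X=\{p_i,p'_i,\widetilde{p}_j,\widetilde{p}'_j,g_{i,j}\}$ with $(i,j)\in I$. The key structural remark is that the horizontal families are supported on $H\cup H'$ together with the \emph{rows} of the interior, while the vertical families are supported on $\widetilde{H}\cup\widetilde{H}'$ together with the \emph{columns}. Thus the three ``horizontal'' elements $p_i,p'_i,g_{i,j}$ of $X$ are the ones available to intersect horizontal cycles, and the three ``vertical'' elements $\widetilde{p}_j,\widetilde{p}'_j,g_{i,j}$ are the ones available to intersect vertical cycles, with $g_{i,j}$ playing a dual role.

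For a Horizontal Match cycle indexed by $i'$, the first step is to record which $p$-vertices appear on it: the directed path from $y$ to $\pre(p'_{i'})$ along $H'$ uses precisely $p'_1,\ldots,p'_{i'-1}$, the directed path from $\post(p_{i'})$ back to $y$ along $H$ uses precisely $p_{i'+1},\ldots,p_n$, and the interior piece is exactly the $i'$-th row $\{g_{i',1},\ldots,g_{i',n}\}$. A trichotomy on $i'$ versus $i$ then finishes the Match case: if $i'=i$ then $g_{i,j}\in X$ lies on the row; if $i'>i$ then $p'_i$ lies on the $H'$-portion; if $i'<i$ then $p_i$ lies on the $H$-portion. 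For a Horizontal Mismatch cycle with parameters $(i',j')$, $j'<i'$, the same enumeration shows that the $p$-vertices on the cycle are exactly $\{p'_1,\ldots,p'_{i'-1}\}\cup\{p_{j'+1},\ldots,p_n\}$, and a one-line inequality argument using $j'<i'$ forces either $i\le i'-1$ or $i\ge j'+1$, so that either $p'_i$ or $p_i$ lies on the cycle.

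The Vertical Match and Vertical Mismatch families are handled by the entirely parallel argument obtained by swapping the roles of rows and columns of the interior and the roles of $(i,p_i,p'_i)$ and $(j,\widetilde{p}_j,\widetilde{p}'_j)$. In particular the Vertical Match cycle for index $j'$ uses precisely the $j'$-th column of the interior together with $\widetilde{p}'_1,\ldots,\widetilde{p}'_{j'-1}$ on $\widetilde{H}'$ and $\widetilde{p}_{j'+1},\ldots,\widetilde{p}_n$ on $\widetilde{H}$, so the case $j'=j$ is killed by $g_{i,j}$ and the remaining cases by $\widetilde{p}'_j$ or $\widetilde{p}_j$; for Vertical Mismatch the same inequality trick on $j'$ versus $j$ finishes the proof.

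I do not expect a genuine obstacle: the statement is a finite case check made routine by the clean combinatorial form of $X$. The only thing that requires care is the bookkeeping in identifying which vertices of the hands appear on each cycle and confirming that the single interior vertex $g_{i,j}$ is enough to deal with both ``matched'' sub-cases ($i'=i$ horizontally and $j'=j$ vertically), while the hand vertices of $X$ handle everything off the matched diagonals.
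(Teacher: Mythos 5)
Your proposal is correct and follows essentially the same route as the paper: both enumerate the four cycle families of Observation~\ref{obs:syncCTypes} and verify directly that $X=\{p_i,p'_i,\widetilde{p}_j,\widetilde{p}'_j,g_{i,j}\}$ intersects each one, with $g_{i,j}$ killing the matched cycles and the hand-vertices killing the rest. You merely spell out the bookkeeping (which $p$- and $\widetilde{p}$-vertices lie on each cycle, and the $j'<i'$ inequality for the mismatch families) that the paper's proof leaves implicit.
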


\begin{proof}
Since $X$ cuts $S$ precisely, there exist $i,j\in[n]$ such that $X=\{p_i,p'_i,\widetilde{p}_j,\widetilde{p}_j',g_{i,j}\}$. Since $g_{i,j}\in X$, it holds that $X$ intersects the directed odd cycle of Type ``Horizontal Match'' that consists of the directed path from $y$ to $\pre(p'_i)$ on $H'$, the directed path from $\pre(p'_i)$ to $\post(p_i)$ on the interior, and the directed path from $\post(p_i)$ to $y$ on $H$ (see Observation~\ref{obs:syncCTypes}). Moreover, since $p_i,p'_i\in X$, it holds that $X$ intersects all of the remaining directed odd cycles of Types ``Horizontal Match'' and ``Horizontal Mismatch''.
Symmetrically, since $g_{i,j}\in X$, it holds that $X$ intersects the directed odd cycle of Type ``Vertical Match'' that consists of the directed path from $y$ to $\pre(\widetilde{p}'_j)$ on $\widetilde{H}'$, the directed path from $\pre(\widetilde{p}'_j)$ to $\post(\widetilde{p}_j)$ on the interior, and the directed path from $\post(\widetilde{p}_i)$ to $y$ on $\widetilde{H}$. Finally, since $\widetilde{p}_i,\widetilde{p}'_i\in X$, it holds that $X$ intersects all of the remaining directed odd cycles of Types ``Vertical Match'' and ``Vertical Mismatch''. This concludes the proof of the lemma.
\end{proof}

Next, we analyze the weight of rough cuts.

\begin{lemma}\label{lem:syncRough}
Let $S$ be a synchronizer, and let $X$ be a set that cuts $S$ roughly. Then, $w(X)\geq 42$.
\end{lemma}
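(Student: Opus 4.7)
The plan is to leverage two specific directed odd cycles from Observation~\ref{obs:syncCTypes} that $X$ must still intersect after accounting for the four pink vertices it is required to contain. Let $i,j\in[n]$ witness the rough cut, so $P_4 := \{p_i, p'_i, \widetilde{p}_j, \widetilde{p}'_j\} \subseteq X$, contributing weight exactly $40$. Let $C_h$ be the horizontal match cycle for index $i$ and $C_v$ be the vertical match cycle for index $j$. Direct inspection of the descriptions in Observation~\ref{obs:syncCTypes} shows that $C_h$ traverses $H'$ only from $y$ up to $\pre(p'_i)$ and $H$ only from $\post(p_i)$ back to $y$, using row $i$ of the interior in between; so $P_4 \cap V(C_h) = \emptyset$, and symmetrically $P_4 \cap V(C_v) = \emptyset$. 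Since $X$ is a directed odd cycle transversal, it must therefore contain at least one vertex of $V(C_h)\setminus P_4$ and at least one vertex of $V(C_v)\setminus P_4$.

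Next I would characterise $V(C_h) \cap V(C_v)$. Because the four hands $H, H', \widetilde{H}, \widetilde{H}'$ are pairwise vertex-disjoint, and $C_h$ (resp.\ $C_v$) uses only row $i$ (resp.\ column $j$) of the interior, the only vertices common to both cycles are $y$ (weight $2k+1$) and $g_{i,j}$ (weight $1$ or $2k+1$ depending on whether $(i,j)\in I$). Any other vertex of $C_h\cup C_v$ lies either on a hand and has weight $10$, or is a row-$i$/column-$j$ interior vertex of weight $1$ or $2k+1$.

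The proof closes by a short case analysis on how $X$ meets these two cycles outside $P_4$. If $X$ contains a hand vertex of $C_h\cup C_v$ or the vertex $y$, then $X$ picks up an extra weight of at least $10$, yielding $w(X)\geq 50 \geq 42$. Otherwise the extra witnesses are both interior. If they are distinct, each contributes weight at least $1$, so $w(X)\geq 40+1+1 = 42$. If they coincide, the shared vertex must lie in $V(C_h)\cap V(C_v)$ and be interior, hence equal to $g_{i,j}$. In the subcase $(i,j)\notin I$ this vertex has weight $2k+1$ and we conclude $w(X)\geq 40+2k+1\geq 42$. In the subcase $(i,j)\in I$, the set $P_4\cup\{g_{i,j}\}$ is exactly a precise cut by Definition~\ref{def:syncCutsPrecise}; since $X$ is rough, i.e.\ not precise, $X$ must strictly contain $P_4\cup\{g_{i,j}\}$, and any additional vertex of $S$ has weight at least $1$, giving $w(X)\geq 40+1+1=42$.

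The main obstacle is the last subcase: one needs to notice that the only ``cheap'' vertex covering both match cycles simultaneously is $g_{i,j}$, so the sole way to stay at weight $41$ is the precise-cut configuration, which the rough-cut hypothesis explicitly excludes. This structural observation is what converts the precise-versus-rough distinction into the desired extra unit of weight over the $40$ contributed by $P_4$; everything else is routine cycle inspection and weight accounting.
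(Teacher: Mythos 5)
Your proof is correct and follows the same approach as the paper's: you isolate the two match cycles $C_h$, $C_v$ for indices $i$ and $j$, observe that they are disjoint from $P_4$ and that their only shared non-frame vertex is $g_{i,j}$, and then argue that hitting both forces weight beyond $41$. Your case analysis is in fact slightly tighter than the paper's: the paper jumps from ``$X$ cannot be only $g_{i,j}$ in addition to $Y$'' directly to ``$X$ contains at least two interior vertices,'' which silently skips the sub-case $(i,j)\notin I$, where $X = P_4\cup\{g_{i,j}\}$ is still a rough cut containing only one interior vertex (but one of weight $2k+1$, so the bound holds anyway); you handle this explicitly. One remark: both your proof and the paper's tacitly assume that $X$ is a directed odd cycle transversal of $S$ (otherwise $X=P_4$ itself would be a rough cut of weight exactly $40$, contradicting the claimed bound); this hypothesis is missing from the lemma statement but is clearly intended, matching the way the lemma is invoked in Lemma~\ref{lem:direction2}.
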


\begin{proof}
If $X$ contains at least five vertices of the frame, then $w(X)\geq 50$ since the weight of each vertex of the frame is at least 10. Thus, we next assume that $X$ contains exactly four vertices of the frame. Since $X$ cuts $S$ roughly, the set of these vertices can be denoted by $Y=\{p_i,p'_i,\widetilde{p}_j,\widetilde{p}_j'\}$ for some $i,j\in[n]$. Let $C$ be the directed odd cycle of Type ``Horizontal Match'' consisting of the directed path from $y$ to $\pre(p'_i)$ on $H'$, the directed path from $\pre(p'_i)$ to $\post(p_i)$ on the interior, and the directed path from $\post(p_i)$ to $y$ on $H$ (see Observation \ref{obs:syncCTypes}). Moreover, let $C'$ be the directed odd cycle of Type ``Vertical Match'' consisting of the directed path from $y$ to $\pre(\widetilde{p}'_j)$ on $\widetilde{H}'$, the (unique) directed path from $\pre(\widetilde{p}'_j)$ to $\post(\widetilde{p}_j)$ on the interior, and the directed path from $\post(\widetilde{p}_j)$ to $y$ on $\widetilde{H}$. Note that besides vertices of the frame, the only vertex that both $C$ and $C'$ have in common is $g_{i,j}$. Since $X$ must intersect both of these cycles but it cannot contain only $g_{i,j}$ in addition to $Y$ (since it does not cut $S$ precisely), we deduce that $X$ must contain at least two vertices of the interior. We thus conclude that $w(X)\geq 42$.
\end{proof}

As in the case of the double clock, we need to strengthen Lemma \ref{lem:syncNoOdd}. For this purpose, we introduce the following definition, which breaks a synchronizer into five ``pieces'' (see Fig.~\ref{fig:cropped_Sync2}).

\begin{definition}\label{def:syncPieces}
Let $S$ be a synchronizer, and let $X$ be a set that cuts $S$ precisely. Then, $S[X,y]$ denotes the subgraph of $S\setminus X$ induced by the set of vertices that both can reach $y$ and are reachable from $y$. Moreover, $S[X,x]$ denotes the subgraph of $S\setminus X$ induced by the set of vertices reachable from $x$, and $S[X,z]$ denotes the subgraph of $S$ induced by the set of vertices that can reach $z$. Finally, $S[X,l_x]$ denotes the subgraph of $S\setminus X$ induced by the set of vertices outside $S[X,x]$ that can reach a vertex of $S[X,x]$ without using any vertex of $S[X,y]$, and $S[X,l_z]$ denotes the subgraph of $S\setminus X$ induced by the set of vertices outside $S[X,z]$ that are reachable from a vertex of $S[X,z]$ without using any vertex of $S[X,y]$.
\end{definition}

\begin{figure}[t!]\centering
\fbox{\includegraphics[scale=0.7]{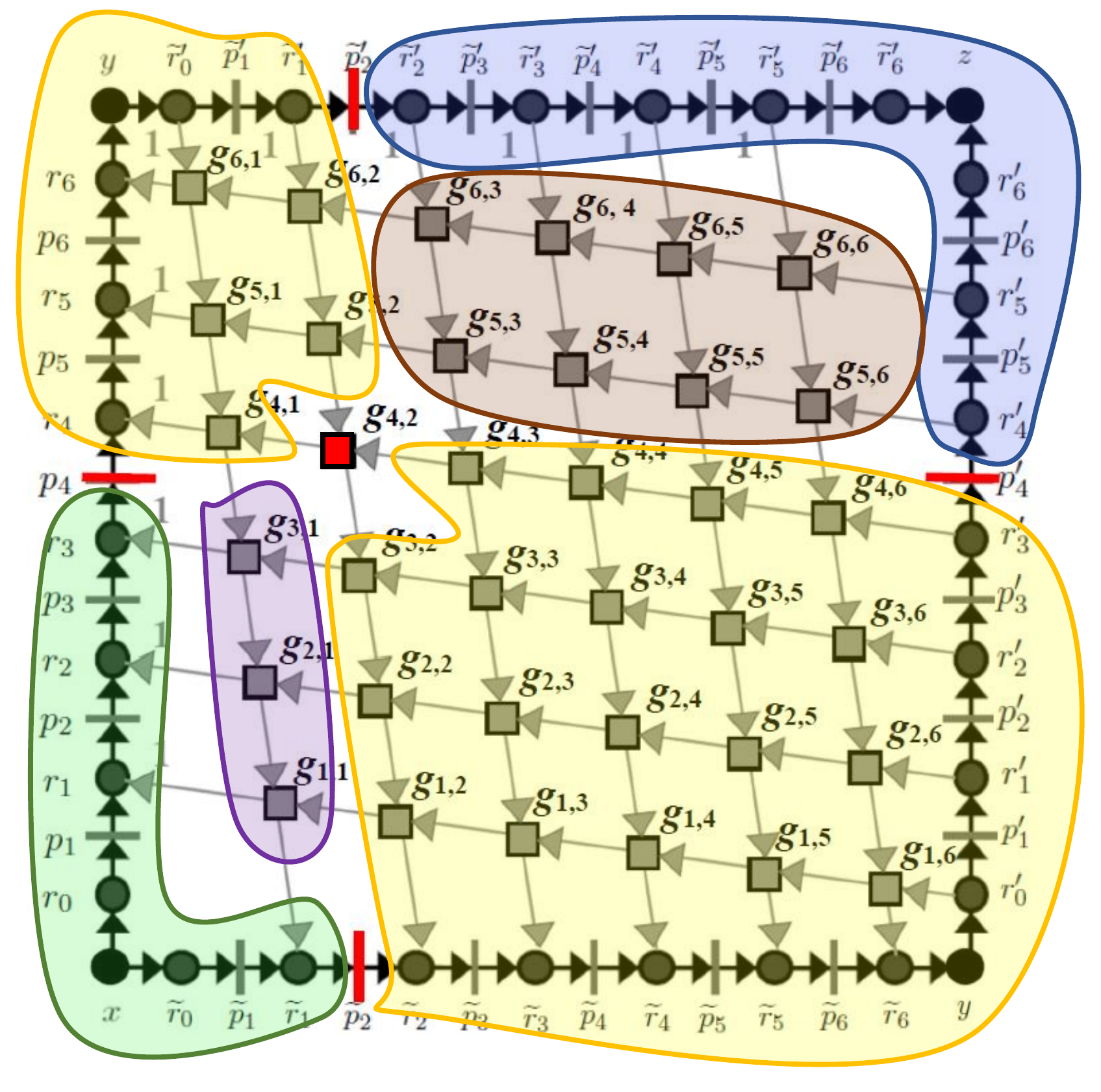}}
\caption{The components $S[X,x]$ (green), $S[X,y]$ (yellow), $S[X,z]$ (blue), $S[X,l_x]$ (purple) and $S[X,l_z]$ (brown), where $X$ is the set of red vertices (see Definition \ref{def:syncPieces}).}\label{fig:cropped_Sync2}
\end{figure}

Notice that for a synchronizer $S$, the only two directed paths from $x$ to $y$ are those internally consisting of $H$ and $\widetilde{H}$, the only two directed paths from $y$ to $z$ are those internally consisting of $H'$ and $\widetilde{H}'$, and all of the directed paths from $x$ to $z$ contain the vertex $y$.
Moreover, the vertex $y$ and every vertex $g_{i,j}$ of the interior are both contained in the two following (even) directed cycles, among other directed cycles, whose only common vertices are $y$ and $g_{i,j}$: (i) the directed cycle consisting of the path from $y$ to $\pre(p'_i)$ on $H'$, the (unique) directed path from $\pre(p'_i)$ to $g_{i,j}$ on the interior, the (unique) directed path from $g_{i,j}$ to $\post(\widetilde{p}_j)$ on the interior, and the directed path from $\post(\widetilde{p}_j)$ to $y$ on $\widetilde{H}$; (ii) the directed cycle consisting of the path from $y$ to $\pre(\widetilde{p}'_j)$ on $\widetilde{H}'$, the (unique) directed path from $\pre(\widetilde{p}'_j)$ to $g_{i,j}$ on the interior, the (unique) directed path from $g_{i,j}$ to $\post(p_i)$ on the interior, and the directed path from $\post(p_i)$ to $y$ on $H$. 
In light of these observations, we identify each piece of Definition \ref{def:syncPieces} as follows (see Fig.~\ref{fig:cropped_Sync2}).

\begin{observation}\label{obs:syncVertexSetPieces}
Let $S$ be a synchronizer, and let $X=\{p_i,p'_i,\widetilde{p}_j,\widetilde{p}_j',g_{i,j}\}$ be a set that cuts $S$ precisely. Then, the following five conditions are satisfied.
\begin{enumerate}
\item $V(S[X,x])=\{x\}\cup\{p_{i'}\in V(H): i'<i\}\cup\{r_{i'}\in V(H): i'<i\}\cup\{\widetilde{p}_{j'}\in V(\widetilde{H}): j'<j\}\cup\{\widetilde{r}_{j'}\in V(\widetilde{H}): j'<j\}$.
\item $V(S[X,y])=\{y\}\cup\{p_{i'}\in V(H): i'>i\}\cup\{r_{i'}\in V(H): i'\geq i\}\cup\{\widetilde{p}_{j'}\in V(\widetilde{H}): j'>j\}\cup\{\widetilde{r}_{j'}\in V(\widetilde{H}): j'\geq j\}\cup\{p'_{i'}\in V(H'): i'<i\}\cup\{r'_{i'}\in V(H'): i'<i\}\cup\{\widetilde{p}'_{j'}\in V(\widetilde{H}'): j'<j\}\cup\{\widetilde{r}'_{j'}\in V(\widetilde{H}'): j'<j\}\cup (\{g_{i',j'}\in V(G): i'\leq i, j'\geq j\}\cup\{g_{i',j'}\in V(G): i'\geq i, j'\leq j\})\setminus\{g_{i,j}\}$.
\item $V(S[X,z])=\{z\}\cup\{p'_{i'}\in V(H'): i'>i\}\cup\{r'_{i'}\in V(H'): i'\geq i\}\cup\{\widetilde{p}'_{j'}\in V(\widetilde{H}'): j'>j\}\cup\{\widetilde{r}'_{j'}\in V(\widetilde{H}'): j'\geq j\}$.
\item $V(S[X,l_x])=\{g_{i',j'}\in V(G): i'<i, j'<j\}$.
\item $V(S[X,l_z])=\{g_{i',j'}\in V(G): i'>i, j'>j\}$.
\end{enumerate}
\end{observation}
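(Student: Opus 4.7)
The plan is to prove each of the five vertex-set equalities by a direct reachability analysis in $S \setminus X$, where $X=\{p_i,p'_i,\widetilde{p}_j,\widetilde{p}'_j,g_{i,j}\}$. First I would record the skeleton facts: removing $p_i$ severs $H$ into an initial segment $x \to r_0 \to \cdots \to r_{i-1}$ and a tail segment $r_i \to \cdots \to r_n \to y$, and analogously for $\widetilde{H}$, $H'$, $\widetilde{H}'$; every arc inside the interior $G$ strictly decreases both coordinates, so in $G$ the vertex $g_{a,b}$ reaches $g_{a',b'}$ iff $a' \leq a$ and $b' \leq b$; consequently, removing $g_{i,j}$ splits $V(G)\setminus\{g_{i,j}\}$ into the bottom-left quadrant $\{g_{a,b}:a<i,b<j\}$, the top-right quadrant $\{g_{a,b}:a>i,b>j\}$, and the ``cross'' $(\{g_{a,j}\}_{a\in[n]}\cup\{g_{i,b}\}_{b\in[n]})\setminus\{g_{i,j}\}$, with no $G$-path from one quadrant to the other once $g_{i,j}$ is deleted. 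Finally, the only arcs connecting the interior and the frame are $g_{a,1}\to r_a$ and $g_{1,b}\to \widetilde{r}_b$ (interior into $H,\widetilde{H}$) and $r'_{a-1}\to g_{a,n}$ and $\widetilde{r}'_{b-1}\to g_{n,b}$ (from $H',\widetilde{H}'$ into the interior).

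From these facts, items 1 and 3 are nearly immediate: from $x$, since neither $H$ nor $\widetilde{H}$ has out-arcs into the interior or into $H',\widetilde{H}'$, reachability dies at the cuts $p_i$ and $\widetilde{p}_j$, giving exactly the set in item 1; dually, only the post-cut tail segments of $H'$ and $\widetilde{H}'$ (together with $z$) can reach $z$, giving item 3. For item 2 I would compute the ``reachable from $y$'' and ``reaches $y$'' sets separately and intersect them. Starting from $y$, one follows $H'$ to $r'_{i-1}$ and $\widetilde{H}'$ to $\widetilde{r}'_{j-1}$, enters the interior at $g_{i,n}$ and $g_{n,j}$, and by the quadrant analysis reaches precisely $\{g_{a,b}:a\leq i,b\geq j\}\cup\{g_{a,b}:a\geq i,b\leq j\}$ minus $g_{i,j}$; from each such interior vertex one can then exit via $g_{a',1}\to r_{a'}$ or $g_{1,b'}\to \widetilde{r}_{b'}$ and walk along the tail segments of $H$ or $\widetilde{H}$ back to $y$. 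The ``reaches $y$'' set consists dually of the same frame vertices together with every interior vertex satisfying $a\geq i$ or $b\geq j$. Their intersection is exactly the set listed in item 2.

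For items 4 and 5 I would use the already-determined $V(S[X,y])$. Observe that the cross lies entirely in $V(S[X,y])$, while the two open quadrants are disjoint from it. A vertex $g_{a,b}$ with $a<i,b<j$ reaches $V(S[X,x])$ via a path that remains inside the bottom-left quadrant and exits through $g_{a',1}\to r_{a'}$ with $a'<i$ or $g_{1,b'}\to \widetilde{r}_{b'}$ with $b'<j$, all of which avoid $V(S[X,y])$; conversely, any vertex with $a>i,b>j$ that tried to reach $V(S[X,x])$ would have to cross row $i$ or column $j$, which is contained in $V(S[X,y])\cup\{g_{i,j}\}$, so no such path avoids $V(S[X,y])$. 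This yields item 4, and a symmetric argument yields item 5. The main technical care is concentrated in the interior analysis underlying item 2: one must carefully verify, using the monotonicity of grid arcs together with the deletion of $g_{i,j}$, that from $g_{i,n}$ (respectively $g_{n,j}$) no detour in $G\setminus\{g_{i,j}\}$ reaches the top-right quadrant; the remainder of the argument is bookkeeping along the truncated directed paths.
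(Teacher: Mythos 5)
Your overall approach---a direct reachability analysis in $S\setminus X$---is the right one, and all five identities you arrive at are correct; the paper itself states the observation with only an informal preceding paragraph and a figure, so a careful write-up along these lines is exactly what is needed. That said, several of the ``skeleton facts'' you record at the outset are misstated and would have to be repaired before they can support the later steps. Interior arcs strictly decrease exactly one of the two coordinates while fixing the other, not both; the reachability relation you then state (namely that $g_{a,b}$ reaches $g_{a',b'}$ iff $a'\leq a$ and $b'\leq b$) is nevertheless correct. More seriously, your proposed split of $V(G)\setminus\{g_{i,j}\}$ into the bottom-left quadrant, the top-right quadrant, and a ``cross'' consisting of row $i$ and column $j$ is not a partition: the open quadrants $\{g_{a,b}:a<i,\,b>j\}$ and $\{g_{a,b}:a>i,\,b<j\}$ are unaccounted for, and these also lie in $V(S[X,y])$. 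The companion claim that ``no $G$-path connects one quadrant to the other once $g_{i,j}$ is deleted'' is also false as stated: for example $g_{i+1,j+1}\to g_{i,j+1}\to g_{i-1,j+1}\to g_{i-1,j}\to g_{i-1,j-1}$ runs from the top-right to the bottom-left quadrant while avoiding $g_{i,j}$. What is true (and what your argument for items~4 and~5 actually uses) is that any such path must visit a grid vertex with first coordinate $i$ or second coordinate $j$, and hence a vertex of $V(S[X,y])$. Finally, in the computation for item~2, the set of interior vertices reachable from $y$ is $\{g_{a,b}: a\leq i \text{ or } b\leq j\}\setminus\{g_{i,j}\}$, strictly larger than the set you announce (which is already the answer); intersecting with the correctly computed ``reaches $y$'' set still yields item~2, so the endpoint is fine, but the intermediate claim needs correction. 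None of these slips change the final conclusions, and once they are fixed the proof works as outlined.
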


Furthermore, in light of Observation \ref{obs:syncVertexSetPieces}, we notice that the five pieces are locally ``isolated'' in a synchronizer as follows.

\begin{observation}\label{obs:syncConComps}
Let $S$ be a synchronizer, and let $X$ be a set that cuts $S$ precisely. Then, the following five conditions are satisfied.
\begin{enumerate}
\item There do not exist vertices $u\in V(\widetilde{C}[X,x])$ and $v\in V(S[X,l_x])\cup V(S[X,y])\cup V(S[X,l_z])\cup V(S[X,z])$ such that there exists a directed path from $u$ to $v$ in $S\setminus X$.
\item There do not exist vertices $u\in V(\widetilde{C}[X,l_x])$ and $v\in V(S[X,y])\cup V(S[X,l_z])\cup V(S[X,z])$ such that there exists a directed path from $u$ to $v$ in $S\setminus X$.
\item There do not exist vertices $u\in V(\widetilde{C}[X,y])$ and $v\in V(S[X,l_z])\cup V(S[X,z])$ such that there exists a directed path from $u$ to $v$ in $S\setminus X$.
\item There do not exist vertices $u\in V(\widetilde{C}[X,l_z])$ and $v\in V(S[X,z])$ such that there exists a directed path from $u$ to $v$ in $S\setminus X$.
\end{enumerate}
\end{observation}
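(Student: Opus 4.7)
The plan is to collapse the four non-existence statements into a single arc-level inequality and then discharge it by a case analysis on arc types. Assign ranks $\mathrm{rk}(S[X,x]) = 1$, $\mathrm{rk}(S[X,l_x]) = 2$, $\mathrm{rk}(S[X,y]) = 3$, $\mathrm{rk}(S[X,l_z]) = 4$, and $\mathrm{rk}(S[X,z]) = 5$. Items 1--4 together assert exactly that no directed path of $S \setminus X$ goes from a vertex of rank $r$ to one of rank strictly greater than $r$, so it suffices to show that every arc $(u,v) \in A(S \setminus X)$ satisfies $\mathrm{rk}(u) \geq \mathrm{rk}(v)$; then any putative counterexample path would have to contain an ascending arc, a contradiction.

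The first step is a bookkeeping check, using Observation \ref{obs:syncVertexSetPieces}, that the five vertex sets partition $V(S) \setminus X$, so that the rank of every surviving vertex is well defined. For hand and frame vertices this is immediate from the index conditions against $i$ and $j$; for a grid vertex $g_{i',j'}$ the case split into $(i' < i \wedge j' < j)$, $(i' > i \wedge j' > j)$, $(i',j') = (i,j)$, or ``on one of the two L-strips making up $V(S[X,y]) \cap V(G)$'' is exhaustive and mutually exclusive, with the third case being the unique deleted grid vertex.

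The second step is the arc enumeration. Along each of the hands $H, \widetilde{H}, H', \widetilde{H}'$, removal of $p_i, \widetilde{p}_j, p'_i, \widetilde{p}'_j$ separates a prefix from a suffix that sit in the same two pieces (either $S[X,x]$ then $S[X,y]$, or $S[X,y]$ then $S[X,z]$), and the only arcs that could cross from the prefix to the suffix are incident to a removed vertex; the frame arcs $(x, r_0), (x, \widetilde{r}_0), (r_n, y), (\widetilde{r}_n, y), (y, r'_0), (y, \widetilde{r}'_0), (r'_n, z), (\widetilde{r}'_n, z)$ have equal-rank endpoints; the grid arcs $(g_{i'+1,j'}, g_{i',j'})$ and $(g_{i',j'+1}, g_{i',j'})$ can only keep the rank fixed or descend from $S[X,l_z]$ into $S[X,y]$ or from $S[X,y]$ into $S[X,l_x]$; and the four families of connection arcs are dispatched by a short case split on $i'$ relative to $i$ and $j$. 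The critical instances are the $1$-labeled arcs $(\widetilde{r}'_{i'-1}, g_{n,i'})$ with $i' > j$ and $(r'_{i'-1}, g_{i',n})$ with $i' > i$, which drop from rank $5$ into rank $4$ (precisely the reason $S[X,l_z]$ is defined), and the arcs $(g_{i',1}, r_{i'})$ with $i' < i$ and $(g_{1,i'}, \widetilde{r}_{i'})$ with $i' < j$, which drop from rank $2$ into rank $1$ (the reason $S[X,l_x]$ is defined).

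The main obstacle is the boundary-case analysis for grid positions in row $1$, column $1$, row $n$, column $n$, and the row and column through $(i,j)$: here the L-strip definition in Observation \ref{obs:syncVertexSetPieces} interacts with the connection arcs, and one must verify that every arc incident to $g_{i,j}$ is absent from $A(S \setminus X)$ and that every remaining arc touching a ``turning vertex'' of an L-strip still obeys $\mathrm{rk}(u) \geq \mathrm{rk}(v)$. Once this arc-level inequality is checked in every case, Items 1--4 of Observation \ref{obs:syncConComps} follow at once.
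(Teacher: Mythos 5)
The paper does not supply an explicit proof of Observation~\ref{obs:syncConComps}; it is presented as a direct consequence of Observation~\ref{obs:syncVertexSetPieces} and the structure of the synchronizer (``in light of Observation~\ref{obs:syncVertexSetPieces}, we notice that the five pieces are locally isolated''). Your arc-level rank argument is a correct and complete formalization of exactly that claim, and it is the natural way to discharge it: the explicit vertex sets in Observation~\ref{obs:syncVertexSetPieces} partition $V(S)\setminus X$ (the hand segments split cleanly at the four deleted $p$-vertices, and the grid splits into the two rectangular ``corner'' sets, the two deleted L-strips, and the singleton $g_{i,j}$), so the rank function is well defined, and once every arc of $S\setminus X$ is verified to be non-ascending the four non-existence statements follow at once since a directed path can never raise the rank.

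Two small imprecisions worth fixing before this could replace the paper's assertion. First, you describe the arcs $(g_{i',1},r_{i'})$ with $i'<i$ (resp.\ $(g_{1,i'},\widetilde r_{i'})$ with $i'<j$) as dropping from rank~$2$ to rank~$1$; this is only the generic picture. When $j=1$ (resp.\ $i=1$) the set $S[X,l_x]$ is empty and the tail $g_{i',1}$ lies in $S[X,y]$, so the drop is from rank~$3$ to rank~$1$. The inequality $\mathrm{rk}(u)\ge\mathrm{rk}(v)$ still holds, so nothing breaks, but the sentence should be weakened to ``drop to rank~$1$'' rather than ``from rank~$2$.'' The symmetric remark applies on the $l_z$ side when $i=n$ or $j=n$. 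Second, you call both $(\widetilde r'_{i'-1},g_{n,i'})$ and $(r'_{i'-1},g_{i',n})$ ``$1$-labeled,'' but by the construction in Section~\ref{sec:reduction} only the former is; the latter is $0$-labeled. The labels are immaterial to this observation, so this is cosmetic, but it should be corrected if the text is retained.
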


Finally, we need to internally analyze each piece separately.

\begin{lemma}\label{lem:colorSync}
Let $S$ be a synchronizer, and let $X$ be a set that cuts $S$ precisely. Then, the following two conditions are satisfied.
\begin{enumerate}
\item\label{lem:colorSyncItem1} For each of the graphs $S[X,x], S[X,l_x],S[X,l_z]$ and $S[X,z]$, the function that assigns ${\bf b}$ to every vertex of the graph is $\ell$-consistent.

\item\label{lem:colorSyncItem2} There exists an $\ell$-consistent function $f_y$ for $S[X,y]$ such that for every vertex $v$ of the frame that belongs to $S[X,y]$, it holds that $f_y(v)={\bf b}$.
\end{enumerate}
\end{lemma}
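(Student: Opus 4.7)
The plan is to prove the two items by different means: item 1 follows purely by inspecting which arcs of $S$ are $1$-labeled, while item 2 combines Lemma \ref{lem:colorStronglyConnected} with Lemma \ref{lem:swapEllConsistent} in exactly the same spirit as the proof of Lemma \ref{lem:colorDoubleClock}.

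For item 1, I would first recall that the only $1$-labeled arcs of $S$ are those of the form $(g_{i',1},\post(p_{i'}))$ and $(\pre(\widetilde{p}'_{i'}),g_{n,i'})$ for $i'\in[n]$; every other arc of $S$ (the arcs of the hands, of the interior, of the frame connecting $x,y,z$ to the hands, and the remaining attachments $(g_{1,i'},\post(\widetilde{p}_{i'}))$ and $(\pre(p'_{i'}),g_{i',n})$) has label $0$. Note that each $1$-labeled arc has exactly one endpoint in the frame and one endpoint in the interior. By Observation \ref{obs:syncVertexSetPieces} items 1 and 3, the vertex sets $V(S[X,x])$ and $V(S[X,z])$ consist only of $x$, respectively $z$, together with subpaths of the hands and hence contain no interior vertex; so no $1$-labeled arc is induced by either subgraph. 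By items 4 and 5 of the same observation, $V(S[X,l_x])$ and $V(S[X,l_z])$ consist only of interior vertices and hence contain no frame vertex; so again no $1$-labeled arc is induced. Thus each of the four graphs $S[X,x]$, $S[X,l_x]$, $S[X,l_z]$, $S[X,z]$ contains only $0$-labeled arcs, and by Definition \ref{def:ellConsistent} the constant function ${\bf b}$ is $\ell$-consistent on each of them.

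For item 2, observe that by its definition $S[X,y]$ is the strongly connected component of $y$ in $S\setminus X$, so it is strongly connected; and by Lemma \ref{lem:syncNoOdd}, $S\setminus X$ (and hence $S[X,y]$) contains no directed odd cycle. Therefore Lemma \ref{lem:colorStronglyConnected} yields some $\ell$-consistent function $\widehat{f}$ for $S[X,y]$. I would then apply Lemma \ref{lem:swapEllConsistent} to convert $\widehat{f}$ into an $\ell$-consistent function $f_y$ that is ${\bf b}$ on every frame vertex of $S[X,y]$. For this I would exhibit a subgraph $D'$ of $S[X,y]$ whose vertex set is exactly the frame vertices of $S[X,y]$, whose underlying undirected graph is connected, and all of whose arcs have label $0$. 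By Observation \ref{obs:syncVertexSetPieces} item 2, the frame vertices of $S[X,y]$ consist of $y$, the suffix of $H$ from $r_i$ to $r_n$, the suffix of $\widetilde{H}$ from $\widetilde{r}_j$ to $\widetilde{r}_n$, the prefix of $H'$ from $r'_0$ to $r'_{i-1}$ and the prefix of $\widetilde{H}'$ from $\widetilde{r}'_0$ to $\widetilde{r}'_{j-1}$. Each of these four hand segments is an induced directed subpath of $S[X,y]$ whose arcs are $0$-labeled (the hand arcs are all $0$-labeled by construction), and all four segments are joined to $y$ through the $0$-labeled frame arcs $(r_n,y)$, $(\widetilde{r}_n,y)$, $(y,r'_0)$ and $(y,\widetilde{r}'_0)$. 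The induced subgraph $D'$ of $S[X,y]$ on this vertex set therefore has connected underlying undirected graph and only $0$-labeled arcs, and Lemma \ref{lem:swapEllConsistent} produces the required $f_y$.

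The only delicate point is the bookkeeping step in item 2: one must read Observation \ref{obs:syncVertexSetPieces} carefully to confirm that after removing $X$ the surviving portions of the four hands still reach $y$ through the $0$-labeled frame arcs listed above (in particular, that $r_i,\widetilde{r}_j$ are present but $r_{i-1},\widetilde{r}_{j-1}$ on the ``x-side'' are disconnected from this component, and symmetrically on the ``z-side''), so that $D'$ is indeed connected in the underlying undirected sense. Once this is pinned down, the argument reduces to a direct assembly of Lemmas \ref{lem:syncNoOdd}, \ref{lem:colorStronglyConnected} and \ref{lem:swapEllConsistent} together with the structural Observation \ref{obs:syncVertexSetPieces}.
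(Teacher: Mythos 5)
Your proposal is correct and follows essentially the same approach as the paper's proof: both establish item 1 by using Observation~\ref{obs:syncVertexSetPieces} to show the four pieces contain only $0$-labeled arcs, and both establish item 2 by applying Lemma~\ref{lem:syncNoOdd} and Lemma~\ref{lem:colorStronglyConnected} to get an $\ell$-consistent function for the strongly connected $S[X,y]$, then using Lemma~\ref{lem:swapEllConsistent} on the $0$-labeled connected subgraph induced by the frame vertices. Your write-up simply spells out the bookkeeping (which arcs are $1$-labeled and which hand segments survive in $S[X,y]$) more explicitly than the paper does.
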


\begin{proof}
By Observation \ref{obs:syncVertexSetPieces} and the definition of a synchronizer, we note that all of the arcs of each of the graphs $S[X,x], S[X,l_x],S[X,l_z]$ and $S[X,z]$ are labeled $0$. Thus, it is clear that Condition \ref{lem:colorSyncItem1} is satisfied

Next, by its definition, note that $S[X,y]$ is a strongly connected directed graph. Moreover, by Lemma \ref{lem:syncNoOdd}, $S\setminus X$ does not contain a directed odd cycle, and therefore $S[X,y]$ does not contain a directed odd cycle. Thus, by Lemma \ref{lem:colorStronglyConnected}, $S[X,y]$ admits an $\ell$-consistent function. By Observation \ref{obs:syncVertexSetPieces} and the definition of a synchronizer, we have that $S[X,y]$ contains a subgraph whose vertex set consists of all of the vertices of the frame that belong to $S[X,y]$, whose underlying undirected graph is connected, and which consists only of $0$-labeled arcs. Hence, by Lemma \ref{lem:swapEllConsistent}, we have that Condition \ref{lem:colorSyncItem2} is satisfied as well.
\end{proof}

\subsection{Reduction}\label{sec:reduction}

We are now ready to present the complete reduction from {\sc PSI} to {\sc A-DOCT}. For this purpose, let $(H,G,col)$ be an instance {\sc PSI}. We assume that $|V(G)|\geq 100$, else a solution can be found by brute force in polynomial time. If $G$ contains an isolated vertex to which no vertex in $H$ is mapped by $col$, then the input instance is a \noinstance; otherwise, by removing all of the isolated vertices of $G$ and the vertices of $H$ that are mapped to them, we obtain an instance of {\sc PSI} that is equivalent to $(H,G,col)$. Thus, we next assume that $G$ does not contain isolated vertices. For all $g\in V(G)$, denote $V^g=\{v\in V(H): col(v)=g\}$. We next assume that for all $g,g'\in V(G)$, it holds that $|V^g|=|V^{g'}|=n$ (for the appropriate $n$), else we can add isolated vertices to $H$ to ensure that this condition holds. Then, for all $g\in V^g$, denote $V^g=\{v^g_1,v^g_2,\ldots,v^g_n\}$. Let $<$ be some arbitrary order on $V(G)$.

We construct an instance ${\bf red}(H,G,col)=(D,k,\ell,w)$ of {\sc A-DOCT} as follows. First, we set $k=60|V(G)|+|E(G)|$. Next, we turn to construct $(D,k,\ell,w)$. For every $g\in V(G)$, we insert one $(n,k)$-double clock $\widetilde{C}^g$. For every edge $e=\{g,g'\}\in E(G)$ where $g<g'$, we insert one $(n,k,I^{e})$-synchronizer $S^e$ where $I^e=\{(i,j): \{v^g_i,v^{g'}_j\}\in E(H)\}$. We identify the vertices $x$, $y$ and $z$ of all double clocks and synchronizers. That is, we now have a single vertex called $x$, a single vertex called $y$ and a single vertex called $z$. Finally, for every edge $e=\{g,g'\}\in E(G)$ where $g<g'$, we identify the red hand of the forward clock of $\widetilde{C}^g$ with the hand $H$ of $S^e$, the red hand of the reverse clock of $\widetilde{C}^g$ with the hand $H'$ of $S^e$, the red hand of the forward clock of $\widetilde{C}^{g'}$ with the hand $\widetilde{H}$ of $S^e$, and the red hand of the reverse clock of $\widetilde{C}^{g'}$ with the hand $\widetilde{H}'$ of $S^e$. Here, by identifying two directed paths of the same number $2n+1$ of vertices, we mean that for all $i\in[2n+1]$, we identify the $i$th vertex on one path with the $i$th vertex on the other path. Consequently, for all $i\in[2n]$, we also identify the $i$th arc on one path with the $i$th arc on the other path. We remark that next, when we refer to an element of a specific double clock or a specific synchronizer, we would refer to the new unified vertex. For example, given an edge $e=\{g,g'\}\in E(G)$ where $g<g'$, we have that $r_3(C^g)=r_3(S^e)$ and $r'_5(C^{g'})=\widetilde{r}'_5(S^e)$. This completes the description of the reduction.

\subsection{Correctness}

It remains to derive the correctness of Theorem \ref{thm:doctHard1}. To this end, first note that since $|V(G)|\geq 100$ and $G$ does not contain isolated vertices, we have the following observation.

\begin{observation}\label{obs:budget}
Let $(H,G,col)$ be an instance of {\sc PSI}. Then, for $(D,k,\ell,w)={\bf red}(H,G,col)$, it holds that $100\leq k\leq 121|E(G)|$.
\end{observation}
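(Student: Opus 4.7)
The plan is to simply unwind the definition $k=60|V(G)|+|E(G)|$ together with the two mild hypotheses established just before the statement: (i) $|V(G)|\geq 100$, and (ii) $G$ has no isolated vertices.

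For the lower bound, I would observe that since $|V(G)|\geq 100$ (the paper explicitly reduces to this case at the start of Section~\ref{sec:reduction}), we immediately get
\[
k \;=\; 60|V(G)|+|E(G)| \;\geq\; 60\cdot 100 \;=\; 6000 \;\geq\; 100.
\]

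For the upper bound, the key step is to convert the $|V(G)|$ term into an $|E(G)|$ term. Here I would invoke the handshake lemma: since every vertex of $G$ has degree at least $1$ (as isolated vertices were removed in the preprocessing step at the beginning of Section~\ref{sec:reduction}),
\[
|V(G)| \;\leq\; \sum_{v\in V(G)}\deg(v) \;=\; 2|E(G)|.
\]
Substituting this bound gives
\[
k \;=\; 60|V(G)|+|E(G)| \;\leq\; 60\cdot 2|E(G)| + |E(G)| \;=\; 121|E(G)|,
\]
which is exactly the desired inequality.

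There is no real obstacle here; the observation is essentially a bookkeeping step that prepares the ground for the ETH-based lower bound in Theorem~\ref{thm:doctHard1}, where one needs $k$ to be bounded by a linear function of $|E(G)|$ in order to transfer the $f(k)n^{o(k/\log k)}$ lower bound from Proposition~\ref{prop:psiHard} (which is parameterized by $|E(G)|$) over to \doct. The only non-automatic ingredient is noticing that the absence of isolated vertices is exactly what one needs to replace $|V(G)|$ by $O(|E(G)|)$.
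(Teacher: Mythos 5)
Your proof is correct and matches the paper's intended argument exactly: the paper states this observation follows directly from $|V(G)|\geq 100$ and the absence of isolated vertices, and you have simply made explicit the two substitutions (lower bound from $|V(G)|\geq 100$, upper bound from $|V(G)|\leq 2|E(G)|$) applied to $k=60|V(G)|+|E(G)|$.
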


Clearly, we also have the following observation.

\begin{observation}\label{obs:polyTime}
Let $(H,G,col)$ be an instance of {\sc PSI}. Then, the instance ${\bf red}(H,G,col)$ can be constructed in polynomial time.
\end{observation}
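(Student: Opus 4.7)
The plan is a straightforward size-and-time audit of each step in the construction described in Section~\ref{sec:reduction}. Let $n=|V(H)|/|V(G)|$ (after the padding with isolated vertices, which itself costs linear time), and let $N=|V(H)|+|V(G)|+|E(G)|$ denote the total input size. I would first observe that the preprocessing of $(H,G,col)$---testing for isolated vertices of $G$, checking $|V(G)|\geq 100$ (otherwise solving directly), padding each class $V^g$ to size $n$, and fixing an arbitrary order $<$ on $V(G)$---can all be done in time polynomial in $N$.

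Next I would bound the size of each gadget. By inspection of Sections~\ref{sec:clockBasic} and the double-clock construction, a single $(n,k)$-double clock has $\bigoh(n)$ vertices and $\bigoh(n)$ arcs; each $(n,k,I^e)$-synchronizer has $\bigoh(n^2)$ vertices (due to the interior grid $\{g_{i,j}:i,j\in[n]\}$) and $\bigoh(n^2)$ arcs. The construction inserts exactly $|V(G)|$ double clocks and $|E(G)|$ synchronizers. Hence, before identifications, the total number of vertices and arcs is $\bigoh(|V(G)|\cdot n + |E(G)|\cdot n^2)=\bigoh(N^3)$, and each such element can be produced in polynomial time (the weight $w$ and label $\ell$ of every vertex/arc are determined by simple case-checks, including the check $\{v^g_i,v^{g'}_j\}\in E(H)$ that defines $I^e$, which is a constant-time table lookup after preprocessing).

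For the identifications, unifying the three vertices $x$, $y$, $z$ across all gadgets is a single pass over the gadget list. The identification of the red hands of the forward and reverse clocks of $\widetilde{C}^g$ and $\widetilde{C}^{g'}$ with the four hands $H,H',\widetilde{H},\widetilde{H}'$ of $S^e$, for every edge $e=\{g,g'\}$ with $g<g'$, is done by a vertex-by-vertex and arc-by-arc pairing of paths of length $2n+1$; this takes $\bigoh(n)$ time per edge and $\bigoh(|E(G)|\cdot n)=\bigoh(N^2)$ time in total. Finally, setting $k=60|V(G)|+|E(G)|$ is $\bigoh(1)$.

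The only potential obstacle is bookkeeping: ensuring that after all the identifications, the resulting digraph $D$ together with $\ell$ and $w$ is output as a well-formed instance (no duplicate arcs with conflicting labels, etc.). This is handled by the fact that the hands being glued carry the same labels ($0$) and the same weights ($10$ on $p_i,a_i$ and $2k+1$ on $r_i,b_i$) in both the clock definition and the synchronizer definition, so the identifications are consistent by construction. Putting these bounds together gives an overall running time polynomial in $N$, establishing Observation~\ref{obs:polyTime}.
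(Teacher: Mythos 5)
Your proof is correct. The paper states Observation~\ref{obs:polyTime} with no accompanying proof, treating it as evident from the construction in Section~\ref{sec:reduction}; your size-and-time audit, with bounds $\bigoh(n)$ vertices and arcs per double clock and $\bigoh(n^2)$ per synchronizer, hence $\bigoh(|V(G)|\cdot n + |E(G)|\cdot n^2)$ overall together with polynomial-time preprocessing, gadget generation, and path identifications, is exactly the verification the paper leaves implicit. The one small slip is in your well-formedness aside: the synchronizer's description assigns weight $10$ to \emph{every} vertex on its hands, including the $r_i$'s, which does not literally match the clock's assignment of $2k+1$ to $r_i$; this is a wrinkle in the paper's own specification (the figure caption supports your reading of $2k+1$ for the $r_i$'s) rather than an error in your argument, and in any case it has no bearing on the polynomial-time bound you are establishing.
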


To verify the correctness of the reduction, we first prove the forward direction, summarized in the following lemmata.

\begin{lemma}\label{lem:direction1}
Let $(H,G,col)$ be a \yesinstance\ of {\sc PSI}. Then, $(D,k,\ell,w)={\bf red}(H,G,col)$ is a \yesinstance\ of {\sc A-DOCT}.
\end{lemma}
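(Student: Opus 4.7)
The plan is to build an explicit solution $X$ directly from the colorful mapping $\varphi$. For each $g\in V(G)$, let $i_g\in[n]$ be the unique index with $\varphi(g)=v^g_{i_g}$, and take $X$ to be the union, over all $g\in V(G)$, of the precise cut $Y^g=\{p_{i_g},a_{n-i_g+1},p'_{i_g},a'_{n-i_g+1},\widehat{t}_{i_g-1,n-i_g},\widehat{t}'_{i_g,n-i_g+1}\}$ of the double clock $\widetilde{C}^g$, together with one interior vertex $g_{i_g,i_{g'}}(S^e)$ for each edge $e=\{g,g'\}\in E(G)$ with $g<g'$. The key observation is that the hand identifications performed during the reduction ensure that the remaining four frame vertices $\{p_{i_g},p'_{i_g},\widetilde{p}_{i_{g'}},\widetilde{p}'_{i_{g'}}\}$ needed to complete the precise cut of $S^e$ are already contained in $Y^g\cup Y^{g'}$; moreover, $(i_g,i_{g'})\in I^e$ since $\varphi$ being a colorful mapping ensures $\{v^g_{i_g},v^{g'}_{i_{g'}}\}\in E(H)$. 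Hence $X\cap V(S^e)$ is a precise cut of $S^e$ at $(i_g,i_{g'})$, and by Lemma~\ref{lem:doubleClockPrecise} and Observation~\ref{obs:syncWeight} the total weight is $60|V(G)|+|E(G)|=k$.

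The next step is to prove $X$ is a directed odd cycle transversal of $D$ by exhibiting a $5$-part topological partition of $V(D)\setminus X$ together with an $\ell$-consistent function on each part. Concretely, define $Z_1$ as $\{x\}$ together with the $x$-pieces of all double clocks and synchronizers; $Z_2$ as the union of the $l_x$-pieces of all synchronizers; $Z_3$ as $\{y\}$ together with the $y$-pieces of all gadgets; $Z_4$ as the union of the $l_z$-pieces; and $Z_5$ as $\{z\}$ together with the $z$-pieces. Because the precise cuts inside each double clock and each synchronizer use the same indices $i_g$, $i_{g'}$ on the shared hands, the per-gadget piece descriptions given by Observations~\ref{obs:vertexSetPieceDouble} and~\ref{obs:syncVertexSetPieces} agree on every shared vertex, so $(Z_1,\ldots,Z_5)$ is a well-defined partition of $V(D)\setminus X$. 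That this is a topological partition (no arc of $D-X$ goes from $Z_j$ to $Z_i$ with $i<j$) follows by combining Observations~\ref{obs:doubleConComps} and~\ref{obs:syncConComps} across gadgets: any arc between two distinct gadgets must pass through a shared hand, through $x$, through $y$, or through $z$, and each such vertex lies in a single $Z_i$.

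Finally, define a global ${\bf b}/{\bf w}$-coloring on each $Z_i$ that is $\ell$-consistent. On $Z_2$ and $Z_4$, take the all-${\bf b}$ function; this works by Lemma~\ref{lem:colorSync}(1) since all arcs inside these pieces are labeled $0$. On $Z_1$, use the function $f_x$ from Lemma~\ref{lem:colorDoubleClock}(1) on each double clock's face and ${\bf b}$ everywhere else; the property that $f_x$ equals ${\bf b}$ off the face makes this well-defined on the shared hand vertices, and the synchronizer $x$-pieces are handled by Lemma~\ref{lem:colorSync}(1). A symmetric argument handles $Z_5$. On $Z_3$, combine the function supplied by Lemma~\ref{lem:colorSync}(2) on each synchronizer $y$-piece (which assigns ${\bf b}$ to every frame vertex, in particular to shared hand vertices and to $y$) with the all-${\bf b}$ function on each double clock $y$-piece granted by Lemma~\ref{lem:colorDoubleClock}(2); these agree on all shared vertices, producing an $\ell$-consistent function on $Z_3$. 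By Observation~\ref{obs:consistentImplyNoOdd}, no $Z_i$ contains a directed odd cycle, and since the partition is topological, $D-X$ contains no directed odd cycle at all. The main obstacle I anticipate is this last gluing step on $Z_3$, where many distinct $y$-pieces meet along shared frame vertices; the frame-uniformity clauses in Lemmata~\ref{lem:colorDoubleClock} and~\ref{lem:colorSync} appear to be tailored precisely to make that gluing go through.
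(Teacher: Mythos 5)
Your proposal takes essentially the same route as the paper: construct $X$ from the colorful mapping, decompose $V(D)\setminus X$ into the $x$-, $l_x$-, $y$-, $l_z$-, and $z$-pieces of the gadgets, and then show that each part admits an $\ell$-consistent coloring (via Lemmata~\ref{lem:colorDoubleClock} and~\ref{lem:colorSync}) while no directed odd cycle spans two parts (via Observations~\ref{obs:doubleConComps} and~\ref{obs:syncConComps}). Your handling of the gluing on shared hand vertices and the use of the frame-uniformity clauses is correct and in fact spelled out more explicitly than in the paper.

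There is one sign slip to fix. With your indexing ($Z_1$ for $x$-pieces up to $Z_5$ for $z$-pieces), the reachability in $D-X$ runs from $z$-pieces through $l_z$, $y$, $l_x$ to $x$-pieces (e.g.\ the arcs $(z,y)$ and $(y,x)$ go from higher to lower index), so the correct statement is that no arc of $D-X$ goes from $Z_i$ to $Z_j$ with $i<j$ — the opposite of what you wrote. Equivalently, reverse the labels so that $Z_1$ consists of the $z$-pieces and $Z_5$ of the $x$-pieces, and then your parenthetical holds. This does not affect the substance: the conclusion that no directed cycle crosses classes still follows, which is all that is used.
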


\begin{proof}
Since $(H,G,col)$ be a \yesinstance\ of {\sc PSI}, there exists a colorful mapping of $G$ into $H$. That is, there exists an injective function $\varphi:V(G)\rightarrow V(H)$ such that for every $g\in V(G)$, $col(\varphi(g))=g$, and for every $\{g,g'\}\in E(G)$, $\{\varphi(g),\varphi(g')\}\in E(H)$. For all $g\in V(G)$, let $i(g)$ denote the index $i\in[n]$ such that $\varphi(g)=v^g_i$. Now, we define the following sets.
\begin{itemize}
\item For all $g\in V(G)$: We define $X^g=\{p_i(\widetilde{C}^g),a_{n-i+1}(\widetilde{C}^g),p'_i(\widetilde{C}^g),a'_{n-i+1}(\widetilde{C}^g),\widehat{t}_{i-1,n-i}(\widetilde{C}^g),$ $\widehat{t}'_{i,n-i+1}(\widetilde{C}^g)\}$ where $i=i(g)$.  Note that $X^g$ cuts the double clock $\widetilde{C}^g$ precisely.
\item For all $e=\{g,g'\}\in E(G)$ where $g<g'$: We define $X^e=\{p_i(S^e),p'_i(S^e),\widetilde{p}_j(S^e),\widetilde{p}_j'(S^e),$ $g_{i,j}(S^e)\}$ where $i=i(g)$ and $j=i(g')$. Since $\{\varphi(g),\varphi(g')\}\in E(H)$, it holds that $(i,j)\in I$. Thus, we have that $X^e$ cuts the synchronizer $S^e$ precisely.
\end{itemize}

Accordingly, define $X=(\bigcup_{g\in V(G)}X^g)\cup(\bigcup_{e\in E(G)}X^e)$. By the definition of ${\bf red}(H,G,col)$ and Observation \ref{obs:syncWeight}, it holds that $w(X)=60|V(G)|+|E(G)|$. We claim that $X$ is a directed odd cycle transversal of $D$, which would imply that $(D,k,\ell,w)$ is a \yesinstance\ of {\sc A-DOCT}. To this end, we also define the following sets.
\begin{itemize}
\item $R_x = \bigcup_{g\in V(G)}V(\widetilde{C}^g[X^g,x])$. Note that $\bigcup_{e\in E(G)}V(S^e[X^e,x])\subseteq R_x$.
\item $R_z = \bigcup_{g\in V(G)}V(\widetilde{C}^g[X^g,z])$. Note that $\bigcup_{e\in E(G)}V(S^e[X^e,z])\subseteq R_z$.
\item $R_y = \bigcup_{e\in E(G)}V(S^e[X^e,y])$. Note that $\bigcup_{g\in V(G)}V(\widetilde{C}^g[X^g,y])\subseteq R_y$.
\item $R^l_{x} = \bigcup_{e\in E(G)}V(S^e[X^e,l_x])$.
\item $R^l_{z} = \bigcup_{e\in E(G)}V(S^e[X^e,l_z])$.
\end{itemize}

Denote ${\cal R}=\{R_x,R_z,R_y,R^l_x,R^l_z\}$. Note that the sets in $\cal R$ are pairwise disjoint. By Observations \ref{obs:doubleConComps} and \ref{obs:syncConComps}, there does not exist a directed odd cycle whose vertex set intersects more than one set from $\cal R$. Moreover, by Lemmata \ref{lem:colorDoubleClock} and \ref{lem:colorSync}, for every $R\in{\cal R}$, there exists a function that is $\ell$-consistent for $D[R\setminus X]$. By Observation \ref{obs:consistentImplyNoOdd}, we deduce that for every $R\in{\cal R}$, $D[R\setminus X]$ does not contain a  directed odd cycle. We thus derive that $X$ is a directed odd cycle transversal of $D$. This concludes the proof of the lemma.
\end{proof}

We next prove a strengthened version of the reverse direction, which would be necessary to derive our inapprixmability result.

\begin{lemma}\label{lem:direction2}
Fix $\epsilon\geq 0$. There exists $\delta=\delta(\epsilon)\geq 0$, where if $\epsilon>0$ then $\delta>0$, such that the following condition holds.
\begin{itemize}
\item  Given an instance $(H,G,col)$ of {\sc PSI}, if ${\bf red}(H,G,col)$ admits a $(1+\delta)$-approximate solution, then there exists a colorful mapping of a subgraph $G'$ of $G$ into $H$ such that $G'$ contains at least $(1-\epsilon)|E(G)|$ edges.
\end{itemize}
\end{lemma}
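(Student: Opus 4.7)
The strategy is to use the quantitative tightness of the ideal cost $60|V(G)|+|E(G)|$ from Observation~\ref{obs:budget}: any slack in the budget $(1+\delta)(60v+s)$, where $v=|V(G)|$ and $s=|E(G)|$, can pay for only a bounded number of defective double clocks and synchronizers. Given a $(1+\delta)$-approximate solution $X$, I would first note that $x,y,z\notin X$: each has weight $2k+1$ while $w(X)\le(1+\delta)k$ with $k\ge 100$, so any $\delta<1$ excludes them. Consequently the intersections $X\cap V(\widetilde C^g)$ for distinct $g$ are pairwise disjoint (the double clocks meet only in $\{x,y,z\}$) and are disjoint from every interior $I_e$, yielding $w(X)=\sum_{g}w(X\cap V(\widetilde C^g))+\sum_{e}w(X\cap I_e)$.

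For each $g$ the set $X\cap V(\widetilde C^g)$ is a directed odd cycle transversal of $\widetilde C^g$, so Lemmata~\ref{lem:doubleClockPrecise} and~\ref{lem:doubleClockNotPrecise} give $w(X\cap V(\widetilde C^g))\ge 60$, with equality iff the cut is precise at some index $i(g)$, and $\ge 70$ otherwise. Call $g$ \emph{happy} in the precise case and \emph{unhappy} otherwise; let $U$ be the set of unhappy vertices and $u=|U|$. For $e=\{g,g'\}$ with both endpoints happy, the precise description forces $X$ to contain exactly $\{p_{i(g)},p'_{i(g)},\widetilde p_{i(g')},\widetilde p'_{i(g')}\}$ on the hands of $S^e$. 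I would then run through the cycle types of Observation~\ref{obs:syncCTypes} and check that all mismatch cycles are killed by these four vertices, while the horizontal match cycle of index $i(g)$ and the vertical match cycle of index $i(g')$ survive. Their interior traces are the row $\{g_{i(g),1},\dots,g_{i(g),n}\}$ and the column $\{g_{1,i(g')},\dots,g_{n,i(g')}\}$, which meet only at $g_{i(g),i(g')}$. Hence $w(X\cap I_e)\ge 1$ if $(i(g),i(g'))\in I^e$ (realised by the weight-$1$ vertex $g_{i(g),i(g')}$), and $w(X\cap I_e)\ge 2$ otherwise (the unique single-vertex solution $g_{i(g),i(g')}$ now has weight $2k+1$, and any two-vertex solution contributes at least $2$).

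Partition $E(G)=B_1\sqcup B_2\sqcup B_3$, where $B_1$ has both endpoints happy with $(i(g),i(g'))\in I^e$, $B_2$ has both happy with $(i(g),i(g'))\notin I^e$, and $B_3$ has at least one unhappy endpoint. Combining the bounds above with Observation~\ref{obs:budget},
\begin{align*}
(1+\delta)(60v+s)\;\ge\; w(X)\;\ge\;(60v+10u)+|B_1|+2|B_2|\;=\;60v+s+10u+|B_2|-|B_3|,
\end{align*}
so $10u+|B_2|-|B_3|\le\delta(60v+s)\le 121\delta s$. The maximum-degree-$3$ assumption on $G$ gives $|B_3|\le 3u$, hence $7u+|B_2|\le 121\delta s$ and $|B_2|+|B_3|\le 121\delta s+\tfrac{3\cdot 121}{7}\delta s<173\delta s$. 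Setting $\delta(\epsilon)=\epsilon/200$ (and $\delta(0)=0$) yields $|B_2|+|B_3|\le\epsilon s$. Defining $G'$ by $V(G')=V(G)\setminus U$, $E(G')=B_1$, and $\varphi(g)=v^g_{i(g)}$ makes $\varphi$ a colorful mapping of $G'$ into $H$: injectivity follows because the $V^g$ are pairwise disjoint, $col(\varphi(g))=g$ by the definition of $V^g$, and for each $e=\{g,g'\}\in B_1$ the condition $(i(g),i(g'))\in I^e$ means precisely that $\{v^g_{i(g)},v^{g'}_{i(g')}\}\in E(H)$. Since $|E(G')|=s-|B_2|-|B_3|\ge(1-\epsilon)s$, the statement follows. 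The main obstacle is the cycle bookkeeping in the second paragraph: one must verify carefully that, after the double clock cuts pin down the hand vertices of $S^e$, exactly the two match cycles at the selected indices remain to be destroyed, so that the synchronizer forces $(i(g),i(g'))\in I^e$ under pain of at least one extra unit of weight per violation.
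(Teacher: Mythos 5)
Your proof is correct and follows essentially the same route as the paper's. The paper phrases the bookkeeping in terms of the quantities $c^\star$ (number of precisely-cut double clocks), $s$ (synchronizers whose four hands are all inherited from precisely-cut double clocks), and $s^\star$ (synchronizers in that set which are themselves cut precisely), and invokes Observation~\ref{obs:syncWeight} and Lemma~\ref{lem:syncRough} as black boxes to charge $1$ (respectively at least $2$) to the interior of each synchronizer in ${\cal S}^\star$ (respectively ${\cal S}\setminus{\cal S}^\star$); you reproduce exactly that charge by re-deriving the cycle-level argument inline (which is precisely the content of Lemma~\ref{lem:syncRough}), and your partition into $B_1,B_2,B_3$ corresponds to $s^\star$, $s-s^\star$ and edges incident to non-precisely-cut clocks. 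Two cosmetic remarks: your bound $|B_2|+|B_3|<173\delta s$ is looser than necessary (the constraint $10u+|B_2|-|B_3|\le 121\delta s$ together with $|B_3|\le 3u$ already gives $|B_2|+|B_3|\le(|B_2|-|B_3|)+2|B_3|\le 121\delta s-4u\le 121\delta s$, which is why the paper can take $\delta=\epsilon/121$); and you should cap $\delta$ at $1$, e.g.\ $\delta=\min\{\epsilon/200,1\}$, since the step excluding the weight-$(2k+1)$ vertices from $X$ requires $\delta\le 1$, and without the cap $\delta=\epsilon/200$ exceeds $1$ for large $\epsilon$ (though the lemma is trivial for $\epsilon\ge 1$ anyway).
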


\begin{proof}
Let $\delta\leq 1$ be determined later. Suppose that $(H,G,col)$ is an instance of {\sc PSI} such that ${\bf red}(H,G,col)$ admits a $(1+\delta)$-approximate solution $X$. We need to show that there exists a subgraph $G'$ of $G$ with at least $(1-\epsilon)|E(G)|$ edges such that there exists a colorful mapping of $G'$ into $H$. Since $X$ is a $(1+\delta)$-approximate solution, it holds that $w(X)\leq (1+\delta)k=(1+\delta)(60|V(G)|+|E(G)|)$. Since $\delta\leq 1$, it holds that $X$ excludes every vertex of $D$ weight $2k+1$. In particular, $X$ does not contain any vertex that is present in more than one double clock. Let ${\cal C}^\star$ be the set of every double clock whose intersection with $X$  cuts it precisely. Moreover, let ${\cal S}$ be the set of every synchronizer such that each of its four hands has been identified with a hand of a double clock from ${\cal C}^\star$. Moreover, let ${\cal S}^\star$ be the set of every synchronizer in $\cal S$ whose intersection with $X$ cuts it precisely.
Denote $c^\star=|{\cal C}^\star|$, $s=|{\cal S}|$ and $s^\star=|{\cal S}^\star|$. Then, by Lemmata \ref{lem:doubleClockPrecise}, \ref{lem:doubleClockNotPrecise}, \ref{obs:syncWeight} and \ref{lem:syncRough}, it holds that
\[w(X)\geq 60c^\star + 70(|V(G)|-c^\star) + 2(s - s^\star) + s^\star = 70|V(G)|-10c^\star + 2s - s^\star.\]
Hence, we have that $70|V(G)|-10c^\star + 2s - s^\star\leq (1+\delta)(60|V(G)|+|E(G)|)$. Thus,
\[2s+70|V(G)|-(1+\delta)(60|V(G)|+|E(G)|)\leq 10c^\star + s^\star.\]

Since $G$ is a graph of maximum degree 3, it holds that $|E(G)|-s\leq 3(|V(G)|-c^\star)$. That is, $s\geq |E(G)|-3(|V(G)|-c^\star)$. Then, by the inequality above,
\[64|V(G)|+2|E(G)|-(1+\delta)(60|V(G)|+|E(G)|)\leq 4c^\star + s^\star.\]
That is,
\[(4-60\delta)|V(G)| + (1-\delta)|E(G)|\leq 4c^\star + s^\star.\]

Since $c^\star\leq|V(G)|$, it holds that $s^\star\geq (1-\delta)|E(G)|-60\delta|V(G)|$. Then, since $G$ does not contain isolated vertices, $|V(G)|\leq 2|E(G)|$. Thus, $s^\star\geq (1-121\delta)|E(G)|$. Fix $\delta=\min\{\epsilon/121,1\}$. Note that if $\epsilon>0$ then $\delta>0$. Overall, it holds that
\[s^\star\geq (1-\epsilon)|E(G)|.\]

Define $E^\star=\{e\in E(G): S^e\in{\cal S}^\star\}$, and let $V^\star$ be the set of every vertex in $V(G)$ that is incident in $G$ to at least one edge in $E^\star$. Note that $|E^\star|=s^\star$. Hence, to conclude that the lemma is correct, it is sufficient to show that $(H,G^\star,col)$ is a \yesinstance. That is, we need to show that there exists an injection $\varphi:V(G^\star)\rightarrow V(H)$ such that for every $g\in V(G^\star)$, $col(\varphi(g))=g$, and for every $\{g,g'\}\in E^\star$, $\{\varphi(g),\varphi(g')\}\in E(H)$. To this end, we define an injection $\psi:V(G^\star)\rightarrow V(H)$ as follows. For all $g\in V(G^\star)$, since $\widetilde{C}^g\in{\cal C}^\star$, we can define $i(g)\in[n]$ as the index such that the intersection of the approximate solution $X$ with $\widetilde{C}^g$ is equal to $\{p_{i(g)},a_{n-i(g)+1},p'_{i(g)},a'_{n-i(g)+1},\widehat{t}_{i(g)-1,n-i(g)},\widehat{t}'_{i(g),n-i(g)+1}\}$. Then, for all $g\in V(G^\star)$, set $\psi(g)=v^g_{i(g)}$, and note that since $v^g_{i(g)}\in V^g$, it holds that $col(\psi(g))=g$. Now, consider some edge $e=\{g,g'\}\in E^\star$ where $g<g'$, and denote $i=i(g)$ and $j=i(g')$. Since $\{g,g'\}\in E^\star$, it holds that the intersection of $X$ with $S^e$ cuts $S^e$ precisely, and therefore $(i,j)\in I^e$. By the definition of $I^e$, we deduce that $\{v^g_i,v^{g'}_j\}\in E(G)$, and thus $\{\psi(g),\psi(g')\}\in E(H)$. This concludes the proof of the lemma.
\end{proof}

As a direct corollary to Lemma \ref{lem:direction2} with $\epsilon=0$, we have the following result.

\begin{corollary}\label{cor:direction2}
If $(H,G,col)$ is a \noinstance\ of {\sc PSI}, then $(D,k,\ell,w)$ is a \noinstance\ of {\sc A-DOCT}.
\end{corollary}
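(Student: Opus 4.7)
My plan is to prove the contrapositive: assume $(D,k,\ell,w)$ is a \yesinstance\ of {\sc A-DOCT} and deduce that $(H,G,col)$ is a \yesinstance\ of {\sc PSI}. The entire argument is to instantiate Lemma~\ref{lem:direction2} at $\epsilon = 0$ and carefully unpack what the resulting statement says about the subgraph $G'$.

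First, I would apply Lemma~\ref{lem:direction2} with $\epsilon = 0$ to obtain some $\delta_0 = \delta(0) \geq 0$ (the lemma only guarantees $\delta_0 > 0$ when $\epsilon > 0$, but we do not need strict positivity here). The lemma then guarantees: if ${\bf red}(H,G,col) = (D,k,\ell,w)$ admits a $(1+\delta_0)$-approximate solution, then some subgraph $G' \subseteq G$ with $|E(G')| \geq |E(G)|$ admits a colorful mapping into $H$. Next, I would observe that by the definition of an $\alpha$-approximate solution (a DOCT of size at most $\alpha k$), any solution of $(D,k,\ell,w)$ in the usual sense (size at most $k$) is automatically a $(1+\delta_0)$-approximate solution, since $(1+\delta_0)k \geq k$. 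Hence, if $(D,k,\ell,w)$ is a \yesinstance, the hypothesis of Lemma~\ref{lem:direction2} is fulfilled.

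The one small step to check is that $|E(G')| \geq |E(G)|$ together with $G' \subseteq G$ forces $G'$ to be a colorful mapping of the whole of $G$. Since $G'$ is a subgraph of $G$, we have $|E(G')| \leq |E(G)|$ automatically, so $E(G') = E(G)$. Recall from the reduction preamble in Section~\ref{sec:reduction} that we may assume $G$ has no isolated vertices (isolated vertices of $G$ were removed before constructing the instance); in particular, every vertex of $G$ is incident to some edge. Since $G'$ contains every edge of $G$ and its vertex set must contain the endpoints of all its edges, we conclude $V(G') = V(G)$, i.e., $G' = G$. Therefore a colorful mapping of $G'$ into $H$ is exactly a colorful mapping of $G$ into $H$, witnessing that $(H,G,col)$ is a \yesinstance\ of {\sc PSI}, as required.

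The only potential subtlety — and thus the only ``obstacle'' worth flagging — is verifying that the edge-count inequality really does promote $G'$ to all of $G$, rather than allowing, say, some isolated vertex of $G$ to be absent from $G'$. The no-isolated-vertex assumption established earlier in the reduction eliminates this concern, making the corollary essentially immediate once the quantifiers of Lemma~\ref{lem:direction2} are correctly applied.
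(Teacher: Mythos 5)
Your proposal is correct and matches the paper's approach exactly: the paper also derives this corollary by instantiating Lemma~\ref{lem:direction2} at $\epsilon = 0$ (so $\delta = 0$, and a $(1+\delta)$-approximate solution is just an ordinary solution). The one detail you flag and verify — that $E(G') = E(G)$ together with the no-isolated-vertices assumption from Section~\ref{sec:reduction} forces $G' = G$ — is exactly the step the paper leaves implicit when it calls this a ``direct corollary,'' and your treatment of it is right.
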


We are now ready to derive the correctness of Theorem~\ref{thm:doctHard1}.

\begin{proof}[Proof of Theorem \ref{thm:doctHard1}]
By Lemma \ref{lem:direction1}, Corollary \ref{cor:direction2} and Observations \ref{obs:budget} and \ref{obs:polyTime}, given any instance $(H,G,col)$ of {\sc PSI}, we can construct in polynomial time an equivalent instance $(D,k,\ell,w)$ of {\sc A-DOCT} such that $k\leq 121|E(G)|$. 
Thus, by Proposition \ref{prop:psiHard}, {\sc A-DOCT} is \WOH. Moreover, by Proposition \ref{prop:psiHard}, unless {\sf ETH} fails, {\sc A-DOCT} cannot be solved in time $f(k)\cdot n^{o(\frac{k} {\log k})}$ for any function $f$. Hence, by Corollary \ref{cor:annotate}, we conclude that Theorem~\ref{thm:doctHard1} is correct.
\end{proof}

\section{Parameterized Inapproximability}\label{sec:paramInapprox}

In this section, we prove Theorem \ref{thm:doctHard2}. For convenience, let us restate the theorem below.

\inapproxTheorem*

We first recall basic concepts concerning constraint satisfaction, after which we reduce $\epsilon$-gap-BCSP to the special case of this problem where every variable occurs in at most three constraints. Then, we turn to conclude the proof of Theorem \ref{thm:doctHard2}.

\subsection{Constraint Satisfaction}

Given a set of variables $X=\{x_1,x_2,\ldots,x_k\}$ and a family of pairwise-disjoint domains ${\cal D}=\{D_1,D_2,\ldots,D_k\}$, a {\em binary constraint} is a pair $c=((x_i,x_j),R)$ where $x_i,x_j\in X$, $i\neq j$, and $R$ is a binary relation over $D_i\times D_j$. An {\em evaluation} is a function $\psi: X\rightarrow \bigcup{\cal D}$ such that for all $x_i\in X$, $\psi(x_i)\in D_i$. An evaluation $\psi$ is said to {\em satisfy} $((x_i,x_j),R)$ if $(\psi(x_i),\psi(x_j))\in R$. Moreover, given a set $C$ of binary constraints, an evaluation $\psi$ is said to {\em satisfy} $C$ if it satisfies every constraint $c\in C$. For all $i\in[k]$, let $C_i\subseteq C$ denote the subset of constraints where $x_i$ occurs, and let $s_i=|C_i|$.
We assume w.l.o.g.~that for all distinct $i,j\in[k]$, $|C_i\cap C_j|\leq 1$, and that for all $i\in[k]$, $|C_i|\geq 1$. In other words, for every pair of variables in $X$, there exists at most one binary constraint in $C$ where both of these variables occur, and for every variable in $X$, there exists at least one binary constraint where it occurs. % Then, we further assume w.l.o.g.~that for every $((x_i,x_j),R)\in C$, it holds that $i<j$.

The \csplong\ (\cspshort) is defined as follows.

\begin{center}
\begin{boxedminipage}{.8\textwidth}
\decnamedefn{\csplong\ (\cspshort)}{A set $X=\{x_1,x_2,\ldots,x_k\}$ of $k$ variables, a family of pairwise-disjoint domains ${\cal D}=\{D_1,D_2,\ldots,D_k\}$, and a set $C$ of binary constraints.}
{Does there exist an evaluation that satisfies $C$?}
\end{boxedminipage}
\end{center}

Recall that the promise problem $\epsilon$-{\sc gap-BCSP} is defined as BCSP where the input instance is promised to either be satisfiable, or have the property that every evaluation satisfies less than $(1-\epsilon)$ fraction of the constraints. For a fixed integer $d$, the $\epsilon$-{\sc gap-BCSP$_d$} is defined as the special case of $\epsilon$-{\sc gap-BCSP} every variable is present in at most $d$ constraints. Towards the proof of the hardness of $\epsilon$-{\sc gap-BCSP$_3$}, we first consider $\epsilon$-{\sc gap-BCSP$_4$}. For this proof, we need to recall  the notion of an expander.

\begin{definition}\label{def:expander}
Given $n,d\in\mathbb{N}$ and $0\leq\gamma\leq 1$, an {\em $(n,d,\gamma)$-expander} is an undirected $d$-regular graph $G$ on $n$ vertices such that for every set $S\subseteq V(G)$ of size at most $\frac{1}{2}|V(G)|$, the number of edges with one endpoint in $S$ and the other endpoint in $V(G)\setminus S$ is at least $\gamma\cdot d\cdot|S|$.
\end{definition}

For the sake of brevity, given $n_1,n_2\in\mathbb{N}$, we refer to any  $(n,d,\gamma)$-expander where $n_1\leq n\leq n_2$ as an {\em $([n_1,n_2],d,\gamma)$-expander}. We would also need to rely on the following result.

\begin{proposition}[\cite{DBLP:journals/cpc/AlonSS08}]\label{prop:expander}
There exist $\gamma>0$ and $\ell\in\mathbb{N}$ such that for all $s\in\mathbb{N}$, an $([s,\ell s],3,\gamma)$-expander can be constructed in polynomial time.
\end{proposition}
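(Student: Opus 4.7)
The plan is to prove the proposition by an iterative construction of constant-degree expanders, following the elementary approach of Alon, Schwartz and Shapira (essentially equivalent to the replacement/zig-zag product construction of Reingold, Vadhan and Wigderson). At a high level, fix a small constant-size ``base'' graph $H$ that is verified by brute force to be a good expander, and then build arbitrarily large expanders $G_0, G_1, G_2, \ldots$ by an operation that (a) multiplies the number of vertices by a constant factor, (b) keeps the degree bounded, and (c) preserves expansion.

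Concretely, I would proceed as follows. Recall the replacement product $G \mathbin{\mathsf{r}} H$: if $G$ is $D$-regular on $n$ vertices and $H$ is $d$-regular on $D$ vertices, then $G \mathbin{\mathsf{r}} H$ is $(d+1)$-regular on $nD$ vertices, and its second-largest eigenvalue is bounded by a function of the eigenvalues of $G$ and $H$ alone. Combined with a squaring step $G \mapsto G^{t}$, which amplifies spectral expansion without touching the vertex set (at the cost of raising the degree), one defines $G_{i+1} = G_i^{t} \mathbin{\mathsf{r}} H$ for a suitable constant $t$. Then $G_i$ remains $(d+1)$-regular throughout and its spectral gap stays bounded away from zero. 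The base $G_0$ is a fixed expander on $D^c$ vertices for a small $c$, so $|V(G_i)| = D^{c+i}$ grows geometrically by factor $D$.

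Next, I would translate the spectral guarantee into the combinatorial edge-expansion of Definition~\ref{def:expander} via the discrete Cheeger inequality: if the second eigenvalue is bounded away from the degree, then every set $S$ with $|S|\leq n/2$ has at least $\gamma d |S|$ boundary edges for some $\gamma > 0$ depending only on the spectral gap. To obtain a $3$-regular (rather than $(d+1)$-regular) expander, I would apply a standard degree-reduction gadget: replace each vertex $v$ by a cycle on $d+1$ ``copies,'' letting the $j$-th copy inherit $v$'s $j$-th original edge. The result is $3$-regular on $(d+1)n$ vertices, and its edge expansion degrades by at most a constant factor that can be absorbed into $\gamma$.

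Finally, to cover every window $[s, \ell s]$: the construction produces expanders whose sizes form a geometric progression of constant ratio, so setting $\ell$ equal to this ratio (after absorbing the degree-reduction blow-up) guarantees that for every sufficiently large $s$, some $|V(G_i)|$ lies in $[s, \ell s]$; the finitely many small $s$ are handled by explicit tabulation, which affects only the constants $\gamma$ and $\ell$. Polynomial-time constructibility is immediate, since each iteration manipulates only constant-sized incidence data per vertex and the recursion depth is $O(\log s)$. The main obstacle, and the technical heart of the argument, is the spectral-norm bookkeeping showing that the gap of $G_i$ does not shrink to zero with $i$: one must choose $t$ large enough that the amplification from squaring strictly outweighs the loss incurred by the replacement product, and carry out the corresponding estimate on the second eigenvalue of $G_i^{t}\mathbin{\mathsf{r}} H$ in terms of those of $G_i$ and $H$.
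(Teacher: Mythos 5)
The paper does not prove this proposition: it is imported verbatim from~\cite{DBLP:journals/cpc/AlonSS08} and used as a black box, so there is no in-paper argument to compare your sketch against. That said, your sketch is a plausible high-level reconstruction of how one obtains such a family --- iterate a replacement product with a squaring step to get a sequence of bounded-degree graphs of geometrically growing size whose spectral gap stays bounded away from zero, convert the spectral gap to the edge-expansion notion of Definition~\ref{def:expander} via Cheeger, and then apply a vertex-to-cycle degree reduction to drop to $3$-regularity. It does not exactly match the cited reference: the point of Alon, Schwartz and Shapira's paper (as the title advertises) is to give an \emph{elementary} construction whose expansion analysis is a direct combinatorial counting argument, avoiding the full spectral machinery of the Reingold--Vadhan--Wigderson zig-zag route that you invoke. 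Both approaches do yield the statement, but a full write-up along your lines would still need two things made precise that you currently wave at: (i) the degree-reduction step must be shown to preserve edge expansion in the sense of Definition~\ref{def:expander} (true, but this requires its own short argument, not just a ``constant-factor loss'' assertion), and (ii) the constant $\ell$ must dominate both the geometric ratio of the iteration \emph{and} the blow-up factor of the degree reduction, and must be at least $2$ so that the window $[s,\ell s]$ always contains an even integer (a $3$-regular graph needs an even number of vertices); the finitely many very small $s$ are then handled by an explicit $K_4$-style table as you suggest.
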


\begin{lemma}\label{lem:toBCSP4}
Assuming the \PIH, there exists $\epsilon>0$ such that $\epsilon$-{\sc gap-BCSP$_4$} is \W[1]-hard.
\end{lemma}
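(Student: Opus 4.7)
\medskip

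\noindent\textbf{Proof plan.}
The plan is to give a parameterized gap-preserving reduction from $\epsilon$-{\sc gap-BCSP} (whose \W[1]-hardness is granted by the \PIH) to $\epsilon'$-{\sc gap-BCSP$_4$} for some constant $\epsilon'>0$. Given an instance $(X,\mathcal{D},C)$ with $k$ variables, I will replace every ``high-degree'' variable $x_i$ (i.e., $s_i\geq 5$) with a block of independent copies, one copy for each constraint in which $x_i$ participates, and then tie the copies together by a constant-degree expander of equality constraints. For variables with $s_i\leq 4$, no replacement is needed. Concretely, for each $i$ with $s_i\geq 5$, invoke Proposition~\ref{prop:expander} to obtain an $([s_i,\ell s_i],3,\gamma)$-expander $G_i$ on $s_i'\in[s_i,\ell s_i]$ vertices; identify $s_i$ of those vertices with the copies $x_i^1,\ldots,x_i^{s_i}$ used in the original constraints and leave the remaining $s_i'-s_i$ vertices as ``dummy'' copies. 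For each edge $\{x_i^a,x_i^b\}\in E(G_i)$ add the equality constraint $x_i^a=x_i^b$ on the common domain $D_i$. Each original constraint $c=((x_i,x_j),R)$ is translated into the constraint $((x_i^a,x_j^b),R)$ where $a,b$ are the indices allocated to $c$. Each new variable now occurs in at most one original constraint plus at most $3$ equality constraints, so the produced instance is in BCSP$_4$. The new number of variables is $\sum_i s_i'\leq \ell\sum_i s_i=2\ell|C|=O(k^2)$, so the reduction is a valid parameterized reduction (with a polynomial parameter blow-up) and runs in polynomial time by Proposition~\ref{prop:expander}.

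\smallskip
\noindent\textbf{Forward direction.} If $\psi$ satisfies the original instance, set $\psi'(x_i^a):=\psi(x_i)$ for every copy (including dummies). Every original constraint is satisfied because $\psi$ satisfies it, and every equality constraint is satisfied trivially; so $\psi'$ satisfies the reduced instance.

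\smallskip
\noindent\textbf{Gap preservation (the main step).} Suppose $\psi'$ violates a $\leq\epsilon'$-fraction of the constraints of the reduced instance. Define $\psi(x_i)$ to be the plurality value of $\{\psi'(x_i^a)\}_a$, and let $b_i$ be the number of copies of $x_i$ disagreeing with $\psi(x_i)$. An original constraint can be violated by $\psi$ only if either (i) the corresponding constraint on copies is violated by $\psi'$, or (ii) one of the two copies carrying that constraint disagrees with its plurality; hence the number of original constraints violated by $\psi$ is at most $V_{\mathrm{orig}}'+\sum_i b_i$, where $V_{\mathrm{orig}}'$ is the number of original constraints violated by $\psi'$. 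The heart of the argument is to upper-bound $\sum_i b_i$ in terms of the number $V_{\mathrm{eq}}'$ of equality constraints violated by $\psi'$ using the expander. Partition the copies of $x_i$ into colour classes $V_1,\ldots,V_m$ (with $V_1$ largest) according to the value assigned by $\psi'$. If $|V_1|\geq s_i'/2$, then $U:=V\setminus V_1$ has size $b_i\leq s_i'/2$, and by Definition~\ref{def:expander} the edge-boundary $\partial(U)$ in $G_i$ contains at least $3\gamma b_i$ edges, each of which is a violated equality. If instead $|V_1|<s_i'/2$, then every class has size $<s_i'/2$, so by expansion of each class the number of violated equality edges among copies of $x_i$ is at least $\tfrac{1}{2}\sum_\ell 3\gamma|V_\ell|=\tfrac{3\gamma}{2}s_i'\geq \tfrac{3\gamma}{2}b_i$. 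In either case, the number of violated equality constraints contributed by $x_i$ is $\geq \tfrac{3\gamma}{2}b_i$, and summing over $i$ gives $\sum_i b_i\leq \tfrac{2}{3\gamma}V_{\mathrm{eq}}'$. Therefore $\psi$ violates at most $V_{\mathrm{orig}}'+\tfrac{2}{3\gamma}V_{\mathrm{eq}}'\leq(1+\tfrac{2}{3\gamma})\epsilon'|C'|$ original constraints, and since $|C'|\leq(1+3\ell)|C|$, choosing $\epsilon':=\epsilon/[(1+\tfrac{2}{3\gamma})(1+3\ell)]$ guarantees that $\psi$ violates fewer than $\epsilon|C|$ constraints, contradicting the promise of the input instance.

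\smallskip
\noindent\textbf{Main obstacle.} The conceptual crux is the gap-preservation step above: one must simultaneously handle the ``concentrated'' regime (plurality absorbs more than half of the copies) and the ``shattered'' regime (no value absorbs more than half), and in both cases convert disagreement-among-copies into a linear amount of violated equality constraints. The two-case expander argument sketched above is the standard way to overcome this, and once it is in place the rest is bookkeeping: verifying degree $\leq 4$, bounding the parameter by $O(k^2)$, and combining the constants. With Lemma~\ref{lem:toBCSP4} in hand, the further reduction from BCSP$_4$ to BCSP$_3$ (needed because the reduction in Section~\ref{sec:reduction} assumes maximum degree~$3$) can be carried out by a standard variable-splitting trick, again glued together with a constant-degree equality expander, which completes the route to Theorem~\ref{thm:doctHard2} when combined with the reduction of Section~\ref{sec:reduction} and the quantitative statement of Lemma~\ref{lem:direction2}.
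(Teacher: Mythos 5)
Your proof is correct and follows the same route as the paper: replace each variable by a block of copies connected by a constant-degree expander of equality constraints (Proposition~\ref{prop:expander}), assign one original constraint to each copy via an injection so that each copy participates in at most $3+1=4$ constraints, bound the parameter by $2\ell|C|=\Oh(k^2)$, map a satisfying assignment to one that propagates the value to all copies, and in the gap direction decode by plurality and charge disagreeing copies to violated equality edges via expansion.

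One place where you are actually more careful than the paper deserves mention. The paper writes that $G_i$ being an expander implies at least $3\gamma|Y^i|$ violated equality constraints, where $Y^i$ is the set of copies that disagree with the plurality value $\widetilde{d}^i$; but Definition~\ref{def:expander} only gives the lower bound $\gamma d|S|$ on the boundary for $|S|\leq \tfrac12 |V(G_i)|$, and there is no a priori guarantee that $|Y^i|\leq n_i/2$ (the plurality class need not be a majority when $|D_i|\geq 3$). Your split into the ``concentrated'' case ($|V_1|\geq s_i'/2$, where one applies expansion directly to the complement $U=V\setminus V_1$) and the ``shattered'' case ($|V_1|<s_i'/2$, where one applies expansion to every colour class and divides by two to correct for double-counting, getting $\geq \tfrac{3\gamma}{2}b_i$ violated equalities) repairs this, at the cost of a harmless factor-of-two loss in $\gamma$. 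Both the paper's $\delta=\tfrac{\epsilon}{4\ell(1+1/(3\gamma))}$ and your $\epsilon'=\tfrac{\epsilon}{(1+2/(3\gamma))(1+3\ell)}$ are valid constants; yours is slightly smaller, as expected from the cleaner two-case analysis. The remaining bookkeeping (degree $\leq 4$, parameter blow-up $\Oh(k^2)$, polynomial running time via Proposition~\ref{prop:expander}, $|\widehat{C}|=\Oh(\ell|C|)$) matches the paper.
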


\begin{proof}
To prove that the lemma is correct, we present a reduction from $\epsilon$-{\sc gap-BCSP} to $\delta$-{\sc gap-BCSP$_4$} where $\delta=\displaystyle{\frac{\epsilon}{4\ell(1+\frac{1}{3\gamma})}}$. Here, $\gamma$ and $\ell$ are the fixed constants stated in Proposition \ref{prop:expander}. For this purpose, let $I=(X,{\cal D},C)$ be an instance of $\epsilon$-{\sc gap-BCSP}. Then, we construct an instance $\widehat{I}=(\widehat{X},\widehat{\cal D},\widehat{C})$ of $\delta$-{\sc gap-BCSP$_4$} as follows. First, for all $i\in[k]$, apply Proposition \ref{prop:expander} to construct an $([s_i,\ell s_i],3,\gamma)$-expander $G_i$ (recall that $s_i=|C_i|$, the number of constraints where $x_i$ occurs), and denote $n_i=|V(G_i)|$. Now, for all $i\in[k]$, define $\widehat{X}^i=\{\widehat{x}^i_1,\widehat{x}^1_2,\ldots,\widehat{x}^i_{n_i}\}$, and let $f^i: \widehat{X}^i\rightarrow V(G_i)$ be an arbitrarily chosen bijective function. Accordingly, set $\widehat{X}=\bigcup_{i=1}^k\widehat{X}^i$. Notice that since $|C_i\cap C_j|\leq 1$ for all distinct $i,j\in[k]$, we have that $|C|\leq k^2$. Moreover, since the constraints are binary, we have that $\displaystyle{\sum_{i=1}^ks_i}= 2|C|$. Therefore, $|\widehat{X}|=\displaystyle{\sum_{i=1}^k}n_i\leq \displaystyle{\ell\sum_{i=1}^ks_i}=2\ell|C|\leq 2\ell k^2$. That is, the number of variables in the new instance is bounded by a function of $k$.

We proceed to define $\widehat{D}$ by letting the domain of every $\widehat{x}^i_j$, where $i\in[k]$ and $j\in[n_i]$ be $\widehat{D}^i_j=\{\widehat{d}^i_j: d\in D_i\}$. Finally, let us define $\widehat{C}$ as follows. For all $i\in[k]$, let $g^i: C_i\rightarrow X^i$ be an arbitrarily chosen injective function. Then, for all $c=((x_i,x_j),R)\in C$, denote $\widehat{c}=((x^i_p,x^j_q),\widehat{R}=\{(\widehat{d}^i_p,\widehat{d'}^j_q): (d,d')\in R\})$ where $x^i_p=g^i(x_i)$ and $x^j_q=g^j(x_j)$. Define $C^\star=\{\widehat{c}: c\in C\}$. We how define a set of equality constraints on the set of `copies' of each variable. Define  $C_{=}=\{((x^i_p,x^i_q),\{(d^i_p,d^i_q): d\in D_i\}): i\in[k], p,q\in [n_i], (p,q)\in E(G_i)\}$. Finally, set $\widehat{C}=C^\star\cup C_{=}$. Since for all $i\in[k]$, the graph $G_i$ is 3-regular, we have that every variable in $\widehat{X}^i$ occurs in at most three constraints in $C_{=}$. Moreover, since for all $i\in[k]$ the function $g^i$ is injective, we have that every variable in $\widehat{X}^i$ occurs in at most one constraint in $C^\star$. Thus, every variable in $\widehat{X}$ occurs in at most four constraints in total. Thus, since every constraint is binary, we also have that $|\widehat{C}|\leq 2|\widehat{X}|$. Since we have already shown that $|\widehat{X}|\leq 2\ell|C|$, we derive that $|\widehat{C}|\leq 4\ell|C|$.

To prove that the reduction is correct, we argue that if $I$ is satisfiable (admits an evaluation satisfying all constraints) then so is $\widehat{I}$ and conversely, if $\widehat{I}$ admits an evaluation  that satisfies at least $(1-\delta)|\widehat{C}|$ constraints, then $I$ admits an evaluation that satisfies at least $(1-\epsilon)|C|$ constraints.

Suppose that $I$ admits an evaluation $\psi$ that satisfies all of the constraints. Then, we have that $\widehat{I}$ also admits an evaluation $\widehat{\psi}$ that satisfies all of the constraints. Indeed, we simply define $\widehat{\psi}$ by setting $\widehat{\psi}(x^i_j)=d^i_j$, where $d=\psi(x_i)$, for all $i\in[k]$ and $j\in [n_i]$.

Next, suppose that $\widehat{I}$ admits an evaluation $\widehat{\psi}$ that satisfies at least $(1-\delta)|\widehat{C}|$ constraints. We define an evaluation $\psi$ for $I$ as follows. For all $i\in[k]$ and $d\in D_i$, define $X^i(d)=\{x^i_j\in X^i: \widehat{\psi}(x^i_j)=d^i_j\}$. Now, for all $i\in[k]$, let $\widetilde{d}^i$ be a value in $D_i$ that among all values in $D_i$, maximizes $|X^i(d)|$ (if there is more than one choice, choose one arbitrarily). Moreover, for all $i\in[k]$, denote $Y^i=X^i\setminus X^i(\widetilde{d}^i)$. Then, for all $i\in[k]$, we set $\psi(x_i)=\widetilde{d}^i$. We claim that $\psi$ satisfies at least $(1-\epsilon)|C|$ of the constraints in $C$. To show this, we first note that all $\widehat{c}=((x^i_p,x^i_q),R)$ such that either both $\widehat{\psi}(x^i_p)=\widetilde{d}^i$ and $\widehat{\psi}(x^i_q)\neq\widetilde{d}^i$ or both $\widehat{\psi}(x^i_p)\neq\widetilde{d}^i$ and $\widehat{\psi}(x^i_q)=\widetilde{d}^i$, a unique constraint in $C_{=}$ is violated by $\widehat{\psi}$. Since for all $i\in[k]$, $G_i$ is an $([s_i,\ell s_i],3,\gamma)$-expander, we derive that at least $\displaystyle{\sum_{i=1}^k\gamma\cdot 3\cdot|Y^i|}=\displaystyle{3\gamma\sum_{i=1}^k|Y^i|}$ constraints in $C_{=}$ are violated by $\widehat{\psi}$. Thus, $\displaystyle{3\gamma\sum_{i=1}^k|Y_i|<\delta|\widehat{C}|}$, which implies that $\displaystyle{\sum_{i=1}^k|Y_i|\leq \frac{\delta}{3\gamma}|\widehat{C}|}$. Let us now denote by $C_Y$ the set of all constraints $c=((x_i,x_j),R)\in C$ such that $g^i(c)\in Y^i$ or $g^j(c)\in Y_j$. Then, since for all $i\in[k]$, $g^i$ is an injective function, we have that $|C_Y|\leq \displaystyle{\sum_{i=1}^k|Y_i|}$, and thus we derive that $|C_Y|\leq \displaystyle{\frac{\delta}{3\gamma}|\widehat{C}|}$. Notice that for all $c\in C\setminus C_Y$ that is violated by $\psi$, there is a unique constraint in $C^\star$ that is also violated by $\widehat{\psi}$. Thus, the number of constraints in $C\setminus C_Y$ that are violated by $\psi$ is smaller than $\delta|\widehat{C}|$. Overall, we conclude that $\psi$ violates less than $\delta|\widehat{C}|+|C_Y|\leq \displaystyle{(1+\frac{1}{3\gamma})\delta|\widehat{C}|}\leq 4\ell\displaystyle{(1+\frac{1}{3\gamma})\delta|C|}$ constraints in $C$. Since $\delta=\displaystyle{\frac{\epsilon}{4\ell(1+\frac{1}{3\gamma})}}$, we conclude that $\psi$ violates less than $\epsilon|C|$ constraints in $C$. This concludes the proof of the lemma.
\end{proof}

We now turn to prove the hardness of $\epsilon$-{\sc gap-BCSP$_3$}.

\begin{lemma}\label{lem:toBCSP3}
Assuming the \PIH, there exists $\epsilon>0$ such that $\epsilon$-{\sc gap-BCSP$_3$} is \W[1]-hard.
\end{lemma}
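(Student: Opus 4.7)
The plan is to reduce $\epsilon$-{\sc gap-BCSP$_4$} (which is \W[1]-hard under the \PIH by Lemma \ref{lem:toBCSP4}) to $\delta$-{\sc gap-BCSP$_3$} for some $\delta=\Theta(\epsilon)$. Given an instance $(X,\mathcal{D},C)$ of $\epsilon$-{\sc gap-BCSP$_4$} with $k$ variables, I would construct $(\widehat X,\widehat{\mathcal{D}},\widehat C)$ as follows. For every variable $x_i\in X$ that occurs in exactly four constraints, introduce two fresh copies $x_i^{(1)},x_i^{(2)}$ with disjoint renamed copies of the domain $D_i$; arbitrarily partition the four constraints of $x_i$ into two pairs of size two, replace $x_i$ by $x_i^{(1)}$ in one pair and by $x_i^{(2)}$ in the other, and add one equality constraint between $x_i^{(1)}$ and $x_i^{(2)}$. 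Variables occurring in at most three constraints are left unchanged. Each variable of $\widehat X$ then appears in at most $2+1=3$ constraints, and $|\widehat X|\leq 2k$, so this is a parameterized reduction producing an instance of {\sc BCSP$_3$}.

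The forward direction (satisfiable to satisfiable) is immediate: assign every copy the value of its parent. For the gap direction, suppose $\widehat\psi$ satisfies at least $(1-\delta)|\widehat C|$ constraints, and define $\psi$ by $\psi(x_i)=\widehat\psi(x_i)$ when $x_i$ is unsplit and $\psi(x_i)=\widehat\psi(x_i^{(1)})$ otherwise. I would classify each constraint $c\in C$ violated by $\psi$ into two types. A \emph{Type 1} violated constraint has its associated new constraint also violated by $\widehat\psi$; there are at most $\delta|\widehat C|$ such. A \emph{Type 2} violated constraint has its associated new constraint satisfied by $\widehat\psi$, which forces some split variable $x_i$ of $c$ to appear via $x_i^{(2)}$ with $\widehat\psi(x_i^{(2)})\neq\widehat\psi(x_i^{(1)})=\psi(x_i)$; the equality constraint of $x_i$ is then violated by $\widehat\psi$. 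Charging each Type 2 constraint to such a violated equality constraint and noting that at most two original constraints involve $x_i^{(2)}$, there are at most $2\delta|\widehat C|$ Type 2 constraints. Using $|\widehat C|\leq|C|+k\leq 3|C|$ (where $|C|\geq k/2$ since every variable occurs in at least one binary constraint), the total number of constraints of $C$ violated by $\psi$ is at most $3\delta|\widehat C|\leq 9\delta|C|$.

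Choosing $\delta=\epsilon/9$ then gives a parameterized gap-preserving reduction from $\epsilon$-{\sc gap-BCSP$_4$} to $(\epsilon/9)$-{\sc gap-BCSP$_3$}, and composing with Lemma \ref{lem:toBCSP4} yields the conclusion. The main conceptual obstacle is that the expander-based construction of Lemma \ref{lem:toBCSP4} cannot simply be rerun with a smaller-degree expander, because any $2$-regular graph is a disjoint union of cycles and has no constant-factor edge expansion. The $2$-copy split above sidesteps this by directly charging each dropped original constraint to a violated equality constraint, absorbing the loss into a constant factor in the gap parameter.
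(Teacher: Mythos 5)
Your proposal is correct and follows essentially the same approach as the paper's proof: split the four constraints of each degree-$4$ variable across two fresh copies tied by an equality constraint, then argue that a near-satisfying assignment for the new instance yields a near-satisfying assignment for the old one by charging each newly violated original constraint either to a violated copy-constraint or to a violated equality constraint. The only cosmetic differences are that the paper splits \emph{every} variable into two copies rather than only the degree-$4$ ones (which slightly simplifies the description at the cost of a worse constant, $\delta = \epsilon/15$ versus your $\epsilon/9$), and it phrases the charging via the set $C_Y$ of constraints touching a ``disagreeing'' variable rather than your Type~1/Type~2 case split; both accountings are sound.
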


\begin{proof}
By Lemma \ref{lem:toBCSP4}, we have that there exists $\epsilon>0$ such that $\epsilon$-{\sc gap-BCSP$_4$} is \W[1]-hard. To prove that the lemma is correct, we present a reduction from $\epsilon$-{\sc gap-BCSP$_4$} to $\delta$-{\sc gap-BCSP$_3$} where $\delta=\epsilon/15$. For this purpose, let $I=(X,{\cal D},C)$ be an instance of $\epsilon$-{\sc gap-BCSP$_4$}. Then, we construct an instance $\widehat{I}=(\widehat{X},\widehat{\cal D},\widehat{C})$ of $\delta$-{\sc gap-BCSP$_3$} as follows. First, define $\widehat{X}=X\cup X'$ where $X'=\{x_1',x_2',\ldots,x_k'\}$. Hence, $|\widehat{X}|\leq 2k$. Now, let us define $\widehat{\cal D}$. For all $i\in[k]$, the domain of $x_i$ is defined as $D_i\in{\cal D}$, and the domain of $x'_i$ is defined as $D_i'=\{d': d\in D_i\}$. Now, for all $i\in[k]$, let $(A_i,B_i)$ be a partition of $C_i$ such that $|A_i|,|B_i|\leq 2$. Note that the existence of such a partition follows from the fact that $|C_i|\leq 4$. Then, for all $c=((x_i,x_j),R)\in C$, denote $\widehat{c}=((x_i,x_j),R)$ if $((x_i,x_j),R)\in A_i\cap A_j$, $\widehat{c}=((x_i,x_j'),R'=\{(p,q'): (p,q)\in R\})$ if $((x_i,x_j),R)\in A_i\cap B_j$, $\widehat{c}=((x_i',x_j),R'=\{(p',q): (p,q)\in R\})$ if $((x_i,x_j),R)\in B_i\cap A_j$ and $\widehat{c}=((x_i',x_j'),R'=\{(p',q'): (p,q)\in R\})$ otherwise.
Define $C^\star=\{\widehat{c}: c\in C\}$. We now define a set of equality constraints on each pair of copies of the variables. Define  $C_{=}=\{((x_i,x'_i),\{(d,d'): d\in D_i\}): i\in[k]\}$. Finally, set $\widehat{C}=C^\star\cup C_{=}$. Clearly, every variable in $\widehat{X}$ occurs in at most three constraints in $\widehat{C}$, and the construction can be performed in polynomial time.

To prove that the reduction is correct, we argue that if $I$ is satisfiable then so is $\widehat{I}$ and conversely, if $\widehat{I}$ admits an evaluation  that satisfies at least $(1-\delta)|\widehat{C}|$ constraints, then $I$ admits an evaluation that satisfies at least $(1-\epsilon)|C|$ constraints.

Suppose that $I$ admits an evaluation $\psi$ that satisfies all of the constraints. Then, we have that $\widehat{I}$ also admits an evaluation $\widehat{\psi}$ that satisfies all of the constraints. Indeed, we simply define $\widehat{\psi}$ by setting $\widehat{\psi}(x_i)=\psi(x_i)$ and $\widehat{\psi}(x_i')=\psi(x_i)'$ for all $i\in[k]$.

Next, suppose that $\widehat{I}$ admits an evaluation $\widehat{\psi}$ that satisfies at least $(1-\delta)|\widehat{C}|$ constraints. We define an evaluation $\psi$ for $I$ as follows. For all $i\in[k]$, we set $\psi(x_i)=\widehat{\psi}(x_i)$. We claim that $\psi$ satisfies at least $(1-\epsilon)|C|$ of the constraints in $C$. To show this, we denote $Y=\{x_i\in X: \widehat{\psi}(x_i)\neq \widehat{\psi}(x_i')\}$. Since $\widehat{\psi}$ violates less than $\delta|\widehat{C}|$ constraints, we have that $|Y|<\delta|\widehat{C}|$. Let $C_Y$ denote the subset of constraints of $C$ where at least one variable of $Y$ occurs.
Note that since every variable in $X$ occurs in at most four constraints in $C$, we have that $|C_Y|\leq 4|Y|\leq 4\delta|\widehat{C}|$. Moreover, note that for every constraint $\widehat{c}\in C^\star\setminus C_Y$ that is satisfied by $\widehat{\psi}$ is also satisfied by $\psi$. Since $\widehat{\psi}$ violates less than $\delta|\widehat{C}|$ constraints in total, we have that $\widehat{\psi}$ also violates less than $\delta|\widehat{C}|$ constraints from $C^\star\setminus C_Y$. Thus, we have that $\psi$ violates less than $5\delta|\widehat{C}|$ constraints from $C$. Since every variable occurs in at least one constraint, we have that $|\widehat{C}|=|C^\star|+|C_{=}|=|C|+|X|\leq 3|C|$. We thus conclude that violates less than $15\delta|C|=\epsilon|C|$ constraints from $C$. This concludes the proof of the lemma.
\end{proof}

\subsection{Proof of Theorem~\ref{thm:doctHard2}}

We now translate Hypothesis \ref{conjecture} in terms of {\sc PSI}. For this purpose, we define the promise problem $\epsilon$-{\sc gap-PSI} as {\sc PSI} where the input instance is promised to either be a \yesinstance, or have the property that for every subgraph $G'$ of $G$ with at least $(1-\epsilon)|E(G)|$ edges, there does not exist a colorful mapping of $G'$ into $H$. It is straightforward to see that if $\epsilon$-{\sc gap-BCSP} is \W[1]-hard, then $\epsilon$-{\sc gap-PSI} is \W[1]-hard as well. For the sake of completeness, we present the reduction.

\begin{lemma}\label{lem:equivConjecture}
Assuming the \PIH and \FPT $\neq$ \W[1], there exists $\epsilon>0$ such that $\epsilon$-{\sc gap-PSI} is \W[1]-hard.
\end{lemma}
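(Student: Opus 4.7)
The plan is to present a standard parameterized reduction from $\epsilon$-gap-BCSP$_3$ (which is W[1]-hard under the PIH by Lemma \ref{lem:toBCSP3}) to $\epsilon'$-gap-PSI for a suitable $\epsilon' > 0$, transferring the gap essentially losslessly. Starting from BCSP$_3$ rather than general BCSP is essential, since the PSI problem syntactically requires $G$ to have maximum degree $3$.

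\textbf{Construction.} Given an instance $I = (X, \mathcal{D}, C)$ of $\epsilon$-gap-BCSP$_3$ with $X = \{x_1, \ldots, x_k\}$, I construct an instance $(H, G, col)$ of PSI as follows. Put $V(G) = X$ and include the edge $\{x_i, x_j\}$ for every constraint $((x_i, x_j), R) \in C$. Since each variable occurs in at most three constraints and at most one constraint involves any given pair of variables, $G$ is a simple graph of maximum degree at most $3$. For $H$, introduce a color class $V^i = \{v^i_d : d \in D_i\}$ for each $i \in [k]$, set $V(H) = \bigcup_{i \in [k]} V^i$, define $col(v^i_d) = x_i$, and for each constraint $c = ((x_i, x_j), R) \in C$ add all edges $\{v^i_d, v^j_{d'}\}$ with $(d, d') \in R$ to $E(H)$. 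The PSI parameter $|E(G)| = |C| \leq 3k/2$ is bounded by a function of $k$, so this is an FPT reduction and it runs in polynomial time.

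\textbf{Correctness.} Fix $\epsilon' = \epsilon/2$ for the PSI side. For completeness, if $\psi$ is a satisfying evaluation of $I$, then $\varphi(x_i) = v^i_{\psi(x_i)}$ is a colorful mapping of all of $G$ into $H$, since each constraint is satisfied and hence gives the required edge in $H$. For soundness, assume every evaluation of $I$ violates at least $\epsilon |C|$ constraints, and suppose for contradiction that some subgraph $G' \subseteq G$ with $|E(G')| \geq (1 - \epsilon')|E(G)|$ admits a colorful mapping $\varphi$ into $H$. Define $\psi : X \to \bigcup \mathcal{D}$ by $\psi(x_i) = d$ whenever $x_i \in V(G')$ and $\varphi(x_i) = v^i_d$, and choose $\psi(x_i) \in D_i$ arbitrarily otherwise. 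By construction, for every edge $\{x_i, x_j\} \in E(G')$ the corresponding constraint $((x_i, x_j), R) \in C$ is satisfied, because $\{\varphi(x_i), \varphi(x_j)\} \in E(H)$ forces $(\psi(x_i), \psi(x_j)) \in R$. Hence $\psi$ violates at most $|E(G)| - |E(G')| \leq \epsilon' |C| = (\epsilon/2) |C| < \epsilon |C|$ constraints, contradicting the assumption on $I$.

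\textbf{Obstacle.} The reduction is textbook and the only care needed is bookkeeping: using BCSP$_3$ as the source ensures the max-degree-$3$ requirement of PSI, and introducing a small slack $\epsilon' < \epsilon$ cleanly handles the boundary between the strict and non-strict inequalities implicit in the two gap promises. Combining Lemma \ref{lem:toBCSP3}, which under the PIH yields an $\epsilon > 0$ for which $\epsilon$-gap-BCSP$_3$ is W[1]-hard, with the reduction above gives that $(\epsilon/2)$-gap-PSI is W[1]-hard, completing the proof.
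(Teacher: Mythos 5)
Your construction is identical to the paper's (encode variables as colors, domain elements as vertices of $H$, constraints as $G$-edges and $H$-edges), and the argument is the same equivalence between partial satisfying assignments and colorful mappings of subgraphs. The only cosmetic difference is that you conservatively introduce a factor-of-two slack ($\epsilon' = \epsilon/2$) to sidestep the strict/non-strict boundary between the two gap definitions, whereas the paper's sketch keeps the same $\epsilon$; both are fine.
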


\begin{proof}[Sketch.]
Let $(X,{\cal D},C)$ be an instance of $\epsilon$-{\sc gap-BCSP$_3$}. Then, we construct (in polynomial time) an instance $(H,G,col)$ of {\sc PSI} as follows. First, we set $V(G)=X$ and $E(G)=\{\{x,x'\}:$ there exists $R$ such that $((x,x'),R)\in C\}$. Second, we set $V(H)=\bigcup {\cal D}$, and we let $E(H)$ contain every edge $\{d,d'\}$ for which there exist $i\neq j$ and $R$ such that $d\in D_i$, $d'\in D_j$, $((x_i,x_j),R)\in C$ and $(d,d')\in R$. Finally, for all $i\in[k]$ and $d\in D_i$, we set $col(d)=x_i$. Notice that for all $0\leq \alpha\leq 1$, there exists $C'\subseteq C$ of size at least $\alpha|C|$ such that there exists an evaluation satisfying $C'$ if and only if there exists a subgraph $G'$ of $G$ with at least $\alpha|E(G)|$ edges such that there exists a colorful mapping of $G'$ into $H$. Hence, by Lemma \ref{lem:toBCSP3}, we conclude that the lemma is correct.
\end{proof}

We remark that it is also straightforward to see that if $\epsilon$-{\sc gap-PSI} is \W[1]-hard, then $\epsilon$-{\sc gap-BCSP$_3$} is \W[1]-hard, and hence $\epsilon$-{\sc gap-BCSP} is \W[1]-hard as well.

Finally, we ready to prove the correctness of Theorem~\ref{thm:doctHard2}. 

\begin{proof}[Proof of Theorem \ref{thm:doctHard2}]
Suppose that there exists $\epsilon>0$ for which there does not exist an \FPT algorithm for $\epsilon$-{\sc gap-PSI}. Let $\delta=\delta(\epsilon)$ be defined according to Lemma \ref{lem:direction2}. We claim that {\sc A-DOCT} does not admit a $(1+\delta)$-approximation algorithm that runs in time $f(k)\cdot n^{\bigoh(1)}$ for any function $f$. By Lemma \ref{lem:equivConjecture} and Corollary \ref{cor:annotate}, we would thus conclude the correctness of Theorem~\ref{thm:doctHard2}. Suppose, by way of contradiction, that our claim is false. Then, let $\cal B$ be a $(1+\delta)$-approximation algorithm for {\sc A-DOCT} that runs in time $f(k)\cdot n^{\bigoh(1)}$ for some function $f$. We define an algorithm, Algorithm $\cal A$ as follows. Given an instance $(H,G,col)$ of {\sc PSI}, it constructs the instance $(D,k,\ell,w)$ of {\sc A-DOCT} as described in Section \ref{sec:reduction}, and calls Algorithm $\cal B$ with $(D,k,\ell,w)$ as input. Then, if $\cal B$ outputs \No, then $\cal A$ outputs \No, and otherwise $\cal A$ outputs \Yes. By Observations \ref{obs:budget} and \ref{obs:polyTime}, Algorithm $\cal A$ runs in time $f(|E(H)|)\cdot |I|^{\bigoh(1)}$ where $|I|$ is the size of the input instance. On the one hand, by Lemma \ref{lem:direction1}, if Algorithm $\cal A$ is given as input a \yesinstance\ of {\sc PSI}, then it constructs a \yesinstance\ of {\sc A-DOCT}. Next, since $\cal B$ is a $(1+\delta)$-approximation algorithm for {\sc A-DOCT}, Algorithm $\cal A$ outputs \Yes. On the other hand, suppose that Algorithm $\cal A$ is given as input an instance $(H,G,col)$ of {\sc PSI} for which there does not exist a subgraph $G'$ of $G$ with at least $(1-\epsilon)|E(G)|$ edges such that there exists a colorful mapping of $G'$ into $H$. By Lemma \ref{lem:direction2}, Algorithm $\cal A$ constructs an instance $(D,k,\ell,w)$ of {\sc A-DOCT} which does not admit a $(1+\delta)$-approximate solution. Then, since $\cal B$ is a $(1+\delta)$-approximation algorithm for {\sc A-DOCT}, Algorithm $\cal A$ outputs \No. We have thus reached a contradiction to the choice of $\epsilon$. This concludes the proof of Theorem~\ref{thm:doctHard2}.
\end{proof}

\section{Parameterized Approximation}
%\hlb{TODO: MSR.}

\newcommand{\restrictedcwc}{{\sc Restricted Colorful Walk Cover}}
\newcommand{\cwcomp}{{\sc Colorful Walk Cover Compression}}

\newcommand{\cwc}{{\sc Colorful Walk Cover}}
\newcommand{\bundle}{{\sf Bundle}}
\newcommand{\cZ}{{\cal Z}}

  We are now ready to begin the section on the parameterized approximation algorithm for {\sc DOCT}. We will in fact prove a more general result by giving a parameterized approximation algorithm for a covering problem on labeled digraphs which generalizes both {\sc DOCT} and the {\sc Node Unique Label Cover} problem~\cite{ChitnisCHPP16}. Before we formally define labeled digraphs, we need the following notation.

For $\ell\in {\mathbb N}$, $S_\ell$ denotes the symmetric group, which is the set of all permutations of the set $[\ell]$. For $\sigma_1,\sigma_2\in S_\ell$, we denote by $\sigma_1\circ \sigma_2$ the permutation obtained by composing $\sigma_1$ and $\sigma_2$ as follows. For every $i\in [\ell]$, $\sigma_1\circ \sigma_2 (i)=\sigma_2( \sigma_2(i))$. Finally, unless otherwise specified, all paths and walks we refer to in this section are directed.

Let $D$ be a digraph with possible self-loops. A {\em cycle cover} of $D$ is a set $\{C_1,\dots, C_r\}$ of vertex-disjoint cycles such that every vertex is part of some cycle. That is, $V(D)=\bigcup_{i\in [r]} V(C_i)$. We point out that cycles of length 1 which correspond to self-loops are allowed to be part of a cycle cover. It is easy to see that the cycle covers of $V(D)$ are in one-to-one correspondence with the permutations of $V(D)$.

\begin{definition}
	Let $D$ be a digraph and $\sigma:A(D)\to S_\ell$ be an assignment of permutations of $[\ell]$ to the arcs of $D$. We call the pair $(D,\sigma)$ a {\em labeled digraph}. When $\sigma$ is clear from the context, we just use $D$ to denote the labeled digraph $(D,\sigma)$. For a set $Z\subseteq V(D)$, we denote by $\sigma|_{Z}$, the restriction of $\sigma$ to arcs with both endpoints in $Z$.
	For a directed walk  $P=v_1,\dots,v_r$ in $D$, we denote by $\sigma(P)$ the composed permutation $\sigma((v_1,v_2))\circ \dots \circ \sigma((v_{r-1},v_r))$.
		\end{definition}

	In all labeled digraphs we work with in this section, there are no {\em duplicate} arcs. That is, there is no pair $a,a'=(u,v)\in A(D)$ such that $\sigma(a)=\sigma(a')$. However, we may have multiple arcs from $u$ to $v$ labeled with distinct elements of $S_\ell$. We are now ready to define our main combinatorial structure which, as we will show, generalizes odd cycles in directed graphs.

	\begin{definition}
		Let $(D,\sigma)$ be a labeled digraph. Let $H$ be a strongly connected subgraph of $D$ and let $v\in V(H)$. We say that $H$ is a $v$-{\em colorful walk} if for every $i\in [\ell]$,  
		$H$ contains a closed $v$-walk $W$ such that $\sigma(W)(i)\neq i$.
	\end{definition}

	\begin{definition}
		Let $(D,\sigma)$ be a labeled digraph and $H$ be a strongly connected subgraph of $D$. We say that $H$ is a {\em colorful walk} if it is a $v$-colorful walk for every $v\in V(H)$. A set $S\subseteq V(D)$ intersecting every colorful walk  contained in $D$ is called a {\em colorful walk cover} of $D$. The explicit reference to $D$ is ignored when $D$ is clear from the context.
	\end{definition}

	Before we proceed, we provide a short proof of the fact that colorful walks generalize odd cycles and odd closed walks in digraphs. 
Let $\ell=2$ and let $\pi$ denote the permutation $(1$ $2)$. That is, $\pi(1)=2$ and $\pi(2)=1$. Furthermore, 
suppose that for a digraph $D$,  we assign $\sigma(a)=\pi$ for every $a\in A(D)$. Then, observe that $D$ has a colorful walk if and only if it has an odd  closed walk. Therefore, colorful walks generalize odd closed walks and since a directed graph has an odd cycle if and only if it has an odd closed walk,  the following problem is a  generalization of {\sc DOCT}.

	\medskip

\defparproblem{{\sc Colorful Walk Cover}}{A digraph $D$, function $\sigma:A(D)\to S_\ell$, and integer $k$.}{$k,\ell$.}{Does $D$ have a colorful walk cover of size at most $k$? 
%there exist a set $S\subseteq V(G)$ such that $|S|\leq k$ and $D-S$ has no colorful walks?
}

\medskip

Due to Theorem \ref{thm:doctHard1} and the fact that {\sc DOCT} is a special case of {\sc Colorful Walk Cover} when $\ell=2$, we get the following corollary. 

\begin{corollary}
	The {\sc Colorful Walk Cover} problem is {\WOH}  even for $\ell=2$. 
	Furthermore, assuming the ETH there is no algorithm for {\sc Colorful Walk Cover} with running time $f(k,\ell)n^{o(k/\log k)\cdot g(\ell)}$ for any functions $f$ and $g$.
	\end{corollary}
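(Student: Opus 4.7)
The plan is to exhibit a parameter-preserving polynomial-time reduction from \doct{} to \cwc{} with $\ell=2$, and then invoke Theorem~\ref{thm:doctHard1}. Given a \doct{} instance $(D,k)$, define $\sigma\colon A(D)\to S_2$ by $\sigma(a)=\pi$ for every arc $a$, where $\pi=(1\ 2)$ is the non-identity transposition in $S_2$, and output the labeled instance $((D,\sigma),k)$ together with $\ell=2$. The construction is trivially polynomial-time and preserves $k$ exactly, while $\ell$ stays at the constant $2$.

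The main step is to verify that a set $S\subseteq V(D)$ is a directed odd cycle transversal of $D$ if and only if $S$ is a colorful walk cover of $(D,\sigma)$. The key observation is that under this constant labeling, for every closed walk $W$ of length $m$ we have $\sigma(W)=\pi^m$, which equals the identity when $m$ is even and equals $\pi$ when $m$ is odd. Hence $\sigma(W)(i)\neq i$ for both $i\in[2]$ precisely when $|W|$ is odd, and thus a strongly connected subgraph $H$ of $D$ is a colorful walk iff every $v\in V(H)$ lies on some odd closed $v$-walk in $H$.

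With this characterization in hand, the equivalence is immediate. In one direction, if $H$ is a colorful walk in $D\setminus S$, then $H$ contains an odd closed walk, which in turn contains a directed odd cycle inside $D\setminus S$, contradicting $S$ being a DOCT. In the other direction, every directed odd cycle $C$ of $D$ is strongly connected and yields an odd closed $v$-walk at each of its vertices $v$, so $C$ itself is a colorful walk; any colorful walk cover must therefore hit every odd cycle of $D$. Consequently the \W[1]-hardness of \doct{} established in Theorem~\ref{thm:doctHard1} transfers verbatim to \cwc{} at $\ell=2$.

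For the ETH statement, suppose for contradiction that \cwc{} admits an algorithm with running time $f(k,\ell)\cdot n^{o(k/\log k)\cdot g(\ell)}$ for some functions $f,g$. Running this algorithm on the output of the reduction above, and using that $g(2)$ is an absolute constant, yields a \doct{} algorithm in time $f(k,2)\cdot n^{o(k/\log k)\cdot g(2)}=f'(k)\cdot n^{o(k/\log k)}$, contradicting the lower bound in Theorem~\ref{thm:doctHard1}. No genuine obstacle is anticipated: the reduction is essentially definitional, and the only point requiring care is the elementary permutation calculation that collapses the colorful-walk condition at $\ell=2$ to the presence of odd closed walks.
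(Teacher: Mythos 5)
Your proposal is correct and follows essentially the same route as the paper: assign the transposition $(1\ 2)$ to every arc, observe that the colorful-walk condition at $\ell=2$ collapses to the existence of an odd closed walk (equivalently, an odd directed cycle), conclude DOCT is a special case of CWC at $\ell=2$, and transfer the $\W[1]$-hardness and ETH lower bound from Theorem~\ref{thm:doctHard1}. The paper states this in one line as ``DOCT is a special case of CWC when $\ell=2$''; you have simply written out the verification.
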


We now complement this negative result with a positive approximation result. Recall that due to \cite{KhotV15}, we know that there is no constant factor approximation for the \emph{optimization} version of {\sc DOCT} and hence for {\sc Colorful Walk Cover} assuming the Unique Games Conjecture.  
However, we show that if allowed {\FPT} time, then the optimization version of even {\sc Colorful Walk Cover} can be  approximated up to a constant factor.  Recall that an algorithm for the optimization version of {\cwc} is an $\alpha$-approximation algorithm if it always outputs either a colorful walk cover of size at most $\alpha\cdot k$ or {\sc No}, where if the input instance is a yes-instance, then the algorithm cannot output {\sc No}. We now state this result formally.

\begin{restatable}{theorem}{fptapprox}\label{thm:fpt_approx}
{\sc Colorful Walk Cover} admits an $\ell^{\Oh(k+\ell)} 2^{\Oh(k^2)}n^{\Oh(1)}$ time \FPT-approximation algorithm with approximation ratio $2$.
\end{restatable}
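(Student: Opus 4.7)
The plan is to lift the FPT-approximation strategy sketched for \doct{} in the introduction to the labeled setting. Two new ingredients are needed: a combinatorial witness for the absence of colorful walks that plays the role of the bipartite $2$-coloring used for \doct, and an exact FPT subroutine that solves the special case in which a single undeletable vertex already forms a colorful walk cover.

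The first step is iterative compression. Processing the vertices of $D$ one by one while maintaining a colorful walk cover $\hat S$ of size at most $2k+1$, we reduce to the compression task: given $(D,\sigma,k,\hat S)$, output a cover of size at most $2k$ disjoint from $\hat S$ or certify that no cover of size at most $k$ exists. At a branching cost of $(\ell+1)^{2k+1}(2k+1)! = \ell^{\Oh(k)} 2^{\Oh(k\log k)}$, we guess (i) which vertices of $\hat S$ belong to the solution, (ii) an element of $[\ell]$ for each surviving vertex of $\hat S$, and (iii) a linear ordering of those surviving vertices that will serve as the topological order of the SCCs of $D-S-S^*$ containing an $\hat S$-vertex.

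The combinatorial core, replacing Proposition~\ref{prop:underlyingBipartite}, is the following characterization: a strongly connected labeled digraph $(H,\sigma)$ has no colorful walk if and only if there exists a \emph{consistent labeling} $\phi:V(H)\to [\ell]$ satisfying $\sigma(u,v)(\phi(u))=\phi(v)$ for every arc $(u,v)\in A(H)$. The ``if'' direction is immediate, as $\sigma(W)$ fixes $\phi(v)$ for every closed $v$-walk $W$. For the ``only if'' direction, pick a vertex $v$ and an index $i$ that is not moved by any closed $v$-walk (such an $i$ exists by assumption), and set $\phi(u):=\sigma(W_{v\to u})(i)$ for any $v$-to-$u$ walk $W_{v\to u}$; a short walk-composition argument, using that closed $v$-walks fix $i$, shows $\phi$ is well defined and consistent. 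Consequently, certifying that $D-S-S^*$ has no colorful walk reduces to partitioning its vertex set into topologically ordered SCCs, each admitting a consistent labeling.

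With the branching fixed, we invoke the skew-separator subroutine of Chen et al.~\cite{ChenLLOR08} in time $4^k n^{\Oh(1)}$ to compute, when possible, a set $S$ of size at most $k$ such that the SCCs of $D-S$ respect the guessed ordering of $\hat S$; in particular, each SCC of $D-S$ contains at most one vertex of $\hat S$. SCCs disjoint from $\hat S$ contain no colorful walk because $\hat S$ is a cover of $D$. For each SCC $C$ containing a unique $\hat S$-vertex $w$ with guessed label $i$, we are left with the exact subproblem of finding a minimum-size set $S^*_C\subseteq V(C)\setminus \hat S$ such that $D[C-S^*_C]$ admits a consistent labeling mapping $w$ to $i$. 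We apply the shadow-removal framework of Marx and Razgon~\cite{MarxR14} at a cost of $2^{\Oh(k^2)}n^{\Oh(1)}$ to reduce to the case where $D[C]-S^*_C$ is strongly connected and still contains $w$; by the characterization above, this is exactly an instance of {\sc Node Unique Label Cover} with alphabet size $\ell$, solvable in time $\ell^{\Oh(k+\ell)}n^{\Oh(1)}$ by the algorithm of~\cite{LokshtanovRSULC16}. Aggregating the $S^*_C$ over all SCCs yields either a set $S^*$ of size at most $k$, in which case $S\cup S^*$ is a cover of size at most $2k$, or a certificate that no cover of size at most $k$ exists, proving the $2$-approximation guarantee within the claimed running time. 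The main obstacle is to verify that shadow removal, originally formulated for reachability-based cuts, faithfully preserves the Node ULC structure of the cleanup step; this reduces to re-casting permutation-labeled arcs as reachability constraints indexed by label-pairs and checking that the standard torso and random-sampling lemmas go through in this enriched setting.
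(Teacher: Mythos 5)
Your plan follows the same high-level route as the paper — iterative compression, branching over the fate of $\hat S$, skew separation, shadow removal, and a reduction to {\sc Node Unique Label Cover} in the strongly connected case — but it has a genuine gap in the branching step. You guess a \emph{linear} ordering of the surviving vertices of $\hat S$ and then insist that the skew separator place each of them into a distinct strongly connected component. This requirement need not be satisfiable: the optimal cover $S$ can legitimately leave two or more vertices of $\hat S\setminus S$ inside a single strongly connected component of $D-S$ (such a component simply carries a consistent labeling involving all of them), and the minimum vertex cut between those vertices can be much larger than $k$. In that case every linear ordering you try leads to a failed skew-separator call, and your algorithm would wrongly declare ``no solution of size $k$.'' The paper avoids this by guessing an \emph{ordered partition} $\cW=\{W_1,\dots,W_r\}$ of $\hat S\setminus Y$ (the blocks being the $\hat S$-vertices that share an SCC), guessing a labeling $\Gamma$ that is consistent inside each block, and then applying the \bundle{} operation that glues each block into a strongly connected set with arcs labeled according to $\Gamma$ \emph{before} invoking the skew separator. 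The {\restrictedcwc} subproblem is deliberately stated with a cover $W$ such that $D[W]$ is strongly connected, precisely to accommodate these multi-vertex blocks; your singleton assumption loses exactly this generality.

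A secondary point: after shadow removal you say the residual subproblem ``is exactly an instance of {\sc Node Unique Label Cover}.'' That is only true after passing to $\doubling(D)$: shadowlessness gives that the surviving component around the terminal is strongly connected, and Lemma~\ref{lem:iff_colorful_undirected} (together with Observation~\ref{obs:shadowless_same_doubling}) is what lets one replace $D$ by $\doubling(D)$ without changing the answer. You correctly flag that adapting shadow removal to permutation-labeled arcs is ``the main obstacle,'' and indeed it is: the paper introduces a dedicated {\sf labeled-torso} operation built from an auxiliary digraph on $[\ell]$ and its cycle covers (Lemma~\ref{lem:find_permutations} and Definition~\ref{def:labeled_torso_define}); without that construction, and the accompanying walk-preservation lemma, the shadow-removal step of your plan is not yet established.
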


We obtain our 2-approximation algorithm (Theorem~\ref{thm:doctappx}) for the optimization version of {\sc DOCT} as a consequence of this result.
%
%%\doctappx*
%
%
%
The rest of this section is therefore dedicated to proving Theorem \ref{thm:fpt_approx}. The crux of our approximation algorithm is an {\FPT}  algorithm (see Lemma~\ref{lem:restricted_fpt}) for the following variant of the {\cwc} problem.
\medskip

\defparproblem{{\sc Restricted Colorful Walk Cover (Restricted CWC)}}{A digraph $D$, function $\sigma:A(D)\to S_\ell$,  integer $k$ and a colorful walk cover $W\subseteq V(D)$ of size at most $2k+1$ such that $D[W]$ is strongly connected.}{$k,\ell$.}{Does there exist a colorful walk cover  of size at most $k$ which is disjoint from $W$? }

\begin{restatable}{lemma}{restrictedfpt}\label{lem:restricted_fpt}
{\sc Restricted CWC} has an algorithm running in time $\ell^{\bigoh(k+\ell)}   2^{\bigoh(k^2)}n^{\bigoh(1)}$.
\end{restatable}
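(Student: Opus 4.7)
The plan is to follow the template of the Chen--Liu--Lu--O'Sullivan--Razgon DFVS algorithm, augmented with Marx--Razgon shadow removal and a structural characterization of colorful-walk-free strongly connected graphs analogous to Proposition~\ref{prop:underlyingBipartite}.

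First I would establish the key structural lemma: a strongly connected labeled digraph $H$ contains no colorful walk if and only if there exists a labeling $\lambda:V(H)\to[\ell]$ with $\sigma((u,v))(\lambda(u))=\lambda(v)$ for every arc $(u,v)\in A(H)$. The ``if'' direction is immediate since for any closed $v$-walk $W$ in any strongly connected subgraph, $\sigma(W)$ then fixes $\lambda(v)$. For ``only if'' I pick $v_0\in V(H)$ and $i_0\in[\ell]$ with $\sigma(W)(i_0)=i_0$ for every closed $v_0$-walk $W$ in $H$ (such a pair exists because $H$ is in particular not a $v_0$-colorful walk), and set $\lambda(u)=\sigma(P_u)(i_0)$ along any $v_0$-to-$u$ path $P_u$; well-definedness is forced by composing two candidate paths with a common $u$-to-$v_0$ return walk and noting that both resulting closed $v_0$-walks fix $i_0$, then using that $\sigma(Q)$ is a bijection. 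With this characterization in hand, I first test whether $D[W]$ contains a colorful walk (if so, output \No, since any solution $S^*$ disjoint from $W$ leaves $D[W]$ intact), and otherwise enumerate the at most $\ell$ consistent labelings $\lambda_W:W\to[\ell]$ by branching on the label of a fixed $w_0\in W$ and propagating along $D[W]$.

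Second, for each fixed $\lambda_W$, I apply Marx--Razgon shadow removal: define the shadow of a candidate $S^*$ as $V(D-S^*)$ minus the strong component containing $W$, and use random sampling (or its derandomization) with $2^{\Oh(k^2)}n^{\Oh(1)}$ overhead to torso away the shadow. Soundness rests on the fact that the event ``$D-S^*$ contains no colorful walk'' is stable under taking torsos on $V(D)\setminus R$ for a sample $R$ disjoint from $S^*$: by the characterization, a consistent labeling on $D-S^*$ restricts to a consistent labeling on any torso, and conversely each torso arc corresponds to an honest directed path in $D-S^*$ with a well-defined composed permutation that the labeling must respect. After this step I may assume a minimum solution $S^*$ has empty shadow, so $D-S^*$ is strongly connected and is covered by $S^*$ precisely when it admits a consistent labeling extending $\lambda_W$. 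Third, the shadow-free case reduces to a labeled node-deletion problem in the auxiliary digraph $D^\dagger$ on $V(D)\times[\ell]$ with arc $((u,i),(v,j))$ whenever $(u,v)\in A(D)$ and $\sigma((u,v))(i)=j$, and with every $w\in W$ pinned to layer $\lambda_W(w)$. A solution $S^*$ of size $\le k$ exists iff we can delete $\le k$ ``columns'' $\{v\}\times[\ell]$ from $D^\dagger$ so that for every remaining $v$, all $W$-to-$(v,\cdot)$ propagations collapse onto a single layer. This is a node-deletion variant of Unique Label Cover over $S_\ell$; it is solvable in $\ell^{\Oh(k+\ell)}n^{\Oh(1)}$ time by adapting the Lokshtanov--Ramanujan--Saurabh framework~\cite{LokshtanovRSULC16}, where the $\ell^{\Oh(\ell)}$ factor covers branching over group elements and the $\ell^{\Oh(k)}$ factor arises from label choices during bounded search. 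Multiplying the $\le\ell$ choices of $\lambda_W$, the $2^{\Oh(k^2)}$ shadow-removal overhead, and the $\ell^{\Oh(k+\ell)}$ subroutine yields the claimed $\ell^{\Oh(k+\ell)}2^{\Oh(k^2)}n^{\Oh(1)}$ bound.

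The main obstacle I expect is showing that shadow removal interacts cleanly with the labeled setting, since torso contractions create arcs whose labels are compositions of original permutations and in principle could destroy the intended structure; the Step~1 characterization is precisely the tool that makes this transparent, since ``existence of a colorful walk in $D-S^*$'' translates to ``non-extendability of $\lambda_W$ to a consistent labeling,'' a property visibly stable under torsos that respect composed labels. A secondary difficulty is invoking the ULC-style subroutine while keeping the $\ell$-dependence confined to $\ell^{\Oh(k+\ell)}$, which requires the reduction to $D^\dagger$ to not expand the effective label alphabet beyond $[\ell]$ and to export from the subroutine a genuine deletion set in $V(D)\setminus W$ rather than an abstract label assignment.
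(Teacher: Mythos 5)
Your high-level plan matches the paper's proof of Lemma~\ref{lem:restricted_fpt} closely: a consistent-labeling characterization of colorful-walk-free strongly connected graphs, Marx--Razgon/Chitnis-et-al.\ shadow covering with $2^{\Oh(k^2)}$ overhead, a labeled torso operation that composes permutations along paths through removed vertices (this is indeed where the paper picks up the $\ell^{\Oh(\ell)}$ factor, via enumerating cycle covers of a digraph on $[\ell]$), and finally a {\sc Node Unique Label Cover} call. However, there is a genuine soundness gap in your Step~3, and it sits exactly where the paper deploys the doubling lemma (Lemma~\ref{lem:iff_colorful_undirected} and Observation~\ref{obs:shadowless_same_doubling}).

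Your reduction asserts ``a solution $S^*$ of size $\le k$ exists iff we can delete $\le k$ columns from $D^\dagger$ so that all $W$-to-$(v,\cdot)$ propagations collapse onto a single layer.'' The forward direction (shadowless $S^*$ implies the collapse condition) is fine. The reverse direction fails: a deletion set $S'$ satisfying the collapse condition need not be a colorful walk cover of the torso'd digraph $D'$. After deleting $S'$, $D'-S'$ may have several strongly connected components; vertices in a component not containing $W$ can be unreachable (even bidirectionally) from $W$, so the ``propagations collapse'' condition is vacuous there, yet that component may still contain a colorful walk. The torso step only guarantees that \emph{some} solution of the torso'd instance is shadowless, not that \emph{every} solution returned by your ULC-style subroutine is shadowless, so you cannot assume $D'-S'$ is strongly connected when arguing soundness. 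Moreover, a ``ULC-style'' deletion subroutine over a general digraph $D^\dagger$ is essentially {\cwc} itself, which is \W[1]-hard; you cannot invoke the Lokshtanov--Ramanujan--Saurabh or Iwata et al.\ algorithms on it without first symmetrizing. The paper closes this gap by running {\sc Node ULC} on $\doubling(D')$: any colorful walk cover of $\doubling(D')$ is trivially a colorful walk cover of $D'\subseteq\doubling(D')$ (soundness is free), while Observation~\ref{obs:shadowless_same_doubling} gives completeness because a shadowless $S$ makes $D'-S$ strongly connected, whence $\doubling(D'-S)=\doubling(D')-S$ has no colorful walk by the iff-lemma for strongly connected graphs. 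You should add the doubling step between the torso and the ULC call; without it the reduction is not an equivalence.

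A minor secondary point: you attribute the $\ell^{\Oh(\ell)}$ factor to ``branching over group elements'' inside the ULC subroutine, but the {\sc Node ULC} algorithm of Iwata et al.\ (Proposition~\ref{prop:nodeulc_is_easy}) runs in $\ell^{2k}n^{\Oh(1)}$ with no $\ell^{\ell}$ term; the $\ell^{\ell}$ overhead actually comes from computing the composed-permutation label sets during the torso step (Lemma~\ref{lem:find_permutations}). This doesn't change the final bound but does change where the cost is paid.
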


Note that for an instance $(D,\sigma,k,W)$ of {\restrictedcwc}, we refer to a set $S$ which is a colorful walk cover of size at most $k$ and disjoint from $W$, as a  \emph{solution} for this instance.
The rest of the section is organized as follows. In the first subsection, we recall the notions of \emph{shadows} and {\em shadow covering} and describe the relation between the {\sc Colored Walk Cover} problem and the well-studied {\sc Node Unique Label Cover} ({\sc Node ULC}) problem. This relation will play an important role in our algorithm for {\restrictedcwc}.
 The second subsection is dedicated to a problem-specific {\em shadow removal} subroutine and the full  proof of Lemma~\ref{lem:restricted_fpt}. In the final subsection, we prove Theorem~\ref{thm:fpt_approx} using Lemma~\ref{lem:restricted_fpt}.

\newcommand{\doubling}{{\sf Doubling}}

\subsection{Shadow Covering and Connections to {\sc Node Unique Label Cover}}

We begin with the following lemma which  shows that any subgraph which is a colorful walk with respect to some vertex in it is in fact a colorful walk with respect to every vertex in it.  
	
	\begin{lemma}
		Let $(D,\sigma)$ be a labeled digraph and $H$ be a strongly connected subgraph of $D$. Then, $H$ is a colorful walk if and only if it is a $v$-colorful walk for some $v\in V(H)$.
	\end{lemma}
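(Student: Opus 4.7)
The forward direction is immediate from the definition: if $H$ is a colorful walk, then by definition it is a $v$-colorful walk for every $v \in V(H)$, hence for some $v \in V(H)$.

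For the reverse direction, suppose that $H$ is a $v$-colorful walk for some $v \in V(H)$, and fix an arbitrary $u \in V(H)$ and $i \in [\ell]$; I must exhibit a closed $u$-walk $W'$ in $H$ with $\sigma(W')(i) \neq i$. Since $H$ is strongly connected, pick a directed $u$-to-$v$ walk $P_1$ and a directed $v$-to-$u$ walk $P_2$ in $H$. Write $\pi_1 = \sigma(P_1)$ and $\pi_2 = \sigma(P_2)$. The concatenation $P_1 + P_2$ is a closed $u$-walk, and unwinding the composition convention gives $\sigma(P_1+P_2)(i) = \pi_2(\pi_1(i))$. If this already differs from $i$, we are done with $W' = P_1 + P_2$.

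Otherwise $\pi_2(\pi_1(i)) = i$, which forces $\pi_1(i) = \pi_2^{-1}(i)$; call this common value $i'$. Since $H$ is a $v$-colorful walk, there is a closed $v$-walk $W_{i'}$ in $H$ with $\rho(i') \neq i'$, where $\rho = \sigma(W_{i'})$. Now form the closed $u$-walk $W' = P_1 + W_{i'} + P_2$. Applying $\sigma(W')$ to $i$ yields $\pi_2(\rho(\pi_1(i))) = \pi_2(\rho(i'))$. Since $\rho(i') \neq i'$ and $\pi_2$ is a permutation, $\pi_2(\rho(i')) \neq \pi_2(i') = \pi_2(\pi_2^{-1}(i)) = i$, so $W'$ is a closed $u$-walk moving $i$. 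As $u$ and $i$ were arbitrary, $H$ is a $v'$-colorful walk for every $v' \in V(H)$, i.e.\ $H$ is a colorful walk.

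The only mild subtlety in executing this plan is being careful with the ``left-to-right'' composition convention of $\sigma$ declared earlier in the section, so that the case analysis on whether the ``shortcut'' walk $P_1+P_2$ already moves $i$ is stated correctly; once the convention is fixed the argument reduces to a one-line permutation calculation, so I do not anticipate any real obstacle.
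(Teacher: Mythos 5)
Your proposal is correct and is essentially the same argument as the paper's: both use $P_1$, $P_2$, define $i' = \sigma(P_1)(i)$, invoke the $v$-colorful-walk hypothesis at $i'$, and compare $\sigma(P_1+P_2)(i)$ with $\sigma(P_1+W+P_2)(i)$. The only cosmetic difference is that you case-split on whether $P_1+P_2$ already moves $i$, whereas the paper simply observes that $\sigma(W_1)(i)$ and $\sigma(W_2)(i)$ are distinct (since $\pi_2$ is injective and $\rho(i')\neq i'$), so at least one differs from $i$.
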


	\begin{proof}
		The forward direction is trivial and hence we now argue the converse direction. That is, suppose that $H$ is a $v$-colorful walk for some $v\in V(H)$. Let $u\in V(H)$ be a vertex distinct from $v$ and $i\in [\ell]$. In order to prove the lemma, it suffices to demonstrate the presence of a closed $u$-walk $Q$ in $H$ such that $\sigma(Q)(i)\neq i$.
		
		Let $P_1$ and $P_2$ be arbitrary $u$-$v$ and $v$-$u$ paths in $D$ respectively.  Let $i'=\sigma(P_1)(i)$ and let $W$ be an arbitrary $v$-walk  in $H$ such that $\sigma(W)(i')\neq i'$. Since $H$ is a $v$-colorful walk, $W$ exists. Now, consider the two closed $u$-walks $W_1=P_1+P_2$ and $W_2=P_1+W+P_2$ which are both contained in $H$. Observe that $\sigma(W_1)(i)$ and $\sigma(W_2)(i)$ are distinct and at least one of these is distinct from $i$. This completes the proof of the lemma.   
		\end{proof}

	We now define an auxiliary digraph which is the natural directed version of a graph first defined by Loksthanov et al.~\cite{LokshtanovRSULC16}. We then argue that it is possible to test in polynomial time whether for a given labeled digraph $(D,\sigma)$, vertices $u,v\in V(D)$ and $\alpha,\beta\in [\ell]$,  $D$ has a $u$-$v$ walk which `maps' $\alpha$ to $\beta$. This subroutine will be used in several places in our algorithm, including in the step where we test whether the input has a colorful walk.

	\begin{definition}
	With every labeled digraph $(D,\sigma)$, we associate an auxiliary digraph $H_{D,\sigma}$ (we ignore the reference to $\sigma$ when clear from the context) which is  defined as follows. The vertex set of $H_{D}$ is $\{v_i|v\in V(D), i\in [\ell]\}$.  
%	The vertex $v_i$ is meant to represent `an assignment of i' to the vertex $v$.
	 The arc set of $H_D$ is defined as follows. For every arc $a=(u,v)$ and for every $i\in [\ell]$, we have an arc $(u_i,v_{\sigma(a)(i)})$. That is, we add an arc from $u_i$ to $v_j$ where $j$ is the image of $i$ under the permutation $\sigma(a)$.
	\end{definition}

	\begin{lemma}\label{lem:auxiliary_equivalence} Let $(D,\sigma)$ be a labeled digraph and let $u$ and $v$ be (not necessarily distinct) vertices in  $V(D)$.
	For every $i,j\in [\ell]$, there is a $u$-$v$ walk $W$ in $D$ such that $\sigma(W)(i)=j$ if and only if there is a $u_i$-$v_j$ path $P$  in the digraph $H_D$.
	\end{lemma}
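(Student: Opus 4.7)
The plan is to prove both directions by a straightforward induction on the length of the walk (respectively path), exploiting the one-to-one correspondence built into the definition of $H_D$: every arc $a=(x,y)\in A(D)$, together with an index $s\in[\ell]$, gives rise to exactly one arc $(x_s, y_{\sigma(a)(s)})$ of $H_D$, and every arc of $H_D$ arises in this way. The two directions mirror each other, and any ``walk'' in $H_D$ we construct can be shortened to a path, which is the only inequivalence between the two statements.

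For the forward direction, I would induct on the number $r-1$ of arcs of $W=v_1,v_2,\dots,v_r$, where $v_1=u$ and $v_r=v$. In the base case $r=1$ we have $u=v$ and $\sigma(W)$ is the identity permutation, so $i=j$ and the single vertex $u_i=v_j$ serves as the $u_i$-$v_j$ path. For the inductive step, set $a_t=(v_t,v_{t+1})$, define $i_1=i$ and $i_{t+1}=\sigma(a_t)(i_t)$, so that $i_r=\sigma(W)(i)=j$. By the construction of $H_D$, the sequence $(v_1)_{i_1},(v_2)_{i_2},\dots,(v_r)_{i_r}$ is a walk in $H_D$ from $u_i$ to $v_j$. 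Since every walk between two vertices in a digraph contains a path between them, this yields the required $u_i$-$v_j$ path.

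For the reverse direction, I would induct on the length of a $u_i$-$v_j$ path $P=w_1,w_2,\dots,w_r$ in $H_D$. Write each $w_t$ as $(v_t)_{i_t}$. The base case $r=1$ forces $u=v$ and $i=j$, and the trivial $v$-walk $u$ works. For the inductive step, the arc $(w_{r-1},w_r)$ of $H_D$ arises, by definition, from a unique arc $a=(v_{r-1},v_r)\in A(D)$ satisfying $\sigma(a)(i_{r-1})=i_r=j$. The prefix $w_1,\dots,w_{r-1}$ of $P$ is itself a $u_i$-$(v_{r-1})_{i_{r-1}}$ path in $H_D$, so the inductive hypothesis supplies a $u$-$v_{r-1}$ walk $W'$ in $D$ with $\sigma(W')(i)=i_{r-1}$. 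Concatenating $W'$ with $a$ produces a $u$-$v$ walk $W$ in $D$ satisfying $\sigma(W)(i)=\sigma(a)(i_{r-1})=j$, as desired.

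There is no real obstacle here; the only subtle points are (i) to allow the base case to handle length-zero walks (so that the statement covers $u=v$, $i=j$ uniformly with the other cases), and (ii) to observe that the walk produced in the forward direction may revisit vertices of $H_D$, which is harmless since we only need the existence of a path. Note that the argument uses the uniqueness of the ``label'' attached to each arc of $H_D$ only to reconstruct the corresponding arc of $D$ in the reverse direction; parallel arcs between the same pair of vertices in $D$ carrying different permutations are permitted and handled without any additional care.
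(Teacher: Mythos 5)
Your proof is correct and follows exactly the approach the paper gestures at (the paper's own proof is a one-line appeal to ``a straightforward induction on the length of the walk $W$ in the forward direction and that of the path $P$ in the converse direction''); your write-up simply fills in the details. One tiny inaccuracy: in the reverse direction the arc $(w_{r-1},w_r)$ need not arise from a \emph{unique} arc of $D$ (two parallel arcs with distinct permutations could happen to agree on the single index $i_{r-1}$), but since you only need \emph{some} such arc this has no effect on the argument.
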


	\begin{proof} The lemma follows by a straightforward induction on the length of the walk $W$ in the forward direction and that of the path $P$ in the converse direction. 
%	
%	In the forward direction for the base case, let $W$ be a $u$-$v$ walk of length 1 such that $\sigma(W)(i)=j$. Then, $W$ consists of just the arc $a=(u,v)$ and $\sigma(a)(i)=j$. Then, by the definition of the graph $H_{D}$, it follows that $(u_i,v_j)$ is an arc in this auxiliary digraph, implying a $u_i$-$v_j$ path of length 1. For the converse direction, we simply reverse this argument. 
%		
	\end{proof}
	
	Due to the above lemma, it is easy to verify whether a digraph $D$ has a colorful walk.

	\begin{observation}
		There is a polynomial time algorithm that, given a labeled digraph $(D,\sigma)$, decides whether $D$ has a colorful walk.
	\end{observation}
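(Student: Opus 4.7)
The plan is to reduce the existence of a colorful walk to a polynomially-checkable reachability condition in the auxiliary digraph $H_D$. Concretely, I would output \Yes\ if and only if there exists a vertex $v\in V(D)$ such that for every $i\in[\ell]$, there exists some $j\in[\ell]\setminus\{i\}$ with $v_j$ reachable from $v_i$ in $H_D$. By Lemma~\ref{lem:auxiliary_equivalence}, the latter condition is exactly equivalent to the existence of a closed $v$-walk $W$ in $D$ with $\sigma(W)(i)\neq i$. Since $H_D$ has size polynomial in that of $D$ and $\ell$, and reachability in a digraph is computable in polynomial time, the whole test runs in polynomial time.

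For correctness in the forward direction, suppose $D$ has a colorful walk $H$, and pick any $v\in V(H)$. By the preceding lemma (the one stating that being a $v$-colorful walk for some vertex is equivalent to being one for every vertex), $H$ is a $v$-colorful walk, so for each $i\in[\ell]$ there is a closed $v$-walk contained in $H$ (hence in $D$) that does not fix $i$, and the reachability condition in $H_D$ holds at $v$.

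For the converse direction, suppose the condition holds at some vertex $v$. For each $i\in[\ell]$, let $W_i$ be a closed $v$-walk in $D$ with $\sigma(W_i)(i)\neq i$, obtained from a witnessing $v_i$-$v_{j}$ path in $H_D$ via Lemma~\ref{lem:auxiliary_equivalence}. Let $H$ be the subgraph of $D$ formed by the union of the $W_i$'s. Since every $W_i$ begins and ends at $v$, every vertex of $H$ can reach $v$ along its walk and is reachable from $v$ along the same walk, so $H$ is strongly connected. By construction $H$ contains a closed $v$-walk not fixing $i$ for each $i\in[\ell]$, so $H$ is a $v$-colorful walk, and hence a colorful walk.

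The only mild subtlety is verifying that the union of the $W_i$ is strongly connected; this is immediate because each $W_i$ is a closed walk through the common vertex $v$, so no more refined argument is needed. Thus there is no real obstacle, and the observation follows.
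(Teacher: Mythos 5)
Your proof is correct and follows exactly the route the paper has in mind: the observation is stated immediately after Lemma~\ref{lem:auxiliary_equivalence} with the remark that it follows "due to the above lemma," and your argument spells out precisely that reduction — for each $v$ and each $i$, test whether some $v_j$ with $j\neq i$ is reachable from $v_i$ in $H_D$, invoking Lemma~\ref{lem:auxiliary_equivalence} in both directions and the preceding "$v$-colorful for some $v$ iff colorful" lemma to tie it up. The strong-connectivity check on the union of the closed $v$-walks is exactly the right observation and needs no further argument, as you note.
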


\begin{definition}
	Let $(D,\sigma)$ be a labeled digraph. We define by $\doubling(D)$ the labeled digraph $(D',\sigma')$ obtained as follows. Initially, $D'=D$ and $\sigma'=\sigma$. Then, for every $a= (u,v)\in A(D)$, we add an arc $a'\in (v,u)\in A(D')$ and set $\sigma'(a')=(\sigma(a))^{-1}$. 
	Finally, we remove duplicate arcs as follows. As long as there are arcs $a,a'=(u,v)\in A(D')$  such that $a\in A(D)$, $a'\in A(D')\setminus A(D)$ and $\sigma'(a)=\sigma'(a')$, we remove the arc $a'$. 
\end{definition}

%\begin{definition}
%		Let $D$ and $\sigma$ be as above. Then, the underlying undirected graph $\tilde D$ is said to have a {\em colorful subgraph} precisely when the labeled digraph $\doubling(D)$ has a colorful walk.
%	\end{definition}
	
	Observe that it is possible that $\doubling(D)=(D,\sigma)$. In fact such graphs will be of special interest to us.   
	The following lemma generalizes Proposition \ref{prop:underlyingBipartite}.

	\begin{lemma}\label{lem:iff_colorful_undirected} Let $(D,\sigma)$ be a labeled digraph, where $D$ is strongly connected. Then, $D$ has a colorful walk if and only if $\doubling(D)$ has a colorful walk.
		\end{lemma}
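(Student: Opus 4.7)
The forward direction is immediate: any colorful walk $H \subseteq D$ is also a subgraph of $\doubling(D)$ using the same arcs, and the closed $v$-walks witnessing $v$-colorfulness inside $H$ use only arcs of $D$, so they remain valid in $\doubling(D)$ with the same permutations.

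For the converse, my plan is to recast the question group-theoretically. For a strongly connected labeled digraph $(D',\sigma')$ and a vertex $v \in V(D')$, let $\Gamma_v(D') = \{\sigma'(W) : W \text{ a closed } v\text{-walk in } D'\} \subseteq S_\ell$. Since concatenation of walks corresponds to composition of permutations and the trivial walk gives the identity, $\Gamma_v(D')$ is a submonoid of $S_\ell$; being a submonoid of the finite group $S_\ell$, it is in fact a subgroup. I would then observe that for strongly connected $D'$, the existence of a colorful walk in $D'$ is equivalent to $\Gamma_v(D')$ having no fixed point in $[\ell]$ for some $v$: if some subgraph $H \subseteq D'$ is a $v$-colorful walk, then $\Gamma_v(H) \subseteq \Gamma_v(D')$ inherits the absence of fixed points, so $D'$ itself is $v$-colorful. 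Since $\doubling(D)$ is strongly connected whenever $D$ is, the lemma reduces to proving $\Gamma_v(D) = \Gamma_v(\doubling(D))$.

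The inclusion $\Gamma_v(D) \subseteq \Gamma_v(\doubling(D))$ is immediate. For the reverse inclusion, my plan is a local replacement: for each arc $a' = (y,x) \in A(\doubling(D)) \setminus A(D)$ (with $\sigma(a') = \sigma(a)^{-1}$ where $a = (x,y) \in A(D)$), I would exhibit a $y$-to-$x$ walk $Q$ in $D$ with $\sigma(Q) = \sigma(a)^{-1}$. Picking any $y$-to-$x$ walk $Q'$ in $D$ (available by strong connectedness) and setting $\tau = \sigma(Q')$, the closed $y$-walk $Q' + a$ in $D$ has permutation $\tau\sigma(a) \in \Gamma_y(D)$, so its inverse $\sigma(a)^{-1}\tau^{-1}$ also lies in $\Gamma_y(D)$, realized by some closed $y$-walk $R$ in $D$. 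Then $Q := R + Q'$ satisfies $\sigma(Q) = \sigma(a)^{-1}\tau^{-1} \cdot \tau = \sigma(a)^{-1}$. Splicing these replacements into an arbitrary closed $v$-walk of $\doubling(D)$ yields a closed $v$-walk of $D$ with the same overall permutation, giving $\Gamma_v(\doubling(D)) \subseteq \Gamma_v(D)$.

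The one subtle point I expect to check carefully is that $\Gamma_v(D')$ is a genuine subgroup rather than merely a submonoid of $S_\ell$; the finiteness of $S_\ell$ is essential here, since it lets every element have finite order and hence an inverse expressible as a positive power of itself. After this, everything else is routine bookkeeping about walk concatenation.
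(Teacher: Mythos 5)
Your proof is correct, and it takes a genuinely different route from the paper's. The paper works directly inside the auxiliary layered digraph $H_D$ on vertex set $\{v_i : v\in V(D), i\in[\ell]\}$: given a $v_\alpha$-$v_\beta$ path in $H_{\doubling(D)}$, it picks the \emph{last} vertex $x_\gamma$ on that path that is still reachable from $v_\alpha$ in $H_D$, observes that the next arc must correspond to a reverse arc of $D$ (hence $(y_\delta,x_\gamma)\in A(H_D)$), and then uses strong connectivity and the injectivity of permutations to produce a second reachable layer copy $x_\rho$ with $\rho\neq\gamma$, from which a $v_\alpha$-$v_\lambda$ path with $\lambda\neq\alpha$ is extracted. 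Your argument instead isolates the algebraic structure underneath: the set $\Gamma_v(D)$ of permutations realizable by closed $v$-walks is a submonoid of the finite group $S_\ell$, hence a subgroup, so realizable permutations have realizable inverses; this immediately lets you replace each reversed arc $a'=(y,x)$ with $\sigma(a')=\sigma(a)^{-1}$ by an honest $y$-to-$x$ walk in $D$ that induces exactly $\sigma(a)^{-1}$ (via a closed $y$-walk realizing $\sigma(a)^{-1}\tau^{-1}$ prepended to any $y$-to-$x$ walk with permutation $\tau$), and splicing gives $\Gamma_v(\doubling(D))=\Gamma_v(D)$. Your route proves a strictly stronger fact --- equality of the two transition groups, not merely the ``some nonfixed image exists'' property --- and makes the role of finiteness of $S_\ell$ explicit; the paper's route is more elementary bookkeeping and stays within the $H_D$ machinery it has already set up and reuses elsewhere (Lemma~\ref{lem:find_permutations}, the torso operation). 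Both arguments use strong connectivity of $D$ in an essential way (you to get the initial $y$-to-$x$ walk $Q'$; the paper to get the second reachable copy $y_{\delta'}$ and the return path $P_2$).
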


\begin{proof}
Since $\doubling(D)$ is a supergraph of $D$, it follows that if $D$ has a colorful walk then so does $\doubling(D)$. We now argue the converse.
Let $(D',\sigma')=\doubling(D)$ and suppose that $D'$ has a colorful walk. Let $v$ be an arbitrary vertex in this subgraph. We now argue that $D$ has a colorful walk as well. More specifically, we argue that $D$ has a $v$-colorful walk. 
By Lemma \ref{lem:auxiliary_equivalence}, it is sufficient to show that
for every $i\in [\ell]$, there is a $j\neq i$ such that $H_D$ has a $v_i$-$v_j$ path. On the other hand, Lemma \ref{lem:auxiliary_equivalence} also implies that for every $\alpha\in [\ell]$, there is a $\beta\neq \alpha$ such that $H_{D'}$ has a $v_\alpha$-$v_\beta$ path. Therefore, we fix $\alpha\in [\ell]$ and let $P$ be a $v_\alpha$-$v_\beta$ path in $H_{D'}$ where $\beta\neq \alpha$. Our objective now is to demonstrate the existence of a $\lambda\neq \alpha$ and a $v_\alpha$-$v_\lambda$ path in $H_D$.

 Let $x\in V(D')$ and $\gamma\in [\ell]$ be such that $x_\gamma$ is the \emph{last} vertex in the traversal of $P$ from $v_\alpha$ towards $v_\beta$ with the property that $H_D$ also has a  $v_\alpha$-$x_\gamma$ path $P_1$. If $x=v$ and $\gamma\neq \alpha$ then we are already done. Hence, we assume that this is not the case. More specifically, $x_\gamma\neq v_\beta$. However, observe that it could be the case that $x_\gamma$ is $v_\alpha$ itself. 

Let $y_\delta$ be the vertex that appears immediately after $x_\gamma$ in the traversal of $P$ starting from $v_\alpha$. Due to our choice of $x_\gamma$ it must be the case that $(y_\delta,x_\gamma)\in A(H_D)$ and there is no $x_\gamma$-$y_\delta$ path in $H_D$. However, since $D$ is strongly connected, it follows that there is a $\delta'\in [\ell]$ such that $H_D$ has a $v_\alpha$-$y_{\delta'}$ path, where $\delta'\neq \delta$. Finally, from the existence of the arc $(y_\delta,x_\gamma)$ in $H_D$, we infer the existence of an arc $(y_{\delta'},x_\rho)$, where $\rho\neq \gamma$. We have thus obtained a $v_\alpha$-$x_\gamma$ path and a $v_\alpha$-$x_\rho$ path in $H_D$, where $\gamma\neq \rho$.

Since $D$ is strongly connected, there is an $x$-$v$ path $P_2$ in $D$. Clearly, either $\sigma(P_2)(\gamma)\neq \alpha$ or $\sigma(P_2)(\rho)\neq \alpha$. We assume without loss of generality that it is the former and denote $\sigma(P_2)(\gamma)$ by $\lambda$. As a result, we obtain an $x_\gamma$-$v_\lambda$ path in $H_D$. Since we already have a $v_\alpha$-$x_\gamma$ path in $H_D$, we conclude that there is a $v_\alpha$-$v_\lambda$ path in $H_D$. This completes the proof of the lemma.	
\end{proof}

The {\sc Node Unique Label Cover} problem ({\sc Node ULC}) which was introduced in the parameterized complexity setting by Chitnis et al.~\cite{ChitnisCHPP16}, is a special case of the {\sc Colorful Walk Cover} problem where the input digraph $D$ has the property that $\doubling(D)=D$. There are several \FPT algorithms for  {\sc Node Unique Label Cover} parameterized by $\ell$ and $k$, with varying dependencies on the two parameters~\cite{ChitnisCHPP16,IwataWY16,LokshtanovRSULC16}. While any of these algorithms serve our purpose, we chose to use the algorithm of Iwata et al. which has the best dependence on the parameters albeit at the cost of a larger (but still polynomial) dependence on input size when compared to the other two.

\begin{proposition}{\rm \cite{IwataWY16}}
	\label{prop:nodeulc_is_easy}{\sc Node Unique Label Cover} can be solved in time $\bigoh(\ell^{2k}n^{\bigoh(1)})$.
\end{proposition}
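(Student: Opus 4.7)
The plan is to prove Proposition \ref{prop:nodeulc_is_easy} by the LP-branching framework that has become standard for parameterized cut problems (Almost-$2$SAT, Node Multiway Cut, Subset FVS, etc.), driven by a half-integrality property of a suitable relaxation of {\sc Node ULC}. First, I would cast {\sc Node ULC} as a labeling/CSP problem: assign a label $f(v)\in [\ell]$ to each retained vertex $v$ so that for every edge $\{u,v\}$ carrying permutation $\pi_{uv}$ one has $f(v)=\pi_{uv}(f(u))$, with the objective of minimizing the number of unassigned (i.e.\ deleted) vertices. Leaving $v$ unassigned contributes $1$ to the objective.

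Next, I would write the natural LP relaxation with variables $x_{v,i}\in[0,1]$ representing ``$v$ receives label $i$'', and $y_v = 1-\sum_{i\in[\ell]} x_{v,i}$ representing fractional deletion, subject to $\sum_i x_{v,i}\le 1$ and, for every arc $(u,v)$ with label $\pi$ and every $i\in[\ell]$, a propagation constraint such as $x_{v,\pi(i)} \ge x_{u,i} - y_u$. The structural core of the proof is then to show that this LP admits a \emph{half-integral} optimum (all $x_{v,i}$ and $y_v$ lying in $\{0,\tfrac12,1\}$) that is additionally \emph{persistent}: any integer assignment made by the fractional optimum can be extended to an integer optimum. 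The natural way to obtain this is to reduce the LP, via the auxiliary digraph $H_D$ of the previous subsection, to a fractional vertex multiway cut separating the $\ell$ copies $\{v_1,\dots,v_\ell\}$ of each original vertex. Half-integrality and persistence of such vertex-cut LPs are known via standard min-cut/skew-symmetric flow arguments.

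With half-integrality and persistence in hand, the algorithm is LP-guided branching. If the LP optimum exceeds $k$, reject. Otherwise pick a fractional pair $(v,i)$ with $x_{v,i}=\tfrac12$ and branch into at most $\ell+1$ subproblems: for each $j\in[\ell]$, commit $f(v)=j$ and propagate through the labeled arcs (which fixes a chain of other vertices and may trigger a deletion cascade), plus one additional branch that deletes $v$ outright. Persistence implies that in each such branch the LP optimum decreases by at least $\tfrac12$. Consequently, the search tree has depth at most $2k$ and branching factor $\ell+1$, producing at most $(\ell+1)^{2k}=\ell^{\Oh(k)}\cdot\ell^{2k}$ leaves; each leaf is either solved (the LP has integer optimum, yielding an explicit labeling and deletion set) or rejected in polynomial time. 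This yields the overall bound $\Oh(\ell^{2k}n^{\Oh(1)})$.

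The main obstacle is establishing the two structural properties of the LP simultaneously: half-integrality, which makes the $\tfrac12$ gap available at each branch, and persistence, which is what guarantees that committing a half-integer variable to one of its two integer values is safe and drops the LP by $\tfrac12$ rather than merely weakly decreasing it. These properties are not automatic for the labeled-digraph setting; they require identifying the right LP (essentially the one mirroring the vertex-cut LP of $H_D$, suitably symmetrized because $\doubling(D)=D$) and a careful duality/uncrossing argument in the auxiliary graph. Once these are in place, the rest of the argument---branching, propagation, recursion, and the $\ell^{2k}$ bound---follows the standard template.
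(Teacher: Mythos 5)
The paper does not prove Proposition~\ref{prop:nodeulc_is_easy}; it is cited verbatim from Iwata, Wahlstr\"om and Yoshida~\cite{IwataWY16} and used as a black box, so there is no in-paper proof to compare against. Your sketch does capture the overall strategy of the cited source --- an LP-guided branching algorithm driven by a half-integral, persistent relaxation of the label-assignment problem --- though the way \cite{IwataWY16} actually establishes half-integrality and persistence is via a discrete relaxation rooted in $k$-submodularity of the associated valued CSP, not via a reduction to a vertex-multiway-cut LP in $H_D$; your proposed derivation of those two properties would therefore have to be carried out from scratch and is not a mere restatement of the cited argument.

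There is also a concrete arithmetic gap in your recursion analysis. You branch into $\ell+1$ subproblems, each dropping the LP by $\tfrac12$, which gives $(\ell+1)^{2k}$ leaves. You then assert $(\ell+1)^{2k}=\ell^{\Oh(k)}\cdot\ell^{2k}$ and conclude the $\Oh(\ell^{2k}n^{\Oh(1)})$ bound, but that rewriting is not valid: $(\ell+1)^{2k}/\ell^{2k}=(1+1/\ell)^{2k}\le e^{2k/\ell}=2^{\Oh(k)}$, and even writing the extra factor as $\ell^{\Oh(k)}$ would only give a bound of $\ell^{\Oh(k)}n^{\Oh(1)}$, not $\ell^{2k}n^{\Oh(1)}$. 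To recover the claimed exponent $2k$, you need branching factor exactly $\ell$, which is what the cited algorithm achieves by folding the ``delete $v$'' alternative into the label branches (committing $v$ to a label that cannot be consistently propagated triggers the forced deletions, so no separate deletion branch is needed) rather than treating it as an $(\ell+1)$-st option.
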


%\begin{definition}
%	Let $D,\sigma$ be a labeled digraph. Then, a {\em consistent labeling} for $D$ is a function $\tau:V(D)\to [\ell]$ such that for every arc $a=(u,v)\in A(D)$, $\sigma(a)(\tau(u))=\tau(v)$.
%\end{definition}

%\begin{proposition}\cite{}\label{prop:consistent labeling}
%Let $D,\sigma$ be a labeled digraph and suppose that $\doubling(D)=D$. Then, $D$ has no colorful walks if and only if $D$ has a consistent labeling. Furthermore, if there is a consistent labeling, then one can be computed in polynomial time. 
%	\end{proposition}
%	

%At this point, we need to make a digression and 
It is straightforward to see that in an instance $(D,\sigma,k)$ of {\sc Node ULC}, we can \emph{forbid} any set of vertices to be part of the solution by simply making $k+1$ copies of it. That is, for a vertex $v$, we add new vertices $v_1,\dots, v_k$ and for every $i\in [k]$ and $(v,u)\in A(D)$ we add an arc $(v_i,u)$ and set $\sigma((v_i,u))=\sigma((v,u))$. This ensures that any inclusion-wise minimal set of size at most $k$ that covers all colorful walks in the resulting digraph must be disjoint from $v,v_1,\dots, v_k$. We will require this operation in the proof of Lemma \ref{lem:restricted_fpt} where the algorithm of Proposition \ref{prop:nodeulc_is_easy} will be used as a subroutine. Hence, we reformulate this proposition as follows.

\begin{lemma}\label{lem:nodeulc_is_easy}
There is an algorithm that, given an instance $(D,\sigma,k)$ of 
	{\sc Node Unique Label Cover} and a  set $W\subseteq V(D)$, runs in time $\bigoh(\ell^{2k}n^{\bigoh(1)})$ and either returns a solution disjoint from $W$ or correctly concludes that no such solution exists.
\end{lemma}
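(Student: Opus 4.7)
The plan is to carry out the forbidding-via-copying construction sketched in the paragraph preceding the lemma and then invoke Proposition~\ref{prop:nodeulc_is_easy} as a black box on the enlarged instance. First, I will construct an auxiliary labeled digraph $(D',\sigma')$ from $(D,\sigma)$ and $W$ as follows. Start with $(D',\sigma')=(D,\sigma)$. For every $v\in W$, introduce $k$ fresh vertices $v^1,\dots,v^k$, and for every arc $a=(v,u)\in A(D)$ and every $i\in[k]$ add a new arc $a^i=(v^i,u)$ with $\sigma'(a^i)=\sigma(a)$; symmetrically, for every arc $a=(u,v)\in A(D)$ and every $i\in[k]$ add $a^i=(u,v^i)$ with $\sigma'(a^i)=\sigma(a)$. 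Since $(D,\sigma)$ is a Node ULC instance we have $\doubling(D,\sigma)=(D,\sigma)$, i.e., for each arc $(x,y)$ labeled $\pi$ there is already an arc $(y,x)$ labeled $\pi^{-1}$; therefore, adding the new arcs symmetrically ensures that $\doubling(D',\sigma')=(D',\sigma')$ as well. Thus $(D',\sigma',k)$ is a legitimate Node ULC instance, and the total number of vertices is at most $(k+1)n$.

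Next I would establish the equivalence between solutions of size at most $k$ for $(D',\sigma',k)$ and solutions of size at most $k$ for $(D,\sigma,k)$ that are disjoint from $W$. For the forward direction, suppose $S'$ is a colorful walk cover of $(D',\sigma')$ with $|S'|\le k$. Fix $v\in W$: since the vertices $\{v,v^1,\dots,v^k\}$ have identical in- and out-neighborhoods with identical labels on corresponding arcs, any colorful walk in $D'$ through one of them can be rerouted through any other; hence, to cover every such colorful walk, $S'$ must either contain all $k+1$ of these vertices or none. The cardinality bound $|S'|\le k$ rules out the former, so $S'\cap(W\cup\{v^i:v\in W, i\in[k]\})=\emptyset$. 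In particular, $S'\subseteq V(D)\setminus W$, and by the same rerouting argument it is a colorful walk cover of $(D,\sigma)$ as well. Conversely, if $S\subseteq V(D)\setminus W$ is a colorful walk cover of $(D,\sigma)$, then because the copies of each $v\in W$ can be collapsed back to $v$ by relabeling vertices (preserving closed walks and their composed permutations), $S$ is also a colorful walk cover of $(D',\sigma')$.

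Finally, I would analyze the running time. Apply the algorithm of Proposition~\ref{prop:nodeulc_is_easy} to $(D',\sigma',k)$; if it returns a solution, output it (by the above it is automatically disjoint from $W$), otherwise output \textsc{No}. The size of $D'$ is $|V(D')|\le (k+1)n$ and $|A(D')|\le (k+1)|A(D)|$, so the total running time is $\bigoh(\ell^{2k}((k+1)n)^{\bigoh(1)})=\bigoh(\ell^{2k}n^{\bigoh(1)})$, where the $k^{\bigoh(1)}$ factor is absorbed into $\ell^{2k}$ (using $\ell\ge 2$).

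I do not expect any serious obstacle: the only subtlety is making sure that the copying preserves the doubling property so that the enlarged instance is genuinely a Node ULC instance, and this is handled by performing the copy operation symmetrically on incoming and outgoing arcs and inheriting labels verbatim (the inverse labels on the reverse arcs are then already present by the doubling property of the original).
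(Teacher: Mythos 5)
Your overall plan — make $k$ extra copies of each vertex of $W$ and invoke Proposition~\ref{prop:nodeulc_is_easy} as a black box — matches what the paper has in mind (and you even correct the paper's sketch, which forgets to add incoming arcs to the copies, so that the paper's literal copies would never lie in a strongly connected subgraph). However, there is a genuine gap in your construction: you add arcs only between a copy $v^i$ and \emph{original} vertices of $D$, and never between copies. Consequently the claim that ``the vertices $\{v,v^1,\dots,v^k\}$ have identical in- and out-neighborhoods'' is false as soon as $W$ contains two vertices $v,w$ with $(v,w)\in A(D)$ (or $v$ carries a self-loop): then $v$ has $w^1,\dots,w^k$ (resp.\ itself) among its out-neighbors in $D'$, while $v^i$ does not. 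The rerouting argument breaks exactly at arcs of the form $(v,w^l)$ or $(w^l,v)$ with $w^l$ a copy, and the conclusion ``$S'$ is automatically disjoint from $W$'' can actually fail.

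A concrete failure: take $\ell=2$, $V(D)=\{v,w,u\}$, with arcs $(v,w)$ and $(w,v)$ each present twice (once with label $\mathrm{id}$ and once with label $(1\,2)$), and arcs $(w,u),(u,w),(u,v),(v,u)$ all labeled $\mathrm{id}$; note $\doubling(D)=D$. Set $W=\{v,w\}$. Every colorful walk of $D$ must use both $v$ and $w$, so there is \emph{no} colorful walk cover disjoint from $W$. Yet in your $D'$ the doubly-labeled arcs connect $v$ to $w$ and to each $w^j$, and $w$ to $v$ and to each $v^i$, but \emph{not} $v^i$ to $w^j$; this ``double star'' is killed by the size-$2$ set $\{v,w\}$, so for any $k\ge 2$ the black box would return a cover of size $\le k$ (e.g.\ $\{v,w\}$), which is not disjoint from $W$, while the correct answer is \textsc{No}.

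The fix is to perform the \emph{full} blow-up: for each $v\in W$ introduce $v^1,\dots,v^k$, and for every arc $(u,u')\in A(D)$ add, for \emph{every} pair of variants $\tilde u$ of $u$ and $\tilde u'$ of $u'$ (where the variants of a $W$-vertex are the vertex and its $k$ copies, and a non-$W$ vertex has only itself as variant), an arc $(\tilde u,\tilde u')$ labeled $\sigma(u,u')$. Since $\doubling(D)=D$, one checks as you did that $\doubling(D')=D'$. Now all $k+1$ variants of $v$ are genuine true twins (including with respect to copies of other $W$-vertices and to self-loops), so: (i) for a minimal cover $S'$ of $D'$ with $|S'|\le k$, some variant of each $v\in W$ is missing from $S'$, and the swap-with-a-free-variant argument lets you remove every variant of every $W$-vertex from $S'$, giving a cover of $D$ of size $\le k$ disjoint from $W$; and (ii) conversely, any cover of $D$ disjoint from $W$ remains a cover of $D'$, because projecting every variant back to its original is a label-preserving homomorphism taking colorful walks of $D'$ to colorful walks of $D$. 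The size of $D'$ is at most $(k+1)n$ vertices and $(k+1)^2|A(D)|$ arcs, so the running time claim is unaffected. With this replacement the proof is correct.
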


We now recall the notion of shadows  from~\cite{Chitnis:2012DSFVS}.

%\subsection{Shadow stuff}

\begin{definition}{\rm \cite{Chitnis:2012DSFVS}} Let $D$ be a digraph and $T$ be a set of terminals. Let $X\subseteq V(G)$ be a subset of vertices. 

\begin{itemize}
	\item The forward shadow $f_{D,T}(X)$ of $X$ (with respect to $T$) is the set of vertices $v$ such
that $X$ is a $T$-$\{v\}$ separator in $D$.

\item The reverse shadow $r_{G,T}(X)$ of $X$ (with respect to $T$) is the set of vertices $v$ such
that $X$ is a $\{v\}$-$T$ separator in $D$.
\end{itemize}

The shadow of $X$ (with respect to $T$) is the union of $f_{G,T}(X)$ and $r_{G,T}(X)$.

\end{definition}

The notions of shadows and `shadowless solutions' have proved to be a key component of several \FPT algorithms for cut-problems \cite{Chitnis:2012DSFVS,ChitnisHM13,LokshtanovR12,LokshtanovM13}.

\begin{definition}
	Let $I=(D,\sigma,k,W)$ be an instance of {\restrictedcwc} and let $S$ be a solution for this instance. If the shadow of $S$ with respect to $W$ is empty, then we say that $S$ is a {\em shadowless solution}.
\end{definition}

Combining the notion of shadowless solutions with Lemma~\ref{lem:iff_colorful_undirected}, we make the following crucial observation which implies that if the shadow of a solution $S$ with respect to $W$ is empty, then the set $S$ is a colorful walk cover even for the labeled digraph $\doubling(D)$.

\begin{observation}\label{obs:shadowless_same_doubling}
	Let $(D,\sigma,k,W)$ be an instance of {\restrictedcwc} 
	and $S$ be a solution for this instance. Let $C$ be the unique strongly connected component of $D-S$ containing $W$. Then, $\doubling(C)$ does not have a colorful walk. 
\end{observation}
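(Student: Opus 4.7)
The plan is to derive the observation as an almost immediate consequence of Lemma~\ref{lem:iff_colorful_undirected}, using only the fact that $S$ is a colorful walk cover and the definition of a strongly connected component. The observation is really a clean repackaging of that earlier lemma in the language needed for the shadow-removal step, so there should be no serious obstacle; the only thing to be mindful of is that the statement implicitly asserts that $W$ sits inside a single strongly connected component of $D-S$, which I should justify before invoking the lemma.

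First I would check that $C$ is well-defined. By the definition of {\restrictedcwc}, the set $W$ is a colorful walk cover such that $D[W]$ is strongly connected, and by hypothesis the solution $S$ is disjoint from $W$. Hence $D[W]$ persists unchanged in $D-S$, so every pair of vertices of $W$ remains strongly connected in $D-S$. Consequently $W$ is contained in a unique strongly connected component of $D-S$, which is precisely the component $C$ referred to in the observation.

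Next I would argue that $C$ itself contains no colorful walk. Since $S$ is a solution, it is in particular a colorful walk cover of $D$, i.e., every colorful walk contained in $D$ meets $S$. Equivalently, $D-S$ contains no colorful walk at all. Now any colorful walk contained in $C$ is, by definition, a strongly connected subgraph $H$ of $C$ that is $v$-colorful for some $v\in V(H)$; but $C$ is an induced subgraph of $D-S$, so $H$ is also a strongly connected subgraph of $D-S$ and inherits the same arc labels, hence the same $v$-colorful property. Thus $H$ would be a colorful walk in $D-S$, contradicting what we just established. Therefore $C$ has no colorful walk.

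Finally, because $C$ is strongly connected by definition and has no colorful walk, I would apply Lemma~\ref{lem:iff_colorful_undirected} to the labeled digraph $(C,\sigma|_{V(C)})$ to conclude that $\doubling(C)$ also has no colorful walk, which is exactly the statement of the observation. The only (very mild) technical point is verifying that restriction to induced subgraphs interacts correctly with the definitions of ``colorful walk'' and of $\doubling(\cdot)$; both are local in the arc labels, so the restriction is unproblematic.
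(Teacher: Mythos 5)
Your proof is correct and follows essentially the same route the paper intends: the paper gives no explicit argument for this observation but remarks just above it that it follows by "combining the notion of shadowless solutions with Lemma~\ref{lem:iff_colorful_undirected}," and you supply exactly that combination — well-definedness of $C$ from $W$ being disjoint from $S$ with $D[W]$ strongly connected, absence of colorful walks in the induced subgraph $C$ because $S$ is a colorful walk cover, and then the equivalence of Lemma~\ref{lem:iff_colorful_undirected} applied to the strongly connected $C$. If anything, your write-up makes explicit that the observation as stated holds for any solution and does not actually rely on shadowlessness, which is a small clarification over the paper's framing.
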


Because of Lemma \ref{lem:iff_colorful_undirected}, Proposition \ref{prop:nodeulc_is_easy} and Observation \ref{obs:shadowless_same_doubling}, our objective from now on is to transform a given instance $I$ of {\restrictedcwc} into an instance $I'$ such that if $I$ is a yes-instance then $I'$ has a shadowless solution and is a no-instance otherwise. This transformation has two steps -- (a) the shadow covering step and (b) the shadow removal step. For the shadow covering step, we will use a result of Chitnis et al.~\cite{Chitnis:2012DSFVS}. Building on the work of Marx and Razgon~\cite{MarxR14} and Chitnis et al.~\cite{ChitnisHM13}, they gave a generic method to compute a set which `covers' the shadow of a solution when dealing with cut-problems satisfying certain properties. In order to make this statement precise and describe how it applies in our setting, we begin with the following definition.

%The above observation %The following definitions and notation are from \cite{Chitnis:2012DSFVS}.
% Let $\cF=\{F_1,\dots, F_q\}$  be a set of subgraphs of a digraph $D$. 
%An $\cF$-transversal is a set
%of vertices that intersects every $F_i$. 

\begin{definition}{\rm \cite{Chitnis:2012DSFVS}} Let $D$ be a digraph and let $\cF=\{F_1,\dots, F_q\}$  be a set of subgraphs of $D$. An $\cF$-{\em transversal} is a set
of vertices of $D$ that intersects every $F\in \cF$. For a set $T\subseteq V(G)$, we say that $\cF$ is $T$-{\em  connected} if, for every $i\in [q]$, each vertex
of $F_i$ can reach some vertex of $T$ by a walk completely contained in $F_i$ and is reachable
from some vertex of $T$ by a walk completely contained in $F_i$.
\end{definition}

The notion of $T$-connectivity of a family $\cF$ is relevant to us because in an instance $(D,\sigma,k,W)$ of {\restrictedcwc} the set of all colorful walks is clearly $W$-connected. This is a consequence of the fact that $W$ intersects all colorful walks in $D$ and each colorful walk itself is a strongly connected subgraph of $D$. As a result, we can  utilise the following {\em shadow covering} lemma of Chitnis et al.

\begin{proposition}{\rm (Theorem 3.6 \cite{Chitnis:2012DSFVS})}\label{prop:original_shadow_removal} Let $D$ be a digraph and let $T\subseteq V(D)$. Given $T$ and $D$, we can construct a set $\{Z_1,\dots, Z_t\}$ with $t=2^{\bigoh(k^2)}\log ^2n$ in time $2^{\bigoh(k^2)}n^{\bigoh(1)}$ such that, for any set $F$ of $T$-connected subgraphs, if there exists an $\cF$-transversal of size at most $k$, then there is an $\cF$-transversal $X$ of size at most $k$ such that for at least one $i\in [t]$, we have 

\begin{itemize}
	\item $X\cap Z_i=\emptyset$, and 
	\item $Z_i$ covers the shadow of $X$ with respect to $T$.
\end{itemize}
	
\end{proposition}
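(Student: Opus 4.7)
The plan is to establish the statement by combining a randomized ``shadow covering'' procedure with derandomization via splitters. First, I would treat the forward shadow and the reverse shadow separately: one runs the construction once with the arcs of $D$ oriented as given (to cover $f_{D,T}(X)$) and once with all arcs reversed (to cover $r_{D,T}(X)$), then forms the family of pairwise unions of the two resulting families. This only squares the family size and so keeps it at $2^{\Oh(k^2)}\log^2 n$, so from now on assume the goal is only to cover $f_{D,T}(X)$.

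The structural backbone I would prove is a peeling lemma: for any $\cF$-transversal $X$ with $|X|\le k$, one can identify a sequence of at most $k$ important $T$-to-``far side'' separators $S_1,\dots,S_r$, each of size at most $k$, with $\bigcup_i S_i\subseteq X$, whose removal leaves the forward reachability closure of $T$ equal to $V(D)\setminus (T\cup f_{D,T}(X))$. The construction is greedy: find the important separator $S_i$ closest to $T$ that is ``responsible'' for detaching a new piece of $f_{D,T}(X)$, add it to the barrier, and recurse. This terminates in at most $|X|\le k$ rounds because each $S_i$ contributes a fresh vertex to $X$ (submodularity of the minimum cut function is used to justify disjointness in a charging sense), and at each round there are at most $4^k$ candidates for $S_i$ by the Chen--Liu--Lu bound on important separators.

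With the peeling lemma in hand, I would describe a randomized procedure that ``guesses'' the sequence $S_1,\dots,S_r$: in each of at most $k$ rounds, randomly pick one of the $\le 4^k$ candidate important separators at the current $T$-frontier, and add to $Z$ every vertex currently on the non-$T$ side of the barrier whose distance to $T$ lies in a random dyadic band of $[\log n]$. The two conditions ``$Z\cap X=\emptyset$'' and ``$f_{D,T}(X)\subseteq Z$'' translate into a conjunction of $\Oh(k)$ independent random events, each of success probability $2^{-\Oh(k)}$, giving overall success probability at least $2^{-\Oh(k^2)}/\log n$. Replacing the $k$ rounds of ``pick a random important separator'' by an enumeration over an $(n,k,k)$-splitter of Naor--Schulman--Srinivasan of size $2^{\Oh(k)}\log n$, and absorbing the dyadic-band choice into an extra $\log n$ factor, gives a deterministic family of size $t=2^{\Oh(k^2)}\log^2 n$, constructible in time $2^{\Oh(k^2)}n^{\Oh(1)}$ because each round does only polynomial bookkeeping on reachability frontiers.

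The main obstacle will be the peeling lemma: one needs to show that the greedy sequence of important separators really does terminate in at most $k$ rounds \emph{and} that the reachability closure after the final round coincides with $V(D)\setminus (T\cup f_{D,T}(X))$, neither under- nor over-shooting. The under-shoot direction uses that each important separator chosen is forced to be a subset of $X$ (because $X$ is already a valid but possibly non-important separator at that round, and the important separator ``dominates'' it), while the over-shoot direction uses the standard fact that the closure of $T$ in $D-X$ is itself the ``furthest'' closure consistent with $X$, so any important separator we pick along the way cannot over-extend past $f_{D,T}(X)$. Once this combinatorial core is in place, the probability analysis and the splitter-based derandomization follow along the now-standard lines of the Marx--Razgon and Chitnis--Hajiaghayi--Marx framework.
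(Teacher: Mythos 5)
This proposition is not proved in the paper at all; it is quoted verbatim as Theorem~3.6 of Chitnis, Cygan, Hajiaghayi and Marx \cite{Chitnis:2012DSFVS} (the ``random sampling of important separators'' framework) and used as a black box. So there is no in-paper proof to compare against; what follows is an assessment of your sketch against the known proof of the cited result.

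Your sketch has a genuine gap in its combinatorial core. The ``peeling lemma'' asserts that one can extract from a given $\cF$-transversal $X$ a sequence of important separators $S_1,\dots,S_r$ with $\bigcup_i S_i \subseteq X$, justified by the remark that an important separator ``dominates'' $X$ and is therefore ``forced to be a subset of $X$.'' This inference is false: if $S$ is an important $v$-$T$ separator dominating a given separator $X'\subseteq X$, then $|S|\leq |X'|$ and $S$ is pushed at least as far towards $T$, but $S$ can consist of entirely different vertices than $X'$; domination does not imply containment. In fact the very point of the Marx--Razgon and Chitnis--Hajiaghayi--Marx argument is that you \emph{cannot} keep the original transversal $X$: the conclusion of the proposition is existential (``there is an $\cF$-transversal $X$ such that\dots''), and the actual proof constructs a \emph{replacement} transversal $X'$ --- obtained by swapping parts of $X$ for dominating important separators, and arguing $X'$ is still an $\cF$-transversal of size $\leq k$ using the $T$-connectivity of $\cF$ --- whose shadow is coverable by the sampled set $Z$. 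Your sketch collapses precisely at the step where this replacement argument should go.

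Secondary concerns: the ``dyadic band of distances to $T$'' idea is not part of the known proof and it is unclear that the two target events (disjointness from the transversal, and coverage of its shadow) decompose into $\Oh(k)$ independent events with the probabilities you claim; the actual argument samples from the at most $4^k$ important $(v,T)$-separators per vertex $v$, bounds the number of ``bad'' inclusion/exclusion constraints by $\Oh(k^2)$ (hence the $2^{-\Oh(k^2)}$ success probability), and derandomizes with splitters --- this is where the $\log^2 n$ factor arises, not from distance bands. Your high-level architecture (forward/reverse split, randomized sampling of important separators, splitter-based derandomization) is aligned with the cited proof, but the peeling-lemma step as written would not survive scrutiny, and the probabilistic analysis is asserted rather than derived.
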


%Since  the set of  colorful walks in an instance of {\restrictedcwc} is $W$-connected, 
We use the fact that the set of  colorful walks in an instance of {\restrictedcwc} is $W$-connected to reformulate Proposition~\ref{prop:original_shadow_removal} as follows, so that it is easier to invoke in our context.  

\begin{lemma}\label{lem:shadow_removal_cwc}
	Let $I=(D,\sigma,k,W)$ be an instance of {\restrictedcwc}. Given $I$, we can construct a set $\{Z_1,\dots, Z_t\}$ with $t=2^{\bigoh(k^2)}\log ^2n$ in time $2^{\bigoh(k^2)}n^{\bigoh(1)}$ such that if $I$ is a yes-instance, then there is a colorful walk cover $S$ of size at most $k$ such that for at least one $i\in [t]$, we have 

\begin{itemize}
	\item $S\cap Z_i=\emptyset$, and 
	\item $Z_i$ covers the shadow of $S$ with respect to $W$.
\end{itemize}

A set $Z_i$ with these two properties is called a {\em shadow cover} for $S$ with respect to $W$.
\end{lemma}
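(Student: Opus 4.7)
The plan is to reduce this statement to a direct invocation of Proposition~\ref{prop:original_shadow_removal} by taking $T=W$ and letting $\cF$ be the family of all colorful walks in $D$. The construction of the sets $Z_1,\dots,Z_t$ will be exactly the one produced by Proposition~\ref{prop:original_shadow_removal}, so the bounds $t=2^{\bigoh(k^2)}\log^2 n$ and the running time $2^{\bigoh(k^2)}n^{\bigoh(1)}$ come for free. The only real content of the proof is to verify that the hypotheses of Proposition~\ref{prop:original_shadow_removal} are met for this choice of $\cF$ and $T$.

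The one non-trivial check is that the family $\cF$ is $W$-connected. For this, let $F\in \cF$ be an arbitrary colorful walk in $D$. By the definition of {\restrictedcwc}, the set $W$ is a colorful walk cover of $D$, so $V(F)\cap W\neq \emptyset$. Fix $w\in V(F)\cap W$. Since $F$ is strongly connected (by definition of a colorful walk), every vertex $v\in V(F)$ can reach $w$ by a walk completely contained in $F$, and every vertex $v\in V(F)$ is reachable from $w$ by a walk completely contained in $F$. Hence $\cF$ is $W$-connected, exactly in the sense required by Proposition~\ref{prop:original_shadow_removal}.

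Next, observe that a solution for $I$ is by definition a set $S\subseteq V(D)\setminus W$ of size at most $k$ that intersects every colorful walk of $D$, i.e.\ an $\cF$-transversal of size at most $k$ that is disjoint from $W$. Therefore, if $I$ is a yes-instance then an $\cF$-transversal of size at most $k$ exists, and applying Proposition~\ref{prop:original_shadow_removal} with $T=W$ yields an $\cF$-transversal $S$ of size at most $k$ together with some index $i\in [t]$ such that $S\cap Z_i=\emptyset$ and $Z_i$ covers the shadow of $S$ with respect to $W$. This $S$ is the desired colorful walk cover and $Z_i$ is the desired shadow cover.

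The main (and essentially only) obstacle is the $W$-connectivity verification above; everything else is immediate from Proposition~\ref{prop:original_shadow_removal}. I do not foresee any additional subtlety: note in particular that although the solution $S$ returned by Proposition~\ref{prop:original_shadow_removal} is not a priori guaranteed to be disjoint from $W$, we do not need to assert this in the statement of the lemma, since the lemma only asks for a colorful walk cover of size at most $k$ and a shadow cover $Z_i$ that avoids it. Thus the proof reduces cleanly to verifying the hypotheses of Proposition~\ref{prop:original_shadow_removal} and quoting its conclusion.
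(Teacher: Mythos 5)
Your proof is correct and follows essentially the same route as the paper: the paper's own discussion preceding the lemma makes precisely the observation that the family of colorful walks is $W$-connected (because $W$ is a colorful walk cover and each colorful walk is strongly connected), and then invokes Proposition~\ref{prop:original_shadow_removal} with $T=W$. Your additional remark that the returned $S$ need not be disjoint from $W$ is a careful and correct reading of the lemma statement, which indeed only asks for a colorful walk cover of size at most $k$, not a solution of the restricted problem.
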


\subsection{Shadow-removal and \FPT Algorithm for {\restrictedcwc}}

We now proceed to the \emph{shadow-removal step}. This is the more problem-specific part of the approach from \cite{Chitnis:2012DSFVS}.   In this step, our objective is to \emph{remove} the vertices in a shadow cover for a solution $S$ in such a way that $S$ becomes a shadowless solution for the resulting instance and we do not create new, smaller solutions for this instance. In order to achieve this, we need to define an appropriate analogue of the \emph{torso} operation~\cite{Chitnis:2012DSFVS}. However, before we define this operation, we prove the following lemma which gives a  subroutine required in the definition of this operation.

\begin{lemma}\label{lem:find_permutations}
	Let $(D,\sigma)$ be a labeled digraph and let $Z\subseteq V(D)$. There is an algorithm that, given $(D,\sigma)$, $Z$ and a pair of (not necessarily distinct) vertices $u,v\notin Z$, runs in time $\bigoh(\ell^\ell)+n^{\bigoh(1)}$ and computes a set $\cX=\{\pi_1,\dots, \pi_r\}$ of permutations of $[\ell]$ such that for every    $\alpha,\beta\in [\ell]$, there is a $u$-$v$ walk $P$ with all internal vertices in $Z$ such that $\sigma(P)(\alpha)=\beta$ if and only if there is a permutation $\pi\in \cX$ such that $\pi(\alpha)=\beta$.	
\end{lemma}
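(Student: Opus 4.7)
The plan is a two-phase algorithm that separates the dependence on $n$ from the dependence on $\ell$. The key conceptual point is that the biconditional in the lemma only asks for a matching at the level of \emph{pairs} $(\alpha,\beta)$ and does \emph{not} require every permutation in $\cX$ to be realised by a single walk. Accordingly, I would first compute the set $R\subseteq[\ell]\times[\ell]$ of all pairs $(\alpha,\beta)$ for which some $u$-$v$ walk with internal vertices in $Z$ maps $\alpha$ to $\beta$, and then define $\cX$ to consist of every permutation $\pi$ of $[\ell]$ that \emph{respects} $R$ in the sense that $(\alpha,\pi(\alpha))\in R$ for every $\alpha\in[\ell]$.

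For the first phase, the plan is to invoke Lemma~\ref{lem:auxiliary_equivalence}: the pair $(\alpha,\beta)$ lies in $R$ iff the auxiliary digraph $H_D$ contains a $u_\alpha$-$v_\beta$ path whose internal vertices are of the form $z_i$ with $z\in Z$. To enforce that $u$ and $v$ never appear as internal vertices, I would build, for each fixed $\alpha$, a subgraph $H^{\alpha}$ of $H_D$ whose vertex set is $\{u_\alpha\}\cup\{v_j:j\in[\ell]\}\cup\{z_i:z\in Z,i\in[\ell]\}$, and delete all out-arcs of every $v_j$. Then $(\alpha,\beta)\in R$ iff $v_\beta$ is reachable from $u_\alpha$ in $H^{\alpha}$, which is one reachability computation per value of $\alpha$. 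Iterating over the $\ell$ choices of $\alpha$ yields the whole set $R$ in time $n^{\bigoh(1)}$.

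For the second phase, enumerate each of the $\ell!\le \ell^\ell$ permutations of $[\ell]$ and add $\pi$ to $\cX$ exactly when $(\alpha,\pi(\alpha))\in R$ for every $\alpha\in[\ell]$. Each such test costs $\bigoh(\ell)$, so this phase costs $\bigoh(\ell\cdot\ell!)=\bigoh(\ell^\ell)$. Summing the two phases yields the claimed bound $\bigoh(\ell^\ell)+n^{\bigoh(1)}$.

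Correctness would be argued by proving both directions of the biconditional. For the forward direction, if a $u$-$v$ walk $P$ with internals in $Z$ satisfies $\sigma(P)(\alpha)=\beta$, then that same walk $P$ simultaneously witnesses $(\alpha',\sigma(P)(\alpha'))\in R$ for every $\alpha'\in[\ell]$, so $\sigma(P)$ is a permutation respecting $R$, hence $\sigma(P)\in\cX$, and of course $\sigma(P)(\alpha)=\beta$. The converse is immediate from the definition of $\cX$. The main conceptual obstacle — and the reason the additive time bound is even attainable, instead of the obvious multiplicative bound one would get from running a BFS on the product state space $V(D)\times S_\ell$ — is recognising that $\cX$ is \emph{allowed} to contain ``spurious'' permutations (those respecting $R$ but not realised by any single walk), and verifying that this does not break the biconditional asserted by the lemma.
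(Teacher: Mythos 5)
Your approach is essentially the same as the paper's: in a first polynomial-time phase compute the binary relation $R\subseteq[\ell]\times[\ell]$ of realisable pairs (the paper encodes $R$ as a digraph $Q$ on vertex set $[\ell]$), and in a second $\bigoh(\ell^\ell)$-time phase brute-force over permutations of $[\ell]$ to extract $\cX$. The paper defines $\cX$ by choosing, for each arc $(\alpha,\beta)$ of $Q$, one cycle cover of $Q$ through that arc; you take $\cX$ to be \emph{all} permutations $\pi$ with $(\gamma,\pi(\gamma))\in R$ for every $\gamma$. Both sets contain spurious permutations and both satisfy the required biconditional for the same underlying reason: your observation that the realising walk's own permutation $\sigma(P)$ respects $R$ is precisely the content of the paper's claim that every arc of $Q$ lies on some cycle cover. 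Your packaging is a little cleaner because it never has to mention cycle covers.

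There is, however, a genuine bug in your Phase~1 in the case $u=v$, which the lemma explicitly allows. In that case $u_\alpha$ coincides with $v_\alpha$, so deleting all out-arcs of every $v_j$ also strands your source $u_\alpha$; the reachability test would then only ever see the trivial walk and would wrongly declare $(\alpha,\beta)\notin R$ for every $\beta\neq\alpha$, even when a nontrivial closed $u$-walk through $Z$ exists. The paper avoids this by keeping $H_{D'}$ intact and instead asking for a $u_i$--$v_j$ path whose \emph{internal} vertices are constrained to lie in $\cZ$, which treats $u=v$ correctly without modification. Your construction is easily repaired (e.g.\ add a fresh source $s$ carrying a copy of the out-arcs of $u_\alpha$ and test reachability from $s$), but as written it does not handle the self-pair case the lemma requires.
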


\begin{proof}

We first construct the subgraph $D'=D[\{u,v\}\cup Z]$ with $\sigma'$ being the restriction of $\sigma$ to the arcs in $D'$. We then construct the digraph $H_{D',\sigma'}$.  In order to define $\cX$, we will now construct another auxiliary digraph $Q$ with vertex set $[\ell]$ from which we will then extract the set $\cX$.

%For ease of presentation, we have broken down the construction of $Q$ into two steps. 

%In the first step, we construct a mapping $\cT:[\ell]\times [\ell]\to \{{\sf 0},{\sf 1}\}$ as follows. 
For every $i,j\in [\ell]$ we test whether there is a $u_i$-$v_j$ path in $H_{D'}$ with all internal vertices in $\cZ=\bigcup_{v\in Z} \{v_1,\dots, v_\ell\}$, and if such a path exists, then 
%
%
%set $\cT(i,j)={\sf 1}$ and set $\cT(i,j)={\sf 0}$ otherwise.  
%		
%	We now build the auxiliary digraph $Q$ on the vertex set $[\ell]$ as follows. For every $i,j\in [\ell]$ such that $\cT(i,j)={\sf 1}$, we 
	we add the arc $(i,j)$. This completes the construction of $Q$. Observe that $Q$ may contain self-loops.
	Before we define $\cX$, we prove the following property of $Q$. Recall that a cycle cover of a digraph is a set of vertex-disjoint cycles such that every vertex is part of some cycle. 	
	
	\begin{claim}\label{clm:cycle_cover}
		Every arc $a\in A(Q)$ is part of a cycle cover in $Q$.
			\end{claim}
			
			\begin{proof}
			Let $a=(\alpha,\beta)\in A(Q)$. 
			By the definition of $Q$, it must be the case that there is a $u_\alpha$-$v_\beta$ path in $H_{D',\sigma'}$ with all internal vertices in $\cZ$, which in turn implies the presence of a $u$-$v$ walk $P$ in $D$ such that all internal vertices of $P$ lie in $Z$ and $\sigma(P)(\alpha)=\beta$. Therefore the permutation $\sigma(P)$ contains a cycle of the form $(\dots \alpha$ $\beta \dots)$. As a result, the cycle cover of $Q$ which corresponds to $\sigma(P)$ contains the arc $a$. 
			This completes the proof of the claim.  
			\end{proof}

	We are now ready to define $\cX$. For every arc $a=(\alpha,\beta)\in A(Q)$, we pick an arbitrary cycle cover in $Q$ containing $a$ and call it ${\cal C}_a$. By the claim above, we know that such a cycle cover exists.	For every such  cycle cover ${\cal C}_a$, we define the permutation $\pi_a$ as the corresponding permutation of $[\ell]$. 
	Finally, we define $\cX=\{\pi_a|a\in A(Q)\}$.
%	We now enumerate an arbitrary set $\cQ$ of vertex-disjoint cycle covers of the digraph $Q$ such that every arc of $H$ is present in some cycle cover in $\cQ$. This can be done greedily as follows. For every arc $a\in$ 
%	
%	 Observe that every vertex-disjoint cycle cover of the graph $Q$ corresponds to a permutation of $[\ell]$ and vice-versa. Therefore, once we have computed $\cQ$, we simply set $\cX$ to be the set of permutations of $[\ell]$ corresponding to the cycle covers in $\cQ$. 
%	 
%	 
	 This completes the construction of $\cX$ and we now argue that it satisfies the required properties.

	By Lemma~\ref{lem:auxiliary_equivalence}, we know that for every $i,j\in [\ell]$, there is a $u$-$v$ walk $P$  such that $\sigma(P)(i)=j$ if and only if there is a $u_i$-$v_j$ path $P'$ in $H_{D'}$. Furthermore, it is easy to see that $P$ has all internal vertices in $Z$ if and only if $P'$ has all internal vertices in $\cZ$. Now, suppose that for some $\alpha,\beta\in [\ell]$, there is a $u$-$v$ walk $P$ with all internal vertices in $Z$ such that $\sigma(P)(\alpha)=\beta$. Then, there is a $u_\alpha$-$v_\beta$ path in $H_{D'}$  with all internal vertices in $\bigcup_{v\in Z} \{v_1,\dots, v_\ell\}$, which, by the construction of the digraph $Q$ implies that $(\alpha,\beta)\in A(Q)$ and hence there is a cycle in $Q$ which contains the arc $(\alpha,\beta)$. Due to  Claim~\ref{clm:cycle_cover}, we conclude that there is a permutation $\pi\in \cX$ such that $\pi(\alpha)=\beta$. This completes the argument in the forward direction. The converse direction follows by retracing the above argument. Note that the required time is the time required to compute $Q$ plus the time required to compute the cycle covers for the arcs in $Q$. The first part takes polynomial time and the second can be achieved by simply enumerating all cycle covers of $Q$. Since $|Q|=\ell$, the second step only requires time $\bigoh(\ell^\ell)$, hence completing the proof of the lemma.
%	
%%	In the converse direction, if there is a permutation $\pi\in \cX$ such that $\pi(\alpha)=\beta$, then there is a cycle cover $\cal C$ in $\cQ$
%	 
%	 
%	 
%	 
%	
%	
\end{proof}

We are now ready to define the {\sf labeled-torso} operation.

\begin{definition}\label{def:labeled_torso_define}
	Let $(D,\sigma)$ be a labeled digraph, $Z\subseteq V(D)$  and for every ordered pair $(u,v)\in (V(D)\setminus Z)^2$, let $\cX_{uv}$ be the set of permutations returned by the algorithm of Lemma~\ref{lem:find_permutations} on input $(D,\sigma)$, $Z$, and the pair $u,v$. We let $\cH$ denote the set $\{\cX_{uv}|(u,v)\in (V(D)\setminus Z)^2\}$. We denote by {\sf labeled-torso}$(D,Z,\cH)$ the labeled digraph $(D',\sigma')$ obtained from $D$ as follows. 
	
	\begin{itemize}
	\item Set $D'=D, \sigma'=\sigma$. 
		\item Delete the set $Z$.
		\item For every ordered pair $(u,v)$ in $(V(D)\setminus Z)^2$, add $q=|\cX_{uv}|$ arcs $a^{uv}_1,\dots, a^{uv}_q=(u,v)$ and for each $j\in [q]$, set $\sigma'(a^{uv}_j)=\pi_q$ where $\cX_{uv}=\{\pi_1,\dots, \pi_q\}$.
	\end{itemize}
\end{definition}

Having defined the {\sf labeled-torso} operation we proceed to show that it preserves all colorful walks.

\begin{lemma}\label{lem:torso_preserves_paths}
	Let $(D,\sigma),Z,\cH$ be as in Definition~\ref{def:labeled_torso_define} and let $(D',\sigma')$ = {\sf labeled-torso}$(D,Z,\cH)$. Then, the following statements hold. 
	\begin{itemize} 
	\item For every (not necessarily distinct) $u,v\in V(D)\setminus Z$ and $\alpha,\beta\in [\ell]$, if there is a $u$-$v$ walk $P$ in $D$ such that $\sigma(P)(\alpha)=\beta$ then there is a  $u$-$v$ walk $P'$ in $D'$ such that $\sigma'(P')(\alpha)=\beta$ and $V(P')=V(P)\setminus Z$.
	\item 	For every (not necessarily distinct) $u,v\in V(D')$ and $\alpha,\beta\in [\ell]$, if there is a $u$-$v$ walk $P'$ in $D'$ such that $\sigma'(P')(\alpha)=\beta$ then there is a  $u$-$v$ walk $P$ in $D$ such that $\sigma(P)(\alpha)=\beta$ and $V(P)\subseteq V(P')\cup Z$.
	\end{itemize}

\end{lemma}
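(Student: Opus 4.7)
The plan is to prove both statements by a common decomposition argument: a walk in $D$ is decomposed at its last entries into/exits out of $Z$ into maximal blocks that either (i) lie entirely outside $Z$ or (ii) have both endpoints outside $Z$ but all internal vertices in $Z$, and then Lemma~\ref{lem:find_permutations} provides the translation between the type~(ii) blocks and the shortcut arcs added by {\sf labeled-torso}. The key point is that we only need to preserve the action of the composed permutation on the single element $\alpha$, not on all of $[\ell]$, so it is enough that Lemma~\ref{lem:find_permutations} supplies, for each relevant pair $(\alpha',\beta')$, some permutation $\pi \in \cX_{uv}$ with $\pi(\alpha')=\beta'$.

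For the first bullet, I would let $P = v_0,\dots,v_t$ be the given $u$-$v$ walk with $\sigma(P)(\alpha)=\beta$, and let $0=i_0<i_1<\cdots<i_s=t$ be exactly the indices with $v_{i_j}\notin Z$ (well-defined since $u,v\notin Z$). For $0\le j<s$ the subwalk $P[v_{i_j},v_{i_{j+1}}]$ has both endpoints outside $Z$ and all internal vertices in $Z$. Define $\alpha_0=\alpha$ and $\alpha_{j+1} = \sigma\bigl(P[v_{i_j},v_{i_{j+1}}]\bigr)(\alpha_j)$, so by the composition rule $\alpha_s=\sigma(P)(\alpha)=\beta$. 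By Lemma~\ref{lem:find_permutations}, for each $j$ there exists $\pi_j\in\cX_{v_{i_j}v_{i_{j+1}}}$ with $\pi_j(\alpha_j)=\alpha_{j+1}$, and hence by Definition~\ref{def:labeled_torso_define} an arc $a_j=(v_{i_j},v_{i_{j+1}})\in A(D')$ with $\sigma'(a_j)=\pi_j$. Concatenating these arcs yields a $u$-$v$ walk $P'$ in $D'$ with $\sigma'(P')(\alpha)=\beta$ and $V(P')=\{v_{i_0},\dots,v_{i_s}\}=V(P)\setminus Z$, as required.

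For the second bullet, I would take $P'=u_0,\dots,u_t$ in $D'$ with $\sigma'(P')(\alpha)=\beta$, set $\alpha_0=\alpha$ and $\alpha_{j+1}=\sigma'((u_j,u_{j+1}))(\alpha_j)$, and replace each arc $(u_j,u_{j+1})$ of $P'$ by a walk in $D$ from $u_j$ to $u_{j+1}$ realizing $\alpha_j\mapsto\alpha_{j+1}$. If $(u_j,u_{j+1})$ is an original arc of $D$ (that survived the deletion of $Z$ because both endpoints lie in $V(D)\setminus Z$), this walk is just the arc itself. Otherwise it is a shortcut arc created in Definition~\ref{def:labeled_torso_define}, so $\sigma'((u_j,u_{j+1}))\in\cX_{u_j u_{j+1}}$; by Lemma~\ref{lem:find_permutations} there is a $u_j$-$u_{j+1}$ walk $Q_j$ in $D$ with all internal vertices in $Z$ and $\sigma(Q_j)(\alpha_j)=\alpha_{j+1}$. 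Concatenating the $Q_j$'s yields a $u$-$v$ walk $P$ in $D$ with $\sigma(P)(\alpha)=\alpha_t=\beta$ and $V(P)\subseteq V(P')\cup Z$.

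There is no serious obstacle: the content is entirely in carefully tracking the permutation composition and ensuring that the decomposition at vertices outside $Z$ is exhaustive. The one subtlety worth being explicit about in the write-up is that Lemma~\ref{lem:find_permutations} does not hand us a single permutation $\pi$ equal to $\sigma(P[v_{i_j},v_{i_{j+1}}])$, only some $\pi\in\cX_{v_{i_j}v_{i_{j+1}}}$ correctly mapping the one value $\alpha_j$ to $\alpha_{j+1}$; because the lemma's conclusion concerns only the image of $\alpha$ under $\sigma'(P')$ (respectively $\sigma(P)$), and not the full permutation, this is exactly what is needed.
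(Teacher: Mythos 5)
Your proof is correct and follows essentially the same route as the paper's: decompose the walk at its vertices outside $Z$, track the image of $\alpha$ through the blocks, translate each block via Lemma~\ref{lem:find_permutations} to a shortcut arc (or vice versa), and concatenate. Your presentation is marginally more systematic — you do the decomposition in one pass with explicit indices $i_0<\cdots<i_s$ and intermediate values $\alpha_j$ rather than replacing one offending block at a time as the paper does — but the underlying argument, including the observation that only the action on a single element need be preserved, is the same.
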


\begin{proof} For the first statement, let $u,v\in V(D)\setminus Z$ and $\alpha,\beta\in [\ell]$ be such that there is a directed $u$-$v$ walk $P=z_1,\dots, z_t$ in $D$ with $\sigma(P)(\alpha)=\beta$. If $P$ is also present in $D'$, then we are done. Suppose that this is not the case and let $z_{i_1}$ and $z_{i_2}$ be a pair of consecutive vertices in $P$ which are not in $Z$ such that $i_1<i_2$ and if $i_1+1<i_2$, then the vertices $z_{i_1+1},\dots, z_{i_2-1}$ are all in $Z$. Let $Q_1=P[u,z_{i_1}]$, $Q_2=P[z_{i_1},z_{i_2}]$ and $Q_3=P[z_{i_2},v]$ be three subwalks of $P$. Let $x=z_{i_1}$,  $y=z_{i_2}$ and furthermore, suppose that $\gamma,\delta\in [\ell]$ are such that $\sigma(Q_1)(\alpha)=\gamma$, $\sigma(Q_2)(\gamma)=\delta$ and $\sigma(Q_3)(\delta)=\beta$.

 Observe that $Q_2$ is a walk with all internal vertices in $Z$. Then, by Lemma~\ref{lem:find_permutations}, there is a permutation $\pi\in \cX_{xy}$ such that $\pi(\gamma)=\delta$ and by the definition of {\sf labeled-torso}, there is an arc $a=(x,y)$ in $D'$ such that $\sigma'(a)(\gamma)=\delta$. Therefore, we replace the subwalk $Q_2$ with this arc $a$ and we do this for every such consecutive pair of vertices in $P$ which are not in $Z$ but  all vertices in between them are in $Z$. The  walk resulting from performing this replacement for every such pair  is  a $u$-$v$ walk $P'$ in $D'$ such that $\sigma'(P')(\alpha)=\beta$ and $V(P')=V(P)\setminus Z$. This completes the argument for the first statement.

 For the second statement, let $P'$ be a directed $u$-$v$ walk in $D'$ such that $\sigma'(P')(\alpha)=\beta$. If $P'$ is also in $D$, then we are done. Suppose that this is not the case and let $(x,y)\in A(D')\setminus A(D)$ be an arc in $P'$. Let $Q_1=P'[u,x], Q_2=P'[x,y],Q_3=[y,v]$ be three subwalks of $P'$, where $Q_2$ is in fact the arc $(x,y)$ which by our assumption is not in $D$. Furthermore, as earlier, let $\gamma,\delta\in [\ell]$ be such that $\sigma(Q_1)(\alpha)=\gamma$, $\sigma(Q_2)(\gamma)=\delta$ and $\sigma(Q_3)(\delta)=\beta$.
  
 By the definition of {\sf labeled-torso} it must be the case that there is a permutation $\pi\in \cX_{xy}$ such that $\pi(\gamma)=\delta$. Furthermore, by Lemma~\ref{lem:find_permutations}, we know that this can happen only when there is an $x$-$y$ walk $Q_2'$ in $D$ such that $\sigma(Q_2')(\gamma)=\delta$ and all internal vertices of $Q_2'$ lie in $Z$. Therefore, we replace  the arc $(x,y)$ with the  $x$-$y$ walk $Q_2'$ which is contained in $D$ and we do this for every arc in $P'$ which is not in $A(D)$. The result is clearly a  $u$-$v$ walk $P$ in $D$ such that $\sigma(P)(\alpha)=\beta$ and $V(P)\subseteq V(P')\cup Z$. This completes the proof of the lemma.
\end{proof}

\begin{lemma}\label{lem:shadow_removal_cwc_II}
	Let $I=(D,\sigma,k,W)$ be an instance of {\restrictedcwc} and let $Z\subseteq V(D)\setminus W$ be such that if $I$ is a yes-instance, then it has a solution $S$ for which $Z$ is a shadow-cover. There is an algorithm that, given $I$ and $Z$ runs in   time $\bigoh(\ell^\ell n^{\bigoh(1)})$ and returns an instance $I'=(D',\sigma',W,k)$ such that if $I$ is a no-instance, then $I'$ is a no-instance and if $I$ is a yes-instance then $I'$ is a yes-instance with  a shadowless solution.
\end{lemma}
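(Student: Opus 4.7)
The plan is to construct $I' = (D', \sigma', W, k)$ by applying the ${\sf labeled\text{-}torso}$ operation of Definition~\ref{def:labeled_torso_define} with $Z$ as the contracted set. Concretely, for each ordered pair $(u,v) \in (V(D)\setminus Z)^2$ I invoke the algorithm of Lemma~\ref{lem:find_permutations} to compute $\cX_{uv}$, collect these into $\cH$, and set $(D',\sigma') = {\sf labeled\text{-}torso}(D,Z,\cH)$. Since $Z \cap W = \emptyset$, we have $W \subseteq V(D')$, and the parameter $k$ is unchanged. The total running time is $n^{\bigoh(1)} \cdot (\bigoh(\ell^\ell) + n^{\bigoh(1)}) = \bigoh(\ell^\ell n^{\bigoh(1)})$, matching the claim.

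For no-instance preservation I will show (contrapositively) that any solution $S'$ for $I'$ is already a solution for $I$. Suppose $H$ is a colorful walk in $D - S'$; since $H$ is a subgraph of $D$ and $W$ is a colorful walk cover of $D$, the set $V(H) \cap W$ is nonempty. Fix $v$ in this intersection, noting $v \notin Z$ (as $Z \subseteq V(D) \setminus W$) and $v \notin S'$ (as $W \cap S' = \emptyset$). For each $i \in [\ell]$, $H$ contains a closed $v$-walk $W_i$ with $\sigma(W_i)(i) \neq i$ and $V(W_i) \cap S' = \emptyset$. Applying the first bullet of Lemma~\ref{lem:torso_preserves_paths} gives a closed $v$-walk $W_i'$ in $D'$ with $\sigma'(W_i')(i) \neq i$ and $V(W_i') \subseteq V(W_i) \setminus Z$, so $W_i'$ avoids $S'$. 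The union $\bigcup_i W_i'$ is strongly connected through $v$ and hence is a colorful walk in $D' - S'$, contradicting the choice of $S'$.

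For the yes-instance direction I take the solution $S$ for which $Z$ is a shadow cover; since $S \cap Z = \emptyset$, we get $S \subseteq V(D')$, and still $S \cap W = \emptyset$. I claim $S$ is a shadowless solution for $I'$. First, $S$ covers all colorful walks in $D'$: if some colorful walk $H'$ existed in $D' - S$, pick $v \in V(H')$; for each $i$ there is a closed $v$-walk $W_i' \subseteq H'$ with $\sigma'(W_i')(i) \neq i$, and the second bullet of Lemma~\ref{lem:torso_preserves_paths} supplies a closed $v$-walk $W_i$ in $D$ with $\sigma(W_i)(i) \neq i$ and $V(W_i) \subseteq V(W_i') \cup Z$, which avoids $S$ because both $V(W_i') \cap S$ and $Z \cap S$ are empty. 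Their union is then a colorful walk in $D - S$, contradicting that $S$ is a solution for $I$. Second, $S$ is shadowless in $I'$: if some $v \in V(D')$ lay in, say, the forward shadow of $S$ with respect to $W$ in $D'$, so that no $W$-to-$v$ walk exists in $D' - S$, then by the first bullet of Lemma~\ref{lem:torso_preserves_paths} (applied to any hypothetical $W$-to-$v$ walk in $D - S$, which would yield one in $D' - S$ after deleting $Z$-vertices) no $W$-to-$v$ walk exists in $D - S$ either, so $v$ lies in the forward shadow of $S$ in $D$; the shadow-cover property then forces $v \in Z$, contradicting $v \in V(D') = V(D) \setminus Z$. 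The reverse shadow is handled symmetrically.

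The main obstacle is bookkeeping: three separate lifting arguments are needed, each invoking a different direction of Lemma~\ref{lem:torso_preserves_paths}, and one must verify that the two disjointness facts $W \cap Z = \emptyset$ (so that a witness vertex $v \in W$ survives the torso) and $S \cap Z = \emptyset$ (so that lifted walks in $D$ avoid $S$) are used in the appropriate places. Once these are pinned down, each sub-argument reduces to a short chase through Lemma~\ref{lem:torso_preserves_paths}.
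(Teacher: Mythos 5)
Your proposal is correct and follows essentially the same route as the paper's own proof: apply \textsf{labeled-torso} with the shadow cover $Z$, then use the two directions of Lemma~\ref{lem:torso_preserves_paths} to lift colorful walks back and forth, concluding with the shadow-cover property. Your phrasing of the "H intersects $W$, hence escapes $Z$" step in the no-instance direction and your direct contrapositive argument for shadowlessness are slightly cleaner than the paper's wording (which invokes a shadow-cover contradiction and an SCC-preservation observation, respectively), but they are minor repackagings of the same argument, not a different approach.
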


\begin{proof}
Let $\cH$ be the family of sets from Definition~\ref{def:labeled_torso_define}. Since $\cH$ can be computed by invoking the algorithm of Lemma~\ref{lem:find_permutations} for every pair of vertices in $V(D)\setminus Z$ , it follows that the time required to compute $\cH$ is $\bigoh(\ell^\ell n^{\bigoh(1)})$. Let $(D',\sigma')=$ {\sf labeled-torso}$(D,Z,\cH)$ and $I'=(D',\sigma',k,W)$. We now argue that $I'$ satisfies the required properties.

We first argue that if $I$ is a no-instance then $I'$ is a no-instance. In order to do so, we argue that any colorful walk cover of $D'$ is also a colorful walk cover of $D$. Suppose that this is not the case and let $S$ be a colorful walk cover of $D'$ which is not a colorful walk cover of $D$. Then, $D-S$ has a colorful walk $H$. If $H$ is contained entirely in $Z$, then it contradicts our assumption that $Z$ is a shadow-cover for $S$. Hence, we may assume that $H$ has at least one vertex outside $Z$, call it $h$. Now, Lemma~\ref{lem:torso_preserves_paths} implies that for every $h$-walk $P$ in $D$ and $\alpha,\beta$ such that $\sigma(P)(\alpha)=\beta$, there is an $h$-walk $P'$  in $D'$ such that $\sigma'(P')(\alpha)=\beta$ and $V(P')=V(P)\setminus Z$. Since $P$ is disjoint from $S$, we conclude that $P'$ is also disjoint from $S$. But this implies that there is an $h$-colorful walk in $D'$ which is disjoint from $S$, a contradiction.

%However, if $H$ has a single vertex outside $Z$, it must be a vertex of $W$ because by definition, $W$ is a colorful walk cover of $D$. On the other hand, $S$ is required to be disjoint from $W$ and by the premise of the lemma $S$ is disjoint from $Z$. Since $H$ is contained in $D[W\cup Z]$, we conclude that $H$ is disjoint from $S$, a contradiction to our assumption that $S$ is a solution. Therefore, we may assume that $H$ has at least 2 vertices outside $Z$ and let $z$ be one such vertex. 
%
%By Lemma \ref{lem:torso_preserves_paths}, we know that for every $u$,$v$,$\alpha,\beta$ and a $u$-$v$ walk $P$ in $D$ such that $\sigma(P)(\alpha)=\beta$, there is a $u$-$v$ walk $P'$ in $D'$ such that $\sigma(P)(\alpha)=\beta$ and $V(P')=V(P)\setminus Z$. Since $P$ is disjoint from $S$, so is $P'$. As a result, we infer the presence of a colorful walk in $D'-S$, a contradiction. 

In the converse direction, we argue that if $I'$ is a no-instance then $I$ is a no-instance. In order to do so, we argue that any colorful walk cover of $D$ disjoint from $Z$ is a colorful walk cover of $D'$ disjoint from $S$. Suppose that this is not the case and let $S$ be a colorful walk cover of $D$ disjoint from $Z$ which is not a colorful walk cover of $D'$. Then, there is a vertex $h\in V(D')$ and a subgraph $H$ which is an $h$-colorful walk in $D'-S$.

By Lemma~\ref{lem:torso_preserves_paths}, we know that for every $u$,$v$,$\alpha,\beta$ and a $u$-$v$ walk $P'$ in $D'$ such that $\sigma'(P')(\alpha)=\beta$, there is a $u$-$v$ walk $P$ in $D$ such that $\sigma(P)(\alpha)=\beta$ and $V(P)\subseteq V(P')\cup Z$. Since $P'$ is disjoint from $S$ and $S$ is disjoint from $Z$, it follows that $P$ is also disjoint from $S$. As a result, we infer the presence of an $h$-colorful walk in $D-S$ as well, a contradiction.

Finally, observe that due to Lemma~\ref{lem:torso_preserves_paths},  every pair of vertices  in the same strongly connected component as $W$ in $D-S$ remain in the same strongly connected component as $W$ in $D'-S$. Furthermore, $Z$ covers all vertices in the shadow of $S$ with respect to $W$ in $D$ and $V(D')=V(D)\setminus Z$. Therefore, the shadow of $S$ with respect to $W$ in $D'$ is empty. This completes the proof of the lemma.	
	\end{proof}

We are now ready to complete the proof of Lemma~\ref{lem:restricted_fpt}.

\restrictedfpt*

\begin{proof} Let $I=(D,\sigma,k,W)$ be the given instance of {\restrictedcwc}. We first execute the algorithm of Lemma~\ref{lem:shadow_removal_cwc} on this instance and obtain sets $\{Z_1,\dots, Z_t\}$, where $t=2^{\bigoh(k^2)\log^2 n}$. For every $i\in [t]$, we execute the algorithm of Lemma~\ref{lem:shadow_removal_cwc_II} on input $I$ and $Z_i$ to obtain the instance $I_i'=(D_i',\sigma_i',k,W)$. Finally, for each $i\in [t]$, we execute the {\sc Node Unique Label Cover} algorithm of Lemma~\ref{lem:nodeulc_is_easy} on input $I''_i=(\doubling(D_i',\sigma_i'),k)$ and the set $W$ to either compute a colorful walk cover of size at most $k$ disjoint from $W$ or correctly conclude that no such set exists. 
Finally, if for any $i\in [t]$, the solution to $I''_i$ is not {\sc NO}, then we return the computed set as the solution for the given instance of {\restrictedcwc}. 

The correctness and the claimed bound on the running time of this algorithm follow from those of Lemma~\ref{lem:shadow_removal_cwc}, Lemma~\ref{lem:shadow_removal_cwc_II}, Lemma~\ref{lem:nodeulc_is_easy} and Observation~\ref{obs:shadowless_same_doubling}. This completes the proof of the lemma.
\end{proof}

\subsection{The Parameterized Approximation for {\cwc}}

We are finally ready to complete the proof of Theorem~\ref{thm:fpt_approx}. For the sake of completeness, we restate it here.

\fptapprox*

At the highest level, our algorithm utilises the \emph{iterative compression} technique introduced by Reed, Smith and Vetta \cite{ReedSV04} in order to prove the fixed-parameter tractability of the {\sc Odd Cycle Transversal} problem on undirected graphs. It has subsequently gone on to become a fundamental tool in the fpr-algorithmist's toolbox. We now proceed to provide a description of the highest level of our algorithm which uses this technique.

Given an instance $(D,\sigma,k)$ of {\cwc}, where $V(D)=\{v_1,\dots,v_n\}$, we define a labeled graph $(D_i,\sigma_i$ where $V_i=\{v_1,\dots,v_i\}$, $D_i=D[V_i]$ and $\sigma_i$ is the restriction of $\sigma$ to $V_i$. We iterate through the instances $(D_i,\sigma_i,k)$ starting from $i=2k+1$ and for the $i^{th}$ instance, with the help of a \emph{known} solution $\hat S_i$ of size at most $2k+1$ we either correctly conclude that the $i^{th}$ instance has no colorful walk cover of size at most $k$ or try to find a colorful walk cover $S_i$ of size at most $2k$, i.e, a 2-approximation. This problem, which is known as the \emph{compression} problem is formally defined as follows.\\

\defparproblem{\cwcomp}{$(D,\sigma,k,\hat S)$ where $ \hat S$, a colorful walk cover of size at most $2k+1$.}{$k,\ell$}
{Does there exist a colorful walk cover of size at most $k$ for this instance?}

\vspace{10 pt}
\noindent
Our algorithm for the {\cwc} problem comprises of `solving' at most $n$ instances of the {\cwcomp} problem. Henceforth, in this context, we use `solving' to also mean obtaining a 2-approximate solution.  Let  $I_i=(D_i,\sigma,k,\hat S_i)$ be the $i^{th}$ instance of {\cwcomp}. Clearly, the set $V_{2k+1}$ is a solution of size at most $2k+1$ for the instance $I_{2k+1}$. It is also easy to see that if $S_{i-1}$ is a colorful walk cover of size at most $2k$ for instance $I_{i-1}$, then the set $S_{i-1}\cup \{v_i\}$ is a colorful walk cover of size at most $2k+1$ for the instance $I_i$. We use these two observations to initiate the iteration with the instance $(D_{2k+1},\sigma,k,\hat S_{2k+1}=V_{2k+1})$ and either compute a colorful walk cover of size at most $2k$ for this instance or correctly conclude that there is no colorful walk cover of size at most $k$. If there is such a solution $S_{2k+1}$, then we set $\hat S_{2k+2}=S_{2k+1}\cup \{v_{2k+2}\}$ and try to compute a colorful walk cover of size at most $2k$ for the instance $I_{k+2}$ and so on. If, on the other hand during any iteration, the corresponding instance is found to {\em not} have a colorful walk cover of size at most $k$, then it implies that the original instance is  a {\No} instance. Since the only way we proceed in the iteration is by computing a 2-approximate colorful walk cover $S_i$ for the instance $I_i$ of {\cwcomp}, the required 2-approximate colorful walk cover for the original input instance will be $S_{n}$. Since there can be at most $n$ iterations, the total time taken is bounded by $n$ times the time required to solve the {\cwcomp} problem.
We now discuss how to solve the {\cwcomp} problem by reducing it to a bounded number of instances of the {\restrictedcwc} problem. However, before we proceed, we need the following definition and proposition (see \cite{CyganFKLMPPS15}).

\begin{definition}
	Let $D$ be a digraph and $\cX=\{X_1,\dots, X_r\}$ be a set of disjoint vertex sets of $D$. A set $S\subseteq V(D)\setminus \bigcup_{i\in [r]}X_i$ is called an $\cX$-{\em skew separator} if  for every $1\leq i<j\leq r$, there is no directed $X_i$-$X_j$ path in $D-S$.
\end{definition}

\begin{proposition}{\rm \cite{CyganFKLMPPS15}}\label{prop:skew-sep} There is an algorithm that, given a digraph $D$, a set $\cX=\{X_1,\dots, X_r\}$ of disjoint vertex sets and an integer $k$, runs in time $\bigoh(4^k n^{\bigoh(1)})$ and either returns an $\cX$-skew separator of size at most $k$ or correctly concludes that one does not exist.	
\end{proposition}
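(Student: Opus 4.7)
The plan is to give a recursive branching algorithm based on important separators, following the approach of Chen, Liu, Lu, O'Sullivan, and Razgon for {\sc DFVS}. The algorithm processes $\cX$ from the ``right'': the last set $X_r$ must be separated from all preceding sets, so cutting this pair and recursing on the reduced family $(X_1,\dots,X_{r-1})$ strips one set off at the price of some budget consumed to isolate $X_r$.

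On input $(D, \cX = (X_1,\dots,X_r), k)$ with $k \ge 0$: if $r \le 1$ or no path exists in $D$ from $X_{<r} := X_1 \cup \cdots \cup X_{r-1}$ to $X_r$, recursively solve $(D \setminus X_r, (X_1,\dots,X_{r-1}), k)$. Otherwise, enumerate all important $(X_{<r}, X_r)$-vertex separators $Z$ of size at most $k$; by the Chen--Liu--Lu theorem there are at most $4^k$ of them and they can be produced in $\bigoh(4^k n^{\bigoh(1)})$ time using the standard ``vertex on the rightmost minimum cut / include-or-exclude'' branching procedure. For each such $Z$, recursively solve $(D \setminus Z, (X_1,\dots,X_{r-1}), k - |Z|)$; if a skew separator $S'$ for the reduced instance is returned, output $Z \cup S'$. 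If all branches fail, return \No.

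For correctness, the key ingredient is a pushing lemma: if $(D,\cX,k)$ admits a skew separator $S^*$, then $S^*$ contains some $(X_{<r}, X_r)$-vertex separator $Z^* \subseteq S^*$, and by the submodular-lattice structure of minimum vertex cuts we may ``push'' $Z^*$ towards $X_r$ and replace it by an important separator $Z$ of size at most $|Z^*|$ such that $(S^* \setminus Z^*) \cup Z$ is still a skew separator of size $\le |S^*|$. Hence at least one branch---namely the one corresponding to $Z$---recovers an optimal solution via the recursive call on $(D \setminus Z, (X_1,\dots,X_{r-1}), k - |Z|)$, since the remainder $S^* \setminus Z^*$ solves the reduced skew-separator instance on $D \setminus Z$.

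For the running time, let $T(k)$ bound the time on any instance with budget $k$. Each recursive invocation either strictly decreases $r$ or strictly decreases the budget by $|Z| \ge 1$. The recurrence
\[T(k) \le n^{\bigoh(1)} + \sum_{Z} T(k - |Z|)\cdot n^{\bigoh(1)}\]
yields $T(k) = \bigoh(4^k \cdot n^{\bigoh(1)})$ by induction, using the fact that the weighted count $\sum_{Z: \text{imp.\ sep.,\ }|Z|\le k} 4^{-|Z|}$ is bounded by a constant (essentially the Chen--Liu--Lu $4^k$ bound applied in a charging argument). The main obstacle is the pushing lemma, which must handle the interaction between the ``$X_{<r}$-to-$X_r$ part'' of a skew separator and the remaining cuts between earlier pairs; this is resolved by observing that pushing a cut towards $X_r$ only shrinks the reachable side, and therefore cannot introduce any new forbidden $X_i$-$X_j$ path with $i < j < r$.
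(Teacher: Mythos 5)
The paper itself does not prove this proposition; it cites it from the Cygan et al.\ textbook, where the skew separator algorithm peels from the \emph{left}: it separates $X_1$ from $\bigcup_{j>1}X_j$ and enumerates important $X_1$-to-$\bigcup_{j>1}X_j$ separators, pushed towards $\bigcup_{j>1}X_j$. Your branching skeleton is the same, but peeling from $X_r$ while pushing towards $X_r$ is the wrong orientation, and the resulting pushing lemma is false. Important $(X_{<r},X_r)$-separators \emph{enlarge}, not shrink, the set $R_D(X_{<r},Z)$ reachable from $X_{<r}$, and the vertices released by the push can lie on $X_i$-to-$X_j$ paths with $i<j<r$. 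Concretely, take $V(D)=\{x_1,x_2,x_3,a,b\}$ with arcs $(x_1,a)$, $(a,x_2)$, $(a,b)$, $(b,x_3)$, $\cX=(\{x_1\},\{x_2\},\{x_3\})$, and $k=1$. The set $\{a\}$ is a skew separator of size~$1$. But the only important $\{x_1,x_2\}$--$\{x_3\}$ separator of size at most~$1$ is $\{b\}$, which dominates $\{a\}$ since $R_D(\{x_1,x_2\},\{b\})=\{x_1,x_2,a\}\supsetneq\{x_1,x_2\}=R_D(\{x_1,x_2\},\{a\})$. Branching on $Z=\{b\}$ and recursing on $(D-b,(\{x_1\},\{x_2\}),0)$ fails because the $x_1\to a\to x_2$ path survives, so your algorithm would incorrectly return \No.

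The fix is to push towards the side with many sets. If you insist on peeling from $X_r$, you must enumerate the important separators of the \emph{reversed} digraph, i.e.\ minimal $X_{<r}$--$X_r$ separators that maximize the set of vertices able to reach $X_r$ in $D-Z$; equivalently, peel from $X_1$ as in the textbook. With that orientation the pushing lemma goes through: a hypothetical $X_i$-to-$X_j$ path $P$ (with $1<i<j$) avoiding $(S^*\setminus Z^*)\cup Z$ must meet some $v\in Z^*\setminus Z$; since $Z$ dominates $Z^*$, this $v$ lies in $R_D(X_1,Z)$, so prepending an $X_1$-to-$v$ path onto the tail of $P$ produces an $X_1$-to-$X_j$ path in $D-Z$, contradicting that $Z$ is an $X_1$-to-$\bigcup_{j>1}X_j$ separator. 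This extension-by-prepending argument is exactly the step that your ``pushing towards $X_r$ only shrinks the reachable side'' gloss was meant to supply, and it requires the opposite push to be available.
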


\begin{lemma}\label{lem:skew_sep_exists}
	Let $(D,\sigma,k,\hat S)$ be an instance of {\cwcomp} and let $S$ be a solution for this instance. There exists an ordered partition $\cW=\{W_1,\dots, W_r\}$ of $\hat S\setminus S$ such that $S$ is a $\cW$-skew separator in $D$.
\end{lemma}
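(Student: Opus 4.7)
\smallskip
\noindent\textbf{Proof Plan.} My plan is to construct the ordered partition directly from the strongly connected components of $D - S$, using a topological ordering of the condensation. Specifically, I would first consider the condensation of $D - S$, i.e., the DAG obtained by contracting each strongly connected component of $D - S$ to a single vertex. Let $C_1, C_2, \ldots, C_t$ be the strongly connected components of $D - S$ listed in \emph{reverse} topological order of the condensation, meaning that whenever there is an arc from $C_a$ to $C_b$ (with $a \neq b$) in the condensation, we have $a > b$. Since $S \subseteq V(D)$ and $S$ is disjoint from $\hat{S} \setminus S$, every vertex of $\hat{S} \setminus S$ belongs to exactly one component $C_i$. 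For each $i \in [t]$, set $W'_i = C_i \cap (\hat{S} \setminus S)$, then discard the empty sets and re-index the surviving ones (preserving order) to obtain $\mathcal{W} = \{W_1, W_2, \ldots, W_r\}$. By construction, $\mathcal{W}$ is an ordered partition of $\hat{S} \setminus S$.

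The only thing left to verify is that $S$ is a $\mathcal{W}$-skew separator. Suppose, for contradiction, that there exist indices $1 \le i < j \le r$ and vertices $u \in W_i$, $v \in W_j$ with a directed $u$-$v$ path $P$ in $D - S$. Let $C_u$ and $C_v$ be the strongly connected components of $D - S$ containing $u$ and $v$, respectively. Since $i < j$ in $\mathcal{W}$ and re-indexing preserves order, $C_u$ has a strictly smaller index than $C_v$ in the reverse topological ordering $C_1, \ldots, C_t$. If $C_u = C_v$, then $u$ and $v$ lie in the same set $W_i = W_j$, contradicting $i < j$. Otherwise, the existence of $P$ means that $C_u$ reaches $C_v$ in the condensation, so the reverse topological property forces $C_u$ to have a strictly \emph{larger} index than $C_v$, a contradiction. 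I do not anticipate any hard steps here: the entire argument is a routine application of topological ordering, and the only subtlety is correctly choosing \emph{reverse} topological order so that ``forbidden'' paths would violate the ordering.
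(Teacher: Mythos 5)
Your proof is correct and takes essentially the same approach as the paper: both construct $\mathcal{W}$ by grouping $\hat{S}\setminus S$ according to the strongly connected components of $D-S$ and ordering these groups by a reverse topological order of the condensation, so that any forbidden $W_i$-$W_j$ path ($i<j$) would contradict that ordering. The only difference is that you spell out the contradiction argument which the paper leaves implicit.
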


\begin{proof}
Let ${\cal C}=\{C_1,\dots, C_t\}$ be the set of strongly connected components of $D-S$ such that for every $1\leq i<j\leq t$, there is no $C_i$-$C_j$ path in $D$. Let ${\cal C}'=\{C_{i_1},\dots, C_{i_r}\}$ be the subset of $\cal C$ comprising strongly connected components intersecting $\hat S\setminus S$, where $i_{j}<i_{j'}$ for every $1\leq j<j'\leq r$. For each $j\in [r]$, let $W_j=(\hat S\setminus S)\cap C_{j}$. From the definitions of ${\cal C}$ and ${\cal C'}$, it follows that there is no $W_i$-$W_j$ path in $D-S$ for any $1\leq i<j\leq r$, implying that $S$ is a $\cW$-skew separator, where $\cW=\{W_1,\dots, W_r\}$. This completes the proof of the lemma.	
\end{proof}

We refer to the unique partition $\cW$ in the proof of the above lemma, as the partition of $\hat S\setminus S$ which \emph{respects} $S$.

\begin{definition}\label{def:skew_sep_empty}
	Let $I=(D,\sigma,k,\hat S)$ be an instance of {\cwcomp} and let $\cW=\{W_1,\dots, W_r\}$ be a partition of $\hat S$ such that  $D$ has a $\cW$-skew separator of size 0. Suppose that if $I$ is a yes-instance then $\cW$ is the partition respecting a solution for $I$  disjoint from $\hat S$.   Then $I$ is called a $\cW$-{\em nice instance}. 
	\end{definition}

\begin{lemma}
	\label{lem:skew_sep_empty}
	There is an algorithm that, given an instance $I=(D,\sigma,k,\hat S)$ of {\cwcomp} and a partition $\cW=\{W_1,\dots, W_r\}$ of $\hat S$ such that $I$ is a $\cW$-nice instance, runs in time $\ell^{\bigoh(k+\ell)} 2^{\bigoh(k^2)} n^{\bigoh(1)}$ and either returns a colorful walk cover of size at most $k$ or correctly concludes that one does not exist.
\end{lemma}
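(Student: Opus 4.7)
The plan is to decompose the nice instance according to the skew structure encoded by $\cW$, and to solve one {\restrictedcwc} sub-instance per block. For each $i \in [r]$, I define $V_i^0$ to be the set of vertices of $D$ that simultaneously reach some vertex of $W_i$ and are reached from some vertex of $W_i$. The size-$0$ $\cW$-skew separator hypothesis immediately implies that $V_1^0, \dots, V_r^0$ are pairwise disjoint, that $V_i^0 \cap \hat S = W_i$ for every $i$, and that each $D[V_i^0]$ is strongly connected. Consequently, for any $S \subseteq V(D) \setminus \hat S$, every strongly connected component of $D - S$ is either contained in a single $V_j^0$ or is entirely disjoint from $\hat S$.

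From this I extract a clean reduction. Any family $(S_i)_{i=1}^{r}$ with $S_i \subseteq V_i^0 \setminus W_i$ and each $S_i$ a colorful walk cover of $D[V_i^0]$ automatically assembles into a colorful walk cover of $D$ of size $\sum_i |S_i|$: every SCC of $D - \bigcup_i S_i$ lies either inside some $V_j^0$ (colorful-walk-free by the subproblem guarantee) or outside $\bigcup_i V_i^0$ (colorful-walk-free because $\hat S$ is a colorful walk cover of $D$). Conversely, a solution $S^*$ of $(D,\sigma,k,\hat S)$ respecting $\cW$ may be assumed contained in $\bigcup_i V_i^0$ (any vertex outside lies in an SCC disjoint from $\hat S$ and is therefore redundant in $S^*$), and then $S^*_i := S^* \cap V_i^0$ is a colorful walk cover of $D[V_i^0]$ disjoint from $W_i$ that additionally keeps $W_i$ inside a single SCC of $D[V_i^0] - S^*_i$.

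Accordingly, the algorithm runs, for each $i$ and each budget $k' \in \{0, 1, \dots, k\}$, the algorithm of Lemma~\ref{lem:restricted_fpt} on $(D[V_i^0], \sigma|_{V_i^0}, k', W_i)$ and records the smallest $k'$ yielding a set $S_i$; it outputs $\bigcup_i S_i$ if these all exist and $\sum_i |S_i| \le k$, and \No\ otherwise. Soundness holds regardless of whether $D[W_i]$ is strongly connected, since each $S_i$ is by construction a colorful walk cover of $D[V_i^0]$ disjoint from $W_i$, so the union is a colorful walk cover of $D$ of size $\sum_i |S_i| \le k$. For completeness, suppose the original instance is a yes-instance; by the nice-partition hypothesis a $\cW$-respecting solution $S^*$ exists, giving the structured restrictions $S^*_i$ above. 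For the budget $k' = |S^*_i|$ and the shadow cover produced by Lemma~\ref{lem:shadow_removal_cwc} that covers the shadow of $S^*_i$, the shadow-removed graph minus $S^*_i$ is strongly connected (every remaining vertex reaches and is reached by $W_i$, and $W_i$ itself lies in a single SCC by the nice-partition hypothesis), so Lemma~\ref{lem:iff_colorful_undirected} applies and the Node Unique Label Cover call of Lemma~\ref{lem:nodeulc_is_easy} at budget $|S^*_i|$ succeeds, producing a witness $S_i$ with $|S_i| \le |S^*_i|$. Summing gives $\sum_i |S_i| \le \sum_i |S^*_i| = |S^*| \le k$.

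The running time amounts to $r(k+1)$ invocations of Lemma~\ref{lem:restricted_fpt}, which fits within $\ell^{\Oh(k+\ell)} 2^{\Oh(k^2)} n^{\Oh(1)}$ since $r \le 2k+1$. The main technical subtlety is the adaptation of the completeness argument of Lemma~\ref{lem:restricted_fpt} to our setting: its formal precondition ``$D[W]$ is strongly connected'' is replaced by the weaker pair of conditions ``$D[V_i^0]$ is strongly connected'' (from the skew hypothesis) together with ``$W_i$ lies in a single SCC of the target $D[V_i^0] - S^*_i$'' (from the nice partition), and these together suffice to drive the doubling equivalence of Lemma~\ref{lem:iff_colorful_undirected} for the intended shadowless solution.
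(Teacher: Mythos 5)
Your decomposition and algorithm are essentially the paper's: for each $i\in[r]$ you take the strongly connected component of $D$ containing $W_i$ (your $V_i^0$ coincides with the paper's $C_i$ once $W_i$ lies inside a single SCC of $D$), observe that every colorful walk of $D$ lands inside one of these pieces, and invoke Lemma~\ref{lem:restricted_fpt} on each piece. You are more careful than the paper about the budget iteration, which is a genuine improvement in exposition.

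The flaw is in your ``main technical subtlety'' paragraph. You claim the {\restrictedcwc} precondition that $D[W]$ be strongly connected can be replaced by ``$D[V_i^0]$ is strongly connected'' together with ``$W_i$ lies in a single SCC of $D[V_i^0]-S_i^*$'' for the $\cW$-respecting solution $S_i^*$, arguing completeness against $S_i^*$ and a shadow cover of $S_i^*$. But Lemma~\ref{lem:shadow_removal_cwc} (via Proposition~\ref{prop:original_shadow_removal}) only guarantees that \emph{some} transversal of size at most $k'$ has its shadow covered by one of the $Z_j$'s; you cannot force that transversal to be $S_i^*$, and the nice-partition hypothesis says nothing about the SCC structure around the opaque transversal the shadow-covering step actually produces. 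So the chain ``shadow-removed graph minus the solution is strongly connected $\Rightarrow$ Lemma~\ref{lem:iff_colorful_undirected} applies $\Rightarrow$ the Node ULC call of Lemma~\ref{lem:nodeulc_is_easy} succeeds'' does not go through for the reasons you give. What actually makes the argument sound is the invocation context: Lemma~\ref{lem:skew_sep_empty} is called in the compression algorithm only after the bundling step, which makes each $D[W_i]$ a complete digraph and hence strongly connected, so the precondition of {\restrictedcwc} holds verbatim and Lemma~\ref{lem:restricted_fpt} applies as a black box --- no adaptation is needed or available. Relatedly, ``each $D[V_i^0]$ is strongly connected'' does not follow from the size-$0$ skew-separator hypothesis alone; it additionally requires each $W_i$ to sit in a single SCC of $D$, which the nice/bundling structure provides in the yes-case and which the algorithm should check up front, rejecting if it fails, to handle no-instances.
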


\begin{proof} 
Let $S$ be a solution disjoint from $\hat S$ such that $\cW$ is the partition which respects $S$.
For each $i\in [r]$, let $C_i$ denote the unique strongly connected component of $D$ which contains the set $W_i$. Since $I$ is a $\cW$-nice instance, it has a $\cW$-skew separator of size 0. Therefore, for $1\leq i<j\leq r$, $C_i$ and $C_j$ are distinct. For every $i\in [r]$, we now define the labeled digraph $(D_i,\sigma_i)=(D[C_i],\sigma|_{C_i})$ and the instance $I_i=(D_i,\sigma_i,k,W_i)$ of {\restrictedcwc}.

We execute the algorithm of Lemma~\ref{lem:restricted_fpt} for each $i\in [r]$ and let $L_i$ denote the result of the execution on the instance $I_i$, where $L_i$ can either denote {\sc No} or a smallest solution for the instance $I_i$.
 If for any $i\in [r]$,  $L_i$ is {\sc No}, then we return that $I$ is a no-instance. This is correct because $I_i$ is a sub-instance of $I$. On the other hand, suppose that for each $i\in [r]$, $L_i$ denotes a vertex set which we know is a \emph{smallest} colorful walk cover of $D_i$ of the required kind. Since the digraphs $D_1,\dots, D_r$ are vertex-disjoint and every colorful walk is contained in one of these digraphs, we conclude that $S'=\bigcup_{i\in [r]}S_i$ is a smallest colorful walk cover for the instance $I$. Therefore, if $S'$ is larger than $k$ then we return {\sc No} and otherwise we return $S'$. The running time of this algorithm is dominated by the time required for at most $2k+1$ invocations of the algorithm of Lemma~\ref{lem:restricted_fpt}, proving the stated bound on the running time. This completes the proof of the lemma.
\end{proof}

%\paragraph{Reducing {\cwcomp} to {\restrictedcwc}.} 
%Fix a hypothetical colorful walk cover $S$ of size at most $k$ for the given instance  of {\cwcomp} which maximizes the intersection with the given set $\hat S$. 
%We guess the set $Y=S\cap \hat S$, delete these vertices and reduce $k$ by $\vert Y\vert$. For each guess of $Y$, we set $W=\hat S\setminus Y$ and we solve {\soctcstar} on the instance $(G\setminus Y, T\setminus Y, W,k-\vert Y\vert)$, and if for some guess the answer is  {\Yes}, we return the solution for the instance corresponding to that guess. The number of guesses is bounded by $2^{\vert \hat S\vert}=2^{\bigoh(k)}$. Hence, the time to solve {\soctco} is bounded by $2^{\bigoh(k)}$ times the time to solve {\soctcstar}.

\begin{definition}
Let $(D,\sigma)$ be a labeled digraph.	A {\em consistent labeling} of $D$ is a function $\Gamma:V(D)\to[\ell]$ such that for every arc $a=(u,v)\in A(D)$, $\sigma(a)(\Gamma(u))=\Gamma(v)$. 
For a set $X\subseteq V(D)$ and function $\chi:X\to [\ell]$, we say that $\chi$ is an {\em extendible consistent labeling} of $D$ if $D$ has a consistent labeling $\Gamma$ such that $\Gamma|_X=\chi$.
\end{definition}

%\begin{lemma}\label{}
%A function $\Gamma:V(D)\to[\ell]$ is a consistent labeling of $D$ if and only if  for every $v\in V(D)$ and 
%every closed $v$-walk $W$ in $D$, $\sigma(W)(\Gamma(v))=\Gamma(v)$.
%\end{lemma}
%
%\begin{proof}
%	
%\end{proof}

\begin{proposition}{\rm \cite{ChitnisCHPP16}}\label{prop:undirected_consistent_labeling}
	Let $(D,\sigma)$ be a strongly connected labeled digraph such that $\doubling(D)$ $=D$. Then, $D$ has a consistent labeling if and only if it has no colorful walks. 
\end{proposition}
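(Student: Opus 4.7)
The plan is to prove the two directions separately: the forward direction is essentially an immediate propagation argument, while the converse is a constructive argument that resembles how one two-colors a connected bipartite graph by fixing a color at a root and propagating, with the symmetry hypothesis $\doubling(D)=D$ playing the role of "undirectedness."

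For the forward direction, I would assume a consistent labeling $\Gamma$ exists and show by induction on length that for any directed walk $W=v_1,\dots,v_r$, iterating the consistency condition along the arcs yields $\sigma(W)(\Gamma(v_1))=\Gamma(v_r)$. In particular, for every closed $v$-walk $W$ we get $\sigma(W)(\Gamma(v))=\Gamma(v)$, so choosing $i=\Gamma(v)$ witnesses that $D$ is not $v$-colorful. The same reasoning applies to every strongly connected subgraph $H$ of $D$ (since closed $v$-walks in $H$ are also closed $v$-walks in $D$), so by the previous lemma characterizing colorful walks, no subgraph of $D$ is a colorful walk.

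For the converse direction, I would assume $D$ has no colorful walk. Since $D$ itself is strongly connected and is not a colorful walk, the earlier lemma forces that $D$ is not $v$-colorful for any $v\in V(D)$. Pick an arbitrary $v_0\in V(D)$ and choose $i_0\in[\ell]$ such that every closed $v_0$-walk $W$ satisfies $\sigma(W)(i_0)=i_0$. For each $u\in V(D)$, fix an arbitrary $v_0$-$u$ walk $P_u$ (available since $D$ is strongly connected) and define $\Gamma(u):=\sigma(P_u)(i_0)$, with $P_{v_0}$ trivial so that $\Gamma(v_0)=i_0$. Consistency along any arc $a=(u,v)$ is immediate, since $P_u+a$ is a $v_0$-$v$ walk and therefore $\Gamma(v)=\sigma(P_u+a)(i_0)=\sigma(a)(\sigma(P_u)(i_0))=\sigma(a)(\Gamma(u))$.

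The main obstacle, and the only place where $\doubling(D)=D$ is used, is verifying that $\Gamma$ is well-defined, i.e.\ independent of the choice of $P_u$. Given two $v_0$-$u$ walks $P_1,P_2$, the assumption $\doubling(D)=D$ guarantees that reversing $P_2$ arc-by-arc yields a valid $u$-$v_0$ walk $\tilde P_2$ in $D$ with $\sigma(\tilde P_2)=\sigma(P_2)^{-1}$. Then $P_1+\tilde P_2$ is a closed $v_0$-walk, so $\sigma(P_1+\tilde P_2)(i_0)=i_0$ by our choice of $i_0$, which rearranges to $\sigma(P_1)(i_0)=\sigma(P_2)(i_0)$. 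This is the crux: without the reversibility hypothesis one cannot cancel the two $v_0$-$u$ walks into a closed walk at $v_0$, and the local "fixed coordinate" at $v_0$ would not propagate coherently throughout $D$.
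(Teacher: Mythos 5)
The paper does not prove Proposition~\ref{prop:undirected_consistent_labeling}; it is imported verbatim from Chitnis et al.\ (the citation appearing in the proposition header), so there is no in-paper proof to compare against. That said, your argument is correct and is the natural one: the forward direction is the usual induction showing $\sigma(W)(\Gamma(v_1))=\Gamma(v_r)$ along any walk, so $i=\Gamma(v)$ is fixed by every closed $v$-walk and no strongly connected subgraph can be $v$-colorful; the converse direction fixes a root $v_0$ and a fixed index $i_0$ (guaranteed to exist because $D$ itself is not $v_0$-colorful), propagates $\Gamma(u):=\sigma(P_u)(i_0)$ along $v_0$-$u$ walks, and establishes well-definedness by concatenating $P_1$ with the arc-reversed $P_2$ into a closed $v_0$-walk. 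You correctly isolate the one place where $\doubling(D)=D$ is needed, namely to make the reversal $\tilde P_2$ a genuine walk in $D$ with $\sigma(\tilde P_2)=\sigma(P_2)^{-1}$; without that symmetry the cancellation argument collapses, just as the claim itself fails for general labeled digraphs. This is exactly the directed, permutation-labeled analogue of two-coloring a connected bipartite graph by breadth-first propagation from a root.
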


Combining Proposition~\ref{prop:undirected_consistent_labeling} and Lemma~\ref{lem:iff_colorful_undirected}, we make the following observation. 

\begin{observation}\label{obs:directed_consistent_labeling}
	A strongly connected labeled digraph has a consistent labeling if and only if it has no colorful walks. 
\end{observation}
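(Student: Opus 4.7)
My plan is to derive Observation~\ref{obs:directed_consistent_labeling} by reducing to the already-established Proposition~\ref{prop:undirected_consistent_labeling} via the doubling operation. Concretely, given a strongly connected labeled digraph $(D,\sigma)$, let $(D',\sigma') = \doubling(D,\sigma)$. I would first record three structural facts about $D'$: (i) $V(D') = V(D)$ and $A(D) \subseteq A(D')$ (the duplicate-removal step in the definition of $\doubling$ only deletes arcs from $A(D')\setminus A(D)$); (ii) $D'$ is strongly connected, since it is a supergraph of the strongly connected $D$; and (iii) $\doubling(D') = D'$, because every arc added by a second application of doubling is either already in $D$ (hence in $D'$) or duplicates some arc in $D'$ with the same permutation and thus gets removed.

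Next, I would show that consistent labelings of $D$ and of $D'$ coincide. The forward direction uses that a permutation and its inverse are related by $\sigma(a)(i)=j \iff \sigma(a)^{-1}(j)=i$. So if $\Gamma:V(D)\to[\ell]$ satisfies $\sigma(a)(\Gamma(u))=\Gamma(v)$ for every $a=(u,v)\in A(D)$, then for every newly added reverse arc $a'=(v,u)\in A(D')\setminus A(D)$ with $\sigma'(a')=\sigma(a)^{-1}$ we have $\sigma'(a')(\Gamma(v))=\Gamma(u)$, so $\Gamma$ is also consistent for $D'$. The converse is immediate from $A(D)\subseteq A(D')$ and $V(D)=V(D')$.

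Combining these pieces is then routine: by Lemma~\ref{lem:iff_colorful_undirected}, $D$ has no colorful walk iff $D'$ has no colorful walk; by (ii) and (iii), Proposition~\ref{prop:undirected_consistent_labeling} applies to $D'$ and says that $D'$ has no colorful walk iff $D'$ has a consistent labeling; and by the previous paragraph, $D'$ has a consistent labeling iff $D$ does. Chaining these equivalences yields the observation.

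I do not anticipate a real obstacle here; all of the nontrivial content is imported from Lemma~\ref{lem:iff_colorful_undirected} and Proposition~\ref{prop:undirected_consistent_labeling}. The only place requiring a small amount of care is verifying that consistent labelings transfer unchanged across doubling, which is just the inversion identity for the permutations attached to reverse arcs and the observation that duplicate-removal never deletes an arc originally present in $D$.
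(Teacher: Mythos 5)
Your proof is correct and follows exactly the route the paper has in mind: the paper dispatches this observation in one sentence, "Combining Proposition~\ref{prop:undirected_consistent_labeling} and Lemma~\ref{lem:iff_colorful_undirected}, we make the following observation," which is precisely your chain through $\doubling(D)$. You have merely spelled out the (correct) bookkeeping the paper leaves implicit—that $\doubling(D)$ is strongly connected and idempotent under $\doubling$, and that consistent labelings transfer unchanged across doubling via the identity $\sigma(a)(i)=j \iff \sigma(a)^{-1}(j)=i$.
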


The above observation implies that every strongly connected component of $D-S$ has a consistent labeling. We now define the operation of {\em bundling} a set of vertices as follows.

\begin{definition}
Let $(D,\sigma)$ be a labeled digraph. Let $X\subseteq V(D)$ and $\Gamma:X\to [\ell]$. We denote by $\bundle(D,\sigma,X,\Gamma)$ the labeled digraph $(D',\sigma')$ obtained from $D$ as follows. Initially, $D'=D$, $\sigma'=\sigma$. For every pair $x_1,x_2\in X$, we pick an arbitrary permutation $\pi\in S_\ell$ such that $\pi(\Gamma(x_1))=\Gamma(x_2)$ and we add an arc $a=(x_1,x_2)$ with $\sigma'(a)=\pi$.  
\end{definition}

Note that this operation is essentially the same as identifying the vertices of $X$ to get a single new vertex and then updating the labels on the arcs adjacent to the resulting new vertex in a certain way specified by the function $\Gamma$. However, we define it in this way because it simiplies the presentation in the rest of the section.

\begin{lemma}\label{lem:bundling_is_fine}
	Let $(D,\sigma)$ be a strongly connected labeled digraph with no colorful walks and let  $\Gamma:V(D)\to [\ell]$ be a consistent labeling. Then, for any $X\subseteq V(D)$, 
	\begin{itemize}\item $\Gamma$ is a consistent labeling for the labeled digraph $D'=\bundle(D,\sigma,X,\Gamma|_{X})$ and \item $D'$ does not contain a colorful walk. 
		
	\end{itemize}

\end{lemma}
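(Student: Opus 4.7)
My plan is to prove the two bullet points in sequence, with the first feeding directly into the second via a closed-walk argument.

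For the first bullet, I would verify the defining property of a consistent labeling arc by arc in $D'$. The arc set of $D'$ splits into $A(D)$ and the arcs $A(D')\setminus A(D)$ introduced by $\bundle$. For any arc $a=(u,v)\in A(D)$ we have $\sigma'(a)=\sigma(a)$ and $\sigma(a)(\Gamma(u))=\Gamma(v)$ already by hypothesis on $\Gamma$. For any new arc $a=(x_1,x_2)$ with $x_1,x_2\in X$, the $\bundle$ operation chose $\sigma'(a)=\pi$ precisely so that $\pi(\Gamma(x_1))=\Gamma(x_2)$, i.e.\ $\sigma'(a)(\Gamma(x_1))=\Gamma(x_2)$. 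Hence $\Gamma$ is a consistent labeling for $(D',\sigma')$.

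For the second bullet, I would argue directly from the definition of a colorful walk rather than re-invoking strong connectivity. Suppose, for contradiction, that $D'$ contains a colorful walk $H$, and let $v\in V(H)$ be arbitrary. A straightforward induction on walk length shows that if $\Gamma$ is a consistent labeling of $D'$, then for every closed $v$-walk $W$ in $D'$ one has $\sigma'(W)(\Gamma(v))=\Gamma(v)$: the base case is a single arc (handled by part one), and concatenation of walks composes permutations, so the property propagates. In particular, taking $i=\Gamma(v)$ in the definition of a $v$-colorful walk, the required closed $v$-walk $W$ would have to satisfy $\sigma'(W)(\Gamma(v))\neq \Gamma(v)$, contradicting the previous sentence. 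Thus no vertex $v$ can be the base of a colorful walk in $D'$, so $D'$ has no colorful walks.

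Alternatively, I could note that $D'$ remains strongly connected (we only add arcs to the strongly connected digraph $D$) and apply Observation~\ref{obs:directed_consistent_labeling} to conclude that the existence of the consistent labeling $\Gamma$ on $D'$ precludes any colorful walk in $D'$; the closed-walk argument above is essentially the easy direction of that observation's proof, and I would include it for self-containedness.

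The main obstacle, such as it is, is purely definitional bookkeeping: making sure that the permutations introduced by $\bundle$ genuinely satisfy the consistency condition on the new arcs. Since $\bundle$ is defined precisely so that $\pi(\Gamma(x_1))=\Gamma(x_2)$, this reduces to unwinding the definitions, after which the rest of the argument is the standard closed-walk/consistent-labeling interplay.
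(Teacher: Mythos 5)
Your proof is correct and takes essentially the same approach as the paper; the first bullet is verified arc by arc exactly as in the paper, and for the second bullet you simply inline the easy direction of Observation~\ref{obs:directed_consistent_labeling} (consistent labeling $\Rightarrow$ no colorful walk, via the closed-walk induction), whereas the paper just cites the observation directly. You even note the observation-based shortcut as an alternative, so the two arguments coincide up to self-containedness.
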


\begin{proof} 
The first statement is a simple consequence of the fact that $\Gamma$ is already a consistent labeling of $D$ and the newly added arcs  clearly do not violate the condition required for $\Gamma$ to be consistent. 
The second statement of the lemma follows from the first statement and 
%fact that $\Gamma$ is still a consistent labeling for $\bundle(D,\sigma,X,\Gamma)$
 Observation~\ref{obs:directed_consistent_labeling}.
\end{proof}

We are now ready to present our algorithm that `solves' the {\cwcomp} problem. That is, an algorithm that, if the given instance is not a {\sc No} instance, returns a colorful walk cover whose size is at most twice the given budget.

\begin{lemma}
%	Given an instance $I=(D,\sigma,k,\hat S)$ of 
There is an algorithm that, given an instance $(D,\sigma,k,\hat S)$ of 
	{\cwcomp}, runs in time $\ell^{\bigoh(k)} 2^{\bigoh(k^2)}n^{\bigoh(1)}$ and either computes a colorful walk cover of size at most $2k$ or correctly concludes that there is no colorful walk cover of size at most $k$.
%	we can in time $\bigoh^*(2^{\bigoh(k \log k)})$ construct a set $Z=\{I_1,\dots, I_t\}$ of $t=2^{\bigoh(k\log k)}$ instances of {\restrictedcwc} such that 
%
%\begin{itemize}
%		\item $I$ has a colorful walk cover of size at most $2k$ if at least one of these instances is a yes-instances and such a colorful walk cover can be computed in time $\bigoh^*(2^{\bigoh(k^3)})$, and 
%		\item if all of the instances in $Z$ are no-instances, then $I$ is a no-instance.
%	\end{itemize}
\end{lemma}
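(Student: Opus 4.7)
The plan is to combine iterative-compression-style guessing on $\hat S$ with the skew-separator algorithm of Proposition~\ref{prop:skew-sep} and the \restrictedcwc algorithm from Lemma~\ref{lem:restricted_fpt}. Let $T$ denote a hypothetical colorful walk cover of $D$ of size at most $k$. First I guess, at cost $(\ell + 1)^{|\hat S|} \leq (\ell+1)^{2k+1}$, for each $v \in \hat S$ either that $v \in T$ or a label $\Gamma(v) \in [\ell]$; the latter is meant to be the label assigned to $v$ by a consistent labeling of the strongly connected component of $D - T$ containing $v$, which exists by Observation~\ref{obs:directed_consistent_labeling}. Let $Y \subseteq \hat S$ be the guessed set of deleted vertices (discard this branch if $|Y| > k$), set $k' = k - |Y|$, and let $\Gamma : \hat S \setminus Y \to [\ell]$ be the guessed labeling. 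At additional cost $2^{\Oh(k \log k)}$ I guess an ordered partition $\cW = (W_1, \ldots, W_r)$ of $\hat S \setminus Y$ describing how $\hat S \setminus Y$ breaks into strongly connected components of $(D - Y) - (T \setminus Y)$ together with their topological ordering.

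Next I build the bundled labeled digraph $D'$ by deleting $Y$ from $D$ and then successively applying \bundle to each pair $(W_i, \Gamma|_{W_i})$. By Lemma~\ref{lem:bundling_is_fine}, applied to the colorful-walk-free, $\Gamma$-consistent strong component of $D - Y - (T \setminus Y)$ containing $W_i$, this operation does not introduce new colorful walks in $D' - (T \setminus Y)$, so $T \setminus Y$ is still a colorful walk cover of $D'$. By Lemma~\ref{lem:skew_sep_exists}, $T \setminus Y$ is also a $\cW$-skew separator in $D'$ for the correct guess of $\cW$, so Proposition~\ref{prop:skew-sep} applied to $(D', \cW, k')$ either returns a $\cW$-skew separator $S_1$ of size at most $k'$, in which case I proceed, or fails, in which case I abandon this branch. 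After deleting $S_1$, each $W_i$ is isolated in its own strong component $C_i$ of $D' - S_1$, so the remaining task decomposes into vertex-disjoint sub-instances $J_i = (D'[C_i], \sigma'|_{C_i}, k', W_i)$ of \restrictedcwc; I run Lemma~\ref{lem:restricted_fpt} on each, abandoning the branch if any returns \No\ and otherwise collecting the produced solutions $U_i$.

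The algorithm outputs $Y \cup S_1 \cup \bigcup_i U_i$ from the first successful branch, or concludes that no CWC of size $\leq k$ exists if every branch fails. For the size bound, observe that for the correct guess each $(T \setminus Y) \cap C_i$ is itself a feasible solution to $J_i$, so Lemma~\ref{lem:restricted_fpt} returns $U_i$ with $|U_i| \leq |(T \setminus Y) \cap C_i|$; summing over the disjoint $C_i$ gives $\sum_i |U_i| \leq |T \setminus Y| \leq k'$, and combined with $|S_1| \leq k'$ the total output has size at most $|Y| + 2k' = 2k - |Y| \leq 2k$. Validity is immediate: every colorful walk of $D$ either meets $Y$, or is confined to $D - Y$ and hence (as a subgraph) to $D'$, in which case it either crosses between distinct $C_i$'s and is cut by $S_1$ or lies entirely in some $C_i$ and is cut by $U_i$. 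Conversely, if no branch succeeds then in particular the branch corresponding to the correct guess fails, contradicting the feasibility bounds just established for a hypothetical $T$.

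The running time is dominated by $(\ell+1)^{\Oh(k)} \cdot 2^{\Oh(k \log k)}$ guess branches, each invoking Proposition~\ref{prop:skew-sep} (time $4^{k'} n^{\Oh(1)}$) once and Lemma~\ref{lem:restricted_fpt} (time $\ell^{\Oh(k+\ell)} 2^{\Oh(k^2)} n^{\Oh(1)}$) at most $|\hat S|$ times, yielding the stated bound. The main obstacle is verifying that the bundling operation is faithful in both directions: that $T \setminus Y$ survives as a CWC and as a $\cW$-skew separator in $D'$ so that the subroutines find solutions within the stated budgets, and conversely that any CWC produced in $D'$ pulls back to a CWC of $D - Y$. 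The former is controlled by Lemmas~\ref{lem:bundling_is_fine} and~\ref{lem:skew_sep_exists}, while the latter is automatic since bundling only adds arcs, so every colorful walk of $D - Y$ survives verbatim in $D'$ and must be intersected by $S_1 \cup \bigcup_i U_i$.
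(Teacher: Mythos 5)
Your proposal is correct and follows essentially the same approach as the paper: guess $Y = T \cap \hat S$, the partition $\cW$, and the labeling $\Gamma$; bundle each $W_i$; compute a $\cW$-skew separator via Proposition~\ref{prop:skew-sep}; and then decompose the residual instance into vertex-disjoint \restrictedcwc{} instances (which the paper packages as Lemma~\ref{lem:skew_sep_empty}). The only differences are cosmetic: you delete $Y$ up front and budget both phases with $k' = k - |Y|$, returning $Y \cup S_1 \cup \bigcup_i U_i$, whereas the paper keeps $Y$ in the graph, uses budget $k$ for both the skew separator $X$ and the residual cover $Z$, and returns $X \cup Z$; both accountings land within $2k$.
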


\begin{proof}
Let $S$ be a solution for the given instance of {\cwcomp} and let $Y=S\cap \hat S$. Let 
% For every $Y\subseteq \hat S$ of size at most $k$, and for every ordered partition 
$\cW_Y=\{W_1,\dots, W_r\}$ be an ordered partition of $\hat S\setminus Y$ which respects $S$.

We first guess the set $Y$ and the partition $\cW_Y$. Furthermore, we guess a function $\Gamma:\hat S\setminus S\to [\ell]$ such that for every $i\in [r]$, the restriction $\Gamma|_{W_i}$ is an extendible consistent labeling of the strongly connected component of $D-S$ containing $W_i$.  Due to Observation~\ref{obs:directed_consistent_labeling}, such a $\Gamma$ must exist. Furthermore, since $|\hat S|$ is bounded by $2k+1$, there are $\ell^{\bigoh(k)}$ choices for $\Gamma$. We now construct a new digraph by `bundling' each set in $\cW$.
%
%execute the algorithm of Proposition \ref{prop:skew-sep} to compute (if one exists) a $\cW$-skew separator $X_{Y,\cW_Y}$ of size at most $k$. We now define  $r\cdot k\leq (2k+1)\cdot k$ instances of {\restrictedcwc} as follows. First, for every $j\in [r]$, we let $D^j_{Y,\cW_Y}$ denote the graph induced on the strongly connected component of $D-Y-X_{}$ which contains the set $\cW_Y$.
%For every $j\in [r]$ and $k'\in \{0,\dots, k\}$ the instance $I^{j,k'}_{Y,\cW_Y}$ is defined as $(SCC_{D-Y\cup X_{Y,\cW_Y}}[W_j],k',W_j)$. 	
This is done as follows.
For each $i\in [r]$, we define the graph 
$(D_i,\sigma_i)=\bundle(D_{i-1},\sigma_{i-1},\Gamma_{W_i},W_i)$, where $(D_0,\sigma_0)=(D,\sigma)$. Clearly, the strongly connected components of $D_r-S$ are the same as those of $D-S$ and by Lemma~\ref{lem:bundling_is_fine}, it follows that $\Gamma|_{W_i}$ is still an extendible consistent labeling for the strongly connected component of $D_r-S$ containing $W_i$. Furthermore, for {\em any} set $X$ disjoint from $\hat S\setminus S$, for every $i\in [r]$, the vertices in $W_i$ remain in the same strongly connected component of $D_r-X$.

We now execute the algorithm of Lemma~\ref{lem:skew_sep_empty} to compute a $\cW$-skew separator of size at most $k$. If no such separator exists, then by Lemma~\ref{lem:skew_sep_exists}, we may correctly conclude that the instance $I$ is a no-instance and hence we return the same. On the other hand, let $X$ be a $\cW$-skew separator of size at most $k$ and let $D'=D_r-X$ with $\sigma'$ being the associated labeling function. Observe that there is a $\cW$-skew separator of size 0 in $D'$ and $D'$ now has a colorful walk cover $S\setminus X$ of size at most $k$ such that the partition $\cW$ respects $S\setminus X$.

 We now construct the instance $(D',\sigma',k,\hat S\setminus (X\cup Y))$ of {\cwcomp} which as we have already argued, is a $\cW$-nice instance. We then execute the algorithm of Lemma~\ref{lem:skew_sep_empty} to compute a colorful walk cover $Z$ of size at most $k$ for $D'$. If no such set exists, then $I$ is a no-instance and we return the same. Otherwise, the set $X\cup Z$ is a colorful walk cover for $D$ of size at most $2k$. Hence, we return $X\cup Z$. This completes the description of the algorithm. The bound on the running time follows from that of Lemma~\ref{lem:skew_sep_empty} and the fact the number of invocations of the algorithm of this lemma is bounded by the product of the number of choices for $Y$, $\Gamma$ and $\cW$. Since this is bounded by 
 $\ell^{\bigoh(k)}2^{\bigoh(k\log k)}$, the running time follows and this completes the proof of the lemma.   
	\end{proof}

\section{Conclusion}

Our results on {\sc Directed Odd Cycle Transversal} raise a few natural questions. 
%We have proved the W[1]-hardness of {\doct} and complemented this result with an \FPT-approximation algorithm with approximation ratio 2. In addition, we have provided evidence based on the  Parameterized Inapproximability Hypothesis that it is unlikely that the  approximation factor can be made arbitrarily close to 1. 

%Therefore, a natural question that now follows 
\begin{itemize}\item  
The first question is whether one can improve on the approximation factor of 2 in Theorem~\ref{thm:doctappx}  or strengthen the inapproximability result in Theorem~\ref{thm:doctinappx}   to show that even such an improvement is unlikely.

\item Secondly, although  Theorem~\ref{thm:doctHard1} implies that {\doct} is unlikely to have a kernel of \emph{any size}, our  \FPT-approximation algorithm implies that {\doct} does have a 2-approximate kernel of exponential size (see Proposition 3.2,~\cite{LokshtanovPRS16}). Therefore, an exciting new challenge related to {\doct} is to determine whether it has a $c$-approximate kernel of \emph{polynomial size} for some constant $c$ and if so, to find the  smallest such constant. Note that Theorem~\ref{thm:doctinappx} also rules out a $(1+\epsilon)$-approximate kernel (for some $\epsilon>0$) of \emph{any} size for {\doct}. 
\end{itemize}

We conclude by pointing out that the parameterized complexity of the {\sc Directed Multicut} problem where the number of terminal pairs is  3, remains open. As was the case for {\sc DOCT}, it is quite likely that an  {\FPT} algorithm or a {\sf W}-hardness proof  for this problem would require new insights into the structure of directed cuts.

%for some $\epsilon>0$, \doct does not admit an \FPT-approximation algorithm with approximation ratio $1+\epsilon$.

\myparagraph{Acknowledgements.} The authors would like to thank Micha{\l} W{\l}odarczyk for enlightening discussions on the DOCT problem.

\bibliographystyle{siam}
\bibliography{references}

\begin{thebibliography}{10}

\bibitem{AgarwalCMM05}
{\sc A.~Agarwal, M.~Charikar, K.~Makarychev, and Y.~Makarychev}, {\em
  {$O(\sqrt(\log n))$} approximation algorithms for min uncut, min 2cnf
  deletion, and directed cut problems}, in Proceedings of the 37th Annual {ACM}
  Symposium on Theory of Computing, Baltimore, MD, USA, May 22-24, 2005, 2005,
  pp.~573--581.

\bibitem{DBLP:journals/cpc/AlonSS08}
{\sc N.~Alon, O.~Schwartz, and A.~Shapira}, {\em An elementary construction of
  constant-degree expanders}, Combinatorics, Probability {\&} Computing, 17
  (2008), pp.~319--327.

\bibitem{DBLP:books/daglib/0006487}
{\sc J.~Bang{-}Jensen and G.~Gutin}, {\em Digraphs - theory, algorithms and
  applications}, Springer, 2002.

\bibitem{ChenLL09}
{\sc J.~Chen, Y.~Liu, and S.~Lu}, {\em An improved parameterized algorithm for
  the minimum node multiway cut problem}, Algorithmica, 55 (2009), pp.~1--13.

\bibitem{ChenLLOR08}
{\sc J.~Chen, Y.~Liu, S.~Lu, B.~O'Sullivan, and I.~Razgon}, {\em A
  fixed-parameter algorithm for the directed feedback vertex set problem}, J.
  ACM, 55 (2008).

\bibitem{chitnis2014directed}
{\sc R.~Chitnis}, {\em Directed Graphs: Fixed-Parameter Tractability \&
  Beyond}, PhD thesis, University of Maryland, 2014.

\bibitem{ChitnisCHPP16}
{\sc R.~Chitnis, M.~Cygan, M.~Hajiaghayi, M.~Pilipczuk, and M.~Pilipczuk}, {\em
  Designing {FPT} algorithms for cut problems using randomized contractions},
  {SIAM} J. Comput., 45 (2016), pp.~1171--1229.

\bibitem{ChitnisH16}
{\sc R.~Chitnis and M.~T. Hajiaghayi}, {\em Shadowless solutions for
  fixed-parameter tractability of directed graphs}, in Encyclopedia of
  Algorithms, Springer, 2016, pp.~1963--1966.

\bibitem{Chitnis:2012DSFVS}
{\sc R.~H. Chitnis, M.~Cygan, M.~T. Hajiaghayi, and D.~Marx}, {\em Directed
  subset feedback vertex set is fixed-parameter tractable}, {ACM} Transactions
  on Algorithms, 11 (2015), p.~28.

\bibitem{ChitnisHM13}
{\sc R.~H. Chitnis, M.~Hajiaghayi, and D.~Marx}, {\em Fixed-parameter
  tractability of directed multiway cut parameterized by the size of the
  cutset}, {SIAM} J. Comput., 42 (2013), pp.~1674--1696.

\bibitem{ChoiNR89}
{\sc H.~Choi, K.~Nakajima, and C.~S. Rim}, {\em Graph bipartization and via
  minimization}, {SIAM} J. Discrete Math., 2 (1989), pp.~38--47.

\bibitem{CyganFKLMPPS15}
{\sc M.~Cygan, F.~V. Fomin, L.~Kowalik, D.~Lokshtanov, D.~Marx, M.~Pilipczuk,
  M.~Pilipczuk, and S.~Saurabh}, {\em Parameterized Algorithms}, Springer,
  2015.

\bibitem{CyganPPW13}
{\sc M.~Cygan, M.~Pilipczuk, M.~Pilipczuk, and J.~O. Wojtaszczyk}, {\em Subset
  feedback vertex set is fixed-parameter tractable}, SIAM J. Discrete Math., 27
  (2013), pp.~290--309.

\bibitem{DemGMS07}
{\sc E.~D. Demaine, G.~Gutin, D.~Marx, and U.~Stege}, {\em 07281 open problems
  -- structure theory and {FPT} algorithmcs for graphs, digraphs and
  hypergraphs}, in Structure Theory and {FPT} Algorithmics for Graphs, Digraphs
  and Hypergraphs, 08.07. - 13.07.2007, 2007.

\bibitem{Diestel10}
{\sc R.~Diestel}, {\em Graph Theory}, Springer-Verlag, Heidelberg, 4th~ed.,
  2010.

\bibitem{DBLP:series/txcs/DowneyF13}
{\sc R.~G. Downey and M.~R. Fellows}, {\em Fundamentals of Parameterized
  Complexity}, Texts in Computer Science, Springer, 2013.

\bibitem{EvenNSS98}
{\sc G.~Even, J.~Naor, B.~Schieber, and M.~Sudan}, {\em Approximating minimum
  feedback sets and multicuts in directed graphs}, Algorithmica, 20 (1998),
  pp.~151--174.

\bibitem{FG06}
{\sc J.~Flum and M.~Grohe}, {\em Parameterized Complexity Theory}, Texts in
  Theoretical Computer Science. An EATCS Series, Springer-Verlag, Berlin, 2006.

\bibitem{GuruswamiL16}
{\sc V.~Guruswami and E.~Lee}, {\em Simple proof of hardness of feedback vertex
  set}, Theory of Computing, 12 (2016), pp.~1--11.

\bibitem{IwataOY14}
{\sc Y.~Iwata, K.~Oka, and Y.~Yoshida}, {\em Linear-time {FPT} algorithms via
  network flow}, in SODA, 2014, pp.~1749--1761.

\bibitem{IwataWY16}
{\sc Y.~Iwata, M.~Wahlstr{\"{o}}m, and Y.~Yoshida}, {\em Half-integrality,
  lp-branching, and {FPT} algorithms}, {SIAM} J. Comput., 45 (2016),
  pp.~1377--1411.

\bibitem{Karp72}
{\sc R.~M. Karp}, {\em Reducibility among combinatorial problems}, in
  Proceedings of a symposium on the Complexity of Computer Computations, held
  March 20-22, 1972, at the {IBM} Thomas J. Watson Research Center, Yorktown
  Heights, New York., 1972, pp.~85--103.

\bibitem{KhotV15}
{\sc S.~Khot and N.~K. Vishnoi}, {\em The unique games conjecture, integrality
  gap for cut problems and embeddability of negative-type metrics into
  {$\ell_1$}}, J. {ACM}, 62 (2015), pp.~8:1--8:39.

\bibitem{Kratsch:2012MCDAG}
{\sc S.~Kratsch, M.~Pilipczuk, M.~Pilipczuk, and M.~Wahlstr{\"o}m}, {\em
  Fixed-parameter tractability of directed multiway cut parameterized by the
  size of the cutset}, in To appear, ICALP(1), 2012.

\bibitem{LokshtanovM13}
{\sc D.~Lokshtanov and D.~Marx}, {\em Clustering with local restrictions}, Inf.
  Comput., 222 (2013), pp.~278--292.

\bibitem{LokshtanovNRRS14}
{\sc D.~Lokshtanov, N.~S. Narayanaswamy, V.~Raman, M.~S. Ramanujan, and
  S.~Saurabh}, {\em Faster parameterized algorithms using linear programming},
  {ACM} Trans. Algorithms, 11 (2014), pp.~15:1--15:31.

\bibitem{LokshtanovPRS16}
{\sc D.~{Lokshtanov}, F.~{Panolan}, M.~S. {Ramanujan}, and S.~{Saurabh}}, {\em
  {Lossy Kernelization}}, ArXiv e-prints,  (2016).

\bibitem{LokshtanovR12}
{\sc D.~Lokshtanov and M.~S. Ramanujan}, {\em Parameterized tractability of
  multiway cut with parity constraints}, in Automata, Languages, and
  Programming - 39th International Colloquium, {ICALP} 2012, Warwick, UK, July
  9-13, 2012, Proceedings, Part {I}, 2012, pp.~750--761.

\bibitem{LokshtanovRS15}
{\sc D.~Lokshtanov, M.~S. Ramanujan, and S.~Saurabh}, {\em Linear time
  parameterized algorithms for subset feedback vertex set}, in Automata,
  Languages, and Programming - 42nd International Colloquium, {ICALP} 2015,
  Kyoto, Japan, July 6-10, 2015, Proceedings, Part {I}, 2015, pp.~935--946.

\bibitem{LokshtanovRSULC16}
\leavevmode\vrule height 2pt depth -1.6pt width 23pt, {\em A linear time
  parameterized algorithm for node unique label cover}, CoRR, abs/1604.08764
  (2016).

\bibitem{Marx06}
{\sc D.~Marx}, {\em Parameterized graph separation problems}, Theoret. Comput.
  Sci., 351 (2006), pp.~394--406.

\bibitem{Marx07}
{\sc D.~Marx}, {\em Can you beat treewidth?}, Theory of Computing, 6 (2010),
  pp.~85--112.

\bibitem{Marx12wHAT}
{\sc D.~Marx}, {\em What's next? future directions in parameterized
  complexity}, in The Multivariate Algorithmic Revolution and Beyond - Essays
  Dedicated to Michael R. Fellows on the Occasion of His 60th Birthday,
  vol.~7370 of Lecture Notes in Computer Science, 2012, pp.~469--496.

\bibitem{Marx17Talk}
\leavevmode\vrule height 2pt depth -1.6pt width 23pt, {\em Some open problems
  in parameterized complexity (slides, dagstuhl seminar 17041) {\em
  http://www.cs.bme.hu/~dmarx/papers/marx-dagstuhl2017-open.pdf}}, 2017.

\bibitem{MarxR14}
{\sc D.~Marx and I.~Razgon}, {\em Fixed-parameter tractability of multicut
  parameterized by the size of the cutset}, SIAM J. Comput., 43 (2014),
  pp.~355--388.

\bibitem{Nie06}
{\sc R.~Niedermeier}, {\em Invitation to Fixed-Parameter Algorithms}, vol.~31
  of Oxford Lecture Series in Mathematics and its Applications, Oxford
  University Press, Oxford, 2006.

\bibitem{DBLP:conf/soda/PilipczukW16}
{\sc M.~Pilipczuk and M.~Wahlstr{\"{o}}m}, {\em Directed multicut is
  \emph{W}[1]-hard, even for four terminal pairs}, in Proceedings of the
  Twenty-Seventh Annual {ACM-SIAM} Symposium on Discrete Algorithms, {SODA}
  2016, Arlington, VA, USA, January 10-12, 2016, 2016, pp.~1167--1178.

\bibitem{RamanujanS14}
{\sc M.~S. Ramanujan and S.~Saurabh}, {\em Linear time parameterized algorithms
  via skew-symmetric multicuts}, in SODA, 2014, pp.~1739--1748.

\bibitem{ReedSV04}
{\sc B.~A. Reed, K.~Smith, and A.~Vetta}, {\em Finding odd cycle transversals},
  Oper. Res. Lett., 32 (2004), pp.~299--301.

\end{thebibliography}

\end{document}